\newcommand{\vertiii}[1]{{\left\vert\kern-0.25ex\left\vert\kern-0.25ex\left\vert #1 
    \right\vert\kern-0.25ex\right\vert\kern-0.25ex\right\vert}}
\let\coloneqq\relax
\newcolumntype{x}[1]{>{\centering\arraybackslash}p{#1}}
\newtheorem{thm}{Theorem}
\newtheorem*{thm*}{Theorem}
\newtheorem*{prop*}{Proposition}
\newtheorem{lemma}[thm]{Lemma}
\newtheorem*{lemma*}{Lemma}
\newtheorem{cor}[thm]{Corollary}
\newtheorem*{cor*}{Corollary}
\newtheorem*{cj*}{Conjecture}
\newtheorem{Def}[thm]{Definition}
\newtheorem*{Def*}{Definition}
\newtheorem{remark}{Remark}
\def\thmhead@plain#1#2#3{%
  \thmname{#1}\thmnumber{\@ifnotempty{#1}{ }\@upn{#2}}%
  \thmnote{ {\the\thm@notefont#3}}}
\let\thmhead\thmhead@plain
\theoremstyle{definition}
\newcommand{\bb}{\begin{equation}\begin{aligned}\hspace{0pt}}
\newcommand{\bbb}{\begin{equation*}\begin{aligned}}
\newcommand{\ee}{\end{aligned}\end{equation}}
\newcommand{\eee}{\end{aligned}\end{equation*}}
\newcommand*{\coloneqq}{\mathrel{\vcenter{\baselineskip0.5ex \lineskiplimit0pt \hbox{\scriptsize.}\hbox{\scriptsize.}}} =}
\newcommand{\eqt}[1]{\stackrel{\mathclap{\scriptsize \mbox{#1}}}{=}}
\newcommand{\leqt}[1]{\stackrel{\mathclap{\scriptsize \mbox{#1}}}{\leq}}
\newcommand{\geqt}[1]{\stackrel{\mathclap{\scriptsize \mbox{#1}}}{\geq}}
\newcommand{\ketbra}[1]{\ket{#1}\!\!\bra{#1}}
\newcommand{\ketbraa}[2]{\ket{#1}\!\!\bra{#2}}
\newcommand{\N}{\mathds{N}}
\DeclareMathOperator{\Tr}{Tr}
\DeclareMathAlphabet{\pazocal}{OMS}{zplm}{m}{n}
\DeclareMathOperator{\Id}{Id}
\newcommand{\HH}{\pazocal{H}}
\newcommand{\NN}{\mathcal{N}}
\newcommand{\lsmatrix}{\left(\begin{smallmatrix}}
\newcommand{\rsmatrix}{\end{smallmatrix}\right)}
\newcommand*\rel@kern[1]{\kern#1\dimexpr\macc@kerna}
\newcommand*\widebar[1]{%
  \begingroup
  \def\mathaccent##1##2{%
    \rel@kern{0.8}%
    \overline{\rel@kern{-0.8}\macc@nucleus\rel@kern{0.2}}%
    \rel@kern{-0.2}%
  }%
  \macc@depth\@ne
  \let\math@bgroup\@empty \let\math@egroup\macc@set@skewchar
  \mathsurround\z@ \frozen@everymath{\mathgroup\macc@group\relax}%
  \macc@set@skewchar\relax
  \let\mathaccentV\macc@nested@a
  \macc@nested@a\relax111{#1}%
  \endgroup
}
\tikzset{meter/.append style={draw, inner sep=10, rectangle, font=\vphantom{A}, minimum width=30, line width=.8, path picture={\draw[black] ([shift={(.1,.3)}]path picture bounding box.south west) to[bend left=50] ([shift={(-.1,.3)}]path picture bounding box.south east);\draw[black,-latex] ([shift={(0,.1)}]path picture bounding box.south) -- ([shift={(.3,-.1)}]path picture bounding box.north);}}}
\tikzset{roundnode/.append style={circle, draw=black, fill=gray!20, thick, minimum size=10mm}}
\tikzset{squarenode/.style={rectangle, draw=black, fill=none, thick, minimum size=10mm}}
\definecolor{Blues5seq1}{RGB}{239,243,255}
\definecolor{Blues5seq2}{RGB}{189,215,231}
\definecolor{Blues5seq3}{RGB}{107,174,214}
\definecolor{Blues5seq4}{RGB}{49,130,189}
\definecolor{Blues5seq5}{RGB}{8,81,156}
\definecolor{Greens5seq1}{RGB}{237,248,233}
\definecolor{Greens5seq2}{RGB}{186,228,179}
\definecolor{Greens5seq3}{RGB}{116,196,118}
\definecolor{Greens5seq4}{RGB}{49,163,84}
\definecolor{Greens5seq5}{RGB}{0,109,44}
\definecolor{Reds5seq1}{RGB}{254,229,217}
\definecolor{Reds5seq2}{RGB}{252,174,145}
\definecolor{Reds5seq3}{RGB}{251,106,74}
\definecolor{Reds5seq4}{RGB}{222,45,38}
\definecolor{Reds5seq5}{RGB}{165,15,21}
\DeclareMathOperator{\arccosh}{arccosh}
\pgfplotsset{width=10cm,compat=1.9}
\definecolor{marco}{rgb}{.4,.2,.6}
\newcommand*{\addFileDependency}[1]{
  \typeout{(#1)}
  \@addtofilelist{#1}
  \IfFileExists{#1}{}{\typeout{No file #1.}}
}
\algrenewcommand\algorithmicrequire{\textbf{Input:}}
\algrenewcommand\algorithmicensure{\textbf{Output:}}
\begin{document}

\author{Francesco Anna Mele}
\email{francesco.mele@sns.it}
\affiliation{NEST, Scuola Normale Superiore and Istituto Nanoscienze, Piazza dei Cavalieri 7, IT-56126 Pisa, Italy}

\author{Giovanni Barbarino}
\email{giovanni.barbarino@gmail.com}
\affiliation{Mathematics and Operational Research Unit, Faculté Polytechnique de Mons, UMONS, BE-7000 Mons, Belgium}

\author{Vittorio Giovannetti}
\email{vittorio.giovannetti@sns.it}
\affiliation{NEST, Scuola Normale Superiore and Istituto Nanoscienze, Consiglio Nazionale delle Ricerche, Piazza dei Cavalieri 7, IT-56126 Pisa, Italy}

\author{Marco Fanizza}
\email{mf@math.ku.dk}
\affiliation{Department of Mathematical Sciences, University of Copenhagen, Universitetsparken 5, 2100 Denmark} \affiliation{{F\'{\i}sica Te\`{o}rica: Informaci\'{o} i Fen\`{o}mens Qu\`{a}ntics, Departament de F\'{i}sica, Universitat Aut\`{o}noma de Barcelona, ES-08193 Bellaterra (Barcelona), Spain}}

\title{Achievable rates in non-asymptotic bosonic quantum communication}

\begin{abstract}
Bosonic quantum communication has extensively been analysed in the asymptotic setting, assuming infinite channel uses and vanishing communication errors. Comparatively fewer detailed analyses are available in the \emph{non-asymptotic} setting, which addresses a more precise, quantitative evaluation of the optimal communication rate: how many uses of a bosonic Gaussian channel are required to transmit $k$ qubits, distil $k$ Bell pairs, or generate $k$ secret-key bits, within a given error tolerance $\varepsilon$? In this work, we address this question by finding easily computable lower bounds on the non-asymptotic capacities of Gaussian channels. To derive our results, we develop new tools of independent interest. In particular, we find a stringent bound on the probability $P_{>N}$ that a Gaussian state has more than $N$ photons, demonstrating that $P_{>N}$ decreases exponentially with $N$. Furthermore, we design the first algorithm capable of computing the trace distance between two Gaussian states up to a fixed precision.
\end{abstract}

\maketitle  

\section{Introduction}

Continuous-variable quantum communication is poised to play a critical role in the advancement of future quantum technologies, enabling secure communication, qubit transmission, and entanglement distribution over free-space links and optical fibres~\cite{BUCCO, Caves, weedbrook12, Pirandola20,usenko2025continuousvariablequantumcommunication}. Among its applications, continuous-variable quantum key distribution (CV-QKD) stands out as a mature quantum technology, supported by substantial experimental progress in recent years~\cite{Record1, Record2, Record3, Record4, Record5}. On the theoretical side,  the past two decades have seen significant efforts to estimate the \emph{capacities}~\cite{Sumeet_book} of relevant continuous-variable quantum channels~\cite{holwer,Caruso2006, Wolf2007, Mark2012,Mark-energy-constrained, Rosati2018, Sharma2018, Noh2019, Noh2020,fanizza2021estimating,Giova_classical_cap,LossyECEAC1, LossyECEAC2, PLOB, Davis2018, Goodenough16, TGW, MMMM, squashed_channel, Pirandola2009, Ottaviani_new_lower, Pirandola18,LL-bosonic-dephasing,mele2023optical,Mele_2024}. In the context of quantum Shannon theory~\cite{Sumeet_book,MARK,HOLEVO-CHANNELS-2,HOLEVO}, the capacities of a quantum channel quantify the ultimate achievable performance of quantum communication and are defined as the maximum achievable transmission rate --- i.e.~the ratio of distilled resources (e.g.~transmitted bits, transmitted qubits, distilled Bell pairs, or generated secret-key bits) to the number of channel uses --- in the asymptotic limit of infinitely many channel uses and vanishing communication errors. Capacities, as inherently \emph{asymptotic} concepts, have been extensively analysed for both discrete-variable and continuous-variable channels~\cite{Sumeet_book,HOLEVO-CHANNELS-2,HOLEVO}.

Over the past two decades, however, researchers --- particularly in the discrete-variable setting --- have shifted their focus toward a more precise type of analysis: the investigation of the communication performance when only a \emph{finite} number of channel uses is allowed. This has given rise to a new field known as \emph{non-asymptotic} or \emph{one-shot} quantum Shannon theory~\cite{Tomamichel2015, Sumeet_book, Tomamichel2012, Tomamichel2016, Tomamichel2008, Berta2011,cheng2024invitationsamplecomplexityquantum}. In this setting, the concept of channel capacity is replaced by the \emph{non-asymptotic capacities} (or one-shot capacities)~\cite{Sumeet_book}, which quantify the maximum resources --- such as qubits, Bell pairs, or secret-key bits --- which is distillable within a specified error tolerance after a fixed, finite number of channel uses. Estimating these non-asymptotic capacities addresses critical practical questions: How many uses of a given channel are needed to reliably transmit \(k\) qubits, distil \(k\) Bell pairs, or generate \(k\) secret-key bits within an error tolerance \(\varepsilon\)? Determining the minimum of such a number of channel uses is important because it indicates how long one needs to wait in order to complete a given quantum communication task. This is analogous to the concepts of "sample complexity" or "query complexity" well-studied in the quantum learning theory literature~\cite{anshu2023survey,cheng2024invitationsamplecomplexityquantum}.

While non-asymptotic capacities have been extensively studied in the discrete-variable (finite-dimensional) setting~\cite{Sumeet_book}, their exploration in the continuous-variable domain remains limited~\cite{MMMM, Kaur_2017, khatri2021secondorder, WildeRenes2016, LL-bosonic-dephasing}. In particular, the literature lacks explicitly computable lower bounds on the non-asymptotic capacities of continuous variable quantum channels as a function of the channel parameters, in contrast to the finite-dimensional case. This gap is ultimately attributed to the fact that existing techniques for deriving lower bounds in the finite-dimensional setting~\cite{Sumeet_book} are not immediately adaptable to the infinite-dimensional case, necessitating a more careful analysis.

In this work, we fill this gap by establishing easily computable lower bounds on relevant non-asymptotic capacities of bosonic Gaussian channels. Our strategy is to leverage the bounds derived for finite-dimensional systems~\cite{Sumeet_book} to estimate the one-shot capacities of restricted versions of bosonic channels. These restricted versions are defined by constraining the input and the output states to subspaces with a finite (fixed) maximum energy value. The challenging aspect of the derivation lies in demonstrating that, when the analysis is applied to truncated versions of Gaussian input states, the lower bound remains well-behaved and does not diverge as the energy constraint is lifted. To conduct our analysis, we introduce several technical tools that may hold independent interest. Among these, we find a stringent upper bound on the probability $P_{>N}$ that a Gaussian state has more than $N$ photons, demonstrating that $P_{>N}$ decreases exponentially with $N$. Building on this result, we introduce the first algorithm capable of computing the trace distance~\cite{NC,MARK,Sumeet_book} between two Gaussian states up to a fixed precision, contributing to the rapidly growing literature on perturbation bounds on Gaussian states~\cite{bittel2024optimalestimatestracedistance,fanizza2024efficienthamiltonianstructuretrace,mele2024learningquantumstatescontinuous,holevo2024estimatestracenormdistancequantum,holevo2024estimatesburesdistancebosonic}. In particular, we apply our lower bounds to the \emph{pure loss channel}~\cite{BUCCO}, which is arguably the most important Gaussian channel and is extensively used to model photon loss in optical fibres and free-space links, and to the \emph{pure amplifier channel}, which models phenomena such as spontaneous parametric down-conversion~\cite{Clerk_2010}, the dynamical Casimir effect in superconducting circuits~\cite{Moore1970}, the Unruh effect~\cite{Unruh1976}, and Hawking radiation~\cite{Hawking1972}.  Our results prove a conjecture posed in~\cite{Kaur_2017}, which states that the known upper bound~\cite{MMMM} on the $n$-shot capacities of the pure loss channel is nearly optimal.

The paper is structured as follows. Section~\ref{sec_notation} introduces the necessary notation and preliminaries to support the presentation of our results. In Section~\ref{Sec_bos_q_com_main}, we present our lower bounds on the non-asymptotic capacities of arbitrary bosonic Gaussian channels in Theorem~\ref{thm_lower_bound_arb_gauss_main}. We also provide a method to compute these bounds by deriving a closed formula for the conditional Petz–Rényi entropy of Gaussian states (Lemma~\ref{thm_cond_petz_GaussianM}). In Section~\ref{sec_main_pure_loss}, we apply these bounds to the pure loss channel and pure amplifier channel. Moreover, adopting a different approach w.r.t.~the one used in the previous section, we establish alternative lower bounds on the non-asymptotic capacities for the special case of pure loss channel, which in certain regimes improve the result of Theorem~\ref{thm_lower_bound_arb_gauss_main}. Finally, in Section~\ref{section_tail}, we derive an upper bound on the probability that a Gaussian state exceeds a fixed photon number. Leveraging this result, we introduce an algorithm to estimate the trace distance between two Gaussian states with controlled precision. In the Appendix, we provide the complete proofs of all results presented in the paper.

\section{Notation}\label{sec_notation}
\subsection{Asymptotic and non-asymptotic capacities}
Let us briefly give an informal definition of capacities and non-asymptotic capacities of a quantum channel $\NN$. For a rigorous definition, we refer to Subsection~\ref{subsec_n_shot}.

While in the asymptotic setting the channel can be used infinitely many times, allowing for a vanishing error in executing the quantum communication task, in the non-asymptotic setting the number $n$ of uses of the channel is finite and the error is just required to be below a certain non-zero error threshold $\varepsilon$. In such a setting, the role of the (asymptotic) capacities is played by the so-called \emph{$n$-shot capacities}. Specifically, in this work we investigate (asymptotic) capacities and $n$-shot capacities for three different tasks~\cite{MARK, Sumeet_book}: 
\begin{itemize}
    \item \emph{Qubit distribution}: The goal is to reliably transmit an arbitrary quantum state, possibly entangled with an ancillary system, across the quantum channel. The associated  capacity is called the \emph{$n$-shot quantum capacity} and it is defined as follows. The $n$-shot quantum capacity of a quantum channel $\NN$, denoted as $Q^{(\varepsilon,n)}(\NN)$, is the maximum number of qubits that can be transmitted with error $\varepsilon$ across $n$ uses of a quantum channel $\NN$. Moreover, the (asymptotic) \emph{quantum capacity}, denoted as $Q(\NN)$, is defined as
    \bb
        Q(\NN)\coloneqq \lim\limits_{\varepsilon \rightarrow 0^+} \liminf_{n \to \infty}\frac{Q^{(\varepsilon,n)}(\NN)}{n}\,.
    \ee
    \item \emph{Entanglement distribution assisted by two-way classical communication}: The goal is to generate maximally entangled states between two parties connected by the quantum channel, exploiting the additional resource of a two-way classical communication line. The associated  capacity is called the \emph{$n$-shot two-way quantum capacity} and it is defined as follows. The $n$-shot two-way quantum capacity of a quantum channel $\NN$, denoted as $Q_2^{(\varepsilon,n)}(\NN)$, is the maximum number of two-qubit Bell pairs --- dubbed \emph{ebits} --- that one can distil with an error not exceeding $\varepsilon$ by exploiting $n$ uses of the channel and arbitrary LOCC (Local Operation and Classical Communication) operations~\cite{Sumeet_book}. Moreover, the (asymptotic) \emph{two-way quantum capacity}, denoted as $Q_2(\NN)$, is defined as
    \bb
        Q_2(\NN)\coloneqq \lim\limits_{\varepsilon \rightarrow 0^+} \liminf_{n \to \infty}\frac{Q_2^{(\varepsilon,n)}(\NN)}{n}\,.
    \ee
    \item \emph{Secret-key distribution assisted by two-way classical communication}: The goal is to generate secret keys shared between two parties linked by the quantum channel, exploiting the additional resource of a two-way classical communication line. The associated  capacity is called the \emph{$n$-shot secret-key capacity} and it is defined as follows. The $n$-shot secret-key capacity of a quantum channel $\NN$, denoted as $K^{(\varepsilon,n)}(\NN)$, is the maximum number of secret-key bits that one can distil with an error not exceeding $\varepsilon$ by exploiting $n$ uses of the channel and arbitrary LOCCs~\cite{Sumeet_book}. Moreover, the (asymptotic) \emph{secret-key capacity}, denoted as $K(\NN)$, is defined as
    \bb
        K(\NN)\coloneqq \lim\limits_{\varepsilon \rightarrow 0^+} \liminf_{n \to \infty}\frac{K^{(\varepsilon,n)}(\NN)}{n}\,.
    \ee
\end{itemize}
The exact definitions of the error $\varepsilon$ associated with each of the three quantum communication tasks mentioned above are detailed in Section~\ref{subsec_n_shot}. Moreover, it is worth mentioning distilling $k$ ebits up to an error $\varepsilon$ also enables the generation of $k$ secret-key bits with the same level of precision, so that the ($n$-shot) secret-key capacity is always not smaller than the ($n$-shot) two-way quantum capacity. For more details, we refer to Section~\ref{subsec_n_shot}.

Let us now introduce the concept of \emph{energy-constrained capacities}~\cite{Davis2018,Mark-energy-constrained}. In practise, a quantum communication protocol can not exploit \emph{infinite} energy. Instead, all optical input signals used in a communication protocol have a bounded energy (bounded by e.g.~the energy budget available in lab). It is therefore important to consider only those quantum communication protocols that exploit input states satisfying a suitable \emph{energy constraint}. Specifically, in the continuous-variable setting, along with the regular (i.e.~unconstrained) capacities, it is common to consider the so-called energy-constrained capacities~\cite{Davis2018,Mark-energy-constrained}. Given a positive number $N_s$ (representing the energy budget per input signal), the energy-constrained capacities are defined in the same way as the unconstrained capacities, apart from the fact that the optimisation is performed over those protocols such that the average expected value of the photon number operator on all input signals is required to be at most $N_s$ (see definition of photon number operator in Section~\ref{sec_preliminaries_CV}). 
The energy-constrained quantum capacities of a quantum channel $\NN$ are denoted as $Q(\NN,N_s)$, $Q_2(\NN,N_s)$, and $K(\NN,N_s)$, while the $n$-shot energy-constrained capacities are denoted as $Q^{(\varepsilon,n)}(\NN,N_s)$, $Q_2^{(\varepsilon,n)}(\NN,N_s)$, and $K^{(\varepsilon,n)}(\NN,N_s)$. Their corresponding unconstrained counterparts  $Q(\NN)$, $Q_2(\NN)$, $K(\NN)$, $\cdots$, $K^{(\varepsilon,n)}(\NN)$
can then be recovered as the $N_s\rightarrow \infty$  limits of the latter. Specifically  
$Q(\NN)= \lim_{N_s\rightarrow \infty} Q(\NN,N_s)$, 
$Q_2(\NN)= \lim_{N_s\rightarrow \infty} Q_2(\NN,N_s)$, 
$\cdots$, 
$K^{(\varepsilon,n)}(\NN)= \lim_{N_s\rightarrow \infty} K^{(\varepsilon,n)}(\NN,N_s)$.
It is worth noting that, except for certain singular cases, these limits typically exist and are finite.

In this paper, we will derive \emph{easily computable} lower bounds (energy-constrained and unconstrained) on the \(n\)-shot capacities of Gaussian channels. By "easily," we mean that the lower bound can be expressed as a simple function of the channel parameters, the number of uses \(n\), and the error tolerance \(\varepsilon\), so that the lower bound can be easily inverted with respect to \(n\). This allows one to answer the above-mentioned question: How many uses of a bosonic Gaussian channel are sufficient to transmit \(k\) qubits, distil \(k\) Bell pairs, or generate \(k\) secret-key bits within a given error tolerance \(\varepsilon\)? To solve this, it suffices to ensure that the lower bounds on the \(n\)-shot capacities exceed \(k\), and then solve the resulting inequality for \(n\).

\subsection{Continuous-variable systems}
Here, we introduce the relevant notation regarding continuous-variable systems. For more details, we refer to Section~\ref{sec_preliminaries_CV}. A continuous-variable system with \( n \) \emph{modes} is characterised by the Hilbert space \( L^2(\mathbb{R}^n) \), consisting of all square-integrable, complex-valued functions over \( \mathbb{R}^n \). The first moment \( \mathbf{m}(\rho) \) and the covariance matrix \( V(\rho) \) of a quantum state \( \rho \) are defined as  
\[
\mathbf{m}(\rho) \coloneqq \Tr\!\left[\hat{\mathbf{R}}\,\rho\right] \quad \text{and} \quad V(\rho) \coloneqq \Tr\!\left[\left\{\mathbf{(\hat{R}-m\,\hat{\mathbb{1}}),(\hat{R}-m\,\hat{\mathbb{1}})}^{\intercal}\right\}\rho\right],
\]  
where \( (\cdot)^\intercal \) denotes the transpose, \( \{\cdot, \cdot\} \) is the anti-commutator, and \( \hat{\mathbf{R}} \coloneqq (\hat{x}_1, \hat{p}_1, \dots, \hat{x}_n, \hat{p}_n)^{\intercal} \) represents the quadrature operator vector. Here, \( \hat{x}_1, \hat{p}_1, \dots, \hat{x}_n, \hat{p}_n \) are the position and momentum operators for each mode~\cite{BUCCO}. The quadrature operator vector satisfies the canonical commutation relation  
$[\hat{\mathbf{R}}, \hat{\mathbf{R}}^\intercal] = i \, \Omega \, \mathbb{\hat{1}}$, where 
\bb\label{symp_form}
\Omega \coloneqq \bigoplus_{i=1}^n \begin{pmatrix} 0 & 1 \\ -1 & 0 \end{pmatrix} \,.
\ee
An $n$-mode \emph{Gaussian state} is a Gibbs state of a quadratic hamiltonian in the quadrature operator vector $\hat{\textbf{R}}$~\cite{BUCCO}. The set of \( n \)-mode Gaussian states is uniquely associated with pairs \( (V, \mathbf{m}) \), where \( V \) is a \( 2n \times 2n \) real matrix satisfying the \emph{uncertainty relation} \( V + i \Omega \geq 0 \), and \( \mathbf{m} \) is a \( 2n \)-dimensional real vector~\cite{BUCCO}. Specifically, for any such pair \( (V, \mathbf{m}) \) with \( V \) obeying the uncertainty relation, there exists a unique Gaussian state with covariance matrix \( V \) and first moment \( \mathbf{m} \)~\cite{BUCCO}. Conversely, the covariance matrix of any (Gaussian) state satisfies the uncertainty relation~\cite{BUCCO}.  

A Gaussian channel is a quantum channel~\cite{MARK,Sumeet_book} that maps the set of Gaussian states into itself~\cite{BUCCO}. Any Gaussian channel can be written in Stinespring representation in terms of a Gaussian unitary and an environmental vacuum state~\cite{BUCCO}. Arguably, the most important Gaussian channel is the \emph{pure loss channel}, which models photon loss in optical fibres and free-space links~\cite{BUCCO}. The pure loss channel is mathematically defined in as follows.
\begin{Def}[(Pure loss channel)]\label{def_pure_loss_MAIN}
Let $S$ and $E$ be two single-mode systems.  The pure loss channel of transmissivity $\lambda\in[0,1]$ is a quantum channel $\pazocal{E}_{\lambda}:S\to S$ defined as follows:
\bb \label{defGVloss} 
        \pazocal{E}_{\lambda}(\cdot)&\coloneqq\Tr_E\left[U_\lambda^{SE} \big(\cdot \otimes\ketbra{0}_E\big) (U_\lambda^{SE})^\dagger\right] \,,
\ee
where $\ketbra{0}_E$ denotes the environmental vacuum state, $U_\lambda^{SE}$ denotes the beam splitter unitary of transmissivity $\lambda$ defined as $U_{\lambda}^{S E}\coloneqq\exp\!\left[\arccos\sqrt{\lambda}\left(a^\dagger b-a\, b^\dagger\right)\right]$, with $a$ and $b$ being the annihilation operators of $S$ and $E$, respectively, and $\Tr_E$ denotes the partial trace with respect to the environment.
\end{Def}
Note that the pure loss channel is noiseless for $\lambda=1$ (it equals the identity channel), while it is completely noisy for $\lambda=0$ (it maps any input state into the vacuum).

Another notable example of a single-mode Gaussian channel is the \emph{pure amplifier channel}, which serves as a model for phenomena such as spontaneous parametric down-conversion~\cite{Clerk_2010}, the dynamical Casimir effect in superconducting circuits~\cite{Moore1970}, the Unruh effect~\cite{Unruh1976}, and Hawking radiation~\cite{Hawking1972}.
\begin{Def}[(Pure amplifier channel)]\label{def_pure_ampl_main}
Let $S$ and $E$ be two single-mode systems.  The pure amplifier channel of gain $g\ge1$ is a quantum channel $\Phi_{g}:S\to S$ defined as follows:
\bb
        \Phi_{g}(\cdot)&\coloneqq\Tr_E\left[U_g^{SE} \big(\cdot \otimes\ketbra{0}_E\big) (U_g^{SE})^\dagger\right] \,,
\ee
where $U_g^{SE}$ denotes the two-mode squeezing unitary of parameter $\lambda$ defined as $U_{g}^{S E}\coloneqq\exp\left[\arccosh\sqrt{g}\left(a^\dagger b^\dagger-a\, b\right)\right]$, with $a$ and $b$ being the annihilation operators of $S$ and $E$, respectively.
\end{Def}
Note that the pure amplifier channel is noiseless for $g=1$, while it becomes increasingly noisy as $g$ increases.

\subsubsection{Capacities of the pure loss channel and pure amplifier channel}
For all $\lambda\in[0,1]$ the unconstrained version of the quantum capacity $Q(\mathcal{E}_{\lambda})$~\cite{holwer, Wolf2007, Wolf2006}, the two-way quantum capacity $Q_2(\mathcal{E}_{\lambda})$~\cite{PLOB}, and the secret-key capacity $K(\mathcal{E}_{\lambda})$~\cite{PLOB} of the pure loss channel $\mathcal{E}_{\lambda}$ are given by:
\bb\label{capacities_pure_loss_main}
    Q(\mathcal{E}_{\lambda})&=   
        \begin{cases}
        \log_2\!\left(\frac{\lambda}{1-\lambda}\right) &\text{if $\lambda\in(\frac{1}{2},1]$ ,} \\
        0 &\text{if $\lambda\in[0,\frac{1}{2}]$ ,}
    \end{cases}\\
    Q_2(\mathcal{E}_{\lambda})&=K(\mathcal{E}_{\lambda})= \log_2\!\left(\frac{1}{1-\lambda}\right)\,.
\ee
Hence, the quantum capacity vanishes for all transmissivity values in the range $\lambda\in[0,\frac{1}{2}]$, while the two-way quantum capacity and secret-key capacity are always larger than zero for all non-zero values of the transmissivity. 
 In the energy-constrained scenario 
 the quantum capacity of the loss channel is provided instead by the expression~\cite{holwer, Caruso2006, Wolf2007, Mark2012, Mark-energy-constrained, Noh2019} 
\bb \label{qec_pur_loss}
    Q\left(\mathcal{E}_\lambda,N_s\right)=   
        \begin{cases}
        h(\lambda N_s)-h((1-\lambda)N_s) &\text{if $\lambda\in(\frac{1}{2},1]$ ,} \\
        0 &\text{if $\lambda\in[0,\frac{1}{2}]$ ,}
    \end{cases} 
\ee
where $h(x)\coloneqq (x+1)\log_2(x+1)-x\log_2 x$. The energy-constrained two-way quantum and secret-key capacities of the pure loss channel have not yet been determined. However, a lower bound is given by the reverse coherent information of the bipartite state obtained by sending an half of the two-mode squeezed vacuum state $\ketbra{\Psi_{N_s}}$ with local energy $N_s$ into the pure loss channel~\cite{Pirandola2009}:
\bb\label{q2_ec_pure_loss}
    K\left(\mathcal{E}_\lambda,N_s\right)\ge Q_2\left(\mathcal{E}_\lambda,N_s\right)\ge
     R\left(\mathcal{E}_\lambda,N_s\right) \coloneqq h(N_s)-h\!\left((1-\lambda)N_s\right)\,.
\ee 
In this work, we will establish  lower bounds for $n$-shot versions of all the above capacities.
An upper bound on the $n$-shot unconstrained 
two-way quantum and secret-key capacities of the pure loss channel is given by~\cite{MMMM}
\bb\label{upper_bound_n_shot}
    Q_2^{(\varepsilon,n)}(\mathcal{E}_\lambda)\le K^{(\varepsilon,n)}(\mathcal{E}_\lambda)\le n Q_2(\mathcal{E}_{\lambda})+\log_26+2\log_2\!\left(\frac{1+\varepsilon}{1-\varepsilon}\right)\,,
\ee
while an easily computable lower bound was not known before the present work.

For all $\Phi_g\ge1$ the quantum capacity $Q(\Phi_g)$~\cite{holwer, Wolf2007, Wolf2006}, the two-way quantum capacity $Q_2(\Phi_g)$~\cite{PLOB}, and the secret-key capacity $K(\Phi_{g})$~\cite{PLOB} of the pure amplifier channel $\Phi_{g}$ are all equal to each other and are given by:
\bb\label{capacities_pure_ampl_main}
Q(\Phi_{g})=Q_2(\Phi_{g})&=K(\Phi_{g})= \log_2\!\left(\frac{g}{g-1}\right)\,.
\ee  
Moreover, the energy-constrained quantum capacity of the pure amplifier channel has been determined and it reads~\cite{holwer, Caruso2006, Wolf2007, Mark2012, Mark-energy-constrained, Noh2019}  
\bb\label{q_ec_pure_ampl_main}
    Q\left(\Phi_g,N_s\right)&=h\!\left( gN_s + g-1\right)-h\!\left( (g-1)(N_s+1) \right)\,.
\ee
In contrast, the energy-constrained two-way quantum and secret-key capacities of the pure amplifier channel have not yet been determined. However, a lower bound is given by~\cite{PLOB,Pirandola2009}:
\bb\label{q2_ec_pure_amp_main}
    K\left(\Phi_g,N_s\right)&\ge Q_2\left(\Phi_g,N_s\right)\ge Q\left(\Phi_g,N_s\right)=h\!\left( gN_s + g-1\right)-h\!\left( (g-1)(N_s+1) \right)\,.
\ee 
An upper bound on the $n$-shot capacities of the pure amplifier channel is given by~\cite{MMMM}
\bb
    Q_2^{(\varepsilon,n)}(\Phi_g)&\le K^{(\varepsilon,n)}(\Phi_g)\le n\log_2\!\left(\frac{g}{g-1}\right)+\log_26+2\log_2\!\left(\frac{1+\varepsilon}{1-\varepsilon}\right)\,,
\ee
while a lower bound is proved in the present paper.

\section{Non-asymptotic bosonic quantum communication across arbitrary Gaussian channels}\label{Sec_bos_q_com_main}
The following theorem establishes an easily computable lower bound on the $n$-shot quantum capacity, $n$-shot two-way quantum capacity, and $n$-shot secret-key capacity of an arbitrary Gaussian channel. 
\begin{thm}[(Lower bound on the $n$-shot capacities of arbitrary Gaussian channels)]\label{thm_lower_bound_arb_gauss_main}
Let $\NN_{A'\to B}$ be a Gaussian channel, mapping the input system $A'$ into the output system $B$. Let $\varepsilon\in(0,1)$ be the error threshold, let $n\in\N$ be the number of uses of the channel, and let us assume that $n\ge2\log_2\!\left(\frac{2}{\varepsilon^2}\right)$. Then, the $n$-shot quantum capacity, the $n$-shot two-way quantum capacity, and the $n$-shot secret-key capacity of $\NN_{A'\to B}$ can be lower bounded as follows:
     \bb\label{lower_bound_CV2_main1}
        Q^{(\varepsilon,n)}(\NN)&\ge n I_c(A\,\rangle\, B)_{\Psi}-\sqrt{n}4\log_2\!\left(\sqrt{2^{H_{1/2}(A|B)_{\Psi}}}+\sqrt{2^{H_{1/2}(A|E)_{\Psi}}}+1\right)\sqrt{\log_2\!\left(\frac{2^9}{\varepsilon^2}\right)}-\log_2\!\left(\frac{2^{18}}{3\varepsilon^4}\right)\,,
    \ee
    \bb\label{lower_bound_CV2_main2}
        K^{(\varepsilon,n)}(\NN)&\ge Q_2^{(\varepsilon,n)}(\NN)\ge 
        \left\{ \begin{array}{l} 
        nI_c(A\,\rangle\, B)_{\Psi}-\sqrt{n}4\log_2\!\left(\sqrt{2^{H_{1/2}(A|B)_{\Psi}}}+\sqrt{2^{H_{1/2}(A|E)_{\Psi}}}+1\right)\sqrt{\log_2\!\left(\frac{8}{\varepsilon}\right)}-\log_2\!\left(\frac{16}{\varepsilon^2}\right)\,,\\ \\ 
        nI_c(B\,\rangle\, A)_{\Psi}-\sqrt{n}4\log_2\!\left(\sqrt{2^{H_{1/2}(B|A)_{\Psi}}}+\sqrt{2^{H_{1/2}(B|E)_{\Psi}}}+1\right)\sqrt{\log_2\!\left(\frac{8}{\varepsilon}\right)}-\log_2\!\left(\frac{16}{\varepsilon^2}\right)\;, \end{array} \right.
    \ee
for any input pure Gaussian state $\Phi_{AA'}$, where we denoted as $\Psi_{ABE}\coloneqq V_{A'\to BE}\Phi_{AA'}(V_{A'\to BE})^{\dagger}$ the output state associated with a Stinespring dilation isometry $V_{A'\to BE}$ of the channel. Moreover, the quantities $I_c$ and $H_{1/2}$ are finite quantities denoting coherent information and conditional Petz-Rényi entropies explicitly defined in Section~\ref{subsec_def_entropic_quant}.

Additionally, if the input state $\Phi_{AA'}$ satisfies the energy constraint $\Tr[\hat{N}_{A'}\Phi_{AA'}]\le N_s$, where $\hat{N}_{A'}$ is the photon number operator on $A'$ (see definition of photon number operator in Section~\ref{sec_preliminaries_CV}), then the right-hand side of \eqref{lower_bound_CV2_main1} and \eqref{lower_bound_CV2_main2} also constitute lower bounds on the energy-constrained $n$-shot capacities $Q^{(\varepsilon,n)}(\NN,N_s)$, $Q_2^{(\varepsilon,n)}(\NN,N_s)$, and $K^{(\varepsilon,n)}(\NN,N_s)$.
\end{thm}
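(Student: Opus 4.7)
The plan is to lift the finite-dimensional one-shot achievability results of~\cite{Sumeet_book} to the infinite-dimensional Gaussian setting by means of a controlled energy (photon-number) truncation, and then to reduce the resulting smoothed entropies to the closed-form Petz--Rényi quantities of Lemma~\ref{thm_cond_petz_GaussianM} via a quantum asymptotic equipartition property (AEP).

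First, for any cutoff $N\in\N$, let $\Pi_N$ denote the projector onto the finite-dimensional subspace of Fock states with at most $N$ photons, on the input $A'$ and also on the output $B$ and the environment $E$. I introduce the normalised truncated input
\bb
\Phi^{(N)}_{AA'} \coloneqq \frac{(\id_A\otimes\Pi_N)\,\Phi_{AA'}\,(\id_A\otimes\Pi_N)}{\Tr\!\left[(\id_A\otimes\Pi_N)\,\Phi_{AA'}\right]} \,,
\ee
and set $\Psi^{(N)}_{ABE}\coloneqq V_{A'\to BE}\,\Phi^{(N)}_{AA'}\,V_{A'\to BE}^{\dagger}$. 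Since $\Psi^{(N)}$ is supported on a finite-dimensional space, I can apply the standard one-shot lower bounds of~\cite{Sumeet_book} for the quantum, two-way quantum, and secret-key capacities. These take the general hashing/decoupling form
\bbb
Q^{(\varepsilon,n)}(\NN) \;\ge\; H^{\eta}_{\min}(A^n|E^n)_{\Psi^{(N)\otimes n}} - O\!\left(\log \varepsilon^{-1}\right) \,,
\eee
with analogous statements for $Q_2^{(\varepsilon,n)}$ and $K^{(\varepsilon,n)}$ in terms of smoothed entropies of $(A|B)$ and $(B|A)$ (the latter producing the reverse-coherent-information branch in~\eqref{lower_bound_CV2_main2}).

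Second, I would invoke the quantum AEP for smoothed min-entropies on the i.i.d.\ state $\Psi^{(N)\otimes n}$, which yields $H^{\eta}_{\min}(A^n|E^n)_{\Psi^{(N)\otimes n}} \ge n\,H(A|E)_{\Psi^{(N)}} - \sqrt n\, c(\eta,\Psi^{(N)})$. The second-order coefficient $c$, after combining with the dual bound for the decoding step and using the purity of $\Psi$ to identify $H_{\max}(A|B)_\Psi=-H_{\min}(A|E)_\Psi$, takes precisely the form $4\log_2\!\bigl(\sqrt{2^{H_{1/2}(A|B)}}+\sqrt{2^{H_{1/2}(A|E)}}+1\bigr)\sqrt{\log_2(\cdot/\eta)}$ appearing in~\eqref{lower_bound_CV2_main1}. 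Using $H(A|E)_\Psi=-H(A|B)_\Psi=I_c(A\rangle B)_\Psi$ on the pure output converts the linear-in-$n$ part to $nI_c(A\rangle B)_\Psi$. The same argument applied with the roles of $A$ and $B$ swapped, together with the reverse-coherent-information variant of the hashing bound, produces the two branches of~\eqref{lower_bound_CV2_main2}.

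The crux of the proof --- and the main obstacle --- is passing to the limit $N\to\infty$ while keeping the bound finite. Here I rely on two essential ingredients: (i) the exponential photon-number tail bound for Gaussian states in Section~\ref{section_tail}, which guarantees that $\Tr[(\id-\Pi_N)\,\Phi_{AA'}]$, and likewise the tails of $\Psi$ on the $B$ and $E$ sides, decay exponentially in $N$, so that $\Psi^{(N)}\to\Psi$ in trace norm at an explicit, dimension-independent rate; and (ii) the closed formula of Lemma~\ref{thm_cond_petz_GaussianM}, which expresses $H_{1/2}(A|B)_\Psi$ and $H_{1/2}(A|E)_\Psi$ directly in terms of the covariance matrix of $\Psi$, showing in particular that both are \emph{finite}. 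Combining the trace-norm convergence with standard stability of $n$-shot capacities under small input perturbations (by absorbing a vanishing $\delta_N$ into the error tolerance), and with lower semi-continuity of $I_c$ and continuity of $H_{1/2}$ along the truncated sequence, yields the stated bound for the original Gaussian input $\Phi_{AA'}$. The energy-constrained refinement comes for free, since $\Pi_N$ commutes with the photon-number operator $\hat N$ and therefore the truncation preserves any constraint of the form $\Tr[\hat N_{A'}\Phi_{AA'}]\le N_s$. The hardest technical point is verifying that the $\sqrt n$-coefficient --- in particular the $H_{1/2}(A|E)$ term in~\eqref{lower_bound_CV2_main1} --- does not pick up a spurious dependence on the ambient Fock-space dimension as $N\to\infty$; this is exactly where the exponential photon-number tail and the explicit Gaussian formula for $H_{1/2}$ are indispensable.
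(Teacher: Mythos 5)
Your strategy is essentially the one the paper follows: truncate to a finite-dimensional subspace, apply the finite-dimensional one-shot hashing/decoupling bounds of~\cite{Sumeet_book}, convert the smooth max-entropy to a smooth max relative entropy via duality, invoke the asymptotic equipartition property to extract the $nI_c-\sqrt{n}(\cdots)$ form, and remove the cutoff using the exponential photon-number tail bound. Two remarks on the setup: the paper projects the reference system $A$ (purity then forces the $A'$-marginal onto a finite-dimensional subspace) and truncates the output $B$ through an explicit Stinespring dilation with an auxiliary environment $E'$, rather than projecting $A'$ and $E$; and your step of ``absorbing a vanishing $\delta_N$ into the error tolerance'' is unnecessary --- the truncated state is itself a legitimate input choice, so the finite-dimensional bound is directly a lower bound on $Q^{(\varepsilon,n)}(\NN)$ at the original $\varepsilon$, and all that remains is to show that the entropic quantities evaluated on $\Psi^{(N)}$ converge to those of $\Psi$.

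That convergence is where your proposal has a genuine gap. You correctly flag it as the hardest point, but you then dispose of it by invoking ``continuity of $H_{1/2}$ along the truncated sequence,'' which is precisely the statement that must be proven: conditional Petz--R\'enyi entropies are \emph{not} trace-norm continuous in infinite dimensions, and Lemma~\ref{thm_cond_petz_GaussianM} only certifies finiteness of the limiting value, not convergence of the truncated sequence to it. The mechanism the paper uses (and which your sketch never supplies) is to write $\Tr\bigl[\sqrt{\Psi^{(k)}_{AE}}\,\mathbb{1}_A\otimes\sqrt{\Psi^{(k)}_{E}}\bigr]-\Tr\bigl[\sqrt{\Psi_{AE}}\,\mathbb{1}_A\otimes\sqrt{\Psi_{E}}\bigr]$ as a sum of error terms, bound each by H\"older's inequality in the Schatten $2$-norm, control $\bigl\|\sqrt{\Psi^{(k)}}-\sqrt{\Psi}\bigr\|_2$ by $\|\Psi^{(k)}-\Psi\|_1^{1/2}$, and observe that the resulting dimension factor $\sqrt{\tbinom{k+m_A}{m_A}}$ coming from $\|P^A_k\otimes\sqrt{\Psi^{(k)}_E}\|_2$ grows only polynomially while the truncation error decays exponentially in $k$ by Theorem~\ref{the:expdecay00_main}; the remaining tail term requires $\Tr\sqrt{\Psi_{AE}}<\infty$, which holds for Gaussian states. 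Without this argument (or an equivalent one), the $\sqrt{n}$-coefficient in your bound is not justified in the limit. A smaller looseness of the same kind: for the first-order term you invoke ``lower semi-continuity of $I_c$,'' whereas what is actually needed (and what the paper uses) is genuine convergence of the conditional entropy along the truncated sequence, obtained from the energy-bounded uniform continuity of $S(A|B)$ in trace norm.
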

The proof of the above theorem is provided in Section~\ref{section_AEP0} and 
relies on the so-called \emph{asymptotic equipartition property}~\cite{Tomamichel_2009,Furrer_2011,fawzi2023asymptotic,fang2024generalizedquantumasymptoticequipartition} and on the \emph{tail bounds on Gaussian states} explained in the forthcoming Section~\ref{section_tail}.
This theorem guarantees that, as long as  $n$ is a 
sufficiently large integer (i.e.~$n\ge2\log_2\!\left(\frac{2}{\varepsilon^2}\right)$),
$n$ uses of the Gaussian channel $\NN$ are sufficient to transmit either a number of qubits given by the lower bound in \eqref{lower_bound_CV2_main1} or a number of ebits and secret-key bits given by the lower bound in \eqref{lower_bound_CV2_main2} with an error not exceeding $\varepsilon$. Notably, the entropic quantities involved in such lower bounds are easily computable in terms of the covariance matrix of the output state $\Psi$. In fact, the calculation of the coherent information involves calculating von Neumann entropies of Gaussian states, which is well-known to be easily computable just in terms of covariance matrices~\cite{BUCCO}. Additionally, in the forthcoming lemma we show how to compute the conditional Petz-Rényi entropy $H_{1/2}(A|B)_{\rho_{AB}}\coloneqq2\log_2(\Tr[\sqrt{\rho_{AB}}\,\mathbb{1}_A\otimes\sqrt{\rho_B}])$ of a Gaussian state $\rho_{AB}$ in terms of its first moment and covariance matrix.
\begin{lemma}[(Conditional Petz-Rényi entropy of Gaussian states)]\label{thm_cond_petz_GaussianM}
    Let $A$ be an $m$-mode system and let $B$ be an $n$-mode system. Let $\rho_{AB}$ be a Gaussian state on $AB$. Then, its conditional Petz-Rényi entropy can be calculated as:
\bb 
    H_{1/2}(A|B)_{\rho_{AB}}=\log_2\left(\frac{\sqrt{\det\!\left(V_{\mathrm{sqrt}}(\rho_{AB})\right)\,\det\!\left( V_{\mathrm{sqrt}}(\rho_B)\right)}}{ \det\!\left(   
  \frac{ V_{\mathrm{sqrt}}(\rho_{AB})\big|_B + V_{\mathrm{sqrt}}(\rho_B)}{2}\right)}\right)\,,
\ee
where we denoted 
\bb
    V_{\mathrm{sqrt}}(\sigma)&\coloneqq \left[\mathbb{1}+\sqrt{\mathbb{1}- (iV(\sigma)\Omega)^{-2} }\right]V(\sigma)\qquad\forall\,\sigma\,,
    \ee
$V_{\mathrm{sqrt}}(\rho_{AB})\big|_B$ is the $2n\times 2n$ bottom-right block of the matrix $V_{\mathrm{sqrt}}(\rho_{AB})$, the state $\rho_A$ (resp.~$\rho_B$) denotes the reduced states of $\rho_{AB}$ on $A$ (resp.~B), and $\Omega$ is defined in \eqref{symp_form}.
\end{lemma}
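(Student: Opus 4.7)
The plan is to reduce the quantity $\Tr\bigl[\sqrt{\rho_{AB}}(\mathbb{1}_A\otimes\sqrt{\rho_B})\bigr]$ defining $H_{1/2}(A|B)$ to an overlap between two bona fide Gaussian states, and then invoke the closed-form overlap formula for such states. The backbone is the intermediate claim that, for any Gaussian state $\rho$, the positive operator $\sqrt{\rho}$ is itself proportional to a Gaussian state, that $\sqrt{\rho}/\Tr[\sqrt{\rho}]$ has covariance matrix exactly $V_{\mathrm{sqrt}}(\rho)$, and that the normalization is $\Tr[\sqrt{\rho}] = \det\bigl(V_{\mathrm{sqrt}}(\rho)\bigr)^{1/4}$. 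I would prove this first. Both sides are unchanged under displacement (displacement unitaries intertwine with $\sqrt{\,\cdot\,}$ and only shift the first moment, which does not enter the formula), so I may assume $\mathbf{m}(\rho)=0$. Williamson's theorem then lets me write $\rho = U_S\bigl(\bigotimes_i\tau_{\nu_i}\bigr)U_S^\dagger$ with each $\tau_\nu$ a single-mode thermal state of symplectic eigenvalue $\nu$. For such a thermal state the Fock-diagonal expression $\tau_\nu = \tfrac{2}{\nu+1}\sum_k\bigl(\tfrac{\nu-1}{\nu+1}\bigr)^k\ket{k}\!\bra{k}$ makes it immediate that $\sqrt{\tau_\nu}$ is proportional to another thermal state whose symplectic eigenvalue $\nu'$ is the root of $(\nu'-1)/(\nu'+1)=\sqrt{(\nu-1)/(\nu+1)}$, yielding $\nu' = \nu+\sqrt{\nu^2-1}$ and, by a geometric-series summation, $\Tr[\sqrt{\tau_\nu}] = \sqrt{\nu'}$. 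A direct calculation confirms that $[\mathbb{1}+\sqrt{\mathbb{1}-(iV\Omega)^{-2}}]V$ applied to the thermal covariance matrix $\nu I_2$ returns $\nu' I_2$. Finally, both $V\mapsto V_{\mathrm{sqrt}}(V)$ and $\rho\mapsto\sqrt{\rho}$ are covariant under Gaussian unitary conjugation $\rho\mapsto U_S\rho U_S^\dagger$ (the covariance of the first follows from $S^T\Omega S=\Omega$, which gives $(iSVS^T\Omega)^2 = S(iV\Omega)^2 S^{-1}$), so the mode-by-mode identity extends to arbitrary $\rho$.

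Given the intermediate claim, I would write $\sqrt{\rho_{AB}} = \Tr[\sqrt{\rho_{AB}}]\,\tilde\rho_{AB}$ and $\sqrt{\rho_B} = \Tr[\sqrt{\rho_B}]\,\tilde\rho_B$ with $\tilde\rho_{AB}$ and $\tilde\rho_B$ Gaussian of covariances $V_{\mathrm{sqrt}}(\rho_{AB})$ and $V_{\mathrm{sqrt}}(\rho_B)$. The partial trace $\Tr_A\tilde\rho_{AB}$ is then Gaussian with covariance $V_{\mathrm{sqrt}}(\rho_{AB})|_B$, and I obtain
\begin{equation*}
\Tr\bigl[\sqrt{\rho_{AB}}(\mathbb{1}_A\otimes\sqrt{\rho_B})\bigr] = \det\bigl(V_{\mathrm{sqrt}}(\rho_{AB})\bigr)^{1/4}\det\bigl(V_{\mathrm{sqrt}}(\rho_B)\bigr)^{1/4}\,\Tr_B\!\bigl[(\Tr_A\tilde\rho_{AB})\,\tilde\rho_B\bigr].
\end{equation*}
The last factor is the overlap of two centered Gaussian states on the same $n$-mode system, for which there is the standard closed form $\Tr[\sigma_1\sigma_2]=1/\sqrt{\det\!\bigl((V_1+V_2)/2\bigr)}$ in the convention $V_{\mathrm{vac}}=\mathbb{1}$ used in the paper (a quick Wigner-function computation pins down the constants, which can be cross-checked against $\Tr[\tau_\nu^2]=1/\nu$). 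Substituting and taking $2\log_2$ of both sides then gives exactly the identity in the lemma.

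The main obstacle is the intermediate claim, and specifically matching the formula $V_{\mathrm{sqrt}}(V) = [\mathbb{1}+\sqrt{\mathbb{1}-(iV\Omega)^{-2}}]V$ to the covariance matrix of $\sqrt{\rho}/\Tr[\sqrt{\rho}]$ without ambiguity. One has to ensure that $iV\Omega$ is invertible (equivalently, that $\rho$ has full rank, which holds as long as no symplectic eigenvalue of $V$ equals $1$; the degenerate case of a pure Gaussian state must be handled by a limiting argument) and that the principal branch of the matrix square root $\sqrt{\mathbb{1}-(iV\Omega)^{-2}}$ is chosen so that the product with $V$ is a real, symplectically compatible covariance matrix. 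A cleaner route I would probably take is to work with the Gibbs-exponent parametrization $\rho\propto\exp\bigl(-\tfrac12\hat R^T G\hat R\bigr)$, from which $\sqrt{\rho}$ is manifestly Gaussian with Gibbs matrix $G/2$, and to derive $V_{\mathrm{sqrt}}$ from the standard relation $V = i\Omega\coth\bigl(iG\Omega/2\bigr)$ between $G$ and $V$.
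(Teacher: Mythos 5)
Your proposal is correct and follows essentially the same route as the paper: reduce $H_{1/2}(A|B)$ to the overlap $\Tr\bigl[\sqrt{\rho_{AB}}\,(\mathbb{1}_A\otimes\sqrt{\rho_B})\bigr]$, use the fact that the normalised square root of a Gaussian state is Gaussian with covariance matrix $V_{\mathrm{sqrt}}$ and trace $\det(V_{\mathrm{sqrt}})^{1/4}$, and finish with the Gaussian overlap formula (the first moments of $\Tr_A\tilde\rho_{AB}$ and $\tilde\rho_B$ coincide, so the exponential factor drops out). The only difference is that you re-derive the square-root covariance formula via Williamson decomposition and thermal states, whereas the paper cites it from Holevo and obtains $\Tr\sqrt{\sigma}=\det(V_{\mathrm{sqrt}}(\sigma))^{1/4}$ slightly more slickly from the self-overlap identity $1=\Tr[\sigma]=(\Tr\sqrt{\sigma})^2\,\Tr\bigl[\tfrac{\sqrt{\sigma}}{\Tr\sqrt{\sigma}}\tfrac{\sqrt{\sigma}}{\Tr\sqrt{\sigma}}\bigr]$.
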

The proof of Lemma~\ref{thm_cond_petz_GaussianM} is provided in Section~\ref{section_petz} in the Appendix. Lemma~\ref{thm_cond_petz_GaussianM} shows that the lower bounds on the \(n\)-shot capacities of Gaussian channels stated in Theorem~\ref{thm_lower_bound_arb_gauss_main}
 are easily computable, as they depend solely on the first moment and the covariance matrix of the tripartite Gaussian state of the channel's input, output, and environment. Moreover, by requiring that the lower bounds in Theorem~\ref{thm_lower_bound_arb_gauss_main}
 are larger than some number $k$ and then solving the resulting quadratic inequality with respect to $\sqrt{n}$, one can determine how many channel uses $n$ suffice to transmit \(k\) qubits, distil \(k\) Bell pairs, and generate \(k\) secret-key bits within a given error tolerance \(\varepsilon\) across the Gaussian channel $\NN$.

\section{Non-asymptotic bosonic quantum communication across the pure loss channel and pure amplifier channel}\label{sec_main_pure_loss}

Evaluating the r.h.s.~terms of Eqs.~(\ref{lower_bound_CV2_main1}) and (\ref{lower_bound_CV2_main2})
for the special case of the pure loss
channel $\mathcal{E}_\lambda$ of transmissivity $\lambda\in[0,1]$ and pure amplifier channel $\Phi_g$ of gain $g\ge1$, we obtain the following lower bounds on their $n$-shot capacities:
\bb \label{lower_bound_q_pure_loss_maintext}
    Q^{(\varepsilon,n)}(\mathcal{E}_\lambda)&\ge nQ(\mathcal{E}_\lambda)-\sqrt{n}4\log_2\!\left( \sqrt{\frac{1-\lambda}{\lambda}}+ \sqrt{\frac{\lambda}{1-\lambda}} +1 \right)\sqrt{\log_2\!\left(\frac{2^9}{\varepsilon^2}\right)}-\log_2\!\left(\frac{2^{18}}{3\varepsilon^4}\right)\,,
\\ 
K^{(\varepsilon,n)}(\mathcal{E}_\lambda)&\ge Q_2^{(\varepsilon,n)}(\mathcal{E}_\lambda) \ge nQ_2(\mathcal{E}_\lambda)-\sqrt{n}4\log_2\!\left(\sqrt{1-\lambda}+\sqrt{\frac{1}{1-\lambda}}+1\right)\sqrt{\log_2\!\left(\frac{8}{\varepsilon}\right)}-\log_2\!\left(\frac{16}{\varepsilon^2}\right)\,,\\
Q^{(\varepsilon,n)}(\Phi_g)&\ge nQ(\Phi_g)-\sqrt{n}4\log_2\!\left( \sqrt{\frac{g-1}{g}}+ \sqrt{\frac{g}{g-1}} +1 \right)\sqrt{\log_2\!\left(\frac{2^9}{\varepsilon^2}\right)}+\log_2\!\left(\frac{3\varepsilon^4}{2^{18}}\right)\,,\\
    K^{(\varepsilon,n)}(\Phi_g)&\ge Q_2^{(\varepsilon,n)}(\Phi_g) \ge nQ_2(\Phi_g)-\sqrt{n}4\log_2\!\left(\sqrt{\frac{g-1}{g}}+ \sqrt{\frac{g}{g-1}} +1 \right)\sqrt{\log_2\!\left(\frac{8}{\varepsilon}\right)}-\log_2\!\left(\frac{16}{\varepsilon^2}\right)\,,
\ee
for all $\varepsilon\in(0,1)$ and $n\ge2\log_2\!\left(\frac{2}{\varepsilon^2}\right)$
(see Theorem~\ref{thm_one_shot_Q_pure}, Theorem~\ref{thm_one_shot_Q2_pure}, and Theorem~\ref{thm_one_shot_Q_pure_ampl} of the Appendix).

It turns out that at least for the pure loss channel,  
an alternative, improved version of the above relations can be obtained by relying on 
the properties of the so-called \emph{entropy variance}~\cite{Kaur_2017,MMMM}.
Specifically, as shown in  
Section~\ref{sec_best_low}, one establishes the following result: 
 \begin{thm}[(Improved lower bounds on the $n$-shot capacities of the pure loss channel)]\label{thm_lower_bound_one_shot_main}
Let $\lambda\in[0,1]$, $\varepsilon\in(0,1)$, and $n\in\mathbb{N}$. The $n$-shot quantum capacity, the $n$-shot two-way quantum capacity, and the $n$-shot secret-key capacity of the pure loss channel $\mathcal{E}_\lambda$ can be lower bounded as follows:
    \bb\label{ineq_best_bounds_main01}
        Q^{(\varepsilon,n)}(\mathcal{E}_{\lambda}) 
        &\geq  
     n Q(\mathcal{E}_{\lambda}) - \log_2\!\left( \frac{2^{23}(32-\varepsilon)^2}{ (16-\varepsilon)\varepsilon^6}\right)\;,
    \ee
    \bb\label{ineq_best_bounds_main02}
        K^{(\varepsilon,n)}(\mathcal{E}_{\lambda})&\ge Q_2^{(\varepsilon,n)}(\mathcal{E}_{\lambda}) \ge  n Q_2(\mathcal{E}_{\lambda})
        - \log_2\!\left(\frac{2^{6}\,3\,(4-\sqrt{\varepsilon})^2}{(2-\sqrt{\varepsilon})\varepsilon^3 }\right)\,,
    \ee
    where $Q(\mathcal{E}_{\lambda})=\max\!\left(0,\log_2\!\left(\frac{\lambda}{1-\lambda}\right)\right)$ and $Q_2(\mathcal{E}_{\lambda})= \log_2\!\left(\frac{1}{1-\lambda}\right)$ are the (asymptotic) quantum capacity and the (asymptotic) two-way quantum capacity, respectively.
\end{thm}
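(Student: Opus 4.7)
The plan is to exploit the specific Gaussian structure of the pure loss channel to sharpen Theorem~\ref{thm_lower_bound_arb_gauss_main}, replacing its $O(\sqrt{n})$ second-order correction by a mere $O(\log(1/\varepsilon))$ overhead. The core idea is the second-order asymptotic expansion of smoothed entropies on i.i.d.\ states---in the style of Tomamichel-Hayashi and Kaur-Wilde---combined with an optimisation of the two-mode squeezed vacuum (TMSV) input energy $N_s$ as a function of $n$.

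First, I would invoke a standard one-shot achievability bound for $Q^{(\varepsilon,n)}(\mathcal{E}_\lambda)$ in terms of the smooth min-entropy of $n$ copies of the tripartite TMSV output state $\Psi_{ABE}$, and a dual bound for $Q_2^{(\varepsilon,n)}(\mathcal{E}_\lambda)$ (and thus also $K^{(\varepsilon,n)}$) phrased via the reverse coherent information. On i.i.d.\ states, these smooth entropies obey a Berry-Esseen-type expansion whose leading term is $n\,I_c(A\rangle B)_\Psi$ (or the reverse coherent information for $Q_2$), whose second-order correction is of order $\sqrt{n\,V(A|E)_\Psi}$ with $V$ denoting the conditional entropy variance of a single-letter output, and whose remainder is $O(\log n)$.

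Second, using explicit Gaussian formulas in the spirit of Lemma~\ref{thm_cond_petz_GaussianM}, I would evaluate $I_c(N_s)$ and $V(N_s)$ on the TMSV output of energy $N_s$. One expects $I_c(N_s)\to Q(\mathcal{E}_\lambda)$ (respectively $Q_2(\mathcal{E}_\lambda)$) with a rate decaying polynomially in $1/N_s$, while $V(N_s)$ grows at most polynomially in $N_s$. Letting $N_s=N_s(n,\varepsilon)$ grow sufficiently fast with $n$, both the gap $n[Q(\mathcal{E}_\lambda)-I_c(N_s)]$ and the second-order term $\sqrt{n\,V(N_s)}$ can then be absorbed into a constant overhead depending only on $\varepsilon$, yielding a correction of the form $\log_2(\mathrm{const}(\varepsilon))$ as in \eqref{ineq_best_bounds_main01}-\eqref{ineq_best_bounds_main02}.

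The main obstacle will be controlling the joint limit $n,N_s\to\infty$: the standard second-order expansion is phrased at a fixed state with $n\to\infty$, so a uniform remainder estimate is needed before one may take $N_s=N_s(n)$. A clean route around this uses the photon-number tail bound of Section~\ref{section_tail}: truncating the TMSV output at $O(\log(1/\varepsilon))$ photons reduces the problem to a finite-dimensional one-shot distillation task, where the requisite second-order tools are available with explicit, energy-independent constants. Carefully stitching together the truncation error (controlled by the tail bound), the finite-dimensional one-shot achievability bound, and the scaling of $I_c$ and $V$ with $N_s$ should then yield exactly \eqref{ineq_best_bounds_main01} and \eqref{ineq_best_bounds_main02}, with the precise numerical constants arising from tracking the smoothing parameters and the triangle-inequality losses between the truncated and full states.
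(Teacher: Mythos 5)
Your overall skeleton (one-shot achievability via smooth entropies, a variance-controlled second-order bound, TMSV input, truncation to finite dimensions controlled by the photon-number tail bound) matches the paper's route in Section~\ref{section_entropy_cariance}, but the decisive step of your argument does not work as stated. You assume that $V(N_s)$ ``grows at most polynomially in $N_s$'' and then claim that by letting $N_s=N_s(n,\varepsilon)$ grow fast enough the term $\sqrt{n\,V(N_s)}$ can be absorbed into an $\varepsilon$-dependent constant. If $V(N_s)$ is merely polynomially bounded (or even just bounded away from zero), then $\sqrt{n\,V(N_s)}=\Omega(\sqrt{n})$ for every choice of $N_s(n)$, and no tuning of the input energy removes the $\sqrt{n}$ correction. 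The fact that makes the theorem true --- and which your proposal never identifies --- is that for the pure loss channel the conditional entropy variances of the TMSV output satisfy $\lim_{N_s\to\infty}V(A|E)_{\Psi^{(\lambda,N_s)}}=\lim_{N_s\to\infty}V(B|E)_{\Psi^{(\lambda,N_s)}}=0$ (a cancellation between the two thermal marginal variances and the cross term, Eq.~\eqref{eq_prove_entropy_variance} and \eqref{limits_var}). With this, no balancing of $N_s$ against $n$ is needed at all: one fixes $n$, sends $N_s\to\infty$, and the entire $\sqrt{n\,V}$ term vanishes while the coherent-information term converges to $nQ(\mathcal{E}_\lambda)$ (resp.\ $nQ_2(\mathcal{E}_\lambda)$).

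A second, smaller issue: a genuine Berry--Esseen second-order expansion of the smooth entropies carries an $O(\log n)$ remainder, which would leave an $n$-dependent additive term in the final bound, whereas \eqref{ineq_best_bounds_main01}--\eqref{ineq_best_bounds_main02} have a purely $\varepsilon$-dependent overhead. The paper avoids this by chaining $D_{\max}^{\varepsilon}$ to the hypothesis-testing relative entropy (Lemma~\ref{lemma_maxhyp}) and then using the Chebyshev-type bound of Lemma~\ref{lemma_ChebyDh}, whose remainder $\log_2 6+2\log_2\frac{1+\varepsilon}{1-\varepsilon}$ is independent of $n$; the price is a weaker coefficient $\sqrt{V/\varepsilon}$ on the second-order term, which is harmless precisely because $V\to0$. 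Finally, your suggestion to truncate at $O(\log(1/\varepsilon))$ photons is not how the truncation enters: the paper truncates the input at photon number $k$ only to make the finite-dimensional one-shot lemmas applicable, and then takes $k\to\infty$, which requires proving continuity of the conditional entropy variance under this truncation (the lengthy Lemma~\ref{lemma_cont_cond_var}) --- a nontrivial step your proposal does not account for.
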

The above theorem establishes that $n$ uses of the pure loss channel are sufficient to transmit the number of qubits given on the right-hand side of Eq.~\eqref{ineq_best_bounds_main01} up to an error $\varepsilon$ and to distil the number of ebits --- and hence secret-key bits given on the right-hand side of \eqref{ineq_best_bounds_main02} --- up to an error $\varepsilon$. One may notice
that at variance with the inequalities~(\ref{lower_bound_q_pure_loss_maintext}), the lower bounds of Theorem~\ref{thm_lower_bound_one_shot_main}
do not exhibit a negative contribution proportional to $\sqrt{n}$.
 In particular by comparing Eq.~\eqref{ineq_best_bounds_main02} with 
  the upper bound of Ref.~\cite{MMMM} reported in Eq.~\eqref{upper_bound_n_shot} this allows us to conclude that for constant values of $\varepsilon$ it holds that
\bb
    Q_2^{(\varepsilon,n)}(\mathcal{E}_{\lambda})=K^{(\varepsilon,n)}(\mathcal{E}_{\lambda})= n\log_2\!\left(\frac{1}{1-\lambda}\right) + \pazocal{O}(1)\,.
\ee 
This proves the conjecture posed in \cite{Kaur_2017}, which states that the upper bound on the $n$-shot capacities of the pure loss channel reported in Eq.~\eqref{upper_bound_n_shot} is nearly optimal. 
Additionally, Theorem~\ref{thm_lower_bound_one_shot_main} and Eq.~\eqref{upper_bound_n_shot} allow us to prove the following corollaries (explicitly proved in Section~\ref{sec_channel_compl}), which determine the number of uses of the pure loss channel required to distil \(k\) ebits, generate \(k\) secret-key bits, and distribute $k$ qubits within an error tolerance of \(\varepsilon\).
\begin{cor}[(One-shot entanglement and secret-key distribution across the pure loss channel)]
    A number 
\bb
    n\ge\frac{k+ \log_2\!\left(\frac{2^{6}\,3\,(4-\sqrt{\varepsilon})^2}{(2-\sqrt{\varepsilon})\varepsilon^3 }\right)}{\log_2\!\left(\frac{1}{1-\lambda}\right)} \,,
\ee
of uses of the pure loss channel $\mathcal{E}_\lambda$, together with LOCCs, suffice in order to generate $k$ ebits --- and hence $k$ secret-key bits --- with an error of at most $\varepsilon$. Conversely, if $n$ uses of the pure loss channel, together with LOCCs, allows one to generate $k$ secret key bits --- e.g.~$k$ ebits --- up to an error $\varepsilon$, then we must have
\bb n\ge \frac{k -\log_2\!\left(\frac{6(1+\varepsilon)^2}{(1-\varepsilon)^2}\right)}
{ \log_2\!\left(\frac{1}{1-\lambda}\right)  }\,.
\ee
Here, the "error" is measured in the same way as explained in Section~\ref{subsec_n_shot}.
\end{cor}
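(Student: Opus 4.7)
The plan is to derive both inequalities as purely algebraic rearrangements of the one-shot bounds already established for the pure loss channel in this section, viewed as affine inequalities in $n$.

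For the achievability direction, I would start from the lower bound of Theorem~\ref{thm_lower_bound_one_shot_main}, which reads
\begin{equation*}
Q_2^{(\varepsilon,n)}(\mathcal{E}_{\lambda})\;\ge\;n\log_2\!\left(\frac{1}{1-\lambda}\right)-\log_2\!\left(\frac{2^{6}\,3\,(4-\sqrt{\varepsilon})^2}{(2-\sqrt{\varepsilon})\varepsilon^3}\right).
\end{equation*}
A sufficient condition to guarantee the existence of a protocol distilling $k$ ebits with error at most $\varepsilon$ is that the right-hand side be at least $k$. Since $\log_2(1/(1-\lambda))>0$ for every $\lambda\in(0,1]$, I can isolate $n$ and obtain exactly the first inequality of the corollary. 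The secret-key claim follows for free from the general inclusion $K^{(\varepsilon,n)}(\mathcal{N})\ge Q_2^{(\varepsilon,n)}(\mathcal{N})$ recalled in Section~\ref{subsec_n_shot}: any Bell pair within trace distance $\varepsilon$ of the ideal one, measured in the computational basis, yields a shared secret-key bit with the same error.

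For the converse direction, the plan is symmetric. I would start from the upper bound of Eq.~\eqref{upper_bound_n_shot}, namely
\begin{equation*}
K^{(\varepsilon,n)}(\mathcal{E}_{\lambda})\;\le\;n\log_2\!\left(\frac{1}{1-\lambda}\right)+\log_2 6+2\log_2\!\left(\frac{1+\varepsilon}{1-\varepsilon}\right)\;=\;n\log_2\!\left(\frac{1}{1-\lambda}\right)+\log_2\!\left(\frac{6(1+\varepsilon)^2}{(1-\varepsilon)^2}\right),
\end{equation*}
where the last identity is the trivial regrouping $\log_2 6+2\log_2((1+\varepsilon)/(1-\varepsilon))=\log_2(6(1+\varepsilon)^2/(1-\varepsilon)^2)$. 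Observing that any protocol generating $k$ secret-key bits within error $\varepsilon$ (in particular, any protocol distilling $k$ ebits) certifies $k\le K^{(\varepsilon,n)}(\mathcal{E}_{\lambda})$ by definition of the one-shot secret-key capacity, I substitute and isolate $n$ to obtain the required necessary condition.

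No step requires a new idea, so there is no significant obstacle; the corollary is essentially a rephrasing of Theorem~\ref{thm_lower_bound_one_shot_main} and Eq.~\eqref{upper_bound_n_shot} in the "channel-use complexity" language of Section~\ref{sec_channel_compl}. The only care needed is to exclude the degenerate value $\lambda=0$ (where $\log_2(1/(1-\lambda))$ is undefined and no secret-key distillation is possible anyway) and to confirm that the coefficient of $n$ is strictly positive before dividing through.
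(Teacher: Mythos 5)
Your proposal is correct and follows essentially the same route as the paper: the achievability part is obtained by demanding that the lower bound of Theorem~\ref{thm_lower_bound_one_shot_main} on $Q_2^{(\varepsilon,n)}(\mathcal{E}_\lambda)$ be at least $k$ and solving for $n$, and the converse by combining $k\le K^{(\varepsilon,n)}(\mathcal{E}_\lambda)$ with the upper bound of Eq.~\eqref{upper_bound_n_shot} and again solving for $n$. The algebraic regrouping of the additive constant and the positivity check on $\log_2(1/(1-\lambda))$ match what the paper does implicitly.
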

\begin{cor}[(One-shot qubit distribution across the pure loss channel)]\label{thm_activation_complexityQ_m}
    Let $\lambda\in(\frac{1}{2},1)$ be the transmissivity. Then, a number
    \bb
        n\ge \frac{k+  \log_2\!\left( \frac{2^{23}(32-\varepsilon)^2}{ (16-\varepsilon)\varepsilon^6}\right)}
        {\log_2\!\left(\frac{\lambda}{1-\lambda}\right)}\;,
    \ee
    of uses of the pure loss channel $\mathcal{E}_\lambda$ suffices in order to transmit $k$ qubits with an error of at most $\varepsilon$. Here, the error is measured as explained in Section~\ref{subsec_n_shot}. 
\end{cor}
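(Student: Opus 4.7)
The plan is to obtain this corollary as an immediate algebraic inversion of the lower bound on the $n$-shot quantum capacity of the pure loss channel established in Theorem~\ref{thm_lower_bound_one_shot_main}, specifically inequality~\eqref{ineq_best_bounds_main01}. The guiding principle is operational: by the definition of $Q^{(\varepsilon,n)}(\mathcal{E}_\lambda)$ recalled in Section~\ref{subsec_n_shot} as the largest integer $k$ for which $k$ logical qubits can be reliably transmitted across $n$ channel uses within error $\varepsilon$, any integer $n$ that forces the lower bound on $Q^{(\varepsilon,n)}(\mathcal{E}_\lambda)$ to exceed $k$ automatically certifies the existence of a code achieving this task.

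First I would observe that the hypothesis $\lambda\in(\tfrac{1}{2},1)$ ensures that the asymptotic quantum capacity $Q(\mathcal{E}_\lambda)=\log_2\!\left(\lambda/(1-\lambda)\right)$ is strictly positive, so that the eventual division by this quantity is well defined and preserves the direction of the inequality. Next I would invoke \eqref{ineq_best_bounds_main01} and require that the resulting lower bound be at least $k$, which produces the linear inequality
\begin{equation*}
n\,\log_2\!\left(\frac{\lambda}{1-\lambda}\right)\;-\;\log_2\!\left(\frac{2^{23}(32-\varepsilon)^2}{(16-\varepsilon)\varepsilon^6}\right)\;\ge\;k
\end{equation*}
in the single unknown $n$. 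Isolating $n$ on one side and dividing by the positive quantity $\log_2(\lambda/(1-\lambda))$ gives exactly the threshold stated in the corollary.

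No genuine analytic obstacle arises at this stage, since all the difficult work is carried out inside Theorem~\ref{thm_lower_bound_one_shot_main} itself (and, upstream, in the tail bounds of Section~\ref{section_tail} and in the generalised asymptotic equipartition machinery feeding Theorem~\ref{thm_lower_bound_arb_gauss_main}). The only care I would take is the mild bookkeeping to confirm that the achievability reading of the $n$-shot quantum capacity indeed allows one to turn a numerical inequality $Q^{(\varepsilon,n)}(\mathcal{E}_\lambda)\ge k$ into the existence of a concrete protocol transmitting $k$ qubits within error $\varepsilon$, and to note that if the right-hand side of the displayed threshold fails to be an integer then one simply takes the smallest natural number above it, monotonicity of $Q^{(\varepsilon,n)}$ in $n$ ensuring that all larger values of $n$ remain sufficient.
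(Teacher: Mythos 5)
Your proposal is correct and matches the paper's own proof: the paper likewise requires the lower bound of Theorem~\ref{thm_lower_bound_one_shot_main} (Eq.~\eqref{ineq_best_bounds_main01}) to be at least $k$ and solves the resulting linear inequality for $n$, using $\lambda>\tfrac12$ so that $\log_2(\lambda/(1-\lambda))>0$. Your added remarks on the achievability reading of $Q^{(\varepsilon,n)}$ and on rounding up to an integer are sound but not needed beyond what the paper already does.
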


\subsection{Energy-constrained scenario} \label{sec_main_pure_loss_energy_constrained}

As shown in the previous paragraphs, the {asymptotic equipartition property} approach of Theorem~\ref{thm_lower_bound_arb_gauss_main}
yields weaker lower bounds compared to those relying on the analysis of the so-called {entropy variance} of 
Theorem~\ref{thm_lower_bound_one_shot_main}. Both methods, however, are effective in deriving easily computable lower bounds on the \emph{energy-constrained} $n$-shot capacities of the pure loss channel. Notably, in the energy-constrained scenario, the bounds derived using the asymptotic equipartition property approach are the best one over a wide range of parameters. These latter lower bounds are stated in Theorem~\ref{thm_lower_ec_aep} and they read
\bb\label{eq_bound_1_ec}
        Q^{(\varepsilon,n)}(\mathcal{E}_\lambda,N_s)&\ge  n Q\left(\mathcal{E}_\lambda,N_s\right)-\sqrt{n}4\log_2\left(\sqrt{2^{H_{1/2}(A|B)_{\Psi_{ABE}^{(\lambda,N_s)}}}}+\sqrt{2^{H_{1/2}(A|E)_{\Psi_{ABE}^{(\lambda,N_s)}}}}+1\right)\sqrt{\log_2\!\left(\frac{2^9}{\varepsilon^2}\right)}-\log_2\!\left(\frac{2^{18}}{3\varepsilon^4}\right)\,,\\
        K^{(\varepsilon,n)}(\mathcal{E}_\lambda,N_s)&\ge Q_2^{(\varepsilon,n)}(\mathcal{E}_\lambda,N_s)\\
        &\ge n R\left(\mathcal{E}_\lambda,N_s\right) -\sqrt{n}4\log_2\left(  \sqrt{2^{H_{1/2}(B|A)_{\Psi_{ABE}^{(\lambda,N_s)}}}}+\sqrt{2^{H_{1/2}(B|E)_{\Psi_{ABE}^{(\lambda,N_s)}}}}+1  \right)\sqrt{\log_2\!\left(\frac{8}{\varepsilon}\right)}-\log_2\!\left(\frac{16}{\varepsilon^2}\right)\,,
    \ee
where $Q\left(\mathcal{E}_\lambda,N_s\right)$ is the (asymptotic) energy-constrained quantum capacity reported in \eqref{qec_pur_loss}, $R\left(\mathcal{E}_\lambda,N_s\right)$ is reported in \eqref{q2_ec_pure_loss}, and the expressions of $H_{1/2}(A|B)_{\Psi_{ABE}^{(\lambda,N_s)}}$, $H_{1/2}(A|E)_{\Psi_{ABE}^{(\lambda,N_s)}}$, $H_{1/2}(B|A)_{\Psi_{ABE}^{(\lambda,N_s)}}$, $H_{1/2}(B|E)_{\Psi_{ABE}^{(\lambda,N_s)}}$ are reported in \eqref{ingredient1_ec}. Instead, the lower bounds derived exploiting the relative entropy approach in the energy-constrained setting are stated in Theorem~\ref{thm_lower_bound_twoway_rel_ec} and they read
\bb\label{eq_bound_2_ec}
 Q^{(\varepsilon,n)}(\mathcal{E}_{\lambda},N_s)&
        \geq  n  Q\left(\mathcal{E}_\lambda,N_s\right)-\sqrt{n}4\sqrt{V(A|E)_{\Psi^{(\lambda,N_s)}}}\frac{1}{\sqrt{\varepsilon}}-  \log_2\left( \frac{2^{23}(32-\varepsilon)^2}{ (16-\varepsilon)\varepsilon^6}\right)\,, \\
        K^{(\varepsilon,n)}(\mathcal{E}_{\lambda},N_s)&\ge Q_2^{(\varepsilon,n)}(\mathcal{E}_{\lambda},N_s) \\
        &\ge n R\left(\mathcal{E}_\lambda,N_s\right)-\sqrt{n}\sqrt{2V(B|E)_{\Psi^{(\lambda,N_s)}}}\frac{1}{\varepsilon^{1/4}}
       -\log_2\!\left(\frac{2^{6}\,3\,(4-\sqrt{\varepsilon})^2}{(2-\sqrt{\varepsilon})\varepsilon^3 }\right)\,,
    \ee
where $V(A|E)_{\Psi^{(\lambda,N_s)}}$ and $V(B|E)_{\Psi^{(\lambda,N_s)}}$ are reported in \eqref{eq_variance_thm_ec}. 
As $N_s\rightarrow \infty$, the quantities $V(A|E)_{\Psi^{(\lambda,N_s)}}$ and $V(B|E)_{\Psi^{(\lambda,N_s)}}$ approach zero, 
 and Eq.~\eqref{eq_bound_2_ec} converges to the lower bounds established in  Theorem~\ref{thm_lower_bound_one_shot_main}.
However when $N_s$ is finite,  $V(A|E)_{\Psi^{(\lambda,N_s)}}$ and $V(B|E)_{\Psi^{(\lambda,N_s)}}$ do not vanish, resulting in non-vanishing terms proportional to $\sqrt{n}$ appearing in both \eqref{eq_bound_1_ec} and \eqref{eq_bound_2_ec}.  Moreover, for sufficiently small $\varepsilon$, the bounds based on the asymptotic equipartition property reported in \eqref{eq_bound_1_ec} are tighter than those based on entropy variance approach reported in \eqref{eq_bound_2_ec}, due to the scaling behaviour in $\varepsilon$ of the term proportional to $\sqrt{n}$. Specifically, in \eqref{eq_bound_1_ec}, the scaling is logarithmic in $\varepsilon$, while, in \eqref{eq_bound_2_ec}, the scaling is as the inverse of a power of $\varepsilon$. Therefore, for a given transmissivity $\lambda$ and and energy constraint $N_s$, if the error $\varepsilon$ is sufficiently small then the best bounds are those based on the asymptotic equipartition property reported in \eqref{eq_bound_1_ec}; otherwise, the bounds from the entropy variance approach in \eqref{eq_bound_2_ec} are tighter.

Finally, the asymptotic equipartition property approach can also be exploited to find the following lower bounds on the energy-constrained $n$-shot capacities of the pure amplifier channel:
    \bb
        Q^{(\varepsilon,n)}(\Phi_g,N_s)&\ge  n Q\left(\Phi_g,N_s\right)-\sqrt{n}4\log_2\left(\sqrt{2^{H_{1/2}(A|B)_{\Psi_{ABE}^{(g,N_s)}}}}+\sqrt{2^{H_{1/2}(A|E)_{\Psi_{ABE}^{(g,N_s)}}}}+1\right)\sqrt{\log_2\!\left(\frac{2^9}{\varepsilon^2}\right)}+\log_2\!\left(\frac{3\varepsilon^4}{2^{18}}\right)\,,\\
        K^{(\varepsilon,n)}(\Phi_g,N_s)&\ge Q_2^{(\varepsilon,n)}(\Phi_g,N_s)  \\
        &\ge  n Q\left(\Phi_g,N_s\right)-\sqrt{n}4\log_2\left(\sqrt{2^{H_{1/2}(A|B)_{\Psi_{ABE}^{(g,N_s)}}}}+\sqrt{2^{H_{1/2}(A|E)_{\Psi_{ABE}^{(g,N_s)}}}}+1\right)\sqrt{\log_2\!\left(\frac{8}{\varepsilon}\right)}-\log_2\!\left(\frac{16}{\varepsilon^2}\right)\,\,,
    \ee
    where $Q\left(\Phi_g,N_s\right)$ is reported in \eqref{q2_ec_pure_amp_main}, and $H_{1/2}(A|B)_{\Psi_{ABE}^{(g,N_s)}}$, $H_{1/2}(A|E)_{\Psi_{ABE}^{(g,N_s)}}$ are reported in \eqref{ingredient1_ec_amp}. 
\section{Tail bounds on Gaussian states}\label{section_tail}
A key step in the proof of the above Theorem~\ref{thm_lower_bound_one_shot_main} involves answering a simple-looking question: What is the probability that a fixed Gaussian state has more than $M$ photons? In other words, given a Gaussian state $\rho$ specified by its first moment $\textbf{m}$ and covariance matrix $V$, how to estimate the probability $P_{>M}$ that the outcome of the photon number measurement on $\rho$ is larger than $M$, as a function of $\textbf{m}$, $V$, and $M$? Estimating such a probability $P_{> M}$ is important, as the photon number measurement is a key element in the design of quantum technologies based on continuous-variable systems~\cite{BUCCO}.

By Born's rule, such a probability $P_{> M}$ is given by \bb
    P_{> M}\coloneqq\Tr[(\mathbb{1}-\Pi_M)\rho]\,,
\ee
where $\Pi_M$ is the projector onto the Hilbert space spanned by all the $n$-mode Fock states with photon number less or equal to $M$:
\bb\label{proj_fockmain}
    \Pi_M\coloneqq\sum_{\textbf{k}\in\mathbb{N}^n:\,\sum_{i=1}^n k_i\le M}\ketbra{\textbf{k}}\,,
\ee
with $\ket{\textbf{k}}=\ket{k_1}\otimes\ket{k_2}\otimes\ldots\otimes\ket{k_n}$ being the $n$-mode Fock state~\cite{BUCCO}. The following theorem, proved in Theorem~\ref{the:expdecay00}, allows us to upper bound the probability $P_{>M}$, establishing that it converges to zero exponentially fast in $M$, with exponential lifetime given by the mean photon number of the state.
\begin{thm}[(Probability that a fixed Gaussian state has more than $M$ photons)]\label{the:expdecay00_main}
Let $\rho$ be a Gaussian state with first moment $\textbf{m}$ and covariance matrix $V$. The probability $P_{>M}$ that the outcome of the photon number measurement on $\rho$ is larger than $M$ converges to zero exponentially fast in $M$. Specifically, it can be upper bounded as
\bb
    P_{>M}\le  \alpha\,e^{-M /(4N+2) }\,,
\ee
where $N$ denotes the mean photon number and $\alpha$ denotes a finite quantity, which can be expressed in terms of the first moment $\textbf{m}$ and covariance matrix $V$ as
\bb\label{def_N_alpha}
    N&=\frac{\Tr[V-\mathbb{1}]}{4}+\frac{\|\textbf{m}\|_2^2}{2}\,,\\
    \alpha&\coloneqq \frac{e^{\textbf{m}^\intercal((8N+4) \mathbb{1}-V)^{-1}\textbf{m} }}{\sqrt{\det\!\left[\frac{(8N+4) \mathbb{1}-V}{8N+3}\right]}}\leq 2^n e^{1/2}\,.
\ee 
A strictly tighter bound can be obtained in terms of the following one-parameter optimisation problem:
\bb\label{eq_87_main}
    P_{>M}\le \inf_{x>\|V\|_\infty}\frac{e^{\textbf{m}^\intercal(x \mathbb{1}-V)^{-1}\textbf{m} }}{\sqrt{\det\!\left[\frac{x \mathbb{1}-V}{x -1}\right]}}\, e^{-2\,\mathrm{arccoth}(x)\,M }\,,
\ee
where $\|V\|_\infty$ denotes the operator norm of $V$. 
\end{thm}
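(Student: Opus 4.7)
The plan is to combine a Chernoff-type operator inequality with a closed-form evaluation of the Gaussian moment generating function of the photon number, and then to optimise the Chernoff parameter. For any $t>0$ the operator inequality
\begin{equation}
\mathbb{1}-\Pi_M\ \le\ e^{t(\hat N-M)}
\end{equation}
is immediate from the spectral decomposition: on the range of $\mathbb{1}-\Pi_M$ every eigenvalue of $\hat N$ exceeds $M$, so the right-hand side dominates the projector there, while on the range of $\Pi_M$ the left-hand side vanishes and the right-hand side is positive. Tracing against $\rho$ gives $P_{>M}\le e^{-tM}\,\Tr[e^{t\hat N}\rho]$, so it remains to evaluate $Z(t)\coloneqq\Tr[e^{t\hat N}\rho]$ in closed form and minimise over the Chernoff parameter.

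The core step is the computation of $Z(t)$ for a Gaussian $\rho$. Since $\hat N=\tfrac{1}{2}(\hat{\textbf R}^\intercal\hat{\textbf R}-n\,\mathbb{1})$ is quadratic in the canonical quadratures, $e^{t\hat N}$ is a (non-normalised) Gaussian operator and $Z(t)$ reduces to a Gaussian integral on phase space. My preferred route is to evaluate the overlap of two characteristic functions,
\begin{equation}
Z(t)\ =\ \frac{1}{(2\pi)^n}\int\chi_{e^{t\hat N}}(-\xi)\,\chi_\rho(\xi)\,d^{2n}\xi,
\end{equation}
using the known Gaussian form of $\chi_\rho$ in terms of $\textbf m$ and $V$ together with the analogous Gaussian form of $\chi_{e^{t\hat N}}$; a standard completion-of-the-square produces the closed form. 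Under the change of variable $x=\coth(t/2)$, so that $t/2=\mathrm{arccoth}(x)$, the convergence condition for the Gaussian integral is exactly $x>\|V\|_\infty$, and the evaluation reads
\begin{equation}
Z(t)\ =\ \frac{e^{\textbf m^\intercal(x\mathbb{1}-V)^{-1}\textbf m}}{\sqrt{\det\!\left[(x\mathbb{1}-V)/(x-1)\right]}}.
\end{equation}
Substituting this into the Chernoff bound and minimising over admissible $x$ yields \eqref{eq_87_main}.

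To extract the universal bound with exponential rate $1/(4N+2)$ and prefactor $\alpha$, I would specialise to $x=8N+4$. The elementary inequality $\mathrm{arccoth}(x)\ge 1/x$ for $x>1$ produces the stated rate, and the prefactor then equals the $\alpha$ of \eqref{def_N_alpha}. Verifying $x=8N+4>\|V\|_\infty$ is a short check using the identity $\Tr V=4N+2n-2\|\textbf m\|_2^2$ combined with the symplectic-uncertainty constraint on $V$ (which forces the symplectic eigenvalues to exceed $1$ and thereby controls a sufficient fraction of $\Tr V$ in terms of $\|V\|_\infty$). The dimensional estimate $\alpha\le 2^n e^{1/2}$ then follows by lower-bounding $\det[(8N+4)\mathbb{1}-V]/(8N+3)^{2n}$ via the AM-GM inequality applied to the spectrum of $(8N+4)\mathbb{1}-V$, and by upper-bounding the displacement exponent through the Rayleigh inequality $\textbf m^\intercal((8N+4)\mathbb{1}-V)^{-1}\textbf m\le \|\textbf m\|_2^2/(8N+4-\|V\|_\infty)$.

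The main obstacle is the Gaussian-integral computation uniformly over all Gaussian states, including the heavily squeezed regime where some eigenvalues of $V$ can lie below $1$ and the Glauber--Sudarshan $P$-representation of $\rho$ is only a tempered distribution: this is precisely why the characteristic-function formulation is preferable, since the integral is unambiguous whenever the quadratic form in its exponent is negative definite, which translates into the single scalar condition $x>\|V\|_\infty$. An equivalent route --- absorbing $\textbf m$ by a displacement and diagonalising $V$ through a symplectic congruence in Williamson normal form --- would reduce $Z(t)$ to a product of single-mode thermal-state MGFs computable by elementary geometric-series summation, at the cost of carrying a non-trivial symplectic transformation of $\hat N$ through the calculation. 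Matching the two expressions and verifying the convergence condition is the main technical step; the subsequent prefactor bookkeeping needed to deduce the dimensional bound on $\alpha$ is routine.
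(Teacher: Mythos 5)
Your overall strategy coincides with the paper's: the Chernoff-type operator inequality $\mathbb{1}-\Pi_M\le e^{t(\hat N-M)}$, a closed-form evaluation of $Z(t)=\Tr[e^{t\hat N}\rho]$ valid for $\coth(t/2)>\|V\|_\infty$, the substitution $x=\coth(t/2)$, the choice $x=8N+4$ together with $\mathrm{arccoth}(x)\ge 1/x$, and the same prefactor bookkeeping (the paper's version of your ``Rayleigh'' and determinant estimates uses the operator inequality $(8N+4)\mathbb{1}-V\ge(4N+2)\mathbb{1}$ directly, and its Lemma on $\|V\|_\infty\le 1+2N+2\sqrt{N^2+N}$ is exactly your Williamson/Euler spectral-pairing check). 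The one genuine divergence is how $Z(t)$ is computed. The paper rewrites $e^{t\hat N_n}=(1-e^{-t})^{-n}\bigl(\tau^{\otimes n}_{1/(e^t-1)}\bigr)^{-1}$ and invokes an established formula for the overlap $\Tr[\rho\,\sigma^{-1}]$ between a Gaussian state and the inverse of a Gaussian state (derived from a known result on $\Tr[\rho^2\sigma^{-1}]$), whose hypothesis $W-V>0$ is precisely your condition $x>\|V\|_\infty$. You instead propose a Fourier--Weyl phase-space integral of characteristic functions. That route reaches the same closed form, but be aware that for $t>0$ the operator $e^{t\hat N}$ is unbounded, so $\chi_{e^{t\hat N}}$ and the overlap formula are not available off the shelf; you would need to justify them by analytic continuation from $t<0$, or fall back on your alternative Williamson-plus-Fock-basis route (which, as you note, still requires tracking a general quadratic form rather than a product of single-mode geometric series). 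The paper's reduction to the state-inverse overlap sidesteps this regularity issue entirely, which is the main thing its choice of lemma buys; your phase-space route is more self-contained but needs that extra justification to be airtight.
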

As an application of the above theorem, we can estimate the \emph{trace distance}~\cite{NC,MARK,Sumeet_book} between a Gaussian state $\rho$ and its projection $\rho_M$ onto the finite-dimensional subspace spanned by all the Fock states with at most $M$ photons, where $\rho_M$ is defined as
\bb\label{proj_rho}
    \rho_M\coloneqq \frac{\Pi_M\rho\Pi_M}{\Tr[\Pi_M\rho\Pi_M]}\,,
\ee
with $\Pi_M$ being the projector reported in \eqref{proj_fockmain}. This is useful because, in order to numerically deal with infinite-dimensional states (e.g.~to analyse properties of states prepared by applying general non-Gaussian operations to a Gaussian state), one usually needs to truncate the Fock basis expansion of the state up to a sufficiently large photon cut-off $M$, and, additionally,  to estimate the error incurred in such a truncation procedure. Due to the Holevo--Helstrom theorem~\cite{HELSTROM, Holevo1976}, the most meaningful way to measure such an error is given by the  \emph{trace-distance error}, defined as the trace distance between the original state $\rho$ and its truncated approximation $\rho_M$,  denoted as $\frac12\|\rho-\rho_M\|_1$. By combining the above theorem~\ref{the:expdecay00_main} with Gentle Measurement Lemma \cite[Lemma 6.15]{Sumeet_book}, one obtains that such a trace distance error can be upper bounded as follows:
\bb\label{ineq_tail_main}
    \frac{1}{2}\left\|\rho-\rho_M\right\|_1\le \sqrt{P_{>M}}\le \sqrt{\alpha} \,e^{-M /(8N+4) }\,,
\ee
where $\alpha$ and $N$ are reported in \eqref{def_N_alpha}. As an application of this inequality, we can find an algorithm to estimate the trace distance between two Gaussian states.

\subsection{Algorithm to estimate the trace distance between two Gaussian states}\label{sub_main_algo}
Calculating the trace distance between two Gaussian states is important, as the trace distance is the most meaningful notion of distance between quantum states~\cite{HELSTROM, Holevo1976} and Gaussian states are omnipresent in nature~\cite{BUCCO}. However, calculating exactly such a trace distance is impossible in general, as the difference of two Gaussian states is a non-Gaussian operator, and there is no way to calculate in general the trace norm of a non-Gaussian (infinite-dimensional) operator. Therefore, it is necessary to come up with methods to \emph{approximate}, up to a fixed precision, the trace distance between two Gaussian states. While recent works have established both lower~\cite{mele2024learningquantumstatescontinuous} and upper~\cite{bittel2024optimalestimatestracedistance,fanizza2024efficienthamiltonianstructuretrace,mele2024learningquantumstatescontinuous,holevo2024estimatestracenormdistancequantum,Banchi_2015} bounds on the trace distance between two Gaussian states, there remains no known algorithm capable of computing such a trace distance up to a fixed precision. Here, we fill this gap, by providing the first algorithm to compute the trace distance between Gaussian states up to a fixed precision. The algorithm is as follows:
\begin{algorithm}[H]\label{algo1_main}
\caption{(Computation of the trace distance $\frac12\|\rho_{\textbf{m},V}-\rho_{\textbf{t},W}\|_1$ between two $n$-mode Gaussian states $\rho_{\textbf{m},V}$ and $\rho_{\textbf{t},W}$, given their first moments $\textbf{m},\textbf{t}$ and covariance matrices $V,W$.}
\begin{algorithmic}[1]
\Require First moments $\textbf{m},\textbf{t}$; covariance matrices $V,W$; error tolerance $\epsilon > 0$.
 \Ensure Approximation $\tilde{d}$ such that $ \left|\frac12\|\rho_{\textbf{m},V}-\rho_{\textbf{t},W}\|_1- \tilde{d}\right| \leq \epsilon$
  \State Use Eq.~\eqref{def_N_alpha} to find the maximum between the mean photon number of $\rho_{\textbf{m},V}$ and the one of $\rho_{\textbf{t},W}$, denoted as $N$.
 \State Use Eq.~\eqref{ineq_tail_main} (with the determined value of $N$) to find $M$ large enough so that both the two Gaussian states are $\epsilon/3$-close to their projection on the subspace spanned by all the $n$-mode Fock states with at most $M$ photons. 
 \State Calculate the matrix elements of $\rho_{\textbf{m},V}$ with respect to those Fock states with at most $M$ photons (e.g.~via the methods introduced in \cite{Quesada_2019}). With such matrix elements, construct a finite-dimensional matrix $A_1$ and set $\rho_1\coloneqq A_1/\Tr A_1$.
\State Analogously, calculate the matrix elements of $\rho_{\textbf{t},W}$ with respect to the same basis as above, and construct the corresponding finite-dimensional matrix $A_2$. Finally, set $\rho_2\coloneqq A_2/\Tr A_2$.
 \State Compute the trace distance between the finite dimensional matrices $\rho_1$ and $\rho_2$ up to an error $\epsilon/3$.
 \State \Return the estimate of the above trace distance.
 \end{algorithmic}
 \end{algorithm}
 As proved in Theorem~\ref{thm_algo} in the Appendix, the above algorithm computes the trace distance between two $n$-mode Gaussian states up to a trace-distance error of $\epsilon$ in a runtime of $\pazocal{O}\!\left(\!\left(2^6(n+1)E\log\left(\frac{2}{\epsilon}\right)\right)^{3n}\right)$, where $E$ denotes the maximum mean energy per mode of the two states. Notably, the logarithmic dependence on the trace-distance error $\epsilon$ of this runtime is a distinctive feature of Gaussian states. In contrast, a similar "truncation algorithm" for computing the trace distance between two $n$-mode \emph{non-Gaussian} states up to a precision error of $\epsilon$ has a significantly worse runtime of $\pazocal{O}\!\left(\!\left(\frac{16E}{\epsilon^2}\right)^{3n}\right)$, as proved in Lemma~\ref{lemma_alt_alg} in the Appendix.

\section{Discussion}
In this work, we have derived easily computable lower bounds on the non-asymptotic capacities of Gaussian channels. Our results address the fundamental question: given a Gaussian channel $\Phi$, a number $k$, and an error tolerance $\varepsilon$, how many uses of $\Phi$ are sufficient to transmit $k$ qubits, distil $k$ ebits, and generate $k$ secret-key bits within the specified error tolerance $\varepsilon$?  
Our work introduces several tools of independent interest, including: an upper bound on the probability that a Gaussian state exceeds a specified number of photons (Theorem~\ref{the:expdecay00_main}); an algorithm to estimate the trace distance between two Gaussian states with a fixed precision (Table~\ref{sub_main_algo}); a closed formula for the conditional Petz–Rényi entropy of a Gaussian state (Lemma~\ref{thm_cond_petz_GaussianM}).

\medskip

\begin{acknowledgments}
\smallskip
\noindent \emph{Acknowledgements.}  We thank Ludovico Lami, Matteo Rosati, Mark Wilde, and Andreas Winter for useful discussions. F.A.M. and V.G. acknowledge financial support by MUR (Ministero dell'Istruzione, dell'Universit\`a e della Ricerca) through the following projects: PNRR MUR project PE0000023-NQSTI, PRIN 2017 Taming complexity via Quantum Strategies: a Hybrid Integrated Photonic approach (QUSHIP) Id.\ 2017SRN-BRK, and project PRO3 Quantum Pathfinder.
G.B. is member of the Research Group GNCS (Gruppo Nazionale per il Calcolo Scientifico) of INdAM (Istituto Nazionale di Alta Matematica) and  acknowledges the support by the ERC Consolidator Grant 101085607 through the Project eLinoR. M. F. is supported by the European Research Council (ERC) under Agreement 818761 and by VILLUM FONDEN via the QMATH Centre of Excellence (Grant No. 10059). M.F. was previously supported by a Juan de la Cierva Formaci\'on fellowship (Spanish MCIN project FJC2021-047404-I), with funding from MCIN/AEI/10.13039/501100011033 and European Union NextGenerationEU/PRTR, by European Space Agency, project ESA/ESTEC 2021-01250-ESA, by Spanish MCIN (project PID2022-141283NB-I00) with the support of FEDER funds, by the Spanish MCIN with funding from European Union NextGenerationEU (grant PRTR-C17.I1) and the Generalitat de Catalunya, as well as the Ministry of Economic Affairs and Digital Transformation of the Spanish Government through the QUANTUM ENIA ``Quantum Spain'' project with funds from the European Union through the Recovery, Transformation and Resilience Plan - NextGenerationEU within the framework of the "Digital Spain 2026 Agenda".
\end{acknowledgments}

\medskip


\bibliographystyle{apsrev4-1}
\nocite{apsrev41Control}
\bibliography{biblio,bibarb,revtex-custom}

\newpage
\appendix
\tableofcontents

\section{Preliminaries and notations}
For any $p\in[1,\infty)$, the \emph{Schatten $p$-norm} of a linear operator $X$ is defined as 
\bb
\|X\|_p \coloneqq \left(\Tr\left[\left(\sqrt{ X^\dagger X }\right)^p\right]\right)^{1/p}.
\ee
For $p = 1$, the Schatten $p$-norm is known as the \emph{trace norm}, while for $p = 2$, it is referred to as the \emph{Frobenius norm} or \emph{2-norm}. The \emph{operator norm} of $X$, denoted as $ \|X\|_\infty$, is the largest singular value of $X$. H\"older's inequality establishes that for all linear operators $A,B$ it holds that $|\Tr[A^\dagger B]|\le\|A\|_p\|B\|_q$ where $ p,q\in[1,\infty]$ such that $p^{-1}+q^{-1}=1$.

Given an Hilbert space $\HH$, a quantum state $\rho$ on  $\HH$ is a positive semi-definite operator on $\HH$ with unit trace. Given two Hilbert spaces $\HH_A,\HH_B$ and given a bipartite state $\rho_{AB}$ on $\HH_A\otimes \HH_B$, its partial trace with respect $A$ is denoted as $\rho_B\coloneqq \Tr_A\rho_{AB}$.

The \emph{trace distance} between two quantum states $\rho$ and $\sigma$ is defined as $\frac{1}{2}\|\rho-\sigma\|_1$ and it is considered to be the most meaningful distance between quantum states due to its operational meaning given by the Holevo--Helstrom theorem~\cite{HELSTROM, Holevo1976}. The fidelity between two states $\rho$ and $\sigma$ is defined as $F(\rho,\sigma)\coloneqq \|\sqrt{\rho}\sqrt{\sigma}\|_1^2$.

Given two quantum systems $A$ and $B$ associated with Hilbert spaces $\HH_A$ and $\HH_B$, respectively, a quantum channel $\Phi_{A\to B}$ is a linear, completely positive, trace-preserving map from the space of linear operators on $\HH_A$ to the space of linear operators on $\HH_B$. For simplicity, we sometimes omit the subscript $A \to B$ and denote the quantum channel simply as $\Phi$.
\subsection{$n$-shot quantum communication}\label{subsec_n_shot}
A fundamental problem in quantum Shannon theory~\cite{MARK,Sumeet_book} is determining the optimal achievable performances of relevant quantum communication tasks over a given quantum channel. These optimal performances are quantified by the so-called \emph{capacities} of quantum channels. Capacities of physically interesting channels, such as bosonic Gaussian channels~\cite{BUCCO} (see Section~\ref{sec_preliminaries_CV}), have been extensively analysed in the \emph{asymptotic setting}~\cite{PLOB, Rosati2018, Sharma2018, Noh2019,holwer, Noh2020,fanizza2021estimating, lower-bound, Wolf2007, Mark2012, Mark-energy-constrained, LossyECEAC1, LossyECEAC2,LL-bosonic-dephasing,Mele_2024}, while they are not so much explored in the \emph{non-asymptotic setting}~\cite{MMMM, Kaur_2017, khatri2021secondorder, WildeRenes2016}. In the asymptotic setting, the channel can be used infinitely many times, allowing for a vanishing error in executing the quantum communication task. Instead, in the non-asymptotic setting, the number $n$ of uses of the channel is finite, and the error is just required to be below a certain non-zero error threshold $\varepsilon$. In such a setting, the capacities are referred to as \emph{$n$-shot capacities}. Specifically, in this work we investigate $n$-shot capacities for three different tasks~\cite{MARK, Sumeet_book}: 
\begin{itemize}
    \item \emph{Qubit distribution}: The goal is to reliably transmit an arbitrary quantum state, possibly entangled with an ancillary system, across the quantum channel;
    \item \emph{Entanglement distribution assisted by two-way classical communication}: The goal is to generate maximally entangled states between two parties connected by the quantum channel, exploiting the additional resource of a two-way classical communication line;
    \item \emph{Secret-key distribution assisted by two-way classical communication}: The goal is to generate secret keys shared between two parties linked by the quantum channel, exploiting the additional resource of a two-way classical communication line;
\end{itemize}
In the following paragraphs we define the $n$-shot capacities for each of these tasks. For more details, please refer to~\cite{Sumeet_book}.

\subsubsection{Definition of $n$-shot quantum capacity}\label{paragraph_quantum_cap}
In this paragraph, we define the $n$-shot quantum capacity of quantum channels~\cite{Sumeet_book}. Roughly speaking,  given a quantum channel $\Phi$, an error threshold $\varepsilon$, and a number $n$ of channel uses, the $n$-shot quantum capacity $Q^{(\varepsilon,n)}(\Phi)$ is the maximum number of qubits that can be transmitted through $n$ uses of $\Phi$ with an error not exceeding $\varepsilon$. Below, we provide a precise definition of $n$-shot quantum capacity. We start by stating the definition of a quantum communication code over a quantum channel.
\begin{Def}[$(|M|,\varepsilon)$-code for quantum communication over a quantum channel~\cite{Sumeet_book}]
Let $\Phi_{A\to B}$ be a quantum channel. Let $\varepsilon\in(0,1)$ and $|M|\in\N$. If there exist an Hilbert space $M$ of dimension $|M|$ and two quantum channels $\mathcal{E}_{M\to A},\mathcal{D}_{B \to M}$ --- called the "encoding channel" and the "decoding channel", respectively --- such that the channel fidelity between the identity map $\Id_{M}$ and the channel $\mathcal{D}_{B \to M}\circ \Phi_{A\to B}  \circ \mathcal{E}_{M\to A}$ --- called the "encoded channel" ---  satisfies
\bb\label{eq_def_Q_cap}
F\!\left( \mathcal{D}_{B \to M}\circ \Phi_{A\to B}  \circ \mathcal{E}_{M\to A}, \Id_{M} \right) \geq 1-\varepsilon\,,
\ee
then the triplet $(M,\mathcal{E}_{M\to A},\mathcal{D}_{B \to M})$ is said to be a $(|M|,\varepsilon)$-code for quantum communication over $\Phi_{A\to B}$. The channel fidelity between two channels $\Phi_1$ and $\Phi_2$ is defined in terms of terms of the fidelity between states as follows:
\bb\label{channel_fidelity}
F(\Phi_1,\Phi_2) \coloneqq \inf_{\Psi_{RS}} F\!\left(\Id_R \otimes \Phi_1(\Psi_{RS}),\Id_R \otimes \Phi_2(\Psi_{RS})\right)\,,
\ee
where the optimisation is over every reference system $R$ and every bipartite state $\Psi_{RS}$ on the channel input $S$ and on the reference system $R$. 
\end{Def}
Basically, the condition in \eqref{eq_def_Q_cap} means that the channel $\mathcal{D}_{B \to M}\circ \Phi_{A\to B}  \circ \mathcal{E}_{M\to A}$ is an $\varepsilon$-approximation of the identity channel $\Id_M$ over a system of dimension $|M|$. Hence, $\varepsilon$ can be interpreted as the error incurred in the quantum communication code, while $\log_2|M|$ represents the number of qubits that can be transmitted with error $\varepsilon$ by exploiting the code.
\begin{Def}
[One-shot quantum capacity of a quantum channel~\cite{Sumeet_book}]
    Given a quantum channel $\Phi$ and an error threshold $\varepsilon\in(0,1)$, the one-shot quantum capacity $Q^{(\varepsilon)}(\Phi)$ of the channel $\Phi$ is defined as
\bb
Q^{(\varepsilon)}(\Phi) \coloneqq \sup \{\log_2 |M| : \exists \text{$(|M|,\varepsilon)$-code for quantum communication over } \Phi_{A\to B}\}\,.
\ee
\end{Def}
In other words, the one-shot quantum capacity $Q^{(\varepsilon)}(\Phi)$ is the maximum number $\log_2|M|$ of transmitted qubits among all $(|M|,\varepsilon)$-codes for quantum communication over $\Phi$. We are now ready to define the $n$-shot quantum capacity.
\begin{Def}
[$n$-shot quantum capacity of a quantum channel~\cite{Sumeet_book}]
    Given a quantum channel $\Phi$, an error threshold $\varepsilon\in(0,1)$, and a number $n$ of channel uses, the $n$-shot quantum capacity $Q^{(\varepsilon,n)}(\Phi)$ is defined as
\bb
Q^{(\varepsilon,n)}(\Phi) \coloneqq Q^{(\varepsilon)}(\Phi^{\otimes n})\,.
\ee
\end{Def}
The maximum number of qubits that can be transmitted with error $\varepsilon$ across $n$ uses of a quantum channel $\Phi$ is thus given by the $n$-shot quantum capacity $Q^{(\varepsilon,n)}(\Phi)$. The (asymptotic) \emph{quantum capacity} is then defined in terms of the $n$-shot quantum capacity as 
\bb
    Q(\Phi)\coloneqq \lim\limits_{\varepsilon \rightarrow 0^+} \liminf_{n \to \infty}\frac{Q^{(\varepsilon,n)}(\Phi)}{n}\,.
\ee
Hence, the quantum capacity of a quantum channel is the maximum achievable rate of qubits (i.e.~the ratio between the number of transmitted qubits $\log_2|M|$ and the number $n$ of uses of the channel) transmitted with vanishing error ($\varepsilon\rightarrow0^+$) in the asymptotic limit of infinitely many uses of the channel ($n\rightarrow\infty$).

\subsubsection{Definition of $n$-shot two-way quantum capacity}\label{paragraph_two_way}
In this paragraph, we define the $n$-shot two-way quantum capacity of quantum channels~\cite{Sumeet_book}. This is a figure of merit of quantum communication in scenarios where Alice (the sender at the input of the channel) and Bob (the receiver at the output of the channel) are able to implement arbitrary local operations assisted by classical communication (LOCC)~\cite{Sumeet_book,LOCC}. Roughly speaking, given a quantum channel $\Phi$, an error threshold $\varepsilon$, and a number $n$ of channel uses, the $n$-shot two-way quantum capacity $Q_2^{(\varepsilon,n)}(\Phi)$ is the maximum number of two-qubit maximally-entangled states (or \emph{ebits}) that can be distilled using LOCCs and $n$ uses of $\Phi$ with an error not exceeding $\varepsilon$. Below, we provide a precise definition of $n$-shot two-way quantum capacity.  We start by stating the definition of an LOCC-assisted quantum communication protocol over a quantum channel.
\begin{Def}[$(n,|M|,\varepsilon)$ LOCC-assisted quantum communication protocol over a quantum channel~\cite{Sumeet_book}]\label{def_locc_protocol}
Let $\Phi_{A\to B}$ be a quantum channel. Let $n\in\N$, $|M|\in\N$, and $\varepsilon\in(0,1)$. An $(n,|M|,\varepsilon)$ LOCC-assisted quantum communication protocol over $\Phi$ consists of a product state $\ket{0}_{A_0}\otimes\ket{0}_{B_0}$ and $n+1$ LOCC channels~\cite{Sumeet_book} $\mathcal{L}^{(0)}_{A_0B_0\to A'_1 A_1 B'_1}$, $\mathcal{L}^{(1)}_{A'_1 B_1 B'_1\to A'_2 A_2 B'_2}$, $\mathcal{L}^{(2)}_{A'_2 B_2 B'_2\to A'_3 A_3 B'_3}$, $\ldots$, $\mathcal{L}^{(n-1)}_{A'_{n-1} B_{n-1} B'_{n-1}\to A'_n A_n B'_n}$, $\mathcal{L}^{(n)}_{A'_n B_n B'_n\to M_A M_B}$ such that the fidelity between the final state of the protocol
\bb\label{final_state_eta}
\eta_n^{M_AM_B} &\coloneqq (\mathcal{L}^{(n)}_{A'_{n}B_{n}B'_{n}\to M_A M_B} \circ \Phi_{A_n \to B_n} \circ \mathcal{L}^{(n-1)}_{A'_{n-1}B_{n-1}B'_{n-1} \to A'_{n}A_{n}B'_{n}}\circ \cdots\\
&\qquad\qquad\qquad\circ  \mathcal{L}^{(1)}_{A'_{1}B_{1}B'_{1} \to A'_{2}A_{2}B'_{2}}\circ  \Phi_{A_1 \to B_1}\circ \mathcal{L}^{(0)}_{A_0B_0\to A'_1 A_1 B'_1})(\ketbra{0}_{A_0}\otimes\ketbra{0}_{B_0})
\ee
and the maximally entangled state $\Gamma_{|M|}^{M_A M_B}$ of Schmidt rank $|M|$ satisfies
\bb\label{def_eq_two_way_q}
F\!\left(\eta_n^{M_A M_B}, \Gamma_{|M|}^{M_A M_B}\right) \geq 1- \varepsilon.
\ee
\end{Def}
A pictorial representation of an LOCC-assisted quantum communication protocol is provided in Fig.~\ref{fig_LOCC_protocol}. Basically, the condition in \eqref{def_eq_two_way_q} means that the final state of the protocol is an $\varepsilon$-approximation of the state of $\log_2|M|$ ebits. In summary, an $(n,|M|,\varepsilon)$ LOCC-assisted quantum communication protocol allows one to generate $\log_2|M|$ ebits between Alice and Bob with an error not exceeding $\varepsilon$ by exploiting $n$ uses of the channel and arbitrary LOCCs.
\begin{figure}[!h]
	\centering
	\includegraphics[width=1\linewidth]{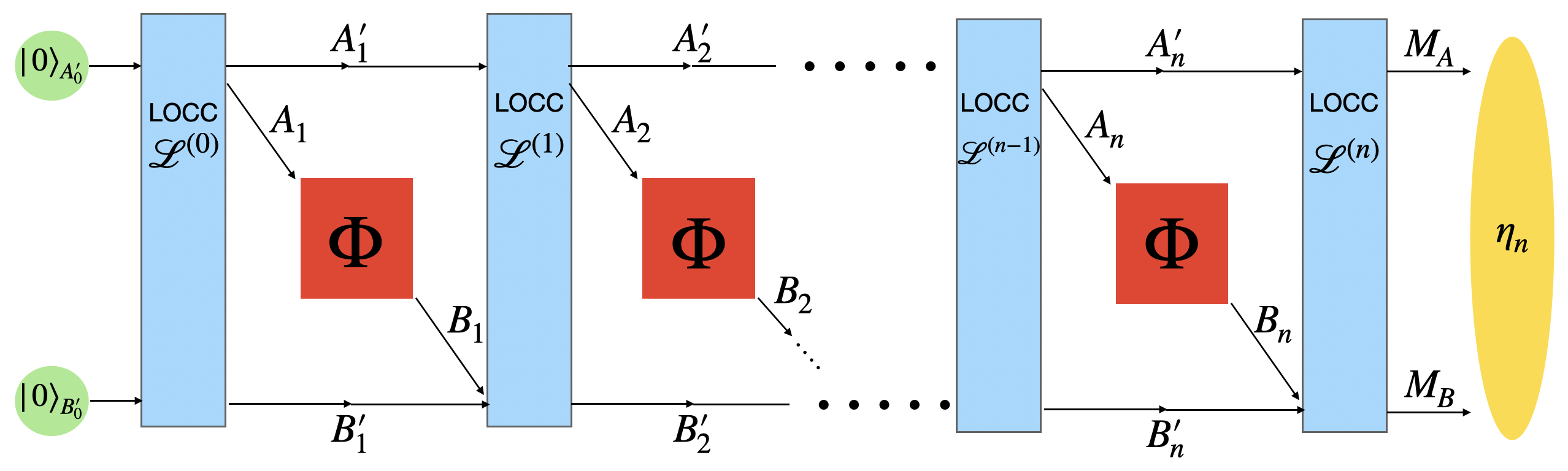}
	\caption{Pictorial representation of an LOCC-assisted quantum communication protocol over a quantum channel $\Phi$. 
	}
    \label{fig_LOCC_protocol}
\end{figure}
We are now ready to define the $n$-shot two-way quantum capacity.
\begin{Def}
[$n$-shot two-way quantum capacity of a quantum channel~\cite{Sumeet_book}]
    Given a quantum channel $\Phi$, an error threshold $\varepsilon\in(0,1)$, and a number $n$ of channel uses, the $n$-shot two-way quantum capacity $Q_2^{(\varepsilon,n)}(\Phi)$ is defined as
\bb
    Q_2^{(\varepsilon,n)}(\Phi) \coloneqq
\sup \left\{ \log_2 |M| : \exists (n,|M|,\varepsilon) \text{ LOCC-assisted quantum communication protocol over } \Phi \right \}.
\ee
\end{Def}
In other words, the $n$-shot two-way quantum capacity $Q_2^{(\varepsilon,n)}(\Phi)$ is the maximum number $\log_2|M|$ of distilled ebits among all the $(n,|M|,\varepsilon)$-codes for quantum communication over $\Phi$. The (asymptotic) \emph{two-way quantum capacity} is then defined in terms of the $n$-shot two-way quantum capacity as 
\bb
Q_{2}(\Phi) \coloneqq \lim_{\varepsilon\rightarrow0^+} \liminf_{n \to \infty}\frac{Q_2^{(\varepsilon,n)}(\Phi)}{n}\,.
\ee
Hence, the two-way quantum capacity of a quantum channel is the maximum achievable rate of ebits distilled via LOCCs with vanishing error in the asymptotic limit of infinitely many uses of the channel.

\subsubsection{Definition of $n$-shot secret-key capacity}\label{paragraph_secret_key}
In this paragraph, we define the $n$-shot secret-key capacity of quantum channels~\cite{Sumeet_book}. Roughly speaking, given a quantum channel $\Phi$, an error threshold $\varepsilon$, and a number $n$ of channel uses, the $n$-shot secret-key capacity $K^{(\varepsilon,n)}(\Phi)$ is the maximum number of secret-key bits that can be generated between two parties using LOCCs and $n$ uses of $\Phi$, such that the error does not exceed $\varepsilon$. The secret key has to be secret to any third party --- typically referred to as Eve --- with access to the environment associated with a Stinespring dilation of the channel. Mathematically, this means that Alice and Bob have to generate a \emph{private state}.
\begin{Def}[Private state~\cite{private,Horodecki_2009_secret2,Sumeet_book}]
Let $|M|\in\mathbb{N}$. A private state of dimension $|M|$ is a state the form 
\bb
\gamma_{|M|}^{M_A M_B S_A S_B} \coloneqq U_{M_A M_B S_A S_B} (\Gamma_{|M|}^{M_A M_B} \otimes \theta_{S_A S_B} )U_{M_A M_B S_A S_B}^\dag\,,
\ee
where $U_{M_A M_B S_A S_B}$ is a unitary of the form
\bb
U_{M_A M_B S_A S_B} \coloneqq \sum_{i,j} \ketbra{i}_{M_A} \otimes \ketbra{j}_{M_B} \otimes U^{i,j}_{S_A S_B},
\ee
with each $U^{i,j}_{S_A S_B}$ being a unitary, $\Gamma_{|M|}^{M_A M_B}$ representing a maximally entangled state of Schmidt rank $|M|$, and $\theta_{S_A S_B}$ being an arbitrary state. 
\end{Def}
It turns out that generating a private state of dimension $|M|$ is equivalent to generating a secret key of length $|M|$~\cite{private,Horodecki_2009_secret2}. Mathematically, the latter involves generating a \emph{tripartite secret-key state} of dimension $|M|$, which is a state of the form
\bb
    \frac{1}{|M|} \sum_{m=0}^{|M|-1}\ketbra{m}_{M_A} \otimes \ketbra{m}_{M_B} \otimes \sigma_E\,
\ee
where $\sigma_E$ is an arbitrary state on Eve's system. Thus, in a tripartite secret-key state, Eve's system $E$ is entirely decoupled from Alice's system $M_A$ and Bob's system $M_B$. Additionally, Alice and Bob are perfectly classically correlated: they share the state $\ket{m}_{M_A} \otimes \ket{m}_{M_B}$, where $m$ being a uniform random variable over $\{0, 1, \ldots, |M|-1\}$, with its actual value being the secret key. Hence, since the probability distribution of $m$ is uniform, Eve cannot infer any non-trivial information about the value of $m$.
\begin{Def}[$(n,|M|,\varepsilon)$ protocol for secret key agreement over a quantum channel~\cite{Sumeet_book}]
Let $\Phi_{A\to B}$ be a quantum channel. Let $n\in\N$, $|M|\in\N$, and $\varepsilon\in(0,1)$. An $(n,|M|,\varepsilon)$-protocol for secret key agreement over $\Phi$ consists of a product state $\ket{0}_{A_0}\otimes\ket{0}_{B_0}$ and $n+1$ LOCC channels~\cite{Sumeet_book} $\mathcal{L}^{(0)}_{A_0B_0\to A'_1 A_1 B'_1}$, $\mathcal{L}^{(1)}_{A'_1 B_1 B'_1\to A'_2 A_2 B'_2}$, $\mathcal{L}^{(2)}_{A'_2 B_2 B'_2\to A'_3 A_3 B'_3}$, $\ldots$, $\mathcal{L}^{(n-1)}_{A'_{n-1} B_{n-1} B'_{n-1}\to A'_n A_n B'_n}$, $\mathcal{L}^{(n)}_{A'_{n}B_{n}B'_{n}\to M_A M_B S_A S_B}$ such that there exists a private state $\Gamma_{|M|}^{M_AM_BS_AS_B}$ of dimension $|M|$ for which the fidelity between the final state of the protocol
\bb\label{final_state_eta2}
\eta_n^{M_AM_BS_AS_B} &\coloneqq (\mathcal{L}^{(n)}_{A'_{n}B_{n}B'_{n}\to M_A M_B S_A S_B} \circ \Phi_{A_n \to B_n} \circ \mathcal{L}^{(n-1)}_{A'_{n-1}B_{n-1}B'_{n-1} \to A'_{n}A_{n}B'_{n}}\circ \cdots\\
&\qquad\qquad\qquad\circ  \mathcal{L}^{(1)}_{A'_{1}B_{1}B'_{1} \to A'_{2}A_{2}B'_{2}}\circ  \Phi_{A_1 \to B_1}\circ \mathcal{L}^{(0)}_{A_0B_0\to A'_1 A_1 B'_1})(\ketbra{0}_{A_0}\otimes\ketbra{0}_{B_0})
\ee
and $\Gamma_{|M|}^{M_AM_BS_AS_B}$ satisfies
\bb\label{def_eq_two_way_q22}
F\!\left(\eta_n^{M_A M_BS_AS_B}, \Gamma_{|M|}^{M_A M_BS_AS_B}\right) \geq 1- \varepsilon.
\ee
\end{Def}
Therefore, a protocol for secret key agreement is similar to an LOCC-assisted quantum communication protocol as defined in Definition~\ref{def_locc_protocol}, with two differences: the final LOCC and the final state of the protocol. In a secret key agreement protocol, the final LOCC is replaced by a more general LOCC $\mathcal{L}^{(n)}_{A'_{n}B_{n}B'_{n}\to M_A M_B S_A S_B}$, which outputs not only Alice's system $M_A$ and Bob's system $M_B$, but also an additional system for Alice, $S_A$, and an additional system for Bob, $S_B$. Moreover, the final state of the protocol has to be an $\varepsilon$-approximation of a private state. Since a maximally entangled state of Schmidt rank $|M|$ is a particular example of a private state of dimension $|M|$, it follows that any $(n,|M|,\varepsilon)$ LOCC-assisted quantum communication protocol is also an $(n,|M|,\varepsilon)$ protocol for secret key agreement. In summary, an $(n,|M|,\varepsilon)$ protocol for secret key agreement allows one to generate $\log_2|M|$ secret-key bits shared by Alice and Bob with an error not exceeding $\varepsilon$ by exploiting $n$ uses of the channel and arbitrary LOCCs. We are now ready to define the $n$-shot secret-key capacity.
\begin{Def}
[$n$-shot secret-key capacity of a quantum channel~\cite{Sumeet_book}]
    Given a quantum channel $\Phi$, an error threshold $\varepsilon\in(0,1)$, and a number $n$ of channel uses, the $n$-shot secret-key capacity $K^{(\varepsilon,n)}(\Phi)$ is defined as
\bb
    K^{(\varepsilon,n)}(\Phi) \coloneqq
\sup \left\{ \log_2 |M| : \exists (n,|M|,\varepsilon) \text{ secret key agreement protocol over } \Phi \right \}.
\ee
\end{Def}
In other words, the $n$-shot secret-key capacity $K^{(\varepsilon)}(\Phi)$ is the maximum number $\log_2|M|$ of secret-key bits among all the $(n,|M|,\varepsilon)$ secret key agreement protocols over $\Phi$. The (asymptotic) \emph{secret-key capacity} is then defined in terms of the $n$-shot secret-key capacity as 
\bb
K(\Phi) \coloneqq \lim_{\varepsilon\rightarrow0^+} \liminf_{n \to \infty}\frac{K^{(\varepsilon,n)}(\Phi)}{n}\,.
\ee
Hence, the secret-key capacity of a quantum channel is the maximum achievable rate of secret-key bits generated via LOCCs with vanishing error in the asymptotic limit of infinitely many uses of the channel.

Due to the fact that a $(n,|M|,\varepsilon)$ LOCC-assisted quantum communication protocol is also a $(n,|M|,\varepsilon)$ protocol for secret key agreement, it follows that the $n$-shot two-way quantum capacity is always upper bounded by the $n$-shot secret-key capacity:
\bb\label{ineq_nshot_q_k}
    Q_2^{(n,\varepsilon)}(\Phi)\le K^{(n,\varepsilon)}(\Phi)\,.
\ee
The same holds for the asymptotic capacities:
\bb
    Q_2(\Phi)\le K(\Phi)\,.
\ee

\subsection{Entropic quantities}\label{subsec_def_entropic_quant}
In this section, we briefly define the entropic quantities that turns out to be useful for our analysis regarding non-asymptotic quantum communication. Further technical details regarding these quantities can be found in \cite{Sumeet_book}. 
\begin{Def}[von Neumann entropy]
    The von Neumann entropy of a state $\rho$ is defined as
    \bb
        S(\rho)\coloneqq -\Tr[\rho\log_2\rho]\,.
    \ee
\end{Def}
\begin{Def}[Quantum relative entropy]
    The {quantum relative entropy} between a state $\rho$ and a positive-semi-definite operator $\sigma$ is defined as
    \bb
        D(\rho\|\sigma)\coloneqq\Tr\left[\rho\left(\log_2\rho-\log_2\sigma\right)\right]\,.
    \ee
\end{Def}
\begin{Def}[Conditional entropy]
    The {conditional entropy} of a bipartite state $\rho_{AB}$ is defined as
    \bb
        S(A|B)_\rho \coloneqq S(\rho_{AB})-S(\rho_{B})=-D(\rho_{AB}\|\mathbb{1}_A\otimes\rho_{B} )\,.
    \ee
\end{Def}
\begin{Def}[Coherent information]\label{def_coh_info}
    The {coherent information} of a bipartite state $\rho_{AB}$ is defined as
    \bb
        I_c(A\,\rangle\, B)_{\rho_{AB}}\coloneqq -S(A|B)_{\rho_{AB}}= D(\rho_{AB}\|\mathbb{1}_A\otimes\rho_B)=S(\rho_B)-S(\rho_{AB})\,.
    \ee
\end{Def}
\begin{Def}[Petz Rényi relative entropy]\label{petz_renyi}
    For any $\alpha\in(0,1)\cup(1,2)$, the {Petz Rényi relative entropy} between a state $\rho$ and a positive-semi-definite operator $\sigma$ is defined as
    \bb
        D_\alpha(\rho\|\sigma)\coloneqq\frac{1}{\alpha-1}\log_2\!\left(\Tr\left[\rho^\alpha\sigma^{1-\alpha}\right]\right)\,.
    \ee
\end{Def}
\begin{Def}[Conditional Petz Rényi entropy]
    For any $\alpha\in(0,1)\cup(1,2)$, the {conditional Petz Rényi entropy} of a bipartite state $\rho_{AB}$ is defined as
    \bb
        H_\alpha(A|B)_{\rho_{AB}}\coloneqq-D_\alpha(\rho_{AB}\|\mathbb{1}_A\otimes\rho_B)\,.
    \ee
\end{Def}
Let us mention the following useful duality relation regarding conditional Petz Rényi entropy, proved in \cite[Lemma 6]{Tomamichel2008}. 
\begin{lemma}[Duality relation of the conditional Petz Rényi entropy]\label{duality_conditional_entropies}
    For any tripartite pure state $\Psi_{ABE}$ and any $\alpha\in(0,1)\cup(1,2)$, the conditional Petz Rényi entropy satisfies the following relation:
    \bb\label{eq_duality}
        H_{\alpha}(A|B)_{\Psi_{ABE}}=H_{2-\alpha}(A|E)_{\Psi_{ABE}}\,.
    \ee   
\end{lemma}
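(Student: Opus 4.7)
The proof plan is to reduce the entropic duality to a scalar ``sandwich trace'' identity that follows from the transpose/purification trick. By the definitions in Section~\ref{subsec_def_entropic_quant}, one has $H_\alpha(A|B)_\Psi = \tfrac{1}{1-\alpha}\log_2 Q_\alpha(A|B)$ and $H_{2-\alpha}(A|E)_\Psi = \tfrac{1}{\alpha-1}\log_2 Q_{2-\alpha}(A|E)$, where
\[
Q_\alpha(A|B) \coloneqq \Tr\!\left[\Psi_{AB}^\alpha(\mathbb{1}_A\otimes\Psi_B)^{1-\alpha}\right], \qquad Q_{2-\alpha}(A|E) \coloneqq \Tr\!\left[\Psi_{AE}^{2-\alpha}(\mathbb{1}_A\otimes\Psi_E)^{\alpha-1}\right].
\]
It therefore suffices to relate these two scalars, then combine with the prefactors coming from the definition of the Petz--R\'enyi divergence.

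The key structural tool is the purification identity obtained from the Schmidt decomposition of $\ket{\Psi}_{ABE}$. Writing $\ket{\Psi}_{ABE} = \sum_i\sqrt{p_i}\,\ket{i}_{AB}\ket{\phi_i}_E$, where $\{\ket{i}_{AB}\}$ are the eigenvectors of $\Psi_{AB}$ (with eigenvalues $p_i$) and $\{\ket{\phi_i}_E\}$ are the corresponding eigenvectors of $\Psi_E$ (with the same eigenvalues), one obtains, for any real exponent $x$,
\[
\bigl(\Psi_{AB}^x\otimes\mathbb{1}_E\bigr)\ket{\Psi}_{ABE} \;=\; \bigl(\mathbb{1}_{AB}\otimes\Psi_E^x\bigr)\ket{\Psi}_{ABE}\,.
\]
The analogous Schmidt decomposition across the $AE\,|\,B$ bipartition yields the companion identity $(\Psi_{AE}^y\otimes\mathbb{1}_B)\ket{\Psi} = (\mathbb{1}_{AE}\otimes\Psi_B^y)\ket{\Psi}$ for any real $y$.

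With these identities in hand, I would compute $Q_\alpha(A|B)$ as follows. Cyclicity of the trace gives $Q_\alpha(A|B) = \Tr[(\mathbb{1}_A\otimes\Psi_B^{1-\alpha})\Psi_{AB}^{\alpha-1}\cdot\Psi_{AB}]$, and since $\Psi_{AB} = \Tr_E\ketbra{\Psi}$, this equals $\bra{\Psi}\bigl[(\mathbb{1}_A\otimes\Psi_B^{1-\alpha})\Psi_{AB}^{\alpha-1}\bigr]\otimes\mathbb{1}_E\ket{\Psi}$. Applying the first purification identity with $x=\alpha-1$ trades $\Psi_{AB}^{\alpha-1}\otimes\mathbb{1}_E$ for $\mathbb{1}_{AB}\otimes\Psi_E^{\alpha-1}$, collapsing the expression to the single triple expectation $\bra{\Psi}(\mathbb{1}_A\otimes\Psi_B^{1-\alpha}\otimes\Psi_E^{\alpha-1})\ket{\Psi}$. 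A completely parallel calculation for $Q_{2-\alpha}(A|E)$, using cyclicity together with $\Psi_{AE}=\Tr_B\ketbra{\Psi}$ and the companion purification identity with $y=1-\alpha$, reduces $Q_{2-\alpha}(A|E)$ to exactly the same triple expectation. Hence $Q_\alpha(A|B)$ and $Q_{2-\alpha}(A|E)$ coincide, and the entropic identity follows by taking logarithms and combining with the prefactors from the definition of $H_\alpha$.

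The main technical obstacle will be handling the non-integer, and in particular negative (when $\alpha>1$), powers $\Psi_B^{1-\alpha}$, $\Psi_E^{\alpha-1}$, $\Psi_{AB}^{\alpha-1}$ appearing in the manipulations. When the reduced states have nontrivial kernels, these powers must be interpreted on the corresponding supports; the Schmidt decomposition shows that the supports of $\Psi_{AB}$ and $\Psi_E$ (respectively $\Psi_{AE}$ and $\Psi_B$) are naturally identified via the isomorphism induced by $\ket{\Psi}$, so that the purification identities transport well-defined operators to well-defined operators. In the infinite-dimensional continuous-variable setting relevant for the Gaussian applications in this paper, one additionally needs to verify trace-class-ness at each step; since all reduced states of a Gaussian pure state are themselves Gaussian and trace class, and the argument is phrased purely at the level of Schmidt coefficients, the finite-dimensional proof extends to that setting without essential modification.
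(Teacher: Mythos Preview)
Your reduction via the purification/transpose trick is the standard route (the paper does not give its own proof but simply cites \cite[Lemma~6]{Tomamichel2008}), and the trace identity you derive is correct: both $Q_\alpha(A|B)=\Tr[\Psi_{AB}^\alpha(\mathbb{1}_A\otimes\Psi_B)^{1-\alpha}]$ and $Q_{2-\alpha}(A|E)=\Tr[\Psi_{AE}^{2-\alpha}(\mathbb{1}_A\otimes\Psi_E)^{\alpha-1}]$ equal the common expression $\bra{\Psi}\mathbb{1}_A\otimes\Psi_B^{1-\alpha}\otimes\Psi_E^{\alpha-1}\ket{\Psi}$.

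The gap is in your very last step. You yourself record that $H_\alpha(A|B)=\tfrac{1}{1-\alpha}\log_2 Q_\alpha(A|B)$ while $H_{2-\alpha}(A|E)=\tfrac{1}{\alpha-1}\log_2 Q_{2-\alpha}(A|E)$; the prefactors have \emph{opposite} signs. Hence $Q_\alpha(A|B)=Q_{2-\alpha}(A|E)$ yields
\[
H_\alpha(A|B)_\Psi \;=\; -\,H_{2-\alpha}(A|E)_\Psi,
\]
not the equality stated in the lemma. A one-line sanity check: take $B$ trivial, so $\Psi_{AE}$ is pure; then $H_\alpha(A|B)=S_\alpha(\Psi_A)$, whereas $H_{2-\alpha}(A|E)=-\tfrac{1}{1-\alpha}\log_2\bra{\Psi}\mathbb{1}_A\otimes\Psi_E^{\alpha-1}\ket{\Psi}=-S_\alpha(\Psi_A)$.

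What your computation actually reveals is that the lemma, as written, carries a sign typo: the correct Petz--R\'enyi duality with the marginal second argument is $H_\alpha(A|B)_\Psi+H_{2-\alpha}(A|E)_\Psi=0$. This is indeed the version the paper \emph{uses} later in the proof of Theorem~\ref{thm_lower_bound}, where the chain $D_{3/2}(\Psi_{AEE'}\|\mathbb{1}_A\otimes\Psi_{EE'})=-H_{3/2}(A|EE')=H_{1/2}(A|B)$ invokes precisely $-H_{3/2}(A|EE')=H_{1/2}(A|B)$. So your argument is essentially right; just correct the final sentence to conclude the duality with the minus sign, and note that this matches how the result is applied downstream.
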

In \ref{eq_duality} we have simplified the notation by omitting partial traces in the subscript of the conditional entropy: the partial trace is performed over systems not involved in the conditional entropy, e.g.~
\bb
    H_{\alpha}(A|B)_{\Psi_{ABE}}\coloneqq H_{\alpha}(A|B)_{\Tr_E\Psi_{ABE}}\,.
\ee
This notation will be used throughout the manuscript.
\begin{Def}[Smooth max relative entropy]\label{def_max_smooth}
    Let $\varepsilon\in(0,1)$. The {smooth max relative entropy} between a state $\rho$ and a positive-semi-definite operator $\sigma$ is defined as
\bb
    D_{\max}^{\varepsilon}(\rho\|\sigma)\coloneqq \log_2\inf\left\{\lambda\,:\, \rho'\le\lambda\sigma,\, \rho'\ge 0,\,\Tr\rho'=1,\, P(\rho,\rho')\le\varepsilon\right\}\,,
\ee
where $P(\rho,\rho')$ denotes the {sine distance}, which is given by 
\bb\label{sine_distance}
    P(\rho,\rho')\coloneqq \sqrt{1-F(\rho,\rho')}\,,
\ee
with $F(\rho,\rho')\coloneqq \|\sqrt{\rho}\sqrt{\sigma}\|_1^2$ being the fidelity.
\end{Def}
\begin{Def}[Smooth conditional min entropy]\label{def_smooth_min_ent}
    Let $\varepsilon\in(0,1)$. The {smooth conditional min entropy} of a bipartite state $\rho_{AB}$ is defined as
    \bb
        H_{\min}^\varepsilon(A|B)_\rho\coloneqq -\min_{\sigma_B}D_{\max}^{\varepsilon}(\rho_{AB}\|\mathbb{1}_A\otimes\sigma_B)\,,
    \ee
    where the optimisation is over states $\sigma_B$.
\end{Def}
\begin{Def}[Smooth conditional max entropy]\label{smooth_max_conditional}
    Let $\varepsilon\in(0,1)$. The {smooth conditional max-entropy} of a bipartite state $\rho_{AB}$ is defined as
    \bb\label{def_H_min}
        H_{\max}^\varepsilon(A|B)_\rho\coloneqq \inf_{\tilde{\rho}:\, P(\rho,\tilde{\rho})\le\varepsilon}H_{\max}^0(A|B)_{\tilde{\rho}}\,,
    \ee
    where $H_{\max}^0(A|B)_{\tilde{\rho}}$ denotes the conditional max-entropy defined as
    \bb
        H^0_{\max}(A|B)_\rho\coloneqq  -H_{\min}^0(A|E)_{\psi_{ABE}}\,,
    \ee
    where $\psi_{ABE}$ is a purification of $\rho_{AB}$. The infimum in \eqref{def_H_min} is over states $\tilde{\rho}$ such that $P(\rho,\tilde{\rho})\le\varepsilon$, where $P(\rho,\tilde{\rho})$ denotes the sine distance defined in \eqref{sine_distance}.
\end{Def}
From the definitions, it follows that the {smooth conditional max-entropy} and the {smooth conditional min-entropy} are related by the {duality relation} stated in the following lemma.
\begin{lemma}\label{lemma_duality}
    For any pure tripartite state $\Psi_{ABE}$ and any $\varepsilon\in(0,1)$, it holds that
    \bb
        H_{\max}^\varepsilon(A|B)_{ \Psi_{ABE}}=-H_{\min}^\varepsilon(A|E)_{ \Psi_{ABE}}\,.
    \ee
\end{lemma}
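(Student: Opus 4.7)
The plan is to reduce the smoothed duality to the unsmoothed duality (which is essentially built into the definition of $H^0_{\max}$) and then transport the smoothing ball on $AB$ to a smoothing ball on $AE$ via Uhlmann's theorem, using the fact that the sine distance $P$ defined in \eqref{sine_distance} is exactly the \emph{purified distance}: for any two states $\rho_{AB}$, $\tilde\rho_{AB}$ and any purification $\Psi_{ABE}$ of $\rho_{AB}$, there exists a purification $\tilde\Psi_{ABE}$ of $\tilde\rho_{AB}$ on the same purifying system $E$ with $P(\Psi_{ABE},\tilde\Psi_{ABE})=P(\rho_{AB},\tilde\rho_{AB})$, and conversely the partial trace is contractive for $P$.

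Concretely, first I unfold the definition in \eqref{def_H_min} to get
\begin{equation}
H_{\max}^{\varepsilon}(A|B)_{\Psi_{ABE}}=\inf_{\tilde\rho_{AB}:P(\rho_{AB},\tilde\rho_{AB})\le\varepsilon}H_{\max}^{0}(A|B)_{\tilde\rho_{AB}},
\end{equation}
where $\rho_{AB}=\Tr_E\Psi_{ABE}$. By Definition~\ref{smooth_max_conditional}, $H_{\max}^{0}(A|B)_{\tilde\rho_{AB}}=-H_{\min}^{0}(A|E)_{\tilde\Psi_{ABE}}$ for any purification $\tilde\Psi_{ABE}$ of $\tilde\rho_{AB}$, and the right-hand side only depends on the marginal $\tilde\rho_{AE}=\Tr_B\tilde\Psi_{ABE}$. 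Uhlmann's theorem lets me choose the purification $\tilde\Psi_{ABE}$ on the same $E$ as $\Psi_{ABE}$ with $P(\Psi_{ABE},\tilde\Psi_{ABE})=P(\rho_{AB},\tilde\rho_{AB})\le\varepsilon$, which by monotonicity of $P$ under partial trace forces $P(\rho_{AE},\tilde\rho_{AE})\le\varepsilon$. Conversely, every $\tilde\rho_{AE}$ with $P(\rho_{AE},\tilde\rho_{AE})\le\varepsilon$ arises as such a marginal, again by Uhlmann. This establishes a bijective correspondence (at the level of attainable values of $H_{\min}^0(A|E)$) between the two smoothing sets.

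Next I manipulate the min-entropy side symmetrically. Starting from Definition~\ref{def_smooth_min_ent} and Definition~\ref{def_max_smooth},
\begin{equation}
-H_{\min}^{\varepsilon}(A|E)_{\Psi_{ABE}}=\min_{\sigma_E}\log_2\inf\bigl\{\lambda:\tilde\rho_{AE}\le\lambda\,\mathbb{1}_A\otimes\sigma_E,\ \tilde\rho_{AE}\ge0,\ \Tr\tilde\rho_{AE}=1,\ P(\rho_{AE},\tilde\rho_{AE})\le\varepsilon\bigr\}.
\end{equation}
Swapping the minimisation over $\sigma_E$ with the infimum over $\tilde\rho_{AE}$ (justified since the objective is jointly continuous and the constraint sets are independent), I obtain
\begin{equation}
-H_{\min}^{\varepsilon}(A|E)_{\Psi_{ABE}}=\inf_{\tilde\rho_{AE}:P(\rho_{AE},\tilde\rho_{AE})\le\varepsilon}\bigl(-H_{\min}^{0}(A|E)_{\tilde\rho_{AE}}\bigr),
\end{equation}
which, combined with the previous paragraph, gives the required equality.

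The main obstacle is justifying the swap between the two optimisation sets, i.e.~showing that infimising $H_{\max}^0(A|B)_{\tilde\rho_{AB}}$ over the $P$-ball around $\rho_{AB}$ yields the same value as infimising $-H_{\min}^0(A|E)_{\tilde\rho_{AE}}$ over the $P$-ball around $\rho_{AE}$. This is where the purified-distance property of $P$ is essential: the Fuchs--van de Graaf/Uhlmann machinery guarantees that the image of the $AB$-ball under purification-and-partial-trace equals the $AE$-ball. The min-swap in the min-entropy manipulation is routine since one can first compute the unsmoothed $H_{\min}^0(A|E)_{\tilde\rho_{AE}}=-\min_{\sigma_E}\log_2\inf\{\lambda:\tilde\rho_{AE}\le\lambda\,\mathbb{1}_A\otimes\sigma_E\}$ and then take the infimum over $\tilde\rho_{AE}$ in the $\varepsilon$-ball, which is exactly what the exchange produces.
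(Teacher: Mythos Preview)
Your argument is correct and is exactly the standard duality proof that the paper alludes to when it says the lemma follows ``from the definitions'' (the paper provides no explicit proof beyond that remark). Two small remarks: swapping $\min_{\sigma_E}$ with $\inf_{\tilde\rho_{AE}}$ requires no continuity hypothesis---two infima over independent variables always commute---so that justification is superfluous; and the Uhlmann step tacitly assumes the purifying system $E$ is large enough to purify the smoothed state $\tilde\rho_{AB}$, which is automatic in the separable (infinite-dimensional) setting of the paper, while in a strictly finite-dimensional setting one would simply enlarge $E$ by a decoupled pure ancilla, leaving both sides of the identity unchanged.
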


\subsection{Continuous-variable systems}\label{sec_preliminaries_CV}
In this section, we provide an overview of quantum information with continuous variable systems~\cite{BUCCO}. By definition, a \emph{continuous variable system} is a quantum system associated with the Hilbert space $L^2(\mathbb R^n)$, which comprises all square-integrable complex-valued functions over $\mathbb{R}^n$, where $n\in\mathbb{N}$. Such an Hilbert space can be seen as the tensor product of $n$ Hilbert spaces equal to $L^2(\mathbb{R})$, each of which is called a \emph{mode}: 
\bb
    L^2(\mathbb R^n)=[L^2(\mathbb R)]^{\otimes n}\,.
\ee
Each mode constitutes a model to describe a single mode of electromagnetic radiation with definite frequency and polarisation. For each $i\in\{1,2,\ldots,n\}$, one can define the well-known \emph{position operator} $\hat{x}_i$ and \emph{momentum operator} $\hat{p}_i$ of the $i$-th mode~\cite{BUCCO}, satisfying the \emph{canonical commutation relation} $[\hat{x}_j,\hat{p}_k]=i\delta_{jk}\mathbb{1}$. By placing these operators into a vector, one obtains the \emph{quadrature vector}, which is defined as follows:
\bb
    \hat{\mathbf{R}}\coloneqq (\hat{x}_1,\hat{p}_1,\dots,\hat{x}_n,\hat{p}_n)^{\intercal}=(\hat{R}_1,\hat{R}_2,\dots,\hat{R}_{2n-1},\hat{R}_{2n})^{\intercal}\,
\ee
The canonical commutation relations can be expressed in terms of the quadrature vector as 
\bb
    [\hat{R}_k,\hat{R}_l]=i\,(\Omega )_{kl}\mathbb{\hat{1}}\qquad\forall\,k,l\in[2n]\,,
    \label{comm_rel_quadrature}
\ee
or in a more compact form as
\bb
[\hat{\mathbf{R}},\hat{\mathbf{R}}^{\intercal}]=i\,\Omega \mathbb{\hat{1}}\,,
\ee
where 
\bb\label{symplectic_form}
\Omega \coloneqq\bigoplus_{i=1}^n \left(\begin{matrix}0&1\\-1&0\end{matrix}\right) 
\ee
is called the \emph{$n$-mode symplectic form} and $\mathbb{\hat{1}}$ is the identity operator over $L^2(\mathbb R^n)$. The \emph{energy operator} $\hat{E}_n$ is defined as 
\bb\label{energy_operator}
\hat{E}_n&\coloneqq \frac{1}{2}\hat{\mathbf{R}}^\intercal\hat{\mathbf{R}}=\sum_{j=1}^n \!\left(\frac{\hat{x}_j^2}{2}+\frac{\hat{p}_j^2}{2}\right)=\sum_{j=1}^n \!\left(a_j^\dagger a_j +\frac{\mathbb{\hat{1}}}{2}\right)=\hat{N}_n+\frac{n}{2}\mathbb{\hat{1}}\,,
\ee
where we defined the \emph{total photon number operator} 
\bb\label{def_total_photon_number}
    \hat{N}_n\coloneqq \sum_{j=1}^n a_j^\dagger a_j
\ee
and the \emph{annihilation operator} $a_j$ of the $j$-th mode as 
\bb
    a_j\coloneqq \frac{\hat{x}_j+i\hat{p}_j}{\sqrt{2}}
\ee
for each $j\in\{1,2,\ldots,n\}$.
\subsubsection{Continuous-variable states}
Let us consider a single-mode system and let $a$ be its annihilation operator. The \emph{vacuum state} $\ket{0}$ is defined as the state satisfying $a\ket{0}=0$. Moreover, the $m$-th \emph{Fock state} is defined as
\begin{equation}
    \ket{m}\coloneqq\frac{(a^\dagger)^m}{\sqrt{m!}}\ket{0}\,,
\end{equation}
and it represents the quantum state with $n$ photons. Importantly, the Fock states $\{\ket{n}\}_{n\in\mathbb{N}}$ form a basis for the single-mode Hilbert space $L^2(\mathbb{R})$, meaning that the latter can be effectively seen as an infinite-dimensional qudit. The (single-mode) photon number operator con be diagonalised in terms of Fock states as follows:
\bb
    a^\dagger a=\sum_{m=0}^\infty m \ketbra{m}\,.
\ee
A \emph{single-mode state} is a quantum state on $L^2(\mathbb R)$, while an \emph{$n$-mode state} is a quantum state on $L^2(\mathbb R^n)$. The \emph{mean energy} of an $n$-mode state is defined by 
\bb\label{mean_energy_rho}
    E(\rho)\coloneqq \Tr[\hat{E}_n\rho]\,,
\ee
where $\hat{E}_n$ is the energy operator defined in \eqref{energy_operator}. Moreover, the \emph{mean photon number} of an $n$-mode state is defined as 
\bb\label{mean_phot_rho}
    N(\rho)\coloneqq\Tr[\hat{N}_n\rho]\,,
\ee
where $\hat{N}_n$ is the total photon number operator defined in \eqref{def_total_photon_number}.

Given an $n$-mode state $\rho$, one can define its \emph{characteristic function} $\chi_\rho: \mathbb{R}^{2n}\to \mathbb{C}$ as $\chi_\rho(\mathbf r)\coloneqq\Tr[ \rho  \hat{D}_{\mathbf{r}} ]$, where for all $\mathbf{r}\in \mathbb{R}^{2n}$ we introduced the \emph{displacement operator} $\hat{D}_{\mathbf{r}}\coloneqq e^{-i {\mathbf{r}}^{\intercal}\Omega  \hat{\mathbf{R}}}$. Conversely, given the characteristic function  $\chi_\rho$, one can reconstruct the quantum state via the so-called \emph{Fourier-Weyl relation}:
\bb\label{inverse_fourier_displacement}
\rho=\int_{\mathbb{R}^{2n}}\frac{\mathrm{d}^{2n}\mathbf{r}}{(2\pi)^n}\chi_\rho(\mathbf r) \hat{D}_{\mathbf{r}}^\dagger\,.
\ee
The Wigner function $ W_\rho: \mathbb{R}^{2n} \to \mathbb{R}$ of an \( n \)-mode state \( \rho \) is defined as the inverse (symplectic) Fourier transform of the characteristic function:
\begin{equation}
W_\rho(\mathbf{r})  = \frac{1}{(2\pi)^{2n}} \int_{\mathbf{r}' \in \mathbb{R}^{2n}} \mathrm{d}^{2n}\mathbf{r}' \, \chi_\rho(\mathbf{r}') e^{i {\mathbf{r}'}^{\intercal} \Omega  \mathbf{r}}.
\end{equation}
As a consequence, the characteristic function can be expressed as the (symplectic) Fourier transform of the characteristic function:
 \begin{equation}\label{Fourier_Weyl}
\chi_\rho(\mathbf{r})  = \int_{\mathbf{r}' \in \mathbb{R}^{2n}} \mathrm{d}^{2n}\mathbf{r}' \, W_\rho(\mathbf{r}') e^{-i {\mathbf{r}}^{\intercal} \Omega  \mathbf{r}'}.
\end{equation}
The \emph{first moment} of a quantum state $\rho$ is defined as $\mathbf{m}(\rho)=\Tr\!\left[\hat{\mathbf{R}}\,\rho\right]$.
Moreover, the \emph{covariance matrix} of $\rho$ is a matrix denoted as $V\!(\rho)$ and defined by
\bb
	[V\!(\rho)]_{k,l}\coloneqq \Tr\!\left[\left\{{\hat{R}_k-m_k(\rho)\mathbb{1},\hat{R}_l-m_l(\rho)\mathbb{1}}\right\}\rho\right]\qquad\forall\in\{1,2,\ldots,2n\}\,,
\ee
where $\{\hat{A},\hat{B}\}\coloneqq \hat{A}\hat{B}+\hat{B}\hat{A}$ is the anti-commutator. In a more compact form, the covariance matrix can be written as
\bb
	V\!(\rho)&=\Tr\!\left[\left\{(\hat{\textbf{R}}-\textbf{m}(\rho)\mathbb{1}),(\hat{\textbf{R}}-\textbf{m}(\rho))^{\intercal}\mathbb{1}\right\}\rho\right]\, .
\ee   
Any covariance matrix $V\!(\rho)$ satisfies the following operator inequality~\cite{BUCCO}:
\bb
V\!(\rho)+i\Omega \ge0\,,
\ee
known as \emph{uncertainty relation}. Consequently, since $\Omega $ is skew-symmetric, any covariance matrix $V\!(\rho)$ is positive semi-definite on $\mathbb{R}^{2n}$. 
\subsubsection{Gaussian states}
\begin{Def}[(Gaussian state)]\label{def_gauss_sm}
A Gaussian state is a state that can be written as a tensor product of Gibbs states of quadratic Hamiltonians. By definition, a Gibbs state of a quadratic Hamiltonian $\hat{H}$ is a state of the form
\begin{equation}
   \rho=  \frac{e^{-\beta \hat{H}}}{\Tr[e^{-\beta \hat{H}}]} \,,
\end{equation}
for some $\beta\in(0,\infty]$, where
\bb
    \hat{H}\coloneqq \frac{1}{2}(\hat{\mathbf{R}}-\mathbf{m})^{\intercal}H(\hat{\mathbf{R}}-\mathbf{m})\,,
\ee
with $\hat{\textbf{R}}$ being the quadrature vector, $H$ being a real symmetric positive-definite matrix, and $\mathbf{m}$ being a real vector.
\end{Def}
Equivalently, the set of Gaussian states can be defined as the closure, with respect to the trace norm, of the set of Gibbs states of quadratic Hamiltonians~\cite{G-resource-theories}.

The Wigner function of a Gaussian state $\rho$ is a Gaussian probability distribution with first moment $\textbf{m}(\rho)$ and covariance matrix $\frac{V(\rho)}{2}$~\cite{BUCCO}:
\bb
    W_\rho(\textbf{r})=\NN\!\left[\textbf{m}(\rho),\frac{V(\rho)}{2}\right]\!(\textbf{r})=\frac{e^{- (\textbf{r}-\textbf{m}(\rho))^\intercal [V(\rho) ]^{-1}  (\textbf{r}-\textbf{m}(\rho))}}{\pi^{n}\sqrt{\det [V(\rho) ]}}\,.
\ee
where 
\bb
    \NN[\textbf{m},V](\textbf{r})&\coloneqq\frac{e^{-\frac12 (\textbf{r}-\textbf{m})^\intercal V^{-1}  (\textbf{r}-\textbf{m})}}{(2\pi)^{n}\sqrt{\det V }}\,.
\ee
In addition, the characteristic function of a Gaussian state $\rho$ is given by~\cite{BUCCO}
\bb\label{charact_gaussian}
\chi_{\rho}(\mathbf{r})=\exp\!\left( -\frac{1}{4}(\Omega  \mathbf{r})^{\intercal}V\!(\rho)\Omega  \mathbf{r}+i(\Omega  \mathbf{r})^{\intercal}\mathbf{m}(\rho) \right)\,.
\ee
Since any quantum state can be reconstructed from its characteristic function via the Fourier-Weyl relation in \eqref{Fourier_Weyl}, it follows that any Gaussian state is uniquely identified by its first moment and covariance matrix.

An important example of Gaussian state is the so-called \emph{thermal state}. Given ${N_s}\ge0$, the thermal state $\tau_{N_s}$ is defined as
\bb
    \tau_{{N_s}}\coloneqq \frac{1}{{N_s}+1}\sum_{n=0}^\infty \left(\frac{{N_s}}{{N_s}+1}\right)^{n}\ketbra{n}\,.
    \label{eq:termal}
\ee
Note that the mean photon number of $\tau_{{N_s}}$ is given by $\Tr[a^\dagger a\,\tau_{{N_s}} ]={N_s}$. The first moment and the covariance matrix of the thermal state are given by:
\bb\label{moments_thermal}
\mathbf{m}(\tau_{{N_s}})&=(0,0)^{\intercal}\,,\\
V(\tau_{{N_s}})&=(2{N_s}+1)\mathbb{1}_2 \,.
\ee
Note that for ${N_s}=0$, the thermal state coincides with the vacuum state $\ketbra{0}$. Another important example of Gaussian state is the so-called $\emph{two-mode squeezed vacuum state}$. Given $N_s\ge0$, the two-mode squeezed vacuum state is a two-mode state defined as
\bb
    \ket{\psi_{{N_s}}}_{AB}\coloneqq \frac{1}{\sqrt{{N_s}+1}}\sum_{n=0}^\infty \left(\frac{{N_s}}{{N_s}+1}\right)^{n/2}\ket{n}_A\otimes\ket{n}_B\,.
    \label{eq:2mode_squeezed}
\ee
The first moment and covariance matrix of the two-mode squeezed vacuum are given by~\cite{BUCCO}:
\bb\label{cov_two_mode_squeezed}
\mathbf{m}(\tau_{{N_s}})&=(0,0,0,0)^{\intercal}\,,\\
V(\ketbra{\Psi_{N_s}})&=\left(\begin{matrix} (2N_s+1)\mathbb{1}_2 & 2\sqrt{N_s(N_s+1)}\sigma_z \\ 2\sqrt{N_s(N_s+1)}\sigma_z  &(2N_s+1)\mathbb{1}_2\end{matrix}\right)\,,
\ee 
where $\sigma_z\coloneqq\left(\begin{matrix}1&0\\0&-1\end{matrix}\right)$. Note that the two-mode squeezed vacuum state $\ket{\psi_{{N_s}}}$ is a purification of the thermal state $\tau_{N_s}$, that is $\Tr_{B}\ketbra{\psi_{{N_s}}}_{AB}=\tau_{N_s}$.

The von Neumann entropy of a thermal state $\tau_{N_s}$ is given by~\cite{BUCCO}
\bb\label{entropy_thermal}
    S(\tau_{N_s})=h(N_s)\,,
\ee
where $h(x)\coloneqq (x+1)\log_2(x+1) - x\log_2 x\,$. As proved in \cite{WildeRenes2016}, the \emph{entropy variance}~\cite{WildeRenes2016} of a thermal state $\tau_{N_s}$ reads
\bb\label{ent_var_th_state}
    \Tr[\tau_{N_s}(\log_2 \tau_{N_s})^2]-\Tr[\tau_{N_s}\log_2 \tau_{N_s}]^2&=N_s(N_s+1)\log_2\left(1+\frac{1}{N_s}\right)^2\,.
\ee

\subsubsection{Gaussian unitaries}
A \emph{symplectic matrix} $S\in\mathbb{R}^{2n,2n}$ is a matrix satisfying $S\Omega S^\intercal=\Omega $. Given a symplectic matrix $S\in\mathbb{R}^{2n,2n}$, one can define a suitable $n$-mode unitary $U_S$ --- dubbed \emph{symplectic unitary} --- such that~\cite{BUCCO}
\begin{equation}
U_S^\dagger \hat{R}_k U_S=\sum^{2n}_{k,l}S_{k,l}\hat{R}_l\qquad\forall k\in\{1,2,\ldots,n\}\,,
\end{equation}
In a more compact form, this can be written as $U_S^\dagger \hat{\mathbf{R}}U_S=S\hat{\mathbf{R}}$. In particular, a symplectic unitary transforms the first moment and the covariance matrix as follows:
\bb\label{law_sympl_unitary}
    &\mathbf{m}(U_S\rho U_S^\dagger)=S\mathbf{m}(\rho)\,,\\
    &V(U_S\rho U_S^\dagger)=SV\!(\rho)S^\intercal\, .
\ee
Similarly, the displacement operator transforms the quadrature vector as 
\begin{align}
    \label{eq:displ}\hat{D}_\mathbf{r}\hat{\mathbf{R}}\hat{D}_\mathbf{r}^\dagger=\hat{\mathbf{R}}+\mathbf{r}\mathbb{\hat{1}},
\end{align} 
and, in particular, a displacement operator  transforms the first moment and the covariance matrix as follows:
\bb
    &\mathbf{m}\!\left(\hat{D}_\mathbf{r}\rho \hat{D}_\mathbf{r}^\dagger\right)=\mathbf{m}(\rho)+\mathbf{r}\,,\\
    &V\!\left(\hat{D}_\mathbf{r}\rho \hat{D}_\mathbf{r}^\dagger\right)=V\!(\rho)\,.
\label{eq:bille}
\ee
\begin{Def}[(Gaussian unitary)]\label{def_gauss_sm2}
    A Gaussian unitary is a unitary given by the product of a displacement operator and a symplectic unitary.
\end{Def}
Any Gaussian unitary maps the set of Gaussian states into itself~\cite{BUCCO}. 
\subsubsection{Beam splitter unitary}
An important example of Gaussian unitary is the so-called \emph{beam splitter unitary}.
\begin{Def}[(Beam splitter unitary)]\label{def_beam_splitter}
Let $S$ and $E$ be two single-mode systems. Let $a$ and $b$ be the annihilation operators of $S$ and $E$, respectively. For all $\lambda\in[0,1]$, the beam splitter unitary of transmissivity $\lambda$ is defined by
\bb
    U_{\lambda}^{S E}\coloneqq\exp\!\left[\arccos\sqrt{\lambda}\left(a^\dagger b-a\, b^\dagger\right)\right]\,.
\ee
\end{Def}
The symplectic matrix $S_\lambda$ associated with the beam splitter unitary $U^{SE}_{\lambda}$ satisfying the relation $(U^{SE}_{\lambda})^\dagger \hat{\mathbf{R}}\,U^{SE}_{\lambda}=S_\lambda\hat{\mathbf{R}}$ is given by~\cite{BUCCO}
\bb\label{symplectic_beam_splitter}
    S_\lambda &\coloneqq \left(\begin{matrix} \sqrt{\lambda}\,\mathbb{1}_2  & \sqrt{1-\lambda}\,\mathbb{1}_2 \\
-\sqrt{1-\lambda}\,\mathbb{1}_2 & \sqrt{\lambda}\,\mathbb{1}_2 \end{matrix}\right)\,.
\ee
The beam splitter unitary $U^{SE}_{\lambda}$ acts acts on a Fock state $\ket{i}_S\otimes\ket{j}_E$ as (see e.g.~\cite{Die-Hard-2-PRA})
\bb\label{eq_beam_ij}
U_{\lambda}\ket{i}_S\otimes\ket{j}_E=\sum_{m=0}^{i+j}c_m^{(i,j)}\ket{i+j-m}_{S}\otimes\ket{m}_{E} \,,
\ee
where
\bb
    c_m^{(i,j)}\coloneqq\frac{1}{\sqrt{i!j!}}\sum_{k=\max(0,m-j)}^{\min(i,m)}(-1)^k\sqrt{m!(i+j-m)!}\binom{i}{k}\binom{j}{m-k}\lambda^{\frac{i+m-2k}{2}}(1-\lambda)^{\frac{j+2k-m}{2}}\,.
\ee
In particular, it holds that 
\bb\label{eq_beam_n0}
    U_\lambda^{SE}\ket{n}_S\otimes\ket{0}_E&=
\sum_{l=0}^n(-1)^l\sqrt{\binom{n}{l}}\lambda^{\frac{n-l}{2}}(1-\lambda)^{\frac{l}{2}}\ket{n-l}_{S}\otimes\ket{l}_E\,,\\
    U_\lambda^{SE}\ket{0}_S\otimes\ket{n}_E&=
\sum_{l=0}^n \sqrt{\binom{n}{l}}(1-\lambda)^{\frac{l}{2}}\lambda^{\frac{n-l}{2}}\ket{l}_{S}\otimes\ket{n-l}_E\,.
\ee

\subsubsection{Two-mode squeezing unitary}
Another important example of Gaussian unitary is the so-called \emph{two-mode squeezing unitary}.
\begin{Def}[(Two-mode squeezing unitary)]\label{def_two_sq}
Let $S$ and $E$ be two single-mode systems. Let $a$ and $b$ be the annihilation operators of $S$ and $E$, respectively. For all $g\ge1$, the two-mode squeezing unitary of parameter $g$ is defined by
\bb
    U_{g}^{S E}\coloneqq\exp\left[\arccosh\sqrt{g}\left(a^\dagger b^\dagger-a\, b\right)\right]\,.
\ee
\end{Def}
The symplectic matrix $S_g$ associated with the two-mode squeezing unitary $U^{SE}_{g}$ satisfying the relation $(U^{SE}_{g})^\dagger \hat{\mathbf{R}}\,U^{SE}_{g}=S_g\hat{\mathbf{R}}$ is given by~\cite{BUCCO}
\bb\label{symplectic_squez}
S_g\coloneqq	\begin{pmatrix}
		\sqrt{g}\,\mathbb{1}_2\, & \,\sqrt{g-1}\,\sigma_z \\
		\sqrt{g-1}\,\sigma_z\, &\, \sqrt{g}\,\mathbb{1}_2
	\end{pmatrix}\,.
\ee

\subsubsection{Williamson decomposition}
The covariance matrix $V\!(\rho)$ of an $n$-mode state $\rho$ satisfies the so-called \emph{Williamson decomposition}~\cite{BUCCO}: there exist $n$ real numbers $d_1,d_2,\ldots , d_n\ge1$, known as the \emph{symplectic eigenvalues} of $V\!(\rho)$, and a symplectic matrix $S$ such that
\bb
    V\!(\rho)=S\,\operatorname{diag}\left(d_1,d_1,d_2,d_2,\ldots,d_n,d_n\right)\,S^\intercal\,,
    \label{eq:will}\,.
\ee
Moreover, if $\rho$ is a Gaussian state, one can show that the above decomposition of the covariance matrix implies the following decomposition of $\rho$, known as \emph{normal decomposition}:~\cite{BUCCO} 
\bb
    \rho= \hat{D}_{\mathbf{m}(\rho)}U_S\left(\tau_{\nu_1}\otimes\tau_{\nu_2}\otimes\ldots\tau_{\nu_n}\right)U_S^\dagger \hat{D}_{\mathbf{m}(\rho)}^\dagger\,,
    \label{eq:GaussianState}
\ee
where $\{\tau_{\nu_k}\}^n_{k=1}$ are thermal states defined in~\eqref{eq:termal}, and the mean photon numbers $\nu_1,\nu_2,\ldots,\nu_n$ are defined in terms of the symplectic eigenvalues of $V\!(\rho)$ as $\nu_i\coloneqq \frac{d_i-1}{2}$.

The above decomposition allows us to easily show the following useful lemma.
\begin{lemma}\label{finite_sqrt}
For any Gaussian state $\rho$, the quantity $\Tr\sqrt{\rho}$ is finite.
\end{lemma}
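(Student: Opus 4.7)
The plan is to reduce the computation to a product of single-mode quantities via the normal decomposition \eqref{eq:GaussianState}, and then to compute each single-mode term explicitly using the Fock-basis expansion of a thermal state.

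More precisely, I would first apply the normal decomposition to write any Gaussian state $\rho$ as
\[
\rho = \hat{D}_{\mathbf{m}(\rho)} U_S \bigl(\tau_{\nu_1}\otimes\cdots\otimes\tau_{\nu_n}\bigr) U_S^\dagger \hat{D}_{\mathbf{m}(\rho)}^\dagger,
\]
with $\nu_1,\dots,\nu_n\in[0,\infty)$ determined by the symplectic spectrum of $V(\rho)$. Since $\hat{D}_{\mathbf{m}(\rho)}U_S$ is unitary, the functional calculus gives $\sqrt{\rho} = \hat{D}_{\mathbf{m}(\rho)}U_S\sqrt{\tau_{\nu_1}\otimes\cdots\otimes\tau_{\nu_n}}\,U_S^\dagger \hat{D}_{\mathbf{m}(\rho)}^\dagger$, so by cyclicity of the trace
\[
\Tr\sqrt{\rho} = \Tr\sqrt{\tau_{\nu_1}\otimes\cdots\otimes\tau_{\nu_n}} = \prod_{k=1}^{n}\Tr\sqrt{\tau_{\nu_k}},
\]
where in the last step I used $\sqrt{A\otimes B}=\sqrt{A}\otimes\sqrt{B}$ (valid for positive operators).

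It therefore suffices to show that $\Tr\sqrt{\tau_\nu}<\infty$ for every $\nu\in[0,\infty)$. This is a direct computation from the Fock-basis form \eqref{eq:termal}: since $\tau_\nu$ is diagonal in the Fock basis,
\[
\Tr\sqrt{\tau_\nu} = \frac{1}{\sqrt{\nu+1}}\sum_{k=0}^{\infty}\left(\sqrt{\tfrac{\nu}{\nu+1}}\right)^{k} = \frac{1}{\sqrt{\nu+1}-\sqrt{\nu}},
\]
where the geometric series converges because $\sqrt{\nu/(\nu+1)}<1$ for every finite $\nu$ (and the case $\nu=0$ trivially gives $\Tr\sqrt{\tau_0}=1$). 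Plugging this back yields the closed form $\Tr\sqrt{\rho}=\prod_{k=1}^n (\sqrt{\nu_k+1}-\sqrt{\nu_k})^{-1}$, which is manifestly finite.

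There is no real obstacle here: all three ingredients (normal decomposition, unitary and tensor-product behaviour of the square root, and summability of a geometric series) are standard, and the mild point worth emphasising in the writeup is just that $\nu_k<\infty$ for each $k$ because $\rho$ is a bona fide Gaussian state with a finite covariance matrix (so its symplectic eigenvalues are finite).
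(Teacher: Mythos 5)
Your proof is correct and follows essentially the same route as the paper's: normal decomposition into a unitary conjugate of a tensor product of thermal states, factorisation of $\Tr\sqrt{\cdot}$ over the modes, and convergence of the resulting geometric series. The only addition is the explicit closed form $\Tr\sqrt{\tau_\nu}=(\sqrt{\nu+1}-\sqrt{\nu})^{-1}$, which is a nice touch but not needed for the statement.
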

\begin{proof}
Thanks to the decomposition of normal decomposition in \eqref{eq:GaussianState}, any Gaussian state $\rho$ is unitarily equivalent to a tensor product of thermal states. Hence, the trace of the square root of a Gaussian state is equal to the product of traces of the square root of thermal states. Moreover, the trace of the square root of a thermal state $\tau_N$ is finite, as
\bb
     \sqrt{\tau_{{N}}}\coloneqq \frac{1}{\sqrt{{N}+1}}\sum_{n=0}^\infty \left(\frac{{N}}{{N}+1}\right)^{n/2}\ketbra{n}\,.
\ee
and hence
\bb
    \Tr[\sqrt{\tau_N}]=\frac{1}{N+1}\sum_{n=0}^\infty\left(\frac{N}{N+1}\right)^{n/2}<\infty\,.
\ee
This concludes the proof.
\end{proof}

\subsubsection{Gaussian channels}
\begin{Def}[(Gaussian channel)] 
A Gaussian channel is a quantum channel that maps the set of Gaussian states into itself.
\end{Def}
An \emph{$n$-mode Gaussian channel} is a Gaussian channel from the set of $n$-mode states into itself. Similarly, a \emph{single-mode Gaussian channel} is a Gaussian channel from the set of single-mode states into itself. It is worth to mention that any Gaussian channel can be written in Stinespring representation in terms of a Gaussian unitary and an environmental vacuum state~\cite{BUCCO}.

An important example of single-mode Gaussian channel is the so-called \emph{pure loss channel}, which models the phenomenon of photon loss in optical fibres and free-space links. The pure loss channel is mathematically defined in the forthcoming Definition~\ref{def_pure_loss}, with a pictorial representation provided in Fig.~\ref{fig_pure_loss_sm}.
\begin{Def}[(Pure loss channel)]\label{def_pure_loss}
Let $S$ and $E$ be two single-mode systems.  The pure loss channel of transmissivity $\lambda$ is a quantum channel $\pazocal{E}_{\lambda}:S\to S$ defined as follows:
\bb
        \pazocal{E}_{\lambda}(\Theta)&\coloneqq\Tr_E\left[U_\lambda^{SE} \big(\Theta_S \otimes\ketbra{0}_E\big) (U_\lambda^{SE})^\dagger\right] 
\ee
for any linear operator $\Theta$ on $S$, where $U_\lambda^{SE}$ denotes the beam splitter unitary of transmissivity $\lambda$ and $\ketbra{0}_E$ denotes the vacuum state of $E$.
\end{Def}
\begin{figure}[h!]
	\centering
	\includegraphics[width=0.3\linewidth]{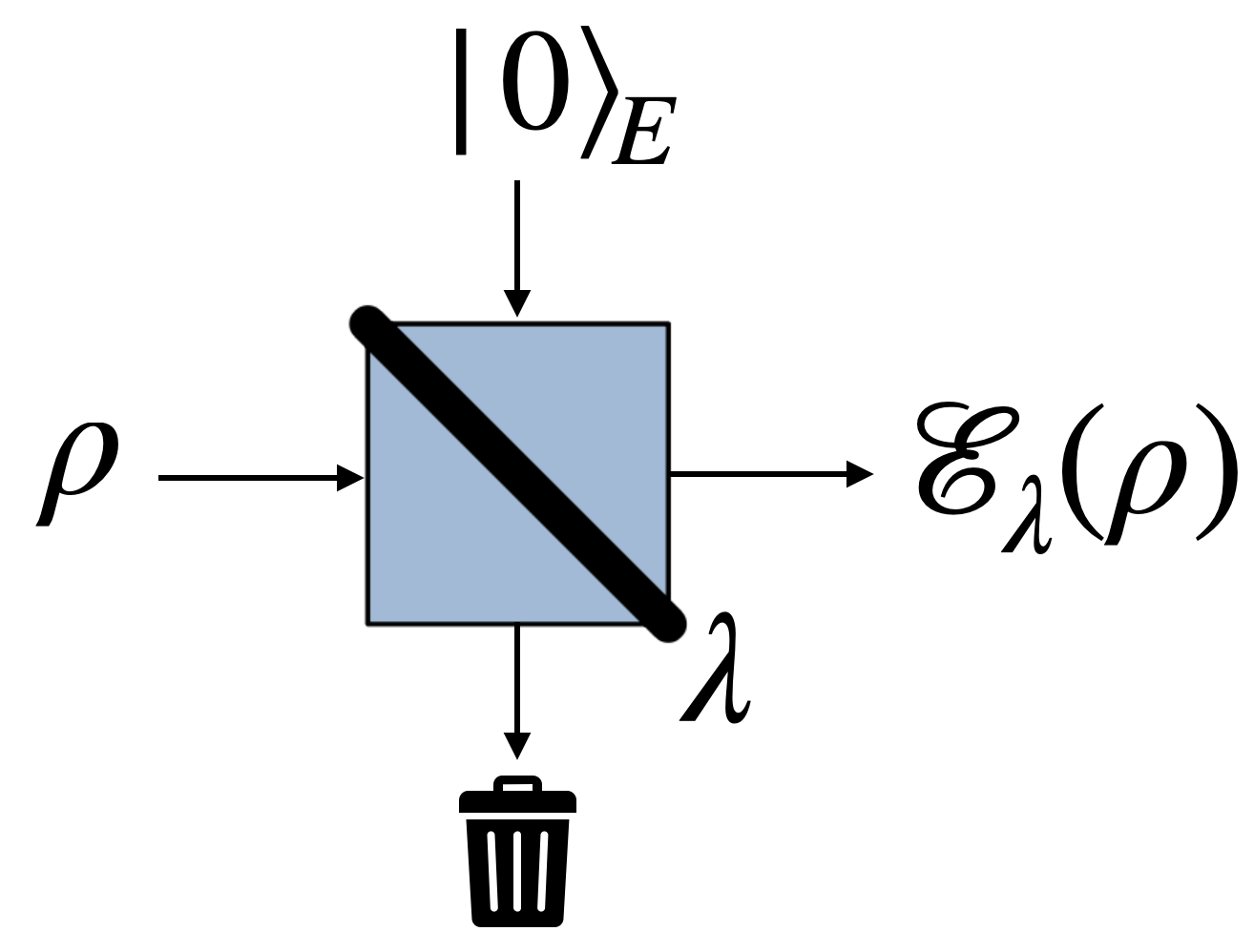}
	\caption{Pictorial representation of the pure loss channel. The pure loss channel $\mathcal{E}_\lambda$ of transmissivity $\lambda$ acts on the input state $\rho$ by mixing it in a beam splitter (i.e.~a semi-reflective mirror) of transmissivity $\lambda$ with an environment initialised in the vacuum $\ket{0}_E$. The pure loss channel is noiseless for $\lambda=1$ (it equals the identity channel), while it is completely noisy for $\lambda=0$ (it maps any input state into the vacuum).}
    \label{fig_pure_loss_sm}
\end{figure}
Another notable example of a single-mode Gaussian channel is the \emph{pure amplifier channel}, which serves as a model for phenomena such as spontaneous parametric down-conversion~\cite{Clerk_2010}, the dynamical Casimir effect in superconducting circuits~\cite{Moore1970}, the Unruh effect~\cite{Unruh1976}, and Hawking radiation~\cite{Hawking1972}.
\begin{Def}[(Pure amplifier channel)]\label{def_pure_ampl}
Let $S$ and $E$ be two single-mode systems.  The pure amplifier channel of gain $g$ is a quantum channel $\Phi_{g}:S\to S$ defined as follows:
\bb
        \Phi_{g}(\Theta)&\coloneqq\Tr_E\left[U_g^{SE} \big(\Theta_S \otimes\ketbra{0}_E\big) (U_g^{SE})^\dagger\right] 
\ee
for any linear operator $\Theta$ on $S$, where $U_g^{SE}$ denotes the two-mode squeezing unitary of parameter $g$ and $\ketbra{0}_E$ denotes the vacuum state of $E$.
\end{Def}

\subsubsection{Capacities of the pure loss channel and pure amplifier channel}
For all $\lambda\in[0,1]$ the quantum capacity $Q(\mathcal{E}_{\lambda})$~\cite{holwer, Wolf2007, Wolf2006}, the two-way quantum capacity $Q_2(\mathcal{E}_{\lambda})$~\cite{PLOB}, and the secret-key capacity $K(\mathcal{E}_{\lambda})$~\cite{PLOB} of the pure loss channel $\mathcal{E}_{\lambda}$ are given by:
\bb\label{capacities_pure_loss}
    Q(\mathcal{E}_{\lambda})&=   
        \begin{cases}
        \log_2\!\left(\frac{\lambda}{1-\lambda}\right) &\text{if $\lambda\in(\frac{1}{2},1]$ ,} \\
        0 &\text{if $\lambda\in[0,\frac{1}{2}]$ ,}
    \end{cases}\\
    Q_2(\mathcal{E}_{\lambda})&=K(\mathcal{E}_{\lambda})= \log_2\!\left(\frac{1}{1-\lambda}\right)\,.
\ee
Hence, the quantum capacity vanishes for $\lambda\in[0,\frac{1}{2}]$, while the two-way quantum capacity and secret-key capacity are always larger than zero for all $\lambda\in(0,1]$.

Moreover, for all $\Phi_g\ge1$ the quantum capacity $Q(\Phi_g)$~\cite{holwer, Wolf2007, Wolf2006}, the two-way quantum capacity $Q_2(\Phi_g)$~\cite{PLOB}, and the secret-key capacity $K(\Phi_{g})$~\cite{PLOB} of the pure amplifier channel $\Phi_{g}$ are all equal to each other and are given by:
\bb\label{capacities_pure_ampl}
    Q(\Phi_{g})=Q_2(\mathcal{E}_{\lambda})&=K(\mathcal{E}_{\lambda})= \log_2\!\left(\frac{g}{g-1}\right)\,.
\ee  
The above capacities can be calculated via coherent information quantities. Namely, for $\lambda\in[\frac{1}{2},1]$ the quantum capacity $Q(\mathcal{E}_{\lambda})$ of the pure loss channel is given by~\cite{holwer, Wolf2007, Wolf2006}:
\bb\label{coh_1}
Q(\mathcal{E}_{\lambda})=\lim\limits_{N_s\rightarrow\infty}I_c(A\,\rangle\, B)_{\mathbb{1}_{A}\otimes\mathcal{E}_{\lambda}(\ketbra{\Psi_{N_s}}_{AB})} =\log_2\!\left(\frac{\lambda}{1-\lambda}\right)\,,
\ee
where $\ketbra{\Psi_{N_s}}_{AB}$ denotes the two-mode squeezed vacuum state defined in Eq.~\eqref{eq:2mode_squeezed} and $I_c(A\,\rangle\, B)$ denotes the coherent information defined in Definition~\ref{def_coh_info}. Moreover,  the two-way quantum capacity and the secret-key capacity are given by~\cite{PLOB}:
\bb\label{coh_2}
    Q_2(\mathcal{E}_{\lambda})=K(\mathcal{E}_{\lambda})= \lim\limits_{N_s\rightarrow\infty}I_c(B\,\rangle\, A)_{\mathbb{1}_{A}\otimes\mathcal{E}_{\lambda}(\ketbra{\Psi_{N_s}}_{AB})}=\log_2\!\left(\frac{1}{1-\lambda}\right)\,.
\ee
Similarly, the capacities of the pure amplifier channel can be calculated via coherent information quantities as
\bb\label{coh_ampl}         Q_2(\Phi_{g})=Q_2(\Phi_{g})=K(\Phi_{g})= \lim\limits_{N_s\rightarrow\infty}I_c(A\,\rangle\, B)_{\mathbb{1}_{A}\otimes\Phi_{g}(\ketbra{\Psi_{N_s}}_{AB})}=\log_2\!\left(\frac{g}{g-1}\right)\,.
\ee
An upper bound on the $n$-shot capacities of the pure loss channel and of the pure amplifier channel is given by~\cite{MMMM}
\bb
    Q_2^{(\varepsilon,n)}(\mathcal{E}_\lambda)&\le K^{(\varepsilon,n)}(\mathcal{E}_\lambda)\le n\log_2\!\left(\frac{1}{1-\lambda}\right)+\log_26+2\log_2\!\left(\frac{1+\varepsilon}{1-\varepsilon}\right)\,,\\
    Q_2^{(\varepsilon,n)}(\Phi_g)&\le K^{(\varepsilon,n)}(\Phi_g)\le n\log_2\!\left(\frac{g}{g-1}\right)+\log_26+2\log_2\!\left(\frac{1+\varepsilon}{1-\varepsilon}\right)\,,
\ee
while a lower bound is proved in the present paper; specifically in Section~\ref{sec_lower_bound_pure_loss_channel} for the pure loss channel and in Section~\ref{sec_lower_bound_pure_ampl_channel} for the pure amplifier channel.
\subsubsection{Energy-constrained capacities}
In practise, a quantum communication protocol can not exploit \emph{infinite} energy. Instead, all optical input signals used in a communication protocol have a bounded energy (bounded by e.g.~the energy of the Sun, or the energy budget available in lab). It is therefore important to consider only those quantum communication protocols that exploit input states satisfying a suitable \emph{energy constraint}. Specifically, in the continuous-variable setting, it is common to consider the so-called \emph{energy-constrained capacities}~\cite{Davis2018,Mark-energy-constrained}. Given a positive number $N_s$ (representing the energy budget per input signal), the energy-constrained capacities are defined in the same way as the unconstrained capacities, apart from the fact that the optimisation is performed over those protocols such that the average expected value of the photon number operator $a^\dagger a$ on all input signals is required to be at most $N_s$. The energy-constrained quantum capacities of a quantum channel $\Phi$ are denoted as $Q(\Phi,N_s)$, $Q_2(\Phi,N_s)$, and $K(\Phi,N_s)$, while the $n$-shot energy-constrained capacities are denoted as $Q^{(\varepsilon,n)}(\Phi,N_s)$, $Q_2^{(\varepsilon,n)}(\Phi,N_s)$, and $K^{(\varepsilon,n)}(\Phi,N_s)$.

The energy-constrained quantum capacity of the pure loss channel $\mathcal{E}_\lambda$ and of the pure amplifier channel $\Phi_g$ have been determined exactly~\cite{holwer, Caruso2006, Wolf2007, Mark2012, Mark-energy-constrained, Noh2019} and they are given by the coherent information of the bipartite state obtained by sending an half of the two-mode squeezed vacuum state $\ketbra{\Psi_{N_s}}$ with local energy $N_s$ into the channel:
\bb\label{q_ec_pure_loss}
    Q\left(\mathcal{E}_\lambda,N_s\right)&=\max\{0, I_c(A\,\rangle\, B)_{\mathbb{1}_{A}\otimes\mathcal{E}_{\lambda}(\ketbra{\Psi_{N_s}}_{AB})} \}&=   
        \begin{cases}
        h(\lambda N_s)-h((1-\lambda)N_s) &\text{if $\lambda\in(\frac{1}{2},1]$ ,} \\
        0 &\text{if $\lambda\in[0,\frac{1}{2}]$ ,}
    \end{cases} 
\ee
\bb\label{q_ec_pure_ampl}
    Q\left(\Phi_g,N_s\right)&= I_c(A\,\rangle\, B)_{\mathbb{1}_{A}\otimes\Phi_{g}(\ketbra{\Psi_{N_s}}_{AB})}=h\!\left( gN_s + g-1\right)-h\!\left( (g-1)(N_s+1) \right)\,,
\ee
where $h(x)\coloneqq (x+1)\log_2(x+1)-x\log_2 x$.

The energy-constrained two-way quantum and secret-key capacities of the pure loss channel and pure amplifier channel have not yet been determined. However, a lower bound is given by~\cite{Pirandola2009}:
\bb\label{q2_ec_pure_loss}
    K\left(\mathcal{E}_\lambda,N_s\right)&\ge Q_2\left(\mathcal{E}_\lambda,N_s\right)\ge I_c(B\,\rangle\, A)_{\mathbb{1}_{A}\otimes\mathcal{E}_{\lambda}(\ketbra{\Psi_{N_s}}_{AB})}=h(N_s)-h\!\left((1-\lambda)N_s\right)\,,\\
    K\left(\Phi_g,N_s\right)&\ge Q_2\left(\Phi_g,N_s\right)\ge Q\left(\Phi_g,N_s\right)=I_c(A\,\rangle\, B)_{\mathbb{1}_{A}\otimes\Phi_{g}(\ketbra{\Psi_{N_s}}_{AB})}=h\!\left( gN_s + g-1\right)-h\!\left( (g-1)(N_s+1) \right)\,.
\ee 
In this work, we will establish a lower bound on the energy-constrained $n$-shot capacities of the pure loss channel and pure amplifier channel.

\newpage 
\section{Tail bounds on Gaussian states}
In this section, we solve the following simple-looking problems:
\begin{itemize}
    \item What is the probability that a fixed Gaussian state has more than $M$ photons? In other words, given a Gaussian state $\rho$, how to estimate the probability that the outcome of the photon number measurement on $\rho$ is larger than $M$? We answer this question in subsection~\eqref{tail_1}. In particular, we show that such a probability converges to zero exponentially fast in $M$, with exponential lifetime given by the mean photon number of the state.
    \item What is the trace-distance error in approximating a Gaussian state with its finite-dimensional approximation consisting of at most $M$ photons? Equivalently, given a Gaussian state $\rho$, how to estimate the trace distance between $\rho$ and its projection $\rho_M$ onto the finite-dimensional subspace with fixed maximum number of photons $M$? We answer this question in subsection~\eqref{tail_2}. Specifically, we prove that such an error converges to zero exponentially fast in $M$, with exponential lifetime given by the mean photon number of the state. 
\end{itemize}
To solve both problems, we determine explicit bounds which can be written only in terms of the first moment and covariance matrix of the state. We believe that these bounds provide technical tools of independent interest, which can be applied in all areas of quantum information with continuous variable systems. Indeed, by exploiting these tools, we design an efficient algorithm to compute the trace distance between two Gaussian states up to precision $\varepsilon$ in trace distance, which is also of independent interest. Finally, these new tools will turn out to be crucial in our analysis of one-shot bosonic quantum communication in Section~\ref{section_AEP0}.

\subsection{Probability that a fixed Gaussian state has more than $M$ photons}\label{tail_1}
In this subsection, we solve the following problem. Given a Gaussian state $\rho$ specified by its first moment $\textbf{m}$ and covariance matrix $V$, how to estimate the probability $P_{>M}$ that the outcome of the photon number measurement on $\rho$ is larger than $M$, as a function of $\textbf{m}$, $V$, and $M$? Estimating such a probability $P_{> M}$ is important, as the photon number measurement is a key element in the design of quantum technologies based on continuous-variable systems~\cite{BUCCO}.

By Born's rule, such a probability $P_{> M}$ is given by \bb
    P_{> M}\coloneqq\Tr[(\mathbb{1}-\Pi_M)\rho]\,,
\ee
where $\Pi_M$ is the projector onto the Hilbert space spanned by all the $n$-mode Fock states with photon number less or equal to $M$:
\bb
    \Pi_M\coloneqq\sum_{\textbf{k}\in\mathbb{N}^n:\,\sum_{i=1}^n k_i\le M}\ketbra{\textbf{k}}\,,
\ee
with $\ket{\textbf{k}}=\ket{k_1}\otimes\ket{k_2}\otimes\ldots\otimes\ket{k_n}$ being the $n$-mode Fock state. The following theorem allows us to upper bound the probability $P_{>M}$, establishing that it converges to zero exponentially fast in $M$.
\begin{thm}[(Probability that a fixed Gaussian state has more than $M$ photons)]\label{the:expdecay00}
Let $\rho$ be an $n$-mode Gaussian state with first moment $\textbf{m}$ and covariance matrix $V$. The probability $P_{>M}$ that the outcome of the photon number measurement on $\rho$ is larger than $M$ converges to zero exponentially fast in $M$. Specifically, it can be upper bounded as
\bb
    P_{>M}\le \alpha\,e^{-M /(4N+2) }\,,
\ee
where $N$ denotes the mean photon number given by $N=\frac{\Tr[V-\mathbb{1}]}{4}+\frac{\|\textbf{m}\|_2^2}{2}$, while $\alpha$ denotes a finite quantity defined in terms of the first moment $\textbf{m}$ and covariance matrix $V$ as
\bb
    \alpha&\coloneqq \frac{e^{\textbf{m}^\intercal((8N+4) \mathbb{1}-V)^{-1}\textbf{m} }}{\sqrt{\det\!\left[\frac{(8N+4) \mathbb{1}-V}{8N+3}\right]}}\leq 2^n e^{1/2}\,.
\ee 
A strictly tighter bound can be obtained in terms of the following one-parameter optimisation problem:
\bb\label{eq_87}
    P_{>M}\le \inf_{x>\|V\|_\infty}\frac{e^{\textbf{m}^\intercal(x \mathbb{1}-V)^{-1}\textbf{m} }}{\sqrt{\det\!\left[\frac{x \mathbb{1}-V}{x -1}\right]}}\, e^{-2\,\mathrm{arccoth}(x)\,M }\,,
\ee
where $\|V\|_\infty$ denotes the operator norm of $V$. 
\end{thm}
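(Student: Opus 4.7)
The proof strategy is a Chernoff inequality on the photon-number distribution. Since $\mathbb{1}-\Pi_M\le e^{t(\hat N-M)}$ on the spectral decomposition of $\hat N$ for any $t>0$, one has
\begin{equation*}
P_{>M}\le e^{-tM}\,Z_\rho(t),\qquad Z_\rho(t)\coloneqq\Tr[\rho\,e^{t\hat N}],
\end{equation*}
whenever the right-hand side is finite. Everything then reduces to computing the moment-generating function $Z_\rho(t)$ of a Gaussian state and optimising over $t>0$.

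To evaluate $Z_\rho(t)$ in closed form I would use the Wigner overlap identity $\Tr[AB]=(2\pi)^n\!\int W_A W_B\, d^{2n}\mathbf r$ together with the observation that $e^{t\hat N}=(\bar n+1)^n\,\tau_{\bar n}^{\otimes n}$ for $t<0$, where $\bar n=1/(e^{-t}-1)>0$, so $W_{e^{t\hat N}}$ is a product of single-mode Gaussians with variance $2\bar n+1$. For $t>0$ the parameters $\bar n$ and $2\bar n+1$ turn negative and the same formula becomes an analytic continuation --- an exponentially growing Gaussian kernel which is nevertheless integrable against the exponentially decaying $W_\rho$ provided the combined quadratic form is positive definite. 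Setting $x\coloneqq\coth(t/2)=-(2\bar n+1)>1$, this positivity condition reads $V^{-1}-x^{-1}\mathbb{1}>0$, equivalently $x>\|V\|_\infty$, exactly the hypothesis of~\eqref{eq_87}. Completing the square and using the commutative identity $V^{-1}(A^{-1}-V)V^{-1}=(x\mathbb{1}-V)^{-1}$ with $A\coloneqq V^{-1}-x^{-1}\mathbb{1}$, together with the scalar bookkeeping $(\bar n+1)^n/(2\bar n+1)^n=(x-1)^n/(2x)^n$, yields
\begin{equation*}
Z_\rho(t)=\frac{e^{\mathbf m^\intercal(x\mathbb{1}-V)^{-1}\mathbf m}}{\sqrt{\det\bigl[(x\mathbb{1}-V)/(x-1)\bigr]}}\,.
\end{equation*}
Plugging $t=2\,\mathrm{arccoth}(x)$ into the Chernoff bound then gives~\eqref{eq_87}. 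I would justify the analytic continuation from $t<0$ to $t>0$ by matching the closed-form expression to the manifestly convergent Fock-basis series for $t<0$ and invoking uniqueness of analytic continuation on $\{t:\coth(t/2)>\|V\|_\infty\}$; a cross-check against a single-mode thermal state $\tau_{\bar n_0}$, where $Z_\rho(t)$ sums as a geometric series to $(x-1)/(x-2\bar n_0-1)$, confirms the formula.

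To derive the coarser closed-form bound with prefactor $\alpha$ I would simply set $x=8N+4$. Three things then need checking. (i) Admissibility $8N+4>\|V\|_\infty$: this follows from the sharper estimate $\|V\|_\infty\le 4N+2$, which I would prove by applying a passive (orthogonal-symplectic) beam-splitter transformation mapping the top eigenvector of $V$ to the $\hat x_1$ direction --- such transformations preserve the total photon number, reducing the claim to the single-mode inequality $\langle\hat x_1^2\rangle+\langle\hat p_1^2\rangle=2\langle a_1^\dagger a_1\rangle+1\le 2N+1$. (ii) The exponent estimate $2\,\mathrm{arccoth}(8N+4)\ge 1/(4N+2)$, immediate from $\mathrm{arctanh}(y)\ge y$ on $(0,1)$ applied to $y=1/(8N+4)$. (iii) The universal prefactor bound $\alpha\le 2^n e^{1/2}$: combining (i) with $\|\mathbf m\|^2\le 2N$ yields $\mathbf m^\intercal((8N+4)\mathbb{1}-V)^{-1}\mathbf m\le 2N/(4N+2)\le 1/2$, while each of the $2n$ eigenvalues of $((8N+4)\mathbb{1}-V)/(8N+3)$ lies in $[1/2,\,2/3]$ by (i), so the denominator is at least $2^{-n}$, giving $\alpha\le 2^n e^{1/2}$.

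The main technical obstacle, in my view, is the admissibility condition (i): the naive bound $\|V\|_\infty\le\Tr V=4N+2n-2\|\mathbf m\|^2$ is useless when $n$ is much larger than $N$, and the passive-optics reduction is what makes the estimate dimension-independent. The remaining ingredients --- the analytic continuation of the Wigner kernel, the Gaussian integration, and the algebraic simplifications --- amount to routine bookkeeping once this dimension-independent bound on $\|V\|_\infty$ is in hand.
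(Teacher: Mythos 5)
Your proof is correct and follows the same skeleton as the paper's: the Chernoff bound $P_{>M}\le e^{-tM}\Tr[\rho\, e^{t\hat N}]$, a closed form for the moment generating function with the substitution $x=\coth(t/2)$, the choice $x=8N+4$, and the final estimates $\mathrm{arccoth}(x)\ge 1/x$ and $\alpha\le 2^n e^{1/2}$. You differ in two sub-arguments, both legitimate. First, you evaluate $\Tr[\rho\, e^{t\hat N}]$ by analytically continuing the thermal-state Wigner overlap from $t<0$ to $t>0$, whereas the paper writes $e^{t\hat N}$ as a multiple of the \emph{inverse} of a thermal state and invokes a known formula for $\Tr[\rho_{\mathbf m,V}\,\rho_{\mathbf t,W}^{-1}]$ (Lemma~\ref{lemma_overlap_ginv}, imported from~\cite{seshadreesan2018}); your continuation argument is sound provided you add the observation that a power series with nonnegative coefficients has a singularity at its radius of convergence (Pringsheim), so the Fock-basis series genuinely converges on all of $\{t:\coth(t/2)>\|V\|_\infty\}$ and not merely where the closed form happens to be analytic. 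Second, for the dimension-independent admissibility condition you prove $\|V\|_\infty\le 4N+2$ by a passive (orthogonal-symplectic) rotation carrying the top eigenvector of $V$ onto the $\hat x_1$ direction --- a clean argument, since $U(n)$ acts transitively on the unit sphere of $\mathbb{R}^{2n}$ and passive unitaries preserve $\hat N$ --- while the paper derives the slightly sharper $\|V\|_\infty\le 1+2N+2\sqrt{N^2+N}$ from the structure $\{w_i,c_i/w_i\}$ of the spectrum of a covariance matrix via Williamson and Weyl monotonicity; either suffices, since only $\|V\|_\infty\le 4N+2$ is used downstream. One harmless slip: the eigenvalues of $((8N+4)\mathbb{1}-V)/(8N+3)$ lie in $\left[\tfrac12,\tfrac{8N+4}{8N+3}\right)$, not $\left[\tfrac12,\tfrac23\right]$, but only the lower endpoint enters the bound $\alpha\le 2^n e^{1/2}$.
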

Before proving Theorem~\ref{the:expdecay00}, we need to establish some preliminary lemmas.
Let us start with the following.
\begin{lemma}\label{lemma_op_norm}
    The operator norm of the covariance matrix of an $n$-mode quantum state $\rho$ can be upper bounded in terms of its mean photon number $N\coloneqq \Tr[\hat{N}_n\rho]$ as
    \bb
        \|V(\rho)\|_\infty\le 1+2N+2\sqrt{N^2+N}\,,
    \ee
    where $\hat{N}_n$ denotes the total photon number operator. 
\end{lemma}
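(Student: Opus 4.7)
The strategy is to rotate the top eigenvector of $V(\rho)$ to the first axis by a passive (photon-number preserving) Gaussian unitary, after which the bound reduces to a one-mode estimate that can be closed by combining the single-mode uncertainty relation with the available photon budget.

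First, I would pick a unit eigenvector $\mathbf{v}\in\mathbb{R}^{2n}$ with $\mathbf{v}^\intercal V(\rho)\mathbf{v} = \|V(\rho)\|_\infty$. Since $\Omega^\intercal = -\Omega$ and $\Omega^\intercal\Omega = \mathbb{1}$, the pair $\{\mathbf{v},\Omega\mathbf{v}\}$ is orthonormal ($\mathbf{v}^\intercal\Omega\mathbf{v}=0$ by antisymmetry and $\|\Omega\mathbf{v}\|=\|\mathbf{v}\|=1$ by orthogonality of $\Omega$), and the orthogonal-symplectic group $O(2n)\cap Sp(2n)\cong U(n)$ acts transitively on unit vectors of $\mathbb{R}^{2n}$; consequently there exists $S\in O(2n)\cap Sp(2n)$ with $S^\intercal\mathbf{e}_1 = \mathbf{v}$. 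The corresponding Gaussian unitary $U_S$ is passive: because $S^\intercal S = \mathbb{1}$, conjugation by $U_S$ preserves $\Tr[\hat{\mathbf{R}}^\intercal\hat{\mathbf{R}}\,\rho]$, and hence preserves the total mean photon number. Setting $\sigma\coloneqq U_S\rho U_S^\dagger$, the transformation rules \eqref{law_sympl_unitary} yield $V(\sigma) = SV(\rho)S^\intercal$ and $\langle \hat{N}_n\rangle_\sigma = N$, together with $V(\sigma)_{11} = \mathbf{v}^\intercal V(\rho)\mathbf{v} = \|V(\rho)\|_\infty$ by construction.

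Next, I would exploit two elementary facts about the top-left $2\times 2$ block of $V(\sigma)$. First, the uncertainty relation $V(\sigma)+i\Omega\geq 0$ is preserved under symplectic congruence, and restricting it to the first-mode block forces $V(\sigma)_{11}V(\sigma)_{22} \geq 1 + V(\sigma)_{12}^2 \geq 1$. Second, writing $\hat n_1 = \tfrac{1}{2}(\hat x_1^2 + \hat p_1^2 - \mathbb{1})$ and using the identity $\langle \hat R_k^2\rangle_\sigma = \tfrac{1}{2}V(\sigma)_{kk} + m_k(\sigma)^2$, the trivial bound $N = \langle\hat N_n\rangle_\sigma \geq \langle \hat n_1\rangle_\sigma$ yields
\begin{equation*}
    V(\sigma)_{11} + V(\sigma)_{22} \;\leq\; 4N + 2 - 2\bigl(m_{x_1}(\sigma)^2 + m_{p_1}(\sigma)^2\bigr) \;\leq\; 4N + 2.
\end{equation*}

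Combining these two inequalities produces $V(\sigma)_{11}\bigl((4N+2)-V(\sigma)_{11}\bigr)\geq 1$, i.e., the quadratic constraint $V(\sigma)_{11}^2 - (4N+2)V(\sigma)_{11} + 1\leq 0$, whose larger root is exactly $(2N+1) + 2\sqrt{N^2 + N}$, yielding the claimed bound $\|V(\rho)\|_\infty \leq 1 + 2N + 2\sqrt{N^2+N}$. The only delicate step is the orthogonal-symplectic alignment $S^\intercal\mathbf{e}_1 = \mathbf{v}$, which rests on the standard identification $O(2n)\cap Sp(2n) \cong U(n)$ and the transitivity of the resulting $U(n)$-action on unit vectors of $\mathbb{R}^{2n}\cong\mathbb{C}^n$; once this is in hand, the remainder of the argument is a short one-mode calculation, and one sees that the bound is saturated in the limit by a displaced thermal state on the first mode tensored with vacuum on the remaining modes.
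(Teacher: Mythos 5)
Your proof is correct, and it takes a genuinely different route from the paper's. The paper never localises to a single mode: it invokes the Williamson decomposition to write $V(\rho)=SDS^\intercal$ with $D\ge\mathbb{1}$, deduces $V(\rho)\ge SS^\intercal$, uses the Euler decomposition to see that the spectrum of $SS^\intercal$ comes in reciprocal pairs, and then applies Weyl's monotonicity theorem to conclude that the eigenvalues of $V(\rho)$ pair up with pairwise products at least $1$; combined with $\Tr V\le 4N+2n$ this gives $\|V\|_\infty+\|V\|_\infty^{-1}\le 4N+2$, the same quadratic you arrive at. Your argument replaces the spectral machinery with a passive rotation aligning the top eigenvector with the first mode, after which the $2\times2$ uncertainty relation ($V_{11}V_{22}\ge 1$) and the single-mode energy budget ($V_{11}+V_{22}\le 4N+2$) close the bound. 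Each ingredient you use is sound: $\{\mathbf{v},\Omega\mathbf{v}\}$ is indeed orthonormal, $O(2n)\cap Sp(2n)\cong U(n)$ does act transitively on the unit sphere of $\mathbb{R}^{2n}\cong\mathbb{C}^n$, orthogonality of $S$ preserves $\Tr[\hat{\mathbf{R}}^\intercal\hat{\mathbf{R}}\rho]$, and discarding $\langle\hat n_j\rangle\ge0$ for $j\ge2$ is exactly why the rotation to an honest mode is needed. Your version is arguably more elementary (no Euler decomposition, no Weyl monotonicity) at the cost of the $U(n)$-transitivity fact; the paper's version is purely spectral and avoids constructing any unitary. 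One small correction to your closing remark: the bound is saturated by a single-mode \emph{squeezed vacuum} ($V=\mathrm{diag}(e^{2r},e^{-2r})$, $N=\sinh^2 r$, so $V_{11}=1+2N+2\sqrt{N^2+N}$ exactly), not by a displaced thermal state, whose covariance matrix is proportional to the identity and has operator norm at most $2N+1$; this does not affect the validity of the proof.
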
 
\begin{proof}
    Note that the spectrum of any covariance matrix $V(\rho)$ is of the form $\{w_1,\frac{c_1}{w_1},\ldots, w_n,\frac{c_n}{w_n}\}$, where $w_1,w_2,\ldots, w_n\ge1$ and $c_1,\ldots, c_n\ge1$~\cite{Oh_2024}. To prove this simple claim, note that Williamson's decomposition~\cite{BUCCO} establishes that there exists a symplectic matrix $S$ and a diagonal matrix $D\ge\mathbb{1}$ such that $V(\rho)=SDS^\intercal$. In particular, it holds that $V(\rho)\ge SS^\intercal$. By Euler decomposition~\cite{BUCCO}, the spectrum of $SS^\intercal$ is of the form $\{z_1,\frac{1}{z_1},\ldots, z_n,\frac{1}{z_n}\}$, where $z_1,\ldots, z_n\ge1$. Since $V(\rho)\ge SS^\intercal$, Weyl's monotonicity theorem~\cite{BHATIA-MATRIX} implies the claim.

    With this claim at hand, note that
    \bb
        \Tr V(\rho)=\sum_{i=1}^n \left(w_i+\frac{c_i}{w_i}\right)\ge \sum_{i=1}^n \left(w_i+\frac{1}{w_i}\right)= \|V(\rho)\|_\infty+\frac{1}{\|V(\rho)\|_\infty}+\sum_{i=2}^n \left(w_i+\frac{1}{w_i}\right)\ge \frac{1}{\|V(\rho)\|_\infty}+\|V(\rho)\|_\infty+2n-2\,\,,
    \ee
    where the second equality follows from the fact that $w_1$ is the largest eigenvalue of the positive matrix $V(\rho)$, and thus it equals the operator norm of $V(\rho)$, while in the last inequality we exploited that $\min_{x\ge 1}( x+\frac{1}{x})=2$. 
    In particular, by using that
    \bb
    \Tr V= 2\sum_{i=1}^{2n}\Tr[\hat{R}_i^2\rho]-2\sum_{i=1}^{2n}[m_i(\rho)]^2\le2\sum_{i=1}^{2n}\Tr[\hat{R}_i^2\rho]= 4\Tr[\hat{N}_n\rho]+2n=4N+2n\,,
    \ee
    we obtain that 
    \bb
        \frac{1}{\|V(\rho)\|_\infty}+\|V(\rho)\|_\infty\le 4N+2\,.
    \ee
    Since $\|V(\rho)\|_\infty\ge 1$, by solving the above inequality with respect to $\|V(\rho)\|_\infty$, we conclude that
    \bb
        \|V(\rho)\|_\infty\le 1+2N+2\sqrt{N^2+N}\,.
    \ee
\end{proof}
For the rest of this subsection, we will use the following notation. First, let us recall that the Gaussian state $\rho_{\textbf{m},V}$ with first moment $\textbf{m}$ and covariance matrix $V$ can be expressed as~\cite{BUCCO}
    \bb\label{eq_gauss_state_h}
        \rho_{\textbf{m},V}=D_{\textbf{m}}\frac{e^{-\frac12 \hat{\textbf{R}}^\intercal H_V \textbf{R}  }}{\sqrt{\det\!\left[\frac{V+i\Omega}{2}\right]}}D_{\textbf{m}}^\dagger\,,
    \ee
    where $D_{\textbf{m}}$ denotes the displacement operator and
    \bb
        H_V\coloneqq 2i\Omega\mathrm{arccoth}(Vi\Omega)\,.
    \ee
    For any $\alpha>0$, let us denote as $\rho_{\textbf{m},V}^{(\alpha)}$ the Gaussian state proportional to $(\rho_{\textbf{m},V})^\alpha$, i.e.
    \bb\label{eq_def_rho_alpha}
        \rho_{\textbf{m},V}^{(\alpha)}\coloneqq \frac{(\rho_{\textbf{m},V})^\alpha}{\Tr [(\rho_{\textbf{m},V})^\alpha]}=D_{\textbf{m}}\frac{e^{-\frac12 \hat{\textbf{R}}^\intercal (\alpha H_V) \textbf{R}  }}{\Tr\left[e^{-\frac12 \hat{\textbf{R}}^\intercal (\alpha H_V) \textbf{R}  }\right]}D_{\textbf{m}}^\dagger\,.
    \ee 
    Moreover, given a covariance matrix $V$, let $V^{(\alpha)}$ be the covariance matrix of the Gaussian state $\rho_{\textbf{m},V}^{(\alpha)}$. In other words, we can write 
    \bb\label{power_cov}
        \rho_{\textbf{m},V}^{(\alpha)}=\rho_{\textbf{m},V^{(\alpha)}}\,.
    \ee
We are now ready to state the following lemma~\cite[Corollary~20]{seshadreesan2018}.
\begin{lemma}[(Overlap between the square of a Gaussian state and the inverse of a Gaussian state~\cite{seshadreesan2018})]\label{lemma_cor20}
    Let $\rho_{\textbf{m},V}$ and $\rho_{\textbf{t},W}$ be Gaussian states with first moments $\textbf{m},\textbf{t}$, respectively, and covariance matrices $V,W$, respectively. Let $V^{(2)}$ as defined in \eqref{power_cov}. If $W-V^{(2)}>0$, then it holds that
    \bb
        \Tr[\rho_{\textbf{m},V}^2\rho_{\textbf{t},W}^{-1}]=\frac{\det\!\left[\frac{W+i\Omega}{2}\right]}{\det\!\left[\frac{V+i\Omega}{2}\right]}\frac{  \sqrt{\det\!\left[\frac{V^{(2)}+i\Omega}{2}\right]} }{\sqrt{\det\!\left[\frac{W-V^{(2)}}{2}\right]}}e^{(\textbf{m}-\textbf{t})^\intercal(W-V^{(2)})^{-1}(\textbf{m}-\textbf{t})   }\,.
    \ee
\end{lemma}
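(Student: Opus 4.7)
The plan is to reduce $\Tr[\rho_{\textbf{m},V}^2\rho_{\textbf{t},W}^{-1}]$ to an explicit Gaussian integral by writing each factor in the exponential form of equation~\eqref{eq_gauss_state_h}. First, I would use the definition of $V^{(2)}$ in \eqref{power_cov} together with \eqref{eq_def_rho_alpha} to rewrite $\rho_{\textbf{m},V}^2$ as a scalar multiple of the Gaussian state $\rho_{\textbf{m},V^{(2)}}$. Indeed, squaring \eqref{eq_gauss_state_h} produces an exponential with quadratic form $e^{-\frac12 (\hat{\textbf{R}}-\textbf{m})^\intercal(2H_V)(\hat{\textbf{R}}-\textbf{m})}$, which is precisely the un-normalised form of $\rho_{\textbf{m},V^{(2)}}$; comparing normalisations yields
\[
\rho_{\textbf{m},V}^2 \;=\; \frac{\sqrt{\det[(V^{(2)}+i\Omega)/2]}}{\det[(V+i\Omega)/2]}\,\rho_{\textbf{m},V^{(2)}}\,.
\]

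Next, I would invert \eqref{eq_gauss_state_h} for $\rho_{\textbf{t},W}$ to express $\rho_{\textbf{t},W}^{-1}$ as the (generally unbounded) operator $\sqrt{\det[(W+i\Omega)/2]}\;D_{\textbf{t}}\,e^{+\frac12\hat{\textbf{R}}^\intercal H_W \hat{\textbf{R}}}\,D_{\textbf{t}}^\dagger$. The trace in question then equals the product of the scalar factor identified above, the factor $\sqrt{\det[(W+i\Omega)/2]}$, and the remaining trace $\Tr[\rho_{\textbf{m},V^{(2)}}\,D_{\textbf{t}}\,e^{+\frac12\hat{\textbf{R}}^\intercal H_W \hat{\textbf{R}}}\,D_{\textbf{t}}^\dagger]$. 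To evaluate this last trace I would pass to the Wigner representation, in which $\rho_{\textbf{m},V^{(2)}}$ has Wigner function Gaussian in $\textbf{r}$ with covariance $V^{(2)}/2$ and mean $\textbf{m}$, while $D_{\textbf{t}}\,e^{+\frac12\hat{\textbf{R}}^\intercal H_W \hat{\textbf{R}}}\,D_{\textbf{t}}^\dagger$ has Wigner transform proportional to a Gaussian with mean $\textbf{t}$ and ``covariance'' $-W/2$. The standard identity $\Tr[AB] = (2\pi)^n\!\int W_A(\textbf{r})W_B(\textbf{r})\,d^{2n}\textbf{r}$ then reduces the computation to a single Gaussian integral whose quadratic form is proportional to $(V^{(2)})^{-1}-W^{-1}$, equivalently to $W-V^{(2)}$ after a Schur-complement rearrangement.

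The positivity assumption $W-V^{(2)}>0$ is exactly what is needed for this Gaussian integral to converge; after performing it, the condition manifests as the factor $\det[(W-V^{(2)})/2]^{-1/2}$ in the denominator and as the matrix $(W-V^{(2)})^{-1}$ inside the exponential $\exp[(\textbf{m}-\textbf{t})^\intercal(W-V^{(2)})^{-1}(\textbf{m}-\textbf{t})]$. Collecting the three pieces --- the ratio of determinants from step one, the factor $\sqrt{\det[(W+i\Omega)/2]}$ from step two, and the output of the Gaussian integral from step three --- assembles into the claimed right-hand side.

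The main obstacle I anticipate is the careful bookkeeping of the three complex-valued determinants $\det[(V+i\Omega)/2]$, $\det[(W+i\Omega)/2]$, $\det[(V^{(2)}+i\Omega)/2]$ alongside the real determinant $\det[(W-V^{(2)})/2]$, together with justifying the Wigner-level manipulation of the unbounded operator $\rho_{\textbf{t},W}^{-1}$. The latter can be handled either by truncating to a finite-energy Fock subspace and taking a limit once $W-V^{(2)}>0$ guarantees the trace is finite, or equivalently by performing every step with formal characteristic-function symbols via \eqref{Fourier_Weyl} and invoking convergence only at the end, where the hypothesis $W>V^{(2)}$ makes the final Gaussian integral absolutely convergent.
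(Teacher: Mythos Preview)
The paper does not prove this lemma: it is quoted from \cite[Corollary~20]{seshadreesan2018} and used as a black box to derive the next result, Lemma~\ref{lemma_overlap_ginv}. There is therefore no in-paper proof to compare your proposal against.

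That said, your outline is the standard derivation and is almost certainly what the cited reference does. Your first step --- rewriting $\rho_{\textbf{m},V}^2$ as $\frac{\sqrt{\det[(V^{(2)}+i\Omega)/2]}}{\det[(V+i\Omega)/2]}\,\rho_{\textbf{m},V^{(2)}}$ --- is correct and is exactly the manoeuvre the paper itself performs (for the square root rather than the square) in the proof of Lemma~\ref{lemma_overlap_ginv}. The Wigner-overlap computation is also the right mechanism, and you correctly identify $W-V^{(2)}>0$ as the convergence condition for the resulting integral. One point to watch in the bookkeeping you flag: your Step~2 extracts only $\sqrt{\det[(W+i\Omega)/2]}$, whereas the target has the full determinant; the missing square root is hidden in the Wigner normalisation constant of $e^{+\frac12\hat{\textbf{R}}^\intercal H_W\hat{\textbf{R}}}$, which you absorbed into ``proportional to''. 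Once you also unwind $\det\bigl((V^{(2)})^{-1}-W^{-1}\bigr)=\det(W-V^{(2)})/(\det V^{(2)}\det W)$ from the Gaussian integral, all determinant factors combine to the stated formula.
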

With this lemma at hand, we can prove the following result.
\begin{lemma}[(Overlap between a Gaussian state and the inverse of a Gaussian state)]\label{lemma_overlap_ginv}
    Let $\rho_{\textbf{m},V}$ and $\rho_{\textbf{t},W}$ be Gaussian states with first moments $\textbf{m},\textbf{t}$, respectively, and covariance matrices $V,W$, respectively. If $W-V>0$, then it holds that
    \bb
        \Tr\!\left[\rho_{\textbf{m},V}(\rho_{\textbf{t},W})^{-1}\right]=\frac{\det\!\left[\frac{W+i\Omega}{2}\right]}{\sqrt{\det\!\left[\frac{W-V}{2}\right]}}e^{(\textbf{m}-\textbf{t})^\intercal(W-V)^{-1}(\textbf{m}-\textbf{t})   }\,.
    \ee
\end{lemma}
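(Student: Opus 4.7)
The plan is to reduce this lemma to Lemma~\ref{lemma_cor20}, which handles the overlap $\Tr[\rho_{\textbf{m},V}^{2}\rho_{\textbf{t},W}^{-1}]$ under the condition $W-V^{(2)}>0$. The key observation is that any Gaussian state $\rho_{\textbf{m},V}$ can be written, up to a scalar multiplicative factor, as the square of a different Gaussian state with the same first moment $\textbf{m}$ but with covariance matrix $V^{(1/2)}$ (in the notation of \eqref{eq_def_rho_alpha}--\eqref{power_cov}). Since the $\alpha=2$ map applied to $V^{(1/2)}$ returns $V$, the hypothesis $W-V>0$ is exactly the condition needed to apply Lemma~\ref{lemma_cor20} to the state $\rho_{\textbf{m},V^{(1/2)}}$.

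Concretely, first I would write, using \eqref{eq_def_rho_alpha}, the identity
\begin{equation*}
\rho_{\textbf{m},V^{(1/2)}}^{\,2}\;=\;\Tr\!\left[\rho_{\textbf{m},V^{(1/2)}}^{\,2}\right]\,\rho_{\textbf{m},V}\,,
\end{equation*}
so that
\begin{equation*}
\Tr\!\left[\rho_{\textbf{m},V}\,\rho_{\textbf{t},W}^{-1}\right]
\;=\;\frac{\Tr\!\left[\rho_{\textbf{m},V^{(1/2)}}^{\,2}\,\rho_{\textbf{t},W}^{-1}\right]}{\Tr\!\left[\rho_{\textbf{m},V^{(1/2)}}^{\,2}\right]}\,.
\end{equation*}
Second, I would evaluate the numerator by invoking Lemma~\ref{lemma_cor20} with the covariance matrix $V$ there replaced by $V^{(1/2)}$; the hypothesis of that lemma becomes $W-(V^{(1/2)})^{(2)}=W-V>0$, which holds by assumption. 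This gives an explicit expression involving $\det[(W+i\Omega)/2]$, $\det[(V^{(1/2)}+i\Omega)/2]$, $\det[(V+i\Omega)/2]$, $\det[(W-V)/2]$, and the Gaussian exponential in $\textbf{m}-\textbf{t}$ with matrix $(W-V)^{-1}$.

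Third, I would evaluate the denominator using the explicit form \eqref{eq_gauss_state_h}: for any Gaussian state $\rho_{\textbf{m},U}$, squaring doubles the quadratic Hamiltonian and squares the normalisation, so
\begin{equation*}
\Tr\!\left[\rho_{\textbf{m},V^{(1/2)}}^{\,2}\right]
\;=\;\frac{\Tr\!\left[e^{-\hat{\textbf{R}}^{\intercal} H_{V^{(1/2)}}\hat{\textbf{R}}}\right]}{\det[(V^{(1/2)}+i\Omega)/2]}
\;=\;\frac{\sqrt{\det[(V+i\Omega)/2]}}{\det[(V^{(1/2)}+i\Omega)/2]}\,,
\end{equation*}
where the last equality uses that doubling $H_{V^{(1/2)}}$ yields the Hamiltonian $H_V$ defining $\rho_{\textbf{m},V}$ and hence the Gaussian integral normalisation $\sqrt{\det[(V+i\Omega)/2]}$. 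Finally, dividing numerator by denominator, the factors $\det[(V^{(1/2)}+i\Omega)/2]$ and $\sqrt{\det[(V+i\Omega)/2]}$ cancel cleanly, leaving precisely the claimed formula. The main technical point to verify is the displayed purity identity for $\rho_{\textbf{m},V^{(1/2)}}^{\,2}$; everything else is bookkeeping, and there are no hidden convergence subtleties since $W-V>0$ ensures the Gaussian integrals appearing are finite and $\rho_{\textbf{t},W}^{-1}$ is well defined as an unbounded operator paired against the trace-class state $\rho_{\textbf{m},V}$ (which has strictly lighter tails in energy by the assumption $W-V>0$, as can be seen by comparing Williamson decompositions).
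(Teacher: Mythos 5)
Your proposal is correct and follows essentially the same route as the paper's proof: both reduce the claim to Lemma~\ref{lemma_cor20} by writing $\rho_{\mathbf{m},V}$ as a normalised square of $\rho_{\mathbf{m},V^{(1/2)}}$ (using $(V^{(1/2)})^{(2)}=V$ so that the hypothesis becomes exactly $W-V>0$), and both cancel the same determinant factors via the identity $\Tr[e^{-\frac12\hat{\mathbf{R}}^\intercal H_V\hat{\mathbf{R}}}]=\sqrt{\det[(V+i\Omega)/2]}$. The only cosmetic difference is that you divide by $\Tr[\rho_{\mathbf{m},V^{(1/2)}}^{\,2}]$ where the paper multiplies by the equivalent prefactor $\det[(V^{(1/2)}+i\Omega)/2]/\sqrt{\det[(V+i\Omega)/2]}$.
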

\begin{proof}
The crux of the proof is to exploit the known formula for the overlap between the square of a Gaussian state and the inverse of a Gaussian state reported in Lemma~\ref{lemma_cor20}. First, let us note that
\bb
        \rho_{\textbf{m},V}&\eqt{(i)}D_{\textbf{m}}\frac{e^{-\frac12 \hat{\textbf{R}}^\intercal H_V \textbf{R}  }}{\sqrt{\det\!\left[\frac{V+i\Omega}{2}\right]}}D_{\textbf{m}}^\dagger\\
        &=\frac{\det\!\left[\frac{V^{(1/2)}+i\Omega}{2}\right]}{\sqrt{\det\!\left[\frac{V+i\Omega}{2}\right]}}\left(D_{\textbf{m}}\frac{e^{-\frac12 \hat{\textbf{R}}^\intercal (H_V/2) \textbf{R}  }}{\sqrt{\det\!\left[\frac{V^{(1/2)}+i\Omega}{2}\right]}}D_{\textbf{m}}^\dagger\right)^2\\
        &\eqt{(ii)}\frac{\det\!\left[\frac{V^{(1/2)}+i\Omega}{2}\right]}{\sqrt{\det\!\left[\frac{V+i\Omega}{2}\right]}}\left(D_{\textbf{m}}\frac{e^{-\frac12 \hat{\textbf{R}}^\intercal (H_V/2) \textbf{R}  }}{  \Tr\left[e^{-\frac12 \hat{\textbf{R}}^\intercal (H_V/2) \textbf{R}  }\right]}D_{\textbf{m}}^\dagger\right)^2\\
        &\eqt{(iii)}\frac{\det\!\left[\frac{V^{(1/2)}+i\Omega}{2}\right]}{\sqrt{\det\!\left[\frac{V+i\Omega}{2}\right]}}(\rho_{\textbf{m},V}^{(1/2)})^2\,.
    \ee
Here, in (i), we exploited \eqref{eq_gauss_state_h}. In (ii), we exploited that $\Tr[e^{-\frac12 \hat{\textbf{R}}^\intercal (H_V/2) \textbf{R}  }]=\sqrt{\det\!\left[\frac{V^{(1/2)}+i\Omega}{2}\right]}$, which follows by using the definition of $\rho_{\textbf{m},V}^{(1/2)}$ in \eqref{eq_def_rho_alpha} and using that $\Tr \rho_{\textbf{m},V}^{(1/2)} =1$. Finally, in (iii) we used the definition of $\rho_{\textbf{m},V}^{(1/2)}$ in \eqref{eq_def_rho_alpha}.

Consequently, we have that
    \bb
        \Tr[\rho_{\textbf{m},V}\rho_{\textbf{t},W}^{-1}]&=\frac{\det\!\left[\frac{V^{(1/2)}+i\Omega}{2}\right]}{\sqrt{\det\!\left[\frac{V+i\Omega}{2}\right]}}\Tr[(\rho_{\textbf{m},V}^{(1/2)})^2\rho_{\textbf{t},W}^{-1}]\\
        &\eqt{(iv)}\frac{\det\!\left[\frac{V^{(1/2)}+i\Omega}{2}\right]}{\sqrt{\det\!\left[\frac{V+i\Omega}{2}\right]}} \frac{\det\!\left[\frac{W+i\Omega}{2}\right]}{\det\!\left[\frac{V^{(1/2)}+i\Omega}{2}\right]}\frac{  \sqrt{\det\!\left[\frac{V+i\Omega}{2}\right]} }{\sqrt{\det\!\left[\frac{W-V}{2}\right]}}e^{(\textbf{m}-\textbf{t})^\intercal(W-V)^{-1}(\textbf{m}-\textbf{t})   }\\
        &=\frac{\det\!\left[\frac{W+i\Omega}{2}\right]}{\sqrt{\det\!\left[\frac{W-V}{2}\right]}}e^{(\textbf{m}-\textbf{t})^\intercal(W-V)^{-1}(\textbf{m}-\textbf{t})   }\,,
    \ee
    where in (iv) we exploited Lemma~\ref{lemma_cor20}.
\end{proof}
We are now ready to prove Theorem~\ref{the:expdecay00}.
\begin{proof}[Proof of Theorem~\ref{the:expdecay00}]
Note that for all $t\ge0$ it holds that
\bb
    P_{>M}\coloneqq \Tr[(\mathbb{1}-\Pi_M)\rho]\le \Tr[e^{t\hat{N}_n}\rho] e^{-t M} \,,
\ee
where we exploited the operator inequality
\bb
    \mathbb{1}-\Pi_M\le e^{t\hat{N}_n}e^{-t M}\,,
\ee
which can be readily verified in Fock basis, where $\hat{N}_n$ denotes the photon number operator. We can minimise with respect to $t>0$ the right-hand side of \eqref{eq_1} in order to optimise the bound:
\bb\label{eq_0_tail}
    P_{>M}\le\inf_{t> 0}\Tr[e^{t\hat{N}_n}\rho] e^{-t M}\,.
\ee
However, note that the term $\Tr[e^{t\hat{N}_n}\rho]$ might diverge if $t$ is sufficiently large (e.g.~take $\rho$ to be a thermal state with strictly positive temperature). Hence, if there exists a sufficiently small $\bar{t}>0$ such that $\Tr[e^{\bar{t}\hat{N}_n}\rho]$ is finite, then $P_{>M}$ converges to zero \emph{exponentially} fast in $M$. We now show that, if $\rho$ is a Gaussian state, such a $\bar{t}$ actually exists. To this end, let us note that
\bb
    \Tr[e^{t\hat{N}_n}\rho]&=(1-e^{-t})^n\Tr\left[ \left(\tau^{\otimes n}_{\frac{1}{e^t-1}}\right)^{-1} \rho \right]\,,
\ee
where we introduced the thermal state $\tau_{\frac{1}{e^t-1}}$ with mean photon number equal to $\frac{1}{e^t-1}$, which can be expressed as
\bb
\left(\tau_{\frac{1}{e^t-1}}\right)^{\otimes n}=(1-e^{-t})^n \,e^{-t \hat{N}_n}\,.
\ee
Now, we want to exploit the formula for the overlap between a Gaussian state and the inverse of a Gaussian state reported in Lemma~\ref{lemma_overlap_ginv}. Since the covariance matrix of $\tau^{\otimes n}_{\frac{1}{e^t-1}}$ is given by $\coth (t/2) \mathbb{1}$, the hypothesis of Lemma~\ref{lemma_overlap_ginv} is satisfied if $t$ is such that $\coth(t/2)\mathbb{1}-V>0$, i.e.~$\coth (t/2)>\|V\|_\infty$. Hence, thanks to Lemma~\ref{lemma_overlap_ginv}, for any $t<2\mathrm{arccoth}(\|V\|_\infty)$ it holds that
\bb\label{chain_equal}
    \Tr[e^{t\hat{N}_n}\rho]    &=(1-e^{-t})^n \frac{\det\!\left[\frac{\coth( t/2) \mathbb{1}+i\Omega}{2}\right]}{\sqrt{\det\!\left[\frac{\coth( t/2) \mathbb{1}-V}{2}\right]}}e^{\textbf{m}^\intercal(\coth( t/2) \mathbb{1}-V)^{-1}\textbf{m}   }\\
    &=\left(\coth( t/2)-1\right)^{n} \frac{e^{\textbf{m}^\intercal(\coth( t/2) \mathbb{1}-V)^{-1}\textbf{m} }}{\sqrt{\det\!\left[\coth( t/2) \mathbb{1}-V\right]}} \\
    &=\frac{e^{\textbf{m}^\intercal(\coth( t/2) \mathbb{1}-V)^{-1}\textbf{m} }}{\sqrt{\det\!\left[\frac{\coth( t/2) \mathbb{1}-V}{\coth( t/2) -1}\right]}}
\ee
Hence, \eqref{eq_0_tail} implies that
\bb
    P_{>M}&\le  \inf_{0<t<2\mathrm{arccoth}(\|V\|_\infty)}\frac{e^{\textbf{m}^\intercal(\coth( t/2) \mathbb{1}-V)^{-1}\textbf{m} }}{\sqrt{\det\!\left[\frac{\coth( t/2) \mathbb{1}-V}{\coth( t/2) -1}\right]}}\, e^{-t M}\\
    &= \inf_{x>\|V\|_\infty}\frac{e^{\textbf{m}^\intercal(x \mathbb{1}-V)^{-1}\textbf{m} }}{\sqrt{\det\!\left[\frac{x \mathbb{1}-V}{x -1}\right]}}\, e^{-2\,\mathrm{arccoth}(x)\, M}\,,
\ee
which proves \eqref{eq_87}.

Moreover, note that Lemma~\ref{lemma_op_norm} guarantees that $1+2N+2\sqrt{N^2+N}\ge\|V\|_\infty$, where $N$ is the mean photon number of the state. Hence, since $8N+4>1+2N+2\sqrt{N^2+N}$ for all $N\ge0$, we can choose the value of $x$ in the above minimisation problem to be
\bb
    x&\coloneqq  8N+4 \,,
\ee
so that we can conclude that 
\bb
    P_{>M}\le\frac{e^{\textbf{m}^\intercal\left[(8N+4) \mathbb{1}-V\right]^{-1}\textbf{m} }}{\sqrt{\det\!\left[\frac{(8N+4) \mathbb{1}-V}{8N+3}\right]}}\, e^{-2\,\mathrm{arccoth}(8N+4)\,M}\,.
\ee
Moreover, by exploiting the elementary inequality $\mathrm{arccoth}(x)\ge\frac1x$, which is valid for any $x\ge 1$, we obtain that
\bb
    P_{>M}\le\frac{e^{\textbf{m}^\intercal\left[(8N+4) \mathbb{1}-V\right]^{-1}\textbf{m} }}{\sqrt{\det\!\left[\frac{(8N+4) \mathbb{1}-V}{8N+3}\right]}}\, e^{-M/(4N+2)}\,.
\ee
Moreover, note that $N$ can be expressed in terms of the first moment and covariance matrix as $N=\frac{\Tr[V-\mathbb{1}]}{4}+\frac{\|m\|_2^2}{2}$, as it can be proved by substituting the definitions of $V$, $m$, and $\hat{N}_n$. By using 
\bb (8N+4)\mathbb{1}\ge(4N+2 )\mathbb{1}\ge\left(1+2N+2\sqrt{N^2+N}\right)\mathbb{1}\geq V\,,
\ee
as guaranteed by Lemma~\ref{lemma_op_norm}, it holds that
\bb
\textbf{m}^\intercal\left[(8N+4) \mathbb{1}-V\right]^{-1}\textbf{m} &\leq \textbf{m}^\intercal\left[(8N+4) \mathbb{1}-V\right]^{-1}\textbf{m}\\
&\leq \textbf{m}^\intercal\textbf{m}(4N+2)^{-1}\\
&\leq \|m\|_2^2 (2\|m\|_2^2+2)^{-1}\\
&\leq 1/2\,,
\ee
and
\bb
\det\!\left[\frac{(8N+4) \mathbb{1}-V}{8N+3}\right]\geq \det\!\left[\frac{4N+2}{8N+3}\mathbb{1}\right]\geq 2^{-2n}\,.
\ee
Consequently, we conclude that
\bb
P_{>M}\le\frac{e^{\textbf{m}^\intercal\left[(8N+4) \mathbb{1}-V\right]^{-1}\textbf{m} }}{\sqrt{\det\!\left[\frac{(8N+4) \mathbb{1}-V}{8N+3}\right]}}\, e^{-M/(4N+2)}\le 2^n e^{-M/(4N+2)+1/2} \,.
\ee

\end{proof}
\subsection{Trace-distance tail bound on Gaussian states}\label{tail_2}
In this subsection, we solve the following problem. What is the trace-distance error in approximating a Gaussian state with its finite-dimensional approximation consisting of at most $M$ photons? More precisely, given a Gaussian state specified by its first moment $\textbf{m}$ and covariance matrix $V$, how to estimate the trace distance between itself and its projection onto the finite-dimensional subspace spanned by all the Fock states with at most $M$ photons, as a function of $\textbf{m}$, $V$, and $M$?

Approximating a Gaussian state with a finite-dimensional state can be useful not only from an analytical perspective --- as we will see in Section~\ref{section_AEP0} --- but also from a numerical one. Indeed, to numerically deal with an infinite-dimensional state, one usually truncates the Fock basis expansion of the state up to a certain photon cut-off. Due to the Holevo--Helstrom theorem~\cite{HELSTROM, Holevo1976}, the most meaningful way to measure the error incurred in such an approximation is given by the \emph{trace-distance error}, defined as the trace distance between the original state and its truncated approximation. Hence, it is important to estimate such a trace-distance error in terms of the photon cut-off. Notably, the forthcoming Theorem~\ref{the:expdecay} establishes that, for Gaussian states, such a trace-distance error decreases exponentially with the photon cut-off. This result has both analytical and numerical significance: analytically, it provides a rigorous and efficiently computable bound on the trace-distance error; numerically, it aids in estimating the trace-distance error when storing the truncated density matrix of a Gaussian state in a computer. Note that, in general, if one wants to numerically analyse properties of states prepared by applying general non-Gaussian operations to a Gaussian state, one necessarily need to make finite-dimensional approximations of the state.

Let us formalise the problem mathematically. Let us define the finite-dimensional Hilbert space $\HH_M$ spanned by all the $n$-mode Fock states with total number of photons less than $M$:
\bb
    \HH_M &\coloneqq  \mathrm{span}\!\left\{\ket{\textbf{k}}:\,\,\textbf{k}\in\N^n \,\,,\sum_{i=1}^n k_i\leq M\right\} \label{hmdef}
\ee
where $\ket{\textbf{k}}$ is the $n$-mode Fock state defined by $\ket{\textbf{k}}=\ket{k_1}\otimes\ket{k_2}\otimes\ldots\otimes\ket{k_n}$. We dub the integer $M$ as \emph{photon cut-off}. Moreover, we dub the projection of $\rho$ onto $\HH_M$, defined as
\bb
    \rho_M\coloneqq \frac{\Pi_M\rho\Pi_M}{\Tr[\Pi_M\rho\Pi_M]}\,,
\ee
where $\Pi_M$ is the projector onto $\HH_M$, as \emph{truncated approximation} of $\rho$. Given a Gaussian state $\rho$ with first moment $\textbf{m}$ and covariance matrix $V$, our goal is to find an upper bound on the trace distance $\frac{1}{2}\|\rho-\rho_M\|_1$ between $\rho$ and its truncated approximation $\rho_M$ in terms of the first moment $\textbf{m}$, $V$, and $M$. In the following theorem, we establish such an upper bound.
\begin{thm}[(Trace-distance tail bound on Gaussian states)]\label{the:expdecay}
Let $\rho$ be an $n$-mode Gaussian state with first moment $\textbf{m}$ and covariance matrix $V$. Let $\rho_M\coloneqq\frac{\Pi_M\rho\Pi_M}{\Tr[\Pi_M\rho\Pi_M]}$ be the projection of $\rho$ onto the Hilbert space spanned by all the $n$-mode Fock states with photon number less than $M$, where $\Pi_M$ is the projector on such a Hilbert space:
\bb
    \Pi_M\coloneqq\sum_{\textbf{k}\in\mathbb{N}^n:\,\sum_{i=1}^n k_i\le M}\ketbra{\textbf{k}}\,,
\ee
with $\ket{\textbf{k}}=\ket{k_1}\otimes\ket{k_2}\otimes\ldots\otimes\ket{k_n}$ being the $n$-mode Fock state. Then, the trace distance between $\rho$ and its truncated approximation $\rho_M$ converges exponentially fast to zero in the photon cut-off $M$, with exponential lifetime given by the mean photon number $N$. Specifically, such a trace distance can be upper bounded as
\bb
    \frac{1}{2}\left\|\rho-\rho_M\right\|_1\le c\,e^{-M /(8N+4) }\,,
\ee
where $c$ denotes a finite quantity defined in terms of the first moment $\textbf{m}$ and covariance matrix $V$ as
\bb
    c&\coloneqq \frac{e^{\frac{1}{2}\textbf{m}^\intercal((8N+4) \mathbb{1}-V)^{-1}\textbf{m} }}{\left(\det\!\left[\frac{(8N+4) \mathbb{1}-V}{8N+3}\right]\right)^{1/4}}\,\leq 2^{n/2}e^{1/4},
\ee 
and $N$ denotes the mean photon number given by $N=\frac{\Tr[V-\mathbb{1}]}{4}+\frac{\|\textbf{m}\|_2^2}{2}$.

A strictly tighter bound can be obtained in terms of the following one-parameter optimisation problem:
\bb
    \frac{1}{2}\left\|\rho-\rho_M\right\|_1\le \inf_{x>\|V\|_\infty} \frac{e^{\frac{1}{2}\textbf{m}^\intercal(x \mathbb{1}-V)^{-1}\textbf{m} }}{\left(\det\!\left[\frac{x \mathbb{1}-V}{x-1}\right]\right)^{1/4}} e^{-M \mathrm{arccoth}(x)}\,,
\ee
where $\|V\|_\infty$ denotes the operator norm of $V$. 
\end{thm}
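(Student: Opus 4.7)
The plan is to reduce this trace-distance statement to the tail-probability statement of Theorem~\ref{the:expdecay00} via the Gentle Measurement Lemma, and then simply take square roots of the bounds already proved. Concretely, I would apply the Gentle Measurement Lemma (\cite[Lemma~6.15]{Sumeet_book}) to the projector $\Pi_M$: since $\Tr[\Pi_M\rho]=1-P_{>M}$, the lemma gives
\bb
\frac{1}{2}\|\rho-\rho_M\|_1 \le \sqrt{1-\Tr[\Pi_M\rho]} = \sqrt{P_{>M}}\,.
\ee
This reduces the problem to a purely \emph{scalar} estimate on $P_{>M}$, which is exactly what Theorem~\ref{the:expdecay00} provides.

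Next I would plug in the two bounds from Theorem~\ref{the:expdecay00}. For the explicit bound, $P_{>M}\le \alpha\, e^{-M/(4N+2)}$, so
\bb
\frac{1}{2}\|\rho-\rho_M\|_1 \le \sqrt{\alpha}\, e^{-M/(8N+4)}\,,
\ee
and by direct inspection $\sqrt{\alpha}$ coincides with the quantity denoted $c$ in the theorem's statement. The universal upper bound $c\le 2^{n/2}e^{1/4}$ is then nothing but the square root of the estimate $\alpha\le 2^n e^{1/2}$ already established. For the tighter optimization bound, one takes the square root pointwise inside the infimum over $x>\|V\|_\infty$, using that both the prefactor and the exponential $e^{-2\,\mathrm{arccoth}(x)M}$ are positive; the exponent halves from $-2\,\mathrm{arccoth}(x)M$ to $-\mathrm{arccoth}(x)M$, exactly matching the claimed expression.

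The one subtlety, which is only a minor technical point rather than an obstacle, is that the Gentle Measurement Lemma presupposes $\Tr[\Pi_M\rho]>0$ so that $\rho_M$ is well defined; this is automatic for Gaussian states, since the vacuum is already contained in $\mathcal{H}_0$ and has nonzero Husimi overlap with any Gaussian state, so $\Tr[\Pi_M\rho]>0$ for every $M\ge 0$. Aside from this, the proof is essentially a two-line deduction from Theorem~\ref{the:expdecay00}, and all of the nontrivial work — controlling the moment-generating function $\Tr[e^{t\hat N_n}\rho]$ via the overlap formula of Lemma~\ref{lemma_overlap_ginv} and optimizing over $t$ — has already been carried out in the proof of that theorem.
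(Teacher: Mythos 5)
Your proposal is correct and follows essentially the same route as the paper: the paper's proof also applies the Gentle Measurement Lemma to get $\frac{1}{2}\|\rho-\rho_M\|_1\le\sqrt{P_{>M}}$ and then invokes Theorem~\ref{the:expdecay00}, with the constants $c$ and the optimisation bound arising exactly as the square roots you describe. Your remark on $\Tr[\Pi_M\rho]>0$ is a harmless extra precaution that the paper leaves implicit.
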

\begin{proof}
    Gentle Measurement Lemma \cite[Lemma 6.15]{Sumeet_book}  establishes that
    \bb\label{eq_1}
    \frac{1}{2}\left\|\rho-\frac{\Pi_M\rho\Pi_M}{\Tr[\Pi_M\rho\Pi_M]}\right\|_1\le \sqrt{\Tr[(\mathbb{1}-\Pi_M)\rho]}\,,
\ee
i.e.~it holds that $\frac{1}{2}\|\rho-\rho_M\|_1\le \sqrt{P_{>M}}$, where we introduced the probability $P_{>M}$ that $\rho$ has more than $M$ photons. Hence, the statement directly follows by applying Theorem~\ref{the:expdecay00}.
\end{proof}
As an example of application of the trace-distance tail bound on Gaussian states presented in Theorem~\ref{the:expdecay}, in the next subsection we design an efficient algorithm to estimate the trace distance between two Gaussian states.

\subsection{Algorithm to estimate the trace distance between two Gaussian states}
Calculating the trace distance between two Gaussian states is important, as the trace distance is the most meaningful notion of distance between quantum states~\cite{HELSTROM, Holevo1976} and Gaussian states are omnipresent in nature. However, calculating exactly such a trace distance is impossible in general, as the difference of two Gaussian states is a non-Gaussian operator, and there is no way to compute in general the trace norm of a non-Gaussian (infinite-dimensional) operator. Therefore, it is necessary to come up with methods to \emph{approximate}, up to a fixed precision, the trace distance between two Gaussian states. While recent works have established both lower~\cite{mele2024learningquantumstatescontinuous} and upper~\cite{bittel2024optimalestimatestracedistance,fanizza2024efficienthamiltonianstructuretrace,mele2024learningquantumstatescontinuous,holevo2024estimatestracenormdistancequantum,Banchi_2015} bounds on the trace distance between two Gaussian states, there remains no known algorithm capable of computing such a trace distance up to a fixed precision. Here, we fill this gap, by providing the first algorithm to compute the trace distance between Gaussian states up to a fixed precision.

Our algorithm leverages the trace-distance tail bound on Gaussian states introduced in Theorem~\ref{the:expdecay}. Specifically, by denoting as $\rho_{\textbf{m},V}$ the Gaussian state with first moment $\textbf{m}$ and covariance matrix $\textbf{t}$, the algorithm exploits the fact that 
\bb\label{eq_algo_0}
    \frac{1}{2}\left\|\rho_{\textbf{m},V}-\frac{\Pi_M\rho_{\textbf{m},V}\Pi_M}{\Tr[\Pi_M\rho_{\textbf{m},V}\Pi_M]}\right\|_1\le 2^{n/2}e^{1/4}e^{-M /(8N+4) }
\ee
where $N$ denotes the mean photon number given by 
\bb\label{mean_ph_f}
    N=\frac{\Tr[V-\mathbb{1}]}{4}+\frac{\|\textbf{m}\|_2^2}{2}\,.
\ee
The algorithm is given below and its correctness is proved in Theorem~\ref{thm_algo}.
\begin{algorithm}[H]\label{algo1}
\caption{(Computation of the trace distance $\frac12\|\rho_{\textbf{m},V}-\rho_{\textbf{t},W}\|_1$ between two Gaussian states $\rho_{\textbf{m},V}$ and $\rho_{\textbf{t},W}$, given their first moments $\textbf{m},\textbf{t}$ and covariance matrices $V,W$.}
\begin{algorithmic}[1]
\Require First moments $\textbf{m},\textbf{t}$; covariance matrices $V,W$; error tolerance $\epsilon > 0$.
 \Ensure Approximation $\tilde{d}$ such that $ \left|\frac12\|\rho_{\textbf{m},V}-\rho_{\textbf{t},W}\|_1- \tilde{d}\right| \leq \epsilon$
  \State Use Eq.~\eqref{mean_ph_f} to find the maximum $N$ between the mean photon number of $\rho_{\textbf{m},V}$ and the one of $\rho_{\textbf{t},W}$.
 \State Find $M$ large enough so that the right-hand side of Eq.~\eqref{eq_algo_0} is smaller than $\epsilon/3$. A value of $M = \pazocal{O}\!\left((n+1)(N+1)\log \!\left(\frac{1}{\epsilon}\right)\right)$ is sufficient. 
 \State Calculate the matrix elements of $\rho_{\textbf{m},V}$ with respect to those Fock states which have number of photons not larger than $M$ (e.g.~via the methods introduced in \cite{Quesada_2019}). With such matrix elements, construct a finite-dimensional matrix $A_1$ and set $\rho_1\coloneqq A_1/\Tr A_1$.
\State Analogously, calculate the matrix elements of $\rho_{\textbf{t},W}$ with respect to the same basis as above, and construct the corresponding finite-dimensional matrix $A_2$. Finally, set $\rho_2\coloneqq A_2/\Tr A_2$.
 \State Compute the trace distance between the finite dimensional matrices $\rho_1$ and $\rho_2$ up to an error $\epsilon/3$.
 \State \Return the estimate of the above trace distance.
 \end{algorithmic}
 \end{algorithm}
\begin{thm}[(Algorithm to approximate the trace distance between Gaussian states)]\label{thm_algo}
    Let $\rho_{\textbf{m},V}$ and $\rho_{\textbf{t},W}$ be $n$-mode Gaussian states with first moments $\textbf{m},\textbf{t}$ and covariance matrices $V,W$. Given in input the first moments $\textbf{m},\textbf{t}$, covariance matrices $V,W$, and error tolerance $\epsilon > 0$, the above algorithm outputs $\tilde{d}$ such that it approximates the trace distance between the two states up to an error $\epsilon$:
    \bb\label{eq_to_prove_algo1}
        \left|\frac12\|\rho_{\textbf{m},V}-\rho_{\textbf{t},W}\|_1- \tilde{d}\right| \leq \epsilon\,.
    \ee
    Moreover, the runtime of the algorithm scales exponentially in the number of modes $n$; specifically as
    \bb
        \pazocal{O}\!\left(2^{6}(n+1)E\log (2/\epsilon)\right)^{3n}+t_{\mathrm{Fock}}\,,
    \ee
    where $E$ is the maximum mean energy per mode of the two states, i.e.
    \bb
        E\coloneqq \frac1n\max\left( \frac{\Tr V}{4}+\frac{\|\textbf{m}\|_2^2}{2},\frac{\Tr W}{4}+\frac{\|\textbf{t}\|_2^2}{2}\right)\,,
    \ee
    while $t_{\mathrm{Fock}}$ is the time required to calculate the Fock basis matrix elements in Step~3 and Step~4. Additionally, the notation $\pazocal{O}$ hides logarithmic factors in $E$.
\end{thm}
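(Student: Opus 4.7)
The plan is to prove correctness first and then bound the runtime. For correctness I would apply the triangle inequality
\begin{equation}
\left|\tfrac12\|\rho_{\mathbf m,V}-\rho_{\mathbf t,W}\|_1 - \tfrac12\|\rho_1-\rho_2\|_1\right| \leq \tfrac12\|\rho_{\mathbf m,V}-\rho_1\|_1 + \tfrac12\|\rho_{\mathbf t,W}-\rho_2\|_1,
\end{equation}
and then invoke Theorem~\ref{the:expdecay} to the two truncation errors on the right-hand side. Because both states have mean photon number at most $N$, the choice of $M$ in Step~2 ensures that each is $\leq \epsilon/3$. The final finite-dimensional computation in Step~5 contributes at most another $\epsilon/3$, so the three errors sum to $\epsilon$, giving~\eqref{eq_to_prove_algo1}.

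Next I would make Step~2 quantitative. Setting $2^{n/2}e^{1/4}e^{-M/(8N+4)}\leq \epsilon/3$ and solving for $M$ yields a sufficient choice of the form $M = \lceil (8N+4)(\tfrac{n}{2}\ln 2 + \ln(3 e^{1/4}/\epsilon))\rceil = \pazocal O\bigl((n+1)(N+1)\log(1/\epsilon)\bigr)$. Using the bound $N\leq nE$ that follows directly from the definitions of $N$ (see the mean photon number formula in Theorem~\ref{the:expdecay00_main}) and of $E$, the cutoff satisfies $M = \pazocal O\bigl(n(nE+1)\log(1/\epsilon)\bigr)$.

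For the runtime analysis I would bound the dimension of the finite truncation. The space $\HH_M$ defined in \eqref{hmdef} has dimension $\binom{M+n}{n}\leq (e(M+n)/n)^n$. Computing the trace distance of two Hermitian matrices of dimension $d$ up to accuracy $\epsilon/3$ can be done in time $\pazocal O(d^{3}\,\mathrm{polylog}(d/\epsilon))$ via diagonalisation, so Step~5 costs $\pazocal O\bigl((e(M+n)/n)^{3n}\bigr)$ up to log factors. Plugging in the value of $M$ and absorbing the $e$, $\log$, and polylog factors into the $\pazocal O$-notation (and using $nE+1 \leq 2^{c}(n+1)E$ for an absolute constant $c$ under the mild assumption that $E$ is bounded away from zero, or else adjusting the constant to accommodate the low-energy regime) gives the claimed bound $\pazocal O\bigl((2^{6}(n+1)E\log(2/\epsilon))^{3n}\bigr)$. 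Steps~1 and~2 are negligible, and Steps~3--4 are lumped into the term $t_{\mathrm{Fock}}$ as stated.

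The main obstacle is bookkeeping the constants so that the final runtime takes the clean form asserted in the theorem; in particular, one must track how $N$ (a global quantity) relates to $E$ (a per-mode quantity), and how the binomial dimension $\binom{M+n}{n}$ is bounded uniformly in the parameter regime of interest. The trace-distance truncation bound \eqref{eq_algo_0}, already proved in Theorem~\ref{the:expdecay}, does all the analytic work; the rest is a purely algorithmic accounting argument, so no genuinely novel ingredient is required beyond what is already established.
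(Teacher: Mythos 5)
Your proposal is correct and follows essentially the same route as the paper's proof: triangle inequality over the two truncations, the tail bound of Theorem~\ref{the:expdecay} to fix $M$, the binomial bound $\binom{M+n}{n}\le (e(M+n)/n)^n$ on the truncated dimension, and $\pazocal{O}(d^3)$ diagonalisation. The only superfluous point is your caveat about $E$ being bounded away from zero: since $nE = N + n/2$, one always has $E\ge 1/2$, so no case distinction is needed.
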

\begin{proof}
    In this proof we will use the notation introduced in the algorithm in Table~\ref{algo1}, Eq.~\eqref{eq_to_prove_algo1} trivially follows by multiple applications of triangle inequality and by observing that
    \bb
        \|\rho_2-\rho_1\|_1=\left\|\frac{\Pi_M\rho_{\textbf{t},W}\Pi_M}{\Tr[\Pi_M\rho_{\textbf{t},W}\Pi_M]}-\frac{\Pi_M\rho_{\textbf{m},V}\Pi_M}{\Tr[\Pi_M\rho_{\textbf{m},V}\Pi_M]}\right\|_1\,.
    \ee
    Let us estimate the runtime of the algorithm. Let $N_1=\frac{\Tr[V-\mathbb{1}]}{4}+\frac{\|m\|_2^2}{2}$, $N-2=\frac{\Tr[W-\mathbb{1}]}{4}+\frac{\|t\|_2^2}{2}$. At Step~2, $M$ is chosen such that
 
    \bb
        2^ne^{-M /(8N_1+4)+1/4 }&\le \epsilon/3\,,\\
         2^ne^{-M /(8N_2+4)+1/4 }&\le \epsilon/3\,,
    \ee
    i.e.~such that
    \bb
        M\ge \max(8N_1+4,8N_2+4)\log\frac{3e^{1/4}\cdot 2^n}{\epsilon}\,.
    \ee
    Hence, since $nE=N+n/2$, it suffices to take 
    \bb
        M\coloneqq \lceil 8nE\log \frac{3e^{1/4}\cdot 2^n}{\epsilon}\rceil\,.
    \ee
    Moreover, Step~5 requires the diagonalisation of the finite dimensional matrix $\rho_1-\rho_2$, a task well-known to be achieved in a time scaling as $\pazocal{O}(d^3)$, where $d$ denotes the dimension of the matrix $\rho_1-\rho_2$. By definition, such a dimension $d$ is given by the total number of $n$-mode Fock states with number of photons not larger than $M$. An elementary counting argument shows that
    \bb
        d=\binom{n+M}{n}\,.
    \ee
    In particular, by exploiting the elementary inequality $\binom{a}{b}\le (e\frac{a}{b})^b$ valid for any $a,b\in\mathbb{N}$, it holds that
    \bb
        d\le \left( e+ e\frac{\lceil 4nE\log \frac{3\cdot 2^n}{e\epsilon}\rceil}{n}\right)^n\le \left( e+\frac{e}{n}+4eE\log \frac{3e^{1/4}\cdot 2^n}{\epsilon}\right)^n = \pazocal{O}\left(\!\left(2^{6}(n+1)E\log (2/\epsilon)\right)^n\right)\,.
    \ee
    Consequently, the runtime of Step~5 is $\pazocal{O}\left(\!\left(2^{6}(n+1)E\log (2/\epsilon)\right)^{3n}\right)$. This concludes the proof.
\end{proof}
Notably, the runtime $\pazocal{O}\!\left(\!\left(2^{6}(n+1)E\log (2/\epsilon)\right)^{3n}\right)$ of the above algorithm, with its logarithmic dependence on the trace-distance error $\epsilon$, is a distinctive feature of Gaussian states. In contrast, as proved in Lemma~\ref{lemma_alt_alg}, a similar algorithm for computing the trace distance between two $n$-mode non-Gaussian states with precision error $\epsilon$ has a significantly worse upper bound on the runtime of $\pazocal{O}\!\left(\!\left(\frac{16E}{\epsilon^2}\right)^{3n}\right)$.
\begin{lemma}\label{lemma_alt_alg}
    The trace distance between two $n$-mode (non-Gaussian) states can be computed up to a trace-distance error $\epsilon$ in time $\pazocal{O}\!\left(\!\left(\frac{16E}{\epsilon^2}\right)^{3n}\right)$, where $E$ denotes the maximum between the mean energy per mode of the two states.
\end{lemma}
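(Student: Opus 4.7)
The plan is to mimic the algorithm of Theorem~\ref{thm_algo} but replace the Gaussian tail bound of Theorem~\ref{the:expdecay} with a general bound that only exploits finiteness of the mean photon number. The first step is to observe that for any $n$-mode state $\rho$ with mean photon number $N$, Markov's inequality applied to the photon number operator $\hat{N}_n$ gives
\begin{equation}
P_{>M} = \Tr[(\mathbb{1}-\Pi_M)\rho] \leq \frac{N}{M+1}\,,
\end{equation}
where $\Pi_M$ is the projector on the Hilbert space $\HH_M$ defined in \eqref{hmdef}. Combining with the Gentle Measurement Lemma \cite[Lemma 6.15]{Sumeet_book} exactly as in Theorem~\ref{the:expdecay}, one obtains
\begin{equation}
\frac{1}{2}\left\|\rho-\frac{\Pi_M\rho\Pi_M}{\Tr[\Pi_M\rho\Pi_M]}\right\|_1 \leq \sqrt{P_{>M}} \leq \sqrt{\frac{N}{M+1}}\,.
\end{equation}

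The second step is to run a truncation algorithm analogous to the one in Table~\ref{algo1}. Given the two $n$-mode states $\rho_1,\rho_2$ with mean photon numbers $N_1,N_2\le nE-n/2\le nE$, choose $M$ such that both truncation errors are at most $\epsilon/3$. By the displayed inequality, it suffices to pick $M = \lceil 9nE/\epsilon^2\rceil$. Then construct the finite-dimensional matrices obtained by projecting and renormalising each state onto $\HH_M$, compute their trace distance exactly (e.g.~by diagonalising their difference), and return the result. By triangle inequality, the output approximates $\tfrac12\|\rho_1-\rho_2\|_1$ up to an error $\epsilon$.

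The third step is to bound the runtime. The dimension of $\HH_M$ is $d=\binom{n+M}{n}$, which, using $\binom{a}{b}\le (ea/b)^b$, satisfies
\begin{equation}
d \leq \left(e + \frac{eM}{n}\right)^n \leq \left(e + \frac{9eE}{\epsilon^2} + \frac{e}{n}\right)^n = \pazocal{O}\!\left(\!\left(\frac{16E}{\epsilon^2}\right)^{n}\right)\,,
\end{equation}
absorbing constants for $E/\epsilon^2$ large enough. Since diagonalisation of a $d\times d$ Hermitian matrix can be carried out in time $\pazocal{O}(d^3)$, the total runtime is $\pazocal{O}\!\left(\!\left(\frac{16E}{\epsilon^2}\right)^{3n}\right)$, as claimed.

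The only genuine obstacle is conceptual rather than technical: without the Gaussian structure one loses the exponential decay of $P_{>M}$ in Theorem~\ref{the:expdecay00}, and Markov's inequality is essentially the best one can say for arbitrary states with only a mean-energy constraint. This forces the polynomial dependence $M=\Theta(nE/\epsilon^2)$ instead of the logarithmic $M=\Theta(nE\log(1/\epsilon))$ available in the Gaussian case, which in turn explains the much worse runtime in the $\epsilon\to 0$ limit. Everything else is a routine adaptation of the proof of Theorem~\ref{thm_algo}.
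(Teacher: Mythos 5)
Your proposal is correct and follows essentially the same route as the paper's proof: the paper also bounds $P_{>M}$ via the operator inequality $\mathbb{1}-\Pi_M\le \hat{N}_n/M$ (i.e.\ Markov's inequality for the photon number), applies the Gentle Measurement Lemma, takes $M=\lceil 9nE/\epsilon^2\rceil$, and bounds the truncated dimension by $\binom{M+n}{n}\le (e+e/n+9eE/\epsilon^2)^n$ before diagonalising in $\pazocal{O}(d^3)$ time. Your closing remark about why only the polynomial scaling in $1/\epsilon$ is available without Gaussian structure is also consistent with the paper's discussion.
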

\begin{proof}
    By denoting as $\Pi_M$ the projector onto the Hilbert space spanned by all the $n$-mode Fock states with at most $M$ photons, it holds that
    \bb
        \frac12\left\|\rho-\frac{\Pi_M\rho\Pi_M}{\Tr[\Pi_M\rho\Pi_M]}\right\|_1\le \sqrt{\Tr[(\mathbb{1}-\Pi_M)\rho]}\le \sqrt{\frac{\Tr[\hat{N}_n\rho]}{M}}\,,
    \ee
    where we used Gentle Measurement Lemma~\cite[Lemma 6.15]{Sumeet_book} and the operator inequality $\mathbb{1}-\Pi_M\le \frac{\hat{N}_n}{M}$.
    Consequently, we have that
    \bb
        \frac12\left\|\rho-\frac{\Pi_M\rho\Pi_M}{\Tr[\Pi_M\rho\Pi_M]}\right\|_1\le \epsilon/3\,,
    \ee
    if the photon cut-off $M$ satisfies
    \bb\label{eq_m_en}
        M\ge \frac{9\Tr[\hat{N}_n\rho]}{\epsilon^2}\,.
    \ee
    Let us now follow the same reasoning as the proof of Theorem~\ref{thm_algo}. Thanks to Eq.~\eqref{eq_m_en}, in order to have both the two states approximated with trace-distance error $\varepsilon/3$ via their projection onto the Hilbert space spanned by all the Fock states with at most $M$ photons, it suffices to choose
    \bb
        M\coloneqq \left\lceil\frac{9nE}{\epsilon^2}\right\rceil\,,
    \ee
    where $E$ is the maximum mean energy per mode of the two states. The dimension $d$ of such Hilbert space is
    \bb
        d=\binom{M+n}{n}\le \left(e\frac{M+n}{n}\right)^n\le  \left(e+\frac{e}{n}+\frac{9eE}{\epsilon^2}\right)^n=\pazocal{O}\!\left(\!\left(\frac{16 E}{\epsilon^2}\right)^n\right)\,.
    \ee
    Hence, it suffices to diagonalise a matrix of dimension $d$, which has a runtime of $O(d^3)$.
\end{proof}

\newpage

\section{Non-asymptotic analysis of quantum communication across Gaussian channels}\label{section_AEP0}
This section aims to establish an easily computable lower bound on the $n$-shot quantum capacity, $n$-shot two-way quantum capacity, and $n$-shot secret-key capacity of an arbitrary Gaussian channel.

As we will see below, to obtain an easily computable lower bound on the $n$-shot capacities, it suffices to lower bound the smooth max relative entropy $D_{\max}^\varepsilon$ defined in Definition~\ref{def_max_smooth} in terms of quantities that do not require optimisations. This can be achieved by applying the \emph{asymptotic equipartition property} proved in \cite[Theorem 9]{Tomamichel_2009}, which holds in finite dimensions. This property was recently generalised to infinite dimensions in~\cite[Theorem 6.2]{fawzi2023asymptotic}, while an earlier, non-tight infinite-dimensional generalisation was presented in~\cite[Proposition 8]{Furrer_2011}.

We will now state the infinite-dimensional generalisation of the asymptotic equipartition property from~\cite[Theorem 6.2]{fawzi2023asymptotic}, which is formulated within the framework of \emph{von Neumann algebras}. Although we do not need to introduce this framework in our paper, it is relevant to mention that in the context of von Neumann algebras, any operator $\sigma$ with a bounded trace is also \emph{normal}~\cite{fawzi2023asymptotic}. Therefore, we can apply~\cite[Theorem 6.2]{fawzi2023asymptotic} under the assumption that the operator $\sigma$ has a bounded trace.

\begin{lemma}[(Infinite-dimensional asymptotic equipartition property~\cite{fawzi2023asymptotic} )]\label{lemma_fawzi}
    Let $\rho$ be a state and let $\sigma$ be an arbitrary positive semi-definite operator with bounded trace. For any $\varepsilon\in(0,1)$ with $n\ge2\log_2\!\left(\frac{2}{\varepsilon^2}\right)$, we have
    \bb\label{generalised_AEP}
    D_{\max}^\varepsilon(\rho^{\otimes n}\|\sigma^{\otimes n})\le n D(\rho\|\sigma)+4\sqrt{n}\log_2({\bar{\eta}})\sqrt{\log_2\!\left(\frac{2}{\varepsilon^2}\right)}\,,
    \ee
    where
    \bb
        \bar{\eta}\coloneqq \sqrt{2^{D_{3/2}(\rho\|\sigma)}}+\sqrt{2^{-D_{1/2}(\rho\|\sigma)}}+1\,,
    \ee
    where $D_{1/2}$ and $D_{3/2}$ denote the Petz Rényi relative entropy defined in Definition~\ref{petz_renyi}.
\end{lemma}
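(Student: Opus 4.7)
The plan is to follow the classical template for the quantum asymptotic equipartition property, reducing the bound on $D_{\max}^\varepsilon(\rho^{\otimes n}\|\sigma^{\otimes n})$ to the construction of an explicit smoothed state $\tilde\rho_n$ that lies within sine distance $\varepsilon$ of $\rho^{\otimes n}$ and satisfies the operator inequality $\tilde\rho_n \le 2^t\sigma^{\otimes n}$ for an explicit $t$. By Definition~\ref{def_max_smooth}, once such a $\tilde\rho_n$ is produced we immediately conclude $D_{\max}^\varepsilon(\rho^{\otimes n}\|\sigma^{\otimes n})\le t$, so the task becomes making $t$ as small as possible.

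The natural candidate is a typicality-projected state $\tilde\rho_n \propto \Pi\,\rho^{\otimes n}\,\Pi$, where $\Pi$ is the spectral projection of the log-likelihood operator $\log_2\rho^{\otimes n} - \log_2\sigma^{\otimes n}$ onto values at most $nD(\rho\|\sigma)+c\sqrt{n}$, with $c$ to be optimised. The operator inequality $\tilde\rho_n \le 2^{nD+c\sqrt n}\sigma^{\otimes n}$ then holds by construction, while the sine distance $P(\rho^{\otimes n},\tilde\rho_n)$ is controlled, via Gentle Measurement, by the atypical probability $\Tr[(\id-\Pi)\rho^{\otimes n}]$. This tail probability is precisely where $D_{1/2}$ and $D_{3/2}$ enter: a Chernoff-style exponential Markov inequality applied to the i.i.d.\ log-likelihood, combined with the identity $\Tr[\rho^\alpha\sigma^{1-\alpha}] = 2^{(\alpha-1)D_\alpha(\rho\|\sigma)}$, produces factors $2^{D_{3/2}(\rho\|\sigma)/2}$ and $2^{-D_{1/2}(\rho\|\sigma)/2}$ governing the two tails of the log-likelihood. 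Optimising the Chernoff parameter yields the $\sqrt{\log_2(2/\varepsilon^2)}$ dependence on $\varepsilon$, and the combination $\bar\eta = \sqrt{2^{D_{3/2}}}+\sqrt{2^{-D_{1/2}}}+1$ arises from applying a Bernstein-type concentration bound in which $D_{3/2}$ and $-D_{1/2}$ act as upper and lower variance proxies, respectively.

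To pass from finite to infinite dimensions, I would approximate $\rho$ and $\sigma$ by sequences $\rho_k,\sigma_k$ supported on finite-dimensional subspaces (for example via joint spectral truncations of $\rho+\sigma$ at large eigenvalues), apply the finite-dimensional AEP of Tomamichel--Colbeck--Renner to each pair $(\rho_k,\sigma_k)$, and then pass to the limit using lower semi-continuity of $D_{\max}^\varepsilon$ in the first argument together with continuity of the Petz Rényi divergences on the truncated pairs. The hypothesis that $\sigma$ has bounded trace --- equivalent to normality in the von Neumann algebra framework --- together with the implicit finiteness of $D_{1/2}$ and $D_{3/2}$ (otherwise the bound is vacuous), is what makes this limiting procedure go through, since normality of $\sigma$ permits an unambiguous spectral calculus for the log-likelihood.

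The main obstacle will be handling the non-commutativity of $\rho^{\otimes n}$ and $\sigma^{\otimes n}$ sharply enough to preserve the explicit constant $4$ and the exact form of $\bar\eta$. Unlike the classical case, where the Chernoff bound on the log-likelihood is immediate, the operator setting requires a sandwich-Rényi or pinching argument to linearise the log-likelihood, and every such step must be tracked carefully so that its contribution is absorbed exactly into the Petz Rényi factors comprising $\bar\eta$, rather than leaking into the leading $n D(\rho\|\sigma)$ term. Reproducing the result of Fawzi--Fawzi--Scalet with these constants therefore amounts to setting up a non-commutative Bernstein-type inequality whose upper tail is controlled by $D_{3/2}$ and whose lower tail by $-D_{1/2}$, and then verifying that the infinite-dimensional approximation introduces no further loss in either the constants or the $\sqrt n$ scaling.
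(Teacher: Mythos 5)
First, note that the paper does not prove this lemma at all: it is imported verbatim from [Theorem 6.2] of Fawzi--Fawzi--Scalet (reference \cite{fawzi2023asymptotic}), whose finite-dimensional precursor is [Theorem 9] of Tomamichel--Colbeck--Renner (\cite{Tomamichel_2009}); the only thing the paper itself argues is that the hypothesis that $\sigma$ has bounded trace suffices to invoke the von Neumann algebra version. Your task was therefore really to reconstruct the proof of the cited theorem, and your reconstruction follows a different route which contains a genuine gap.

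The gap is in your central construction. You take $\Pi$ to be the spectral projection of $\log_2\rho^{\otimes n}-\log_2\sigma^{\otimes n}$ onto values at most $nD(\rho\|\sigma)+c\sqrt n$ and assert that the operator inequality $\Pi\rho^{\otimes n}\Pi\le 2^{nD+c\sqrt n}\,\sigma^{\otimes n}$ ``holds by construction''. It does not: since $\rho^{\otimes n}$ and $\sigma^{\otimes n}$ do not commute, the inequality $\Pi\left(\log_2\rho^{\otimes n}-\log_2\sigma^{\otimes n}\right)\Pi\le(nD+c\sqrt n)\Pi$ cannot be exponentiated to $\Pi\rho^{\otimes n}\Pi\le 2^{nD+c\sqrt n}\,\Pi\sigma^{\otimes n}\Pi$ --- this is exactly the failure of $e^{A+B}=e^{A}e^{B}$ that makes the quantum case harder than the classical one. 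The standard repair, pinching with respect to $\sigma^{\otimes n}$, costs a multiplicative factor equal to the number of distinct eigenvalues of $\sigma^{\otimes n}$, which already spoils the explicit constant $4$ and the clean form of $\bar\eta$ in fixed finite dimension, and is infinite for a generic trace-class $\sigma$ on an infinite-dimensional space, so the construction collapses precisely in the regime the lemma addresses. The proof of the cited theorem avoids typicality projections altogether: it chains $D_{\max}^{\varepsilon}(\rho^{\otimes n}\|\sigma^{\otimes n})\le D_{\alpha}(\rho^{\otimes n}\|\sigma^{\otimes n})+\tfrac{1}{\alpha-1}\log_2\!\left(\tfrac{2}{\varepsilon^2}\right)=nD_{\alpha}(\rho\|\sigma)+\tfrac{1}{\alpha-1}\log_2\!\left(\tfrac{2}{\varepsilon^2}\right)$ (a smoothing bound plus additivity of the R\'enyi divergence), then controls $D_{\alpha}(\rho\|\sigma)\le D(\rho\|\sigma)+4(\alpha-1)\log_2^2(\bar\eta)$ for $\alpha$ slightly above $1$ by bounding the second derivative of $\alpha\mapsto\log_2\Tr[\rho^{\alpha}\sigma^{1-\alpha}]$ --- this derivative estimate is where $\bar\eta$, with both $D_{3/2}$ and $D_{1/2}$, actually originates --- and finally optimises $\alpha$, the hypothesis $n\ge2\log_2\!\left(\tfrac{2}{\varepsilon^2}\right)$ being exactly what keeps the optimal $\alpha$ inside the interval on which the derivative bound is valid. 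The infinite-dimensional extension is carried out at the level of these R\'enyi inequalities in the von Neumann algebra framework, not by finite-dimensional truncation plus semicontinuity as you propose; your truncation step would in any case require a separate argument that the Petz--R\'enyi quantities of the truncated pairs converge from the correct side, which you do not supply.
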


\subsection{Quantum capacity}
In this paragraph, we prove a lower bound on the $n$-shot quantum capacity of Gaussian channels, presented in the forthcoming Theorem~\ref{thm_lower_bound}.

We start by stating the following result regarding finite-dimensional channels, which establishes a lower bound on the one-shot quantum capacity in terms of the smooth conditional max-entropy.
\begin{lemma}[(Lower bound on the one-shot quantum capacity of finite-dimensional channels~\cite{Sumeet_book})]\label{Lemma_Sumeet_book}
    Let $\NN $ be a finite-dimensional quantum channel. For all $\varepsilon\in(0,1)$, $\delta\in(0,1)$, and $\eta\in[0,\frac{\sqrt{\delta}}{4})$, the one-shot quantum capacity of $\NN$ can be lower bounded as:
    \bb
        Q^{(\varepsilon)}(\NN)\ge -H_{\max}^{\frac{\varepsilon\sqrt{\delta}}{4}-\eta} (A|B)_{\NN_{A'\to B}(\psi_{AA'})}+\log_2\!\left(\eta^4(1-\delta)\right)
    \ee 
    for any pure state $\psi_{AA'}$, where $H_{\max}^{\varepsilon}$ denotes the smooth conditional max-entropy defined in Definition~\ref{smooth_max_conditional}. 
\end{lemma}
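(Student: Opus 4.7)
The plan is to construct an explicit quantum code whose size is governed by the smooth conditional max-entropy of the channel's output, following the standard one-shot decoupling-plus-Uhlmann recipe. First, I would fix a Stinespring isometry $V_{A'\to BE}$ of $\NN$ and extend the pure input $\psi_{AA'}$ to the tripartite pure state $\Psi_{ABE}\coloneqq(\id_A\otimes V_{A'\to BE})\psi_{AA'}(\id_A\otimes V_{A'\to BE})^\dagger$. By the duality relation of Lemma~\ref{lemma_duality}, controlling $H_{\max}^{\varepsilon'}(A|B)_\Psi$ is equivalent to controlling $-H_{\min}^{\varepsilon'}(A|E)_\Psi$. The strategy is then the following: find a code subspace of $A$ of dimension $|M|$ such that when a maximally entangled state between Alice's message register $M$ and this subspace is sent through $\NN$, the register $M$ becomes nearly decoupled from the environment $E$; Uhlmann's theorem then guarantees the existence of a Bob-side decoder $\mathcal{D}_{B\to M}$ that inverts, up to a small fidelity loss, the isometric part of the channel on the code.

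The core technical ingredient is the one-shot decoupling theorem. After embedding $M$ into $A$ via a Haar-random (or approximate-$2$-design) isometry $U_{M\to A}$, the induced $ME$ marginal is close in trace distance to the product $\pi_M\otimes\Psi_E$, with average error essentially bounded by $\sqrt{|M|\,2^{-H_{\min}^{\varepsilon'}(A|E)_\Psi}}+O(\varepsilon')$. To recover the constants appearing in the statement, I would (i) smooth $\Psi_{ABE}$ by a nearby state $\tilde{\Psi}$ at purified distance at most $\varepsilon'$, realized through Definition~\ref{smooth_max_conditional}; (ii) invoke the decoupling bound on $\tilde{\Psi}$; (iii) transfer back to $\Psi$ by the triangle inequality on purified distance. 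The free parameter $\eta$ enters as the tunable margin below $\sqrt{|M|\,2^{-H_{\min}^{\varepsilon'}(A|E)_\Psi}}$ that one is willing to allocate to the decoupling error, whereas $\delta$ appears through a Markov-type derandomization that upgrades the averaged statement over random codes into the existence of a single good code with probability at least $1-\delta$.

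Given such a code, Uhlmann's theorem produces a CPTP decoder $\mathcal{D}_{B\to M}$ whose induced encoded channel $\mathcal{D}\circ\NN\circ\mathcal{E}$ satisfies the channel-fidelity requirement of Eqs.~\eqref{channel_fidelity}--\eqref{eq_def_Q_cap}; the conversion from the trace-distance decoupling statement on $ME$ to channel fidelity on $M$ proceeds via the Fuchs--van de Graaf inequalities combined with invariance of purified distance under isometries. Solving the chain of inequalities for $\log_2|M|$ and using $H_{\min}^{\varepsilon'}(A|E)_\Psi=-H_{\max}^{\varepsilon'}(A|B)_\Psi$ yields
\bb
\log_2|M|\ \geq\ -H_{\max}^{\varepsilon'}(A|B)_\Psi + \log_2\!\left(\eta^4(1-\delta)\right)\,,
\ee
with the smoothing parameter identified as $\varepsilon'=\tfrac{\varepsilon\sqrt\delta}{4}-\eta$. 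The fourth power of $\eta$ reflects two successive square-root conversions: one when passing from the averaged decoupling bound to a deterministic one via Markov, and a second when translating the decoupling error on the $ME$ marginal into the channel fidelity appearing in the definition of $Q^{(\varepsilon)}(\NN)$.

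The main obstacle will be the bookkeeping of the error budget: tracking how the parameter $\eta$ propagates through (a) the purified-distance/trace-distance conversions, (b) the Uhlmann inversion step, (c) the translation between channel fidelity and decoupling error on $ME$, and (d) the Markov-type derandomization that yields the factor $1-\delta$. Each individual conversion is elementary, but matching the precise constants $\tfrac{\varepsilon\sqrt\delta}{4}-\eta$ and $\eta^4(1-\delta)$ in the statement requires performing all four reductions with essentially no slack, which is the only delicate part of the argument.
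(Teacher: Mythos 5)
This lemma is quoted verbatim from the cited reference \cite{Sumeet_book} and the paper offers no proof of it, so there is no in-paper argument to compare your proposal against. Your sketch follows the same route used in that reference: a random isometric encoding into $A$, the one-shot decoupling theorem to make the message register nearly product with the environment $E$, Uhlmann's theorem to construct the decoder, the duality $H_{\max}^{\varepsilon'}(A|B)_{\Psi}=-H_{\min}^{\varepsilon'}(A|E)_{\Psi}$ to convert the decoupling condition into the stated max-entropy bound, and a Markov-type expurgation — so the strategy is the correct and standard one. The one caveat is that your proposal is an outline rather than a proof: the specific constants $\tfrac{\varepsilon\sqrt{\delta}}{4}-\eta$ and $\eta^{4}(1-\delta)$ are asserted to emerge from the error bookkeeping rather than derived, and your attribution of the individual factors is not quite right (a Markov derandomization at confidence $1-\delta$ inflates the achievable error by a factor of order $1/(1-\delta)$ rather than contributing one of the two square roots you invoke to explain $\eta^{4}$; in the cited derivation the $\eta$-penalty instead enters through the slack needed in the entropy-smoothing and purified-distance conversions). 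Since the paper uses this lemma purely as an imported black box, completing that bookkeeping is not needed for the paper's results, but it would be needed to call your sketch a proof.
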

We are now ready to present our lower bound.
\begin{thm}[(Lower bound on the $n$-shot quantum capacity of Gaussian channels)]\label{thm_lower_bound}
Let $\NN $ be a Gaussian channel. Let $\varepsilon\in(0,1)$, $\delta\in(0,1)$, $\eta\in[0,\frac{\varepsilon\sqrt{\delta}}{4})$, and $n\in\N$ such that $n\ge2\log_2\!\left(\frac{2}{\varepsilon^2}\right)$. For any pure Gaussian state $\Phi_{AA'}$, the $n$-shot quantum capacity of $\NN$ satisfies the following lower bound:
    \bb\label{lower_bound_CV}
        Q^{(\varepsilon,n)}(\NN)\ge n I_c(A\,\rangle\, B)_{\Psi}-\sqrt{n}4\log_2({\bar{\eta}})\sqrt{\log_2\!\left(\frac{2}{\left(\frac{\varepsilon\sqrt{\delta}}{4}-\eta \right)^2}\right)}+\log_2\!\left(\eta^4(1-\delta)\right)\,,
    \ee
    where
    \bb\label{eq_bar_eta}
        \bar{\eta}&\coloneqq \sqrt{2^{H_{1/2}(A|B)_{\Psi}}}+\sqrt{2^{H_{1/2}(A|E)_{\Psi}}}+1\,,\\
        \Psi_{ABE}&\coloneqq V_{A'\to BE}\Phi_{AA'}(V_{A'\to BE})^{\dagger}\,,
    \ee
    and $V_{A'\to BE}$ is a Stinespring dilation isometry associated with $\NN_{A'\to B}$. Moreover, 
    \bb
        I_c(A\,\rangle\, B)_{\Psi}&\coloneqq S(B)_\Psi-S(AB)_{\Psi}\,,\\
        H_{1/2}(A|B)_{\Psi}&\coloneqq 2\log_2\!\left(\Tr\sqrt{\Psi_{AB}}\,\mathbb{1}_A\otimes\sqrt{\Psi_B}\right)\,,\\
        H_{1/2}(A|E)_{\Psi}&\coloneqq 2\log_2\!\left(\Tr\sqrt{\Psi_{AE}}\,\mathbb{1}_A\otimes\sqrt{\Psi_E}\right)\,
    \ee
    are finite quantities denoting coherent information and conditional Petz Rényi entropies. In particular, by choosing $\delta=\frac{1}{4}$ and $\eta=\frac{\varepsilon}{16}$, we get
    \bb\label{lower_bound_CV2}
        Q^{(\varepsilon,n)}(\NN)\ge n I_c(A\,\rangle\, B)_{\Psi}-\sqrt{n}4\log_2({\bar{\eta}})\sqrt{\log_2\!\left(\frac{2^9}{\varepsilon^2}\right)}-\log_2\!\left(\frac{2^{18}}{3\varepsilon^4}\right)\,.
    \ee
In particular, if the input state $\Phi_{AA'}$ satisfies the energy constraint $\Tr[\hat{N}_{A'}\Phi_{AA'}]\le N_s$, where $\hat{N}_{A'}$ is the photon number operator on $A'$, then the right-hand side of \eqref{lower_bound_CV2} constitutes also a lower bound on the energy-constrained $n$-shot quantum capacity $Q^{(\varepsilon,n)}(\NN,N_s)$.
\end{thm}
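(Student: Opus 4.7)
The plan is to combine the finite-dimensional one-shot bound of Lemma~\ref{Lemma_Sumeet_book} with the infinite-dimensional asymptotic equipartition property of Lemma~\ref{lemma_fawzi}, using the trace-distance tail bound of Theorem~\ref{the:expdecay} as the bridge that reduces the continuous-variable problem to a controlled finite-dimensional truncation. The argument proceeds in four steps: truncate the Gaussian input to finite support on the reference side; invoke the one-shot lower bound in the truncated configuration; rewrite the smooth max-entropy on $B$ as a smooth min-entropy on $E$ via duality; apply the AEP to the complementary state and send the truncation cutoff to infinity.

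Concretely, I fix the pure Gaussian input $\Phi_{AA'}$ together with its dilated output $\Psi_{ABE}=V_{A'\to BE}\Phi_{AA'}V_{A'\to BE}^{\dagger}$ and, for each cutoff $M$, project the purification onto Fock states with at most $M$ photons on both $A$ and $A'$, producing a pure normalised state $\Phi^M_{AA'}$ whose support on $A$ is finite-dimensional. By Theorem~\ref{the:expdecay}, $\Phi^M_{AA'}$ is exponentially close in trace norm to $\Phi_{AA'}$ as $M\to\infty$, and the same is true for $\Psi^M_{ABE}\coloneqq V_{A'\to BE}\Phi^M_{AA'}V_{A'\to BE}^{\dagger}$ by monotonicity under isometries. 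The finiteness of $\dim A$ after truncation lets me invoke Lemma~\ref{Lemma_Sumeet_book} (after a harmless further finite-dimensional restriction of $\NN^{\otimes n}$ on the output, whose diamond-norm deviation is again controlled by Theorem~\ref{the:expdecay}), giving
\begin{equation*}
Q^{(\varepsilon,n)}(\NN)\ge -H_{\max}^{\varepsilon\sqrt{\delta}/4-\eta}(A^n|B^n)_{(\Psi^M)^{\otimes n}}+\log_2\!\left(\eta^4(1-\delta)\right)+o_M(1).
\end{equation*}

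Next I invoke the duality of Lemma~\ref{lemma_duality} to rewrite $H_{\max}^{\varepsilon'}(A^n|B^n)=-H_{\min}^{\varepsilon'}(A^n|E^n)$ on the purification $(\Psi^M_{ABE})^{\otimes n}$, and apply the AEP of Lemma~\ref{lemma_fawzi} to $\rho=\Psi^M_{AE}$ against $\sigma=\mathbb{1}_A\otimes\Psi^M_E$. The crucial point is that truncation has made $A$ finite-dimensional, so $\sigma$ has bounded trace and the infinite-dimensional AEP is applicable. Using the suboptimal choice $\sigma_{E^n}=(\Psi^M_E)^{\otimes n}$ in the variational definition of the smooth min-entropy, Lemma~\ref{lemma_fawzi} yields
\begin{equation*}
H_{\min}^{\varepsilon'}(A^n|E^n)_{(\Psi^M)^{\otimes n}}\ge -nD(\Psi^M_{AE}\|\mathbb{1}_A\otimes\Psi^M_E)-4\sqrt{n}\log_2(\bar{\eta}^M)\sqrt{\log_2(2/{\varepsilon'}^{2})}.
\end{equation*}
Purity of $\Psi^M_{ABE}$ identifies $-D(\Psi^M_{AE}\|\mathbb{1}_A\otimes\Psi^M_E)=S(AE)_{\Psi^M}-S(E)_{\Psi^M}=S(B)_{\Psi^M}-S(AB)_{\Psi^M}=I_c(A\,\rangle\, B)_{\Psi^M}$, and the Petz Rényi factors in $\bar{\eta}^M$ become, via the definitions together with the Petz duality of Lemma~\ref{duality_conditional_entropies}, exactly $\sqrt{2^{H_{1/2}(A|B)_{\Psi^M}}}$ and $\sqrt{2^{H_{1/2}(A|E)_{\Psi^M}}}$, matching~\eqref{eq_bar_eta}.

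Sending $M\to\infty$ and invoking continuity of coherent information and Petz $\tfrac{1}{2}$-entropies along the sequence $\Psi^M\to\Psi$ (finiteness of these entropic quantities on Gaussian states is guaranteed by Lemma~\ref{finite_sqrt} together with standard energy-constrained continuity estimates) produces~\eqref{lower_bound_CV}. Specialising to $\delta=1/4$ and $\eta=\varepsilon/16$ gives $\varepsilon\sqrt{\delta}/4-\eta=\varepsilon/16$, whence $\log_2(2/{\varepsilon'}^{2})=\log_2(2^9/\varepsilon^2)$ and $\log_2(\eta^4(1-\delta))=-\log_2(2^{18}/(3\varepsilon^4))$, yielding~\eqref{lower_bound_CV2}. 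The energy-constrained statement follows because projecting onto $\le M$ photons can only lower the mean photon number of the encoder, so the inherited code still meets the constraint $\Tr[\hat N_{A'}\Phi_{AA'}]\le N_s$. The main technical obstacle I anticipate is synchronising the $M\to\infty$ limit across three distinct approximations (the diamond-norm truncation of the channel, the smoothing of the conditional entropies, and the Petz Rényi approximation) while keeping the AEP's trace-bounded hypothesis on $\sigma$ satisfied at every finite $M$; this is precisely where the quantitative exponential control of Theorem~\ref{the:expdecay} plays an indispensable role.
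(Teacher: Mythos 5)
Your overall architecture matches the paper's: truncate to finite dimension, apply the finite-dimensional one-shot bound of Lemma~\ref{Lemma_Sumeet_book}, dualise $H_{\max}$ on $B$ into $H_{\min}$ on $E$, feed the result into the infinite-dimensional AEP of Lemma~\ref{lemma_fawzi}, and remove the cutoff. However, two steps are asserted rather than proved, and the second one is the technical heart of the theorem.

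First, the transfer of the one-shot bound from the truncated channel to $\NN$ is not an ``$o_M(1)$'' correction controlled by a diamond-norm estimate. A diamond-norm perturbation of the channel would degrade the achievable error parameter $\varepsilon$, not add a vanishing term to the rate, so your inequality as written does not follow from what you invoke. The clean route (the one the paper takes) is to define $\NN^{(k)}= \mathcal{V}_{B\to B_k}\circ\mathcal{P}^{(k)}_{B\to B}\circ\NN\circ\mathcal{V}'_{A'_k\to A'}$, i.e.\ an isometric embedding at the input and a projection-plus-reset at the output: since $\NN^{(k)}$ is obtained from $\NN$ by local pre- and post-processing, every $(|M|,\varepsilon)$-code for $(\NN^{(k)})^{\otimes n}$ is \emph{exactly} an $(|M|,\varepsilon)$-code for $\NN^{\otimes n}$, and the inequality $Q^{(\varepsilon,n)}(\NN)\ge -H_{\max}^{\varepsilon\sqrt{\delta}/4-\eta}(A_k^n|B_k^n)_{\omega^{\otimes n}}+\log_2(\eta^4(1-\delta))$ holds with no error term at every finite $k$. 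This is fixable, but your version of the step is not justified as stated.

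Second, and more seriously, the limit $M\to\infty$ for the Petz--R\'enyi factors in $\bar\eta^M$ does not follow from ``standard energy-constrained continuity estimates.'' Such estimates exist for the conditional von Neumann entropy (and hence for the coherent information term), but there is no off-the-shelf continuity result for the conditional Petz--R\'enyi entropy $H_{1/2}$ in infinite dimensions; establishing
$\lim_{k\to\infty}\Tr\bigl[\sqrt{\Psi^{(k)}_{AEE'}}\;\mathbb{1}_A\otimes\sqrt{\Psi^{(k)}_{EE'}}\bigr]=\Tr\bigl[\sqrt{\Psi_{AEE'}}\;\mathbb{1}_A\otimes\sqrt{\Psi_{EE'}}\bigr]$
(and its analogue for $H_{1/2}(A|B)$) is precisely what the bulk of the paper's proof does. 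The argument requires splitting the difference into three pieces, bounding each via H\"older together with the operator inequality $\|\sqrt{A}-\sqrt{B}\|_2^2\le\|A-B\|_1$, and then observing that the resulting bound has the form $\sqrt{\binom{k+m_A}{m_A}}\cdot\|\Psi^{(k)}-\Psi\|_1^{1/2}$, where the dimension factor grows polynomially in $k$ while the truncation error decays exponentially by Theorem~\ref{the:expdecay} — this is exactly why the exponential (rather than merely vanishing) tail bound is indispensable, and also where Lemma~\ref{finite_sqrt} enters to guarantee the limiting series converges. You correctly sense that the exponential control is the crux, but you do not state what must be proved nor supply the estimate, so the proof as proposed has a genuine hole at its most delicate point.
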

\begin{proof}
    Currently available lower bounds on the one-shot quantum capacity are formulated in the finite-dimensional setting, e.g.~the one provided above in Lemma~\ref{Lemma_Sumeet_book}. Therefore, we will begin by constructing a sequence of finite-dimensional approximations of the infinite-dimensional channel $\NN$ and then apply the finite-dimensional lower bound reported in Lemma~\ref{Lemma_Sumeet_book}.

Let us introduce some relevant notations. Given a continuous-variable system $S$, let us denote as $m_S$ the number of modes of $S$. Moreover, for any $k\in\mathbb{N}$, let us define the finite-dimensional Hilbert space $\HH^{S}_k$ spanned by all the $m_S$-mode Fock states on $S$ with total number of photons less than $k$:
\bb
    \HH^S_k &\coloneqq  \mathrm{span}\!\left\{\ket{\textbf{i}}_S:\,\,\textbf{i}\in\N^{m_S} \,\,,\sum_{j=1}^{m_S} i_j\leq k\right\}\,, 
\ee
where $\ket{\textbf{i}}$ is the ${m_S}$-mode Fock state defined by
\bb
    \ket{\textbf{i}}_S=\ket{i_1}\otimes\ket{i_2}\otimes\ldots\otimes\ket{i_{m_S}}\,.
\ee
Let us define as $P^S_k$ the projector on $\HH^S_k$, and let us note that
\bb\label{dim_hms}
    \Tr P^S_k=\dim\mathcal{H}_k^S= \left|\left\{i_1,i_2,\ldots,i_{m_S}\in\N\,,\quad\sum_{j=1}^{m_S} i_j\leq k\right\}\right| =\binom{k+m_S}{m_S} \,,
\ee
where in the last equality we used a standard combinatorial argument. In addition, let us consider a qudit system $S_k \cong\mathbb{C}^{\binom{k+m_S}{m_S}}$  of dimension $\binom{k+m_S}{m_S}$ and a co-isometry $V_{S\to S_k }$ that maps the subspace $\mathcal{H}_k^S$ into $S_k$. Moreover, let $\mathcal{V}_{S\to S_k }(\cdot)\coloneqq V_{S\to S_k }(\cdot)V_{S\to S_k }^\dagger $ be the corresponding superoperator. Moreover, let us consider the projection of the state $\Phi_{AA'}$ induced by the projector $P_k^{A}$:
    \bb
        \Phi^{(k)}_{A_k A'}\coloneqq  \mathcal{V}_{A\to A_k}\left(\frac{P_k^{A}\otimes\mathbb{1}_{A'}\Phi_{AA'}P_k^{A}\otimes\mathbb{1}_{A'}}{\Tr[P_k^{A}\Phi_{A}]}\right)
    \ee
    Since $\Phi^{(k)}_{A_k A'}$ is a pure state and the system $A_k$ has dimension $\binom{m_A+k}{m_A}$, the support of $\Tr_{A}\Phi^{(k)}_{A_k A'}$ is contained in a $\binom{m_A+k}{m_A}$-dimensional subspace of $A'$. Let $V'_{A'_k\to A'}$ be the isometric embedding that maps $A'_k\cong \mathbb{C}^{\binom{m_A+k}{m_A}}$ into such a subspace, let $\mathcal{V}'_{A'_k\to A'}(\cdot)\coloneqq V'_{A'_k\to A'}(\cdot)(V'_{A'_k\to A'})^\dagger$ be its associated channel, and let us introduce the following pure state on $A_k A'_k$:
    \bb
        \Phi^{(k)}_{A_k A'_k}\coloneqq \mathcal{V}'^\dagger_{A'_k\to A'} (\Phi^{(k)}_{A_k A'})\,,
    \ee
    where we denoted $\mathcal{V}'^\dagger_{A'_k\to A'}(\cdot)\coloneqq V'^\dagger_{A'_k\to A'}(\cdot)(V'_{A'_k\to A'})$.     Moreover, let us introduce the following \emph{finite}-dimensional channel $\NN^{(k)}:A'_k\to B_k$, which constitutes a finite-dimensional approximation of the Gaussian channel $\NN$:
    \bb
        \NN^{(k)}_{A'_k\to B_k}\coloneqq \mathcal{V}_{B\to B_k } \circ \mathcal{P}^{(k)}_{B\to B}\circ \NN_{A'\to B}\circ \mathcal{V}'_{A'_k\to A'}\,,
    \ee
    where
    \bb
        \mathcal{P}^{(k)}_{B\to B}(\cdot) \coloneqq P_{k}^B(\cdot)P^{B}_k +\Tr\!\left[(\mathbb{1}_{B}-P_{k}^B )(\cdot)\right]\ketbra{0}_{B}\,.
    \ee
    We can thus apply Lemma~\ref{Lemma_Sumeet_book} to such a finite-dimensional channel to obtain
    \bb
        Q^{(\varepsilon,n)}(\NN)&\geqt{(i)} -H_{\max}^{\frac{\varepsilon\sqrt{\delta}}{4}-\eta} (A'^{n}_k|B_k^{n})_{\omega_{A_kB_k}^{\otimes n}}+\log_2\!\left(\eta^4(1-\delta)\right)\,,
    \ee
    where $\omega_{A_kB_k}\coloneqq \mathbb{1}_{A_k}\otimes \NN^{(k)}_{A'_k\to B_k}(\Phi^{(k)}_{A_kA_k'})$. Now, we want to lower bound $Q^{(\varepsilon)}$ using the asymptotic equipartition property reported in Lemma~\ref{lemma_fawzi}. To do so, we need to write the above lower bound on $Q^{(\varepsilon)}$ in terms of the smooth conditional max-entropy $D_{\max}^\varepsilon$. To this end, let us write the Stinespring dilation of the above channels. Note that the isometry
    \bb\label{def_w_bbe}
        W^{(k)}_{B\to BE'}\coloneqq P_k^B\otimes\ket{0}_{E'}+\sum_{\substack{\textbf{i}\in\mathbb{N}^{m_B}\\\sum_{j=1}^{m_B}i_j \not\le k}} \ketbraa{0}{\textbf{i}}_B\otimes\ket{\textbf{i}}_{E'}\,
    \ee 
    constitutes a Stinespring dilation isometry for $\mathcal{P}^{(k)}_{B\to B}$. Consequently, we can construct a Stinespring dilation isometry for $\NN^{(k)}_{A'_k\to B_k}$ as follows:
    \bb
        V_{A'_k\to B'_k E E' }\coloneqq V_{B\to B_k } W^{(k)}_{B\to BE'} V_{A'\to BE} V'_{A'_k\to A'}\,.
    \ee 
    By setting
    \bb
        \Psi^{(k)}_{A_kB_kEE'}\coloneqq V_{A'_k\to B'_k E E' }\Phi^{(k)}_{A_kA'_k}V_{A'_k\to B'_k E E' }^\dagger\,,
    \ee
    we have that
    \bb\label{ineq_proof_thm_lower_bound}
        Q^{(\varepsilon,n)}(\NN)&\ge -H_{\max}^{\frac{\varepsilon\sqrt{\delta}}{4}-\eta} (A'^{n}_k|B_k^{n})_{\omega_{A_kB_k}^{\otimes n}}+\log_2\!\left(\eta^4(1-\delta)\right)\\
        &\eqt{(i)} -H_{\max}^{\frac{\varepsilon\sqrt{\delta}}{4}-\eta} (A'^{n}_k|B_k^{n})_{(\Psi^{(k)}_{A_kB_kEE'})^{\otimes n}}+\log_2\!\left(\eta^4(1-\delta)\right)\\
        &\eqt{(ii)} H_{\min}^{\frac{\varepsilon\sqrt{\delta}}{4}-\eta} (A'^{n}_k|E^{n}E'^{n})_{(\Psi^{(k)}_{A_kB_kEE'})^{\otimes n}}+\log_2\!\left(\eta^4(1-\delta)\right)\\
        &\geqt{(iii)} -D_{\max}^{\frac{\varepsilon\sqrt{\delta}}{4}-\eta}\left((\Psi^{(k)}_{A_kEE'})^{\otimes n}\|\mathbb{1}_{A_k^n}\otimes (\Psi^{(k)}_{EE'})^{\otimes n}\right)+\log_2\!\left(\eta^4(1-\delta)\right)\,.
    \ee
    Here, in (i) we just used the fact that
    $\omega_{A_kB_k}=\Tr_{EE'}\Psi^{(k)}_{A_kB_kEE'}$ together with our notation of omitting partial traces in the subscript of entropies; in (ii) we used the duality relation between the smooth conditional max-entropy and the smooth conditional min-entropy reported in Lemma~\ref{lemma_duality}; and in (iii) we leveraged the definition of smooth conditional min-entropy in Definition~\ref{def_smooth_min_ent}.

    We are now ready to apply the asymptotic equipartition property reported in Lemma~\ref{lemma_fawzi}. Note that the hypothesis of Lemma~\ref{lemma_fawzi} is satisfied because the operator $\sigma\coloneqq\mathbb{1}_{A_k}\otimes \Psi^{(k)}_{EE'}$ has bounded trace (specifically, its trace equals $\binom{m_A+k}{m_A}$). Consequently, we obtain that
    \bb\label{ineq_proof_lower}
        Q^{(\varepsilon,n)}(\NN)&\ge -n D\left(\Psi^{(k)}_{A_kEE'}\|\mathbb{1}_{A_k}\otimes \Psi^{(k)}_{EE'}\right)\\
        &\quad-{\sqrt{n}}4\log_2\!\left( \sqrt{2^{D_{3/2}(\Psi_{A_k EE'}^{(k)}\|\mathbb{1}_{A_k}\otimes\Psi_{EE'}^{(k)})}}+\sqrt{2^{-D_{1/2}(\Psi_{A_kEE'}^{(k)}\|\mathbb{1}_{A_k}\otimes\Psi_{EE'}^{(k)})}}+1 \right)\sqrt{\log_2\!\left(\frac{2}{\left(  \frac{\varepsilon\sqrt{\delta}}{4}-\eta  \right)^2}\right)}+\log_2\!\left(\eta^4(1-\delta)\right)\,.
    \ee
    The above inequality holds for any $k$. Now, we would like to take the limit $k\rightarrow\infty$ in the above inequality in order to make the following substitution 
    \bb
        \Psi_{A_kB_kEE'}^{(k)}\leftrightarrow  \Psi_{ABEE'} \coloneqq   \Psi_{ABE}^\dagger\otimes\ketbra{0}_{E'}\,,
    \ee
    where $\Psi_{ABE}\coloneqq V_{A'\to BE}\Phi_{AA'}V_{A'\to BE}$. This limit procedure is far from trivial and requires careful analysis of the entropic quantities involved in \eqref{ineq_proof_lower}. Let us start by showing that the limit $\Psi_{ABEE'}^{(k)}\xrightarrow{k\rightarrow\infty}  \Psi_{ABEE'}$ holds in trace norm. To this end, let us observe that the state $\Psi_{A_kB_kEE'}^{(k)}$ can be written as
    \bb\label{eq_def_psik_proof}
        \ket{\Psi_{A_kB_kEE'}^{(k)}}= V_{A\to A_k}\otimes V_{B\to B_k}\otimes \mathbb{1}_{EE'}\ket{\Psi_{ABEE'}^{(k)}}\,,
    \ee
    where we introduced 
    \bb
        \ket{\Psi_{ABEE'}^{(k)}}\coloneqq \frac{P_k^A\otimes W_{B\to BE'}^{(k)}\ket{\Psi_{ABE}}}{\sqrt{\Tr[P_k^A \Psi_{A}]}}\,,
    \ee
    and we used the vectorial notation $\psi=\ketbra{\psi}$ for any pure state $\psi$. Since it will be useful for the following, we are now going to prove not only that the convergence $\Psi_{ABEE'}^{(k)}\xrightarrow{k\rightarrow\infty}  \Psi_{ABEE'}$ holds in trace norm but also that such a convergence is exponentially fast in $k$:
    \bb\label{proof_exp_convergence}
        \|\Psi_{ABEE'}^{(k)}-\Psi_{ABEE'}\|_1&=2\sqrt{1-\Tr[\Psi_{ABEE'}^{(k)}\Psi_{ABEE'}]}\\
        &=2\sqrt{1-\frac{\Tr[\Psi_{AB}P_k^A\otimes P_k^B]}{\Tr[\Psi_{A}P_k^A]}}\\
        &\le 2\sqrt{1-\Tr[\Psi_{AB}P_k^A\otimes P_k^B]}\\
        &= 2\sqrt{\Tr[\Psi_{AB}(\mathbb{1}_A-P_k^A)\otimes P_k^B]+\Tr[\Psi_{AB}P_k^A\otimes (\mathbb{1}_B-P_k^B)]+ \Tr[\Psi_{AB}(\mathbb{1}_A-P_k^A)\otimes (\mathbb{1}_B-P_k^B)]} \\
        &\leqt{(i)}2\sqrt{\Tr[\Psi_{A}(\mathbb{1}_A-P_k^A)]+\Tr[\Psi_{B} (\mathbb{1}_B-P_k^B)]+ \sqrt{\Tr[\Psi_{A}(\mathbb{1}_A-P_k^A)]\Tr[\Psi_{A}(\mathbb{1}_A-P_k^A)]}}\\
        &\le 2\left(\sqrt{ \Tr[\Psi_{A}(\mathbb{1}_A-P_k^A)]}+ \sqrt{\Tr[\Psi_{B}(\mathbb{1}_B-P_k^B)] )}\right)\\
        &\eqt{(ii)} 2\left( c_{\Psi_A}e^{-b_{\Psi_A}k} + c_{\Psi_B}e^{-b_{\Psi_B}k} \right)\,.
    \ee
    Here, in (i), we exploited that $P_k\le \mathbb{1}$ and Cauchy-Schwarz inequality; in (ii), we employed Theorem~\ref{the:expdecay}, which implies that for any Gaussian state $\rho$ there exist constants $c_\rho$ and $b_\rho$ such that $\sqrt{\Tr[\rho(\mathbb{1}-P_k)]}\le c_\rho e^{-b_\rho k}$ for any $k\in\mathbb{N}$. Consequently, we conclude that
    \bb\label{lim_trace_norm_conv}
        \lim\limits_{k\rightarrow\infty} \|\Psi_{ABEE'}^{(k)}-\Psi_{ABEE'}\|_1=0\,,
    \ee
    and that such a convergence is exponentially fast in $k$.

    Note that can rewrite the entropic quantities involved in \eqref{ineq_proof_lower} in terms of $\Psi_{ABEE'}^{(k)}$ as follows:
    \bb
        D\left(\Psi^{(k)}_{A_kEE'}\|\mathbb{1}_{A_k}\otimes \Psi^{(k)}_{EE'}\right)&= D\left(\Psi^{(k)}_{AEE'}\|\mathbb{1}_{A}\otimes \Psi^{(k)}_{EE'}\right)\,,\\
        D_{1/2}(\Psi_{A_kEE'}^{(k)}\|\mathbb{1}_{A_k}\otimes\Psi_{EE'}^{(k)})&=D_{1/2}(\Psi_{AEE'}^{(k)}\|\mathbb{1}_{A}\otimes\Psi_{EE'}^{(k)})\,,\\
        D_{3/2}(\Psi_{A_k EE'}^{(k)}\|\mathbb{1}_{A_k}\otimes\Psi_{EE'}^{(k)})&=D_{3/2}(\Psi_{A EE'}^{(k)}\|\mathbb{1}_{A}\otimes\Psi_{EE'}^{(k)})\,.
    \ee
    Hence, in order to prove the theorem, it suffices to show that
    \bb\label{limit_to_prove}
        \lim\limits_{k\rightarrow\infty} D\left(\Psi^{(k)}_{AEE'}\|\mathbb{1}_{A}\otimes \Psi^{(k)}_{EE'}\right)&=-I_c(A\,\rangle\, B)_{\Psi}\,,\\
        \lim\limits_{k\rightarrow\infty} D_{1/2}(\Psi_{AEE'}^{(k)}\|\mathbb{1}_{A}\otimes\Psi_{EE'}^{(k)})&= -H_{1/2}(A|E)_{\Psi}\,, \\
        \lim\limits_{k\rightarrow\infty} D_{3/2}(\Psi_{A EE'}^{(k)}\|\mathbb{1}_{A}\otimes\Psi_{EE'}^{(k)})&= H_{1/2}(A|B)_{\Psi}\,.
    \ee

    Let us start with the proof of the first limit in \eqref{limit_to_prove}. The term $D\left(\Psi^{(k)}_{AEE'}\|\mathbb{1}_{A}\otimes \Psi^{(k)}_{EE'}\right)$ can be expressed in terms of the conditional entropy as:
    \bb
        D\left(\Psi^{(k)}_{AEE'}\|\mathbb{1}_{A}\otimes \Psi^{(k)}_{EE'}\right)&=-S(\Psi^{(k)}_{AEE'})+S(\Psi^{(k)}_{EE'})\\
        &=-S(\Psi^{(k)}_{B})+S(\Psi^{(k)}_{AB})\\
        &=S(A|B)_{ \Psi^{(k)}_{AB} }\,.
    \ee
    Moreover, by exploiting the monotonicity of the trace norm under partial traces~\cite{MARK} together with the fact that the limit $\Psi_{ABEE'}^{(k)}\xrightarrow{k\rightarrow\infty}  \Psi_{ABEE'}$ holds in trace norm (as proved above in \eqref{lim_trace_norm_conv}), we have that the limit 
    $\Psi_{AB}^{(k)}\xrightarrow{k\rightarrow\infty}  \Psi_{AB}$ holds in trace norm. In addition, since $\Psi_{AB}$ is a Gaussian state, it follows that both $\Psi_{AB}$ and $\Psi_{AB}^{(k)}$ have bounded mean photon number. Consequently, since the conditional entropy is continuous with respect the trace norm over the set of states having bounded mean photon number~\cite{tightuniform}, it follows that 
    \bb\label{lim_sab_k_inf}
        \lim\limits_{k\rightarrow\infty}S(A|B)_{ \Psi^{(k)}_{AB} }=S(A|B)_{ \Psi_{AB} }\,,
    \ee
    and we can thus conclude that 
    \bb
        \lim\limits_{k\rightarrow\infty} D\left(\Psi^{(k)}_{AEE'}\|\mathbb{1}_{A}\otimes \Psi^{(k)}_{EE'}\right)=S(A|B)_{ \Psi }=-I_c(A\,\rangle\, B)_{\Psi}\,.
    \ee
    Let us now prove the second limit in \eqref{limit_to_prove}:
    \bb
        \lim\limits_{k\rightarrow\infty} D_{1/2}(\Psi_{AEE'}^{(k)}\|\mathbb{1}_{A}\otimes\Psi_{EE'}^{(k)})&= -H_{1/2}(A|E)_{\Psi}\,.
    \ee
    Thanks to the definition of conditional Petz-Rényi entropy and to the fact that $\Psi_{AEE'}=\Psi_{AE}\otimes\ketbra{0}_{E'}$, it suffices to prove that
    \bb\label{eq_trace}    \lim\limits_{k\rightarrow\infty}\Tr\left[\sqrt{\Psi_{AEE'}^{(k)}}\, \mathbb{1}_A\otimes\sqrt{\Psi_{EE'}^{(k)}} \right]=\Tr\left[\sqrt{\Psi_{AEE'}} \,\mathbb{1}_A\otimes\sqrt{\Psi_{EE'}} \right]\,.
    \ee
    Note that $\Tr\left[\sqrt{\Psi_{AEE'}^{(k)}}\, \mathbb{1}_A\otimes\sqrt{\Psi_{EE'}^{(k)}} \right]=\Tr\left[\sqrt{\Psi_{AEE'}^{(k)}} P_k^A\otimes\sqrt{\Psi_{EE'}^{(k)}} \right]$ and
    
    \bb
        \Tr\left[\sqrt{\Psi_{AEE'}^{(k)}}\, P_k^A\otimes\sqrt{\Psi_{EE'}^{(k)}} \right]&=\Tr\left[\left(\sqrt{\Psi_{AEE'}^{(k)}}-\sqrt{\Psi_{AEE'}}\right)\, P_k^A\otimes\sqrt{\Psi_{EE'}^{(k)}} \right]
        \\&\quad+\Tr\left[\sqrt{\Psi_{AEE'}}\, \left(P_k^A-\mathbb{1}_A\right)\otimes\sqrt{\Psi_{EE'}} \right]
        \\&\quad +  \Tr\left[\sqrt{\Psi_{AEE'}}\, P_k^A\otimes\left(\sqrt{\Psi_{EE'}^{(k)}}-\sqrt{\Psi_{EE'}}\right) \right]
        \\&\quad + \Tr\left[\sqrt{\Psi_{AEE'}}\, \mathbb{1}_A\otimes\sqrt{\Psi_{EE'} } \right]\,.
    \ee
    Hence, in order to show \eqref{eq_trace}, it suffices to show that
    \bb\label{eq_a_petz12}
        \lim\limits_{k\rightarrow\infty}\left| \Tr\left[\left(\sqrt{\Psi_{AEE'}^{(k)}}-\sqrt{\Psi_{AEE'}}\right)\, P_k^A\otimes\sqrt{\Psi_{EE'}^{(k)}} \right] \right|&=0\,,
    \ee
    \bb\label{eq_b_petz12}
        \lim\limits_{k\rightarrow\infty}\left| \Tr\left[\sqrt{\Psi_{AEE'}}\, \left(P_k^A-\mathbb{1}_A\right)\otimes\sqrt{\Psi_{EE'}} \right]\right|&=0\,,
    \ee
    \bb\label{eq_c_petz12}
        \lim\limits_{k\rightarrow\infty}\left| \Tr\left[\sqrt{\Psi_{AEE'}}\, P_k^A\otimes\left(\sqrt{\Psi_{EE'}^{(k)}}-\sqrt{\Psi_{EE'}}\right) \right] \right|&=0\,.
    \ee
    Let us start by showing the validity of~\eqref{eq_a_petz12}:
    \bb\label{step_proof_eq_a}
        \left| \Tr\left[\left(\sqrt{\Psi_{AEE'}^{(k)}}-\sqrt{\Psi_{AEE'}}\right)\, P_k^A\otimes\sqrt{\Psi_{E}^{(k)}} \right] \right|&\leqt{(iii)}\left\|\sqrt{\Psi_{AEE'}^{(k)}}-\sqrt{\Psi_{AEE'}} \right\|_2 \left\| P_k^A\otimes\sqrt{\Psi_{EE'}^{(k)}} \right\|_2
        \\&\eqt{(iv)} \left\|\sqrt{\Psi_{AEE'}^{(k)}}-\sqrt{\Psi_{AEE'}} \right\|_2 \sqrt{\binom{k+m_A}{m_A}}
        \\&\leqt{(v)} \sqrt{\binom{k+m_A}{m_A}\| \Psi_{AEE'}^{(k)}- \Psi_{AEE'}   \|_1} \\
        &\leqt{(vi)} \sqrt{\binom{k+m_A}{m_A}\| \Psi_{ABEE'}^{(k)}- \Psi_{ABEE'}   \|_1}\\
        &\leqt{(vii)}\sqrt{\binom{k+m_A}{m_A}2\left( c_{\Psi_A}e^{-b_{\Psi_A}k} + c_{\Psi_B}e^{-b_{\Psi_B}k} \right)}\xrightarrow{k\rightarrow\infty} 0\,.
    \ee
    Here, in (iii) we used H\"older's inequality. In (iv), we exploited that $\left\|P_k^A\otimes \sqrt{\Psi_{EE'}^{(k)}}\right\|_2=\sqrt{\Tr[P_k^A]\Tr[\Psi_{EE'}^{(k)}]}=\sqrt{\binom{k+m_A}{m_A}}$, where in the last equality we employed \eqref{dim_hms}. In (v), we used that for any positive semi-definite operators $A,B\ge0 $ it holds that~\cite[Section X.6]{bhatia97}
    \bb\label{bathia_inequality}
        \|A^{1/t}-B^{1/t}\|_{tp}\le \|A-B\|^{1/t}_{p}\quad\text{for all $p,t\ge 1$\,.}
    \ee
    In (vi), we used the monotonicity of the trace norm under partial traces. Finally, in (vii), we exploited \eqref{proof_exp_convergence}.

    Now, let us show~\eqref{eq_b_petz12}. We have that
    \bb
        \left| \Tr\left[\sqrt{\Psi_{AEE'}}\, \left(P_k^A-\mathbb{1}_A\right)\otimes\sqrt{\Psi_{EE'}} \right]\right|&=\left| \Tr\left[\left(P_k^A-\mathbb{1}_A\right)\otimes\sqrt{\Psi_{EE'}} 
 \sqrt{\Psi_{AEE'}}\right]\right|
 \\&=\left|\sum_{\substack{\textbf{i}\in\mathbb{N}^{m_B}\\\sum_{j=1}^{m_B}i_j \not\le k}}\bra{\textbf{i}}_A \Tr_{EE'}\left[ \mathbb{1}_A\otimes\sqrt{\Psi_{EE'}}\, 
 \sqrt{\Psi_{AEE'}}\right]\ket{\textbf{i}}_A  \right|\xrightarrow{k\rightarrow\infty}0\,,
    \ee
where in the last line we exploited that the series
\bb
    \sum_{\substack{\textbf{i}\in\mathbb{N}^{m_B}\\\sum_{j=1}^{m_B}i_j \le k}}\bra{\textbf{i}}_A \Tr_{EE'}\left[ \mathbb{1}_A\otimes\sqrt{\Psi_{EE'}}\, 
 \sqrt{\Psi_{AEE'}}\right]\ket{\textbf{i}}_A
\ee
converges to $\Tr\left[\sqrt{\Psi_{AEE'}}\, \mathbb{1}_A\otimes\sqrt{\Psi_{EE'} } \right]$ for $k\rightarrow\infty$, which is a finite quantity because $\Psi_{AEE'}$ is a Gaussian state. Indeed, for any Gaussian state $\rho_{AB}$ it holds that
\bb
    \Tr[\sqrt{\rho_{AB}}\,\mathbb{1}_A\otimes \sqrt{\rho_B}]\le \Tr[\sqrt{\rho_{AB}}]<\infty\,,
\ee
where in the first inequality we exploited that $ \sqrt{\sigma}\le\mathbb{1}$ for any state $\sigma$, and in the second inequality we employed Lemma~\ref{finite_sqrt}, which states that $\Tr\sqrt{\sigma}<\infty$ for any Gaussian state $\sigma$.

Now, let us show~\eqref{eq_c_petz12}:
\bb
    \left|\Tr\left[\sqrt{\Psi_{AEE'}}\, P_k^A\otimes\left(\sqrt{\Psi_{EE'}^{(k)}}-\sqrt{\Psi_{EE'}}\right) \right] \right|&\leqt{(i)} \left\| \sqrt{\Psi_{AEE'}} \right\|_2\left\| P_k^A\otimes\left( \sqrt{\Psi_{EE'}^{(k)}}-\sqrt{\Psi_{EE'}} \right) \right\|_2
    \\&\eqt{(ii)} \sqrt{\binom{k+m_A}{m_A}}\left\| \sqrt{\Psi_{EE'}^{(k)}}-\sqrt{\Psi_{EE'}} \right\|_2
    \\&\leqt{(iii)} \sqrt{\binom{k+m_A}{m_A}\|\Psi_{EE'}^{(k)}-\Psi_{EE'}\|_1}
    \\&\leqt{(iv)} \sqrt{\binom{k+m_A}{m_A}\|\Psi_{ABEE'}^{(k)}-\Psi_{ABEE'}\|_1}\\
    &\leqt{(v)}\sqrt{\binom{k+m_A}{m_A}2\left( c_{\Psi_A}e^{-b_{\Psi_A}k} + c_{\Psi_B}e^{-b_{\Psi_B}k} \right)}\xrightarrow{k\rightarrow\infty} 0\,.
\ee
Here, in (i), we used H\"older's inequality. In (ii), we exploited the multiplicativity of the 2-norm under tensor product, Eq.~\eqref{dim_hms}, and the fact that $\|\sqrt{\Psi_{AEE'}}\|_2=1$.
In (iii), we used the inequality in \eqref{bathia_inequality}. In (iv), we leveraged the monotonicity of the trace norm under partial traces. Finally, in (v), we exploited \eqref{proof_exp_convergence}. This concludes the proof of the second limit in \eqref{limit_to_prove}.

Now, we only need to prove the third limit in \eqref{limit_to_prove}:
    \bb
        \lim\limits_{k\rightarrow\infty} D_{3/2}(\Psi_{A EE'}^{(k)}\|\mathbb{1}_{A}\otimes\Psi_{EE'}^{(k)})&= H_{1/2}(A|B)_{\Psi}\,.
    \ee
Since the definition of conditional Petz-Rényi entropy and the duality relation in Lemma~\ref{duality_conditional_entropies} imply that
    \bb
        D_{3/2}(\Psi_{A EE'}^{(k)}\|\mathbb{1}_{A}\otimes\Psi_{EE'}^{(k)})=-H_{3/2}(A|EE')_{\Psi^{(k)}}=H_{1/2}(A|B)_{\Psi^{(k)}}\,,
    \ee
we only need to show that
\bb
    \lim\limits_{k\rightarrow\infty} H_{1/2}(A|B)_{\Psi^{(k)}}=  H_{1/2}(A|B)_{\Psi}\,,
\ee
which is equivalent to prove that
\bb    
    \lim\limits_{k\rightarrow\infty}\Tr\left[\sqrt{\Psi_{AB}^{(k)}}\, \mathbb{1}_A\otimes\sqrt{\Psi_{B}^{(k)}} \right]=\Tr\left[\sqrt{\Psi_{AB}} \,\mathbb{1}_A\otimes\sqrt{\Psi_{B}} \right]\,.
\ee
The proof of this limit is exactly the same as the limit in \eqref{eq_trace}, i.e.~$\lim\limits_{k\rightarrow\infty}\Tr\left[\sqrt{\Psi_{AEE'}^{(k)}}\, \mathbb{1}_A\otimes\sqrt{\Psi_{EE'}^{(k)}} \right]=\Tr\left[\sqrt{\Psi_{AEE'}} \,\mathbb{1}_A\otimes\sqrt{\Psi_{EE'}} \right]$. Indeed, to prove the latter, we only exploited the fact that $\Psi_{AEE'}$ is Gaussian and that the convergence $\Psi_{ABEE'}^{(k)}\xrightarrow{k\rightarrow\infty}\Psi_{ABEE'}$ is exponentially fast in trace norm.
\end{proof}
\subsection{Two-way quantum capacity and secret-key capacity}
In this paragraph, we establish a lower bound on the $n$-shot two-way quantum capacity and one-shot secret-key capacity of Gaussian channels, presented in the forthcoming Theorem~\ref{thm_lower_bound}.

We start by stating the following result regarding the \emph{one-shot distillable entanglement} and the \emph{one-shot distillable key} of \emph{finite-dimensional} states~\cite{Sumeet_book}. Roughly speaking, the one-shot distillable entanglement of a bipartite quantum state $\rho_{AB}$ is the maximum number of ebits that can be distilled from $\rho_{AB}$ with an error not exceeding $\varepsilon$ by exploiting LOCCs. Analogously, the one-shot distillable key of a bipartite quantum state $\rho_{AB}$ is the maximum number of secret-key bits that  can be generated from $\rho_{AB}$ with an error not exceeding $\varepsilon$ by exploiting LOCCs. We refer to~\cite{Sumeet_book} for rigorous definitions of one-shot distillable entanglement and one-shot distillable key. 
\begin{lemma}[(Lower bound on the one-shot distillable entanglement and one-shot distillable key~\cite{Sumeet_book})]\label{Lemma_Sumeet_book_2}
    Let $\rho_{AB}$ be a bipartite quantum state on finite-dimensional systems $A$ and $B$. For all $\varepsilon\in(0,1)$ and $\eta\in[0,\sqrt{\varepsilon})$, the one-shot distillable key $E_K^{(\varepsilon)}$ and the one-shot distillable entanglement $E_D^{(\varepsilon)}$ of $\rho_{AB}$ satisfy the following inequality:
    \bb\label{rate_dist}
        E^{(\varepsilon)}_K(\rho) \ge E^{(\varepsilon)}_D(\rho)\ge -H_{\max}^{\sqrt{\varepsilon}-\eta} (A|B)_{\rho}+4\log_2\eta\,.
    \ee 
\end{lemma}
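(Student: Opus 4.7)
The plan is to prove the two inequalities in~\eqref{rate_dist} separately. The implication $E_K^{(\varepsilon)}(\rho)\ge E_D^{(\varepsilon)}(\rho)$ is immediate from the definitions: any protocol that outputs a state $\eta$ with $F(\eta,\Gamma_{|M|})\ge 1-\varepsilon$ is automatically a secret-key protocol, because $\Gamma_{|M|}\otimes\ketbra{0}_{S_AS_B}$ is a private state of dimension $|M|$ and the fidelity bound is preserved when we append the trivial ancilla. Hence every $(|M|,\varepsilon)$ entanglement-distillation protocol gives rise to an $(|M|,\varepsilon)$ secret-key protocol with the same $|M|$.

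For the nontrivial bound $E_D^{(\varepsilon)}(\rho)\ge -H_{\max}^{\sqrt\varepsilon-\eta}(A|B)_\rho+4\log_2\eta$, I would follow the standard one-shot decoupling route. First, invoke Definition~\ref{smooth_max_conditional} to pick a state $\tilde\rho_{AB}$ with sine distance $P(\rho,\tilde\rho)\le\sqrt\varepsilon-\eta$ attaining $H_{\max}^0(A|B)_{\tilde\rho}=H_{\max}^{\sqrt\varepsilon-\eta}(A|B)_\rho$, and fix a purification $\tilde\psi_{ABE}$; by Lemma~\ref{lemma_duality} one then has $H_{\min}^0(A|E)_{\tilde\psi}=-H_{\max}^0(A|B)_{\tilde\rho}$. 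Next, split $A=A_1A_2$ with $k\coloneqq\log_2|A_1|$, apply a Haar-random unitary on $A$, and discard $A_2$; the one-shot decoupling theorem (applied to $\tilde\psi_{AE}$) then yields
\begin{equation*}
\mathbb{E}_U\bigl\|\Tr_{A_2}\bigl(U\tilde\psi_{AE}U^\dagger\bigr)-\pi_{A_1}\otimes\tilde\psi_E\bigr\|_1\le 2^{(k+H_{\min}^0(A|E)_{\tilde\psi})/2}=2^{(k-H_{\max}^0(A|B)_{\tilde\rho})/2},
\end{equation*}
which is bounded by $\eta^2$ exactly when $k\le -H_{\max}^{\sqrt\varepsilon-\eta}(A|B)_\rho+4\log_2\eta$. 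Finally, since the marginal on $A_1E$ is $\eta^2$-close to the product $\pi_{A_1}\otimes\tilde\psi_E$, Uhlmann's theorem furnishes a local isometry on Bob's side (the decoder) mapping the purification on $A_1B$ to one that is close to the maximally entangled state $\Gamma_{|A_1|}$. Derandomizing the expectation gives a single unitary plus decoder, i.e.~a one-way LOCC protocol distilling $k$ ebits with the claimed error.

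The main technical obstacle, and the source of the precise factor of $4\log_2\eta$, is the bookkeeping across four different distance measures. The decoupling theorem outputs a trace distance $\eta^2$; Uhlmann's theorem converts this into a purified/sine distance of order $\eta$ on Bob's reconstruction via a square-root loss; composing with the smoothing error $\sqrt\varepsilon-\eta$ through the triangle inequality for sine distance gives an overall sine distance of $\sqrt\varepsilon$; and passing from sine distance to fidelity through $P=\sqrt{1-F}$ yields the $\varepsilon$-fidelity guarantee demanded by the definition of $E_D^{(\varepsilon)}$. Each of the two square-root conversions (Uhlmann and sine-to-fidelity) costs a factor of $2$ in $\log_2\eta$, and together they produce the exponent coefficient $4$ appearing in the statement.
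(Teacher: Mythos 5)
First, note that the paper does not prove this lemma at all: it is imported verbatim from~\cite{Sumeet_book}, where it is the one-shot achievability bound for entanglement distillation established via decoupling. Your sketch therefore reconstructs the textbook argument rather than competing with anything in this manuscript. The architecture you describe --- smoothing, the duality $H^0_{\max}(A|B)_{\tilde\rho}=-H^0_{\min}(A|E)_{\tilde\psi}$, decoupling of $A_1$ from $E$, Uhlmann to build Bob's decoder, and the two square-root losses that account for the factor $4\log_2\eta$ --- is indeed the right one, and the reduction $E^{(\varepsilon)}_K\ge E^{(\varepsilon)}_D$ via ``a maximally entangled state is a private state'' is exactly how the paper argues the analogous channel-level inequality~\eqref{ineq_nshot_q_k}.

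Two points need repair. First, your displayed decoupling bound is inconsistent with the threshold you draw from it: $2^{(k-H^0_{\max}(A|B)_{\tilde\rho})/2}\le\eta^2$ is equivalent to $k\le H^0_{\max}(A|B)_{\tilde\rho}+4\log_2\eta$, i.e.\ with the \emph{wrong} sign on the max-entropy. The decoupling exponent must be $\tfrac12\bigl(k-H^0_{\min}(A|E)_{\tilde\psi}\bigr)=\tfrac12\bigl(k+H^0_{\max}(A|B)_{\tilde\rho}\bigr)$ --- the error has to decrease as $H_{\min}(A|E)$ grows --- and only with that sign does the condition ``$\le\eta^2$'' become $k\le -H^{\sqrt{\varepsilon}-\eta}_{\max}(A|B)_{\rho}+4\log_2\eta$ as claimed. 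Second, and more substantively, the protocol you describe is not LOCC. After Alice traces out $A_2$, the purifying system of $\tilde\psi_{A_1E}$ is $A_2B$, not $B$; Uhlmann's theorem then supplies an isometry acting jointly on $A_2B$, which Bob cannot implement locally. The standard fix --- and the reason the result concerns \emph{LOCC}-assisted distillation rather than purely local processing --- is that Alice measures $A_2$ in a fixed basis and classically communicates the outcome to Bob; one must then invoke the decoupling statement for this measure-and-forget map, so that the outcome-conditioned states on $A_1E$ are close to product on average, after which Bob's outcome-dependent Uhlmann isometry is a legitimate local decoder. Your sketch asserts ``one-way LOCC'' in its last line but never introduces the classical message, so as written the decoder is nonlocal and the protocol does not certify a lower bound on $E^{(\varepsilon)}_D$.
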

Thanks to the above lemma, we can easily derive a lower bound on the $n$-shot two-way quantum capacity and $n$-shot secret-key capacity of finite-dimensional quantum channels.
\begin{lemma}[(Lower bound on the $n$-shot two-way quantum capacity and one-shot secret-key capacity of finite-dimensional channels)]\label{lemma_lower_2_way}
    Let $\NN_{A'\to B} $ be a quantum channel, where $A'$ and $B$ denote finite-dimensional systems representing the input and the output of the channel, respectively. Let $\varepsilon\in(0,1)$, $\eta\in[0,\sqrt{\varepsilon})$, $n\in\mathbb{N}$, and let $\psi_{AA'}$ be a bipartite pure state on the input $A'$ and on an arbitrary ancillary system $A$. Then, the $n$-shot two-way quantum capacity and the $n$-shot secret-key capacity of $\NN$ satisfy the following inequalities:
    \bb\label{first_lower}
        K^{(\varepsilon,n)}(\NN)&\ge Q_2^{(\varepsilon,n)}(\NN)\ge -H_{\max}^{\sqrt{\varepsilon}-\eta} (A^n|B^n)_{\omega_{AB}^{\otimes n}}+4\log_2\eta\,,
    \ee
    \bb\label{second_lower}
        K^{(\varepsilon,n)}(\NN)&\ge Q_2^{(\varepsilon,n)}(\NN)\ge -H_{\max}^{\sqrt{\varepsilon}-\eta} (B^n|A^n)_{\omega_{AB}^{\otimes n}}+4\log_2\eta\,,
    \ee 
    where $\omega_{AB}\coloneqq\NN_{A'\to B}(\psi_{AA'})$, while $A^n$ and $B^n$ denote $n$ copies of systems $A$ and $B$, respectively.
\end{lemma}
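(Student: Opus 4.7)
The plan is to reduce the lower bounds on the $n$-shot two-way quantum and secret-key capacities of the channel $\NN$ to the one-shot distillation bounds of Lemma~\ref{Lemma_Sumeet_book_2} by exhibiting a specific LOCC-assisted protocol. Specifically, Alice will locally prepare $n$ copies of the pure input state $\psi_{AA'}$, send the $A'^n$ registers through $n$ parallel uses of $\NN$, and then run an optimal LOCC entanglement (or key) distillation protocol on the resulting bipartite state. Because $K^{(\varepsilon,n)}(\NN)\ge Q_2^{(\varepsilon,n)}(\NN)$ has already been established in Eq.~\eqref{ineq_nshot_q_k}, it suffices to prove the stated bound for $Q_2^{(\varepsilon,n)}(\NN)$.

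First, I would formalise the protocol by matching it to the structure of Definition~\ref{def_locc_protocol}. The initial LOCC channel $\mathcal{L}^{(0)}$ consists simply of Alice preparing $\psi_{AA'}^{\otimes n}$ on her side (keeping the $A$ registers and routing the $A'$ registers as inputs to the $n$ channel uses), the intermediate LOCCs $\mathcal{L}^{(j)}$ just pass forward the channel outputs without interaction (this is legitimate since $n$ parallel uses can be absorbed into a single block), and the final LOCC $\mathcal{L}^{(n)}$ is an optimal entanglement distillation protocol for the state $\omega_{AB}^{\otimes n}$ where $\omega_{AB}\coloneqq \NN_{A'\to B}(\psi_{AA'})$. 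By construction, this is a valid $(n,|M|,\varepsilon)$ LOCC-assisted quantum communication protocol whenever $\log_2|M|$ does not exceed the one-shot distillable entanglement $E_D^{(\varepsilon)}(\omega_{AB}^{\otimes n})$, so
\begin{equation}
Q_2^{(\varepsilon,n)}(\NN)\;\geq\; E_D^{(\varepsilon)}\!\left(\omega_{AB}^{\otimes n}\right).
\end{equation}
Applying Lemma~\ref{Lemma_Sumeet_book_2} to the bipartite finite-dimensional state $\omega_{AB}^{\otimes n}$ yields
\begin{equation}
E_D^{(\varepsilon)}\!\left(\omega_{AB}^{\otimes n}\right)\;\geq\; -H_{\max}^{\sqrt{\varepsilon}-\eta}(A^n|B^n)_{\omega_{AB}^{\otimes n}} + 4\log_2\eta\,,
\end{equation}
which proves the inequality~\eqref{first_lower}.

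For the second bound \eqref{second_lower}, I would invoke the symmetry of LOCC protocols under exchange of Alice and Bob. Distillable entanglement and distillable key are invariant under swapping the two parties, since any LOCC strategy for $\rho_{AB}$ induces an LOCC strategy for the swapped state $\mathrm{SWAP}(\rho_{AB})_{BA}$ by relabelling. Therefore Lemma~\ref{Lemma_Sumeet_book_2} applied to the same product state $\omega_{AB}^{\otimes n}$ but with the roles of $A^n$ and $B^n$ exchanged gives the alternative lower bound
\begin{equation}
E_D^{(\varepsilon)}\!\left(\omega_{AB}^{\otimes n}\right)\;\geq\; -H_{\max}^{\sqrt{\varepsilon}-\eta}(B^n|A^n)_{\omega_{AB}^{\otimes n}} + 4\log_2\eta\,,
\end{equation}
from which \eqref{second_lower} follows by the same protocol construction as above. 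The main (though mild) subtlety to handle carefully is justifying that the parallel use of $n$ copies of $\NN$ fits within the sequential LOCC-round structure of Definition~\ref{def_locc_protocol}; this is standard but should be made explicit to keep the proof self-contained.
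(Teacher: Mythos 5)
Your proposal is correct and follows essentially the same route as the paper: prepare $\psi_{AA'}^{\otimes n}$, send the $A'$ halves through $n$ uses of $\NN$ to obtain $\omega_{AB}^{\otimes n}$, apply the one-shot distillation bound of Lemma~\ref{Lemma_Sumeet_book_2}, and obtain the second inequality by exchanging the roles of Alice and Bob. The only difference is that you spell out the embedding of the parallel-use protocol into the sequential LOCC-round structure of Definition~\ref{def_locc_protocol}, which the paper leaves implicit.
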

\begin{proof}
    The inequality $K^{(\varepsilon,n)}(\NN)\ge Q_2^{(\varepsilon,n)}(\NN)$ always holds due to Eq.~\eqref{ineq_nshot_q_k}. To prove that Eq.~\eqref{first_lower}, consider the following $(n,|M|,\varepsilon)$ LOCC-assisted quantum communication protocol over $\NN_{A'\to B}$, where $\log_2|M|\ge -H_{\max}^{\sqrt{\varepsilon}-\eta} (A^n|B^n)_{\omega_{AB}^{\otimes n}}+4\log_2\eta$:
    \begin{itemize}
        \item Step (i): Alice sends the half $A'$
        of the state $\psi_{AA'}$  to Bob via the channel $\NN_{A'\to B}$ for $n$ times. This results in them sharing the state $\omega_{AB}^{\otimes n}$.
        \item Step (ii): Alice and Bob apply the one-shot entanglement distilation protocol on $\omega_{AB}^{\otimes n}$ that achieves the performance stated in Lemma~\ref{Lemma_Sumeet_book_2}. 
    \end{itemize}
Since $Q_2^{(\varepsilon,n)}$ is defined as the maximum $\log_2|M|$ achievable by any $(n,|M|,\varepsilon)$ LOCC-assisted quantum communication protocol over $\NN_{A'\to B}$, it follows that Eq.~\eqref{first_lower} holds.

To prove \eqref{second_lower}, it suffices to consider a similar $(n,|M|,\varepsilon)$ LOCC-assisted quantum communication protocol over $\NN_{A'\to B}$, where in step (ii), Alice and Bob exchange their roles. 
\end{proof}
We are now ready to present our lower bounds.
\begin{thm}[(Lower bound on the $n$-shot two-way quantum capacity and $n$-shot secret-key capacity of Gaussian channels)]\label{thm_lower_bound_2way}
Let $\NN$ be a single-mode Gaussian channel. Let $\varepsilon\in(0,1)$, $\eta\in[0,\sqrt{\varepsilon})$, and $n\in\N$ such that $n\ge2\log_2\!\left(\frac{2}{\varepsilon^2}\right)$. For any pure Gaussian state $\Phi_{AA'}$, the $n$-shot two-way quantum capacity and the $n$-shot secret-key capacity of $\NN$ satisfy the following lower bounds:
    \bb\label{lower_bound_CV_2way}
        K^{(\varepsilon,n)}(\NN)&\ge Q_2^{(\varepsilon,n)}(\NN)\ge n I_c(A\,\rangle\, B)_{\Psi}-\sqrt{n}4\log_2({\bar{\eta}_1})\sqrt{\log_2\!\left(\frac{2}{\left(\sqrt{\varepsilon}-\eta \right)^2}\right)}+4\log_2\eta\,,\\
        K^{(\varepsilon,n)}(\NN)&\ge Q_2^{(\varepsilon,n)}(\NN)\ge n I_c(B\,\rangle\, A)_{\Psi}-\sqrt{n}4\log_2({\bar{\eta}_2})\sqrt{\log_2\!\left(\frac{2}{\left(\sqrt{\varepsilon}-\eta \right)^2}\right)}+4\log_2\eta\,,
    \ee
    where
    \bb\label{eq_bar_eta_2way}
        \bar{\eta}_1&\coloneqq \sqrt{2^{H_{1/2}(A|B)_{\Psi}}}+\sqrt{2^{H_{1/2}(A|E)_{\Psi}}}+1\,,\\
        \bar{\eta}_2&\coloneqq \sqrt{2^{H_{1/2}(B|A)_{\Psi}}}+\sqrt{2^{H_{1/2}(B|E)_{\Psi}}}+1\,,\\
        \Psi_{ABE}&\coloneqq V_{A'\to BE}\Phi_{AA'}(V_{A'\to BE})^{\dagger}\,,
    \ee
    and $V_{A'\to BE}$ is a Stinespring dilation isometry associated with $\NN_{A'\to B}$. Moreover, 
    \bb
        I_c(A\,\rangle\, B)_{\Psi}&\coloneqq S(B)_\Psi-S(AB)_{\Psi}\,,\\
        I_c(B\,\rangle\, A)_{\Psi}&\coloneqq S(A)_\Psi-S(AB)_{\Psi}\,,\\ 
        H_{1/2}(A|B)_{\Psi}&\coloneqq 2\log_2\!\left(\Tr\sqrt{\Psi_{AB}}\,\mathbb{1}_A\otimes\sqrt{\Psi_B}\right)\,,\\
        H_{1/2}(B|A)_{\Psi}&\coloneqq 2\log_2\!\left(\Tr\sqrt{\Psi_{AB}}\,\sqrt{\Psi_A}\otimes \mathbb{1}_B\right)\,,\\
        H_{1/2}(A|E)_{\Psi}&\coloneqq 2\log_2\!\left(\Tr\sqrt{\Psi_{AE}}\,\mathbb{1}_A\otimes\sqrt{\Psi_E}\right)\,,\\
        H_{1/2}(B|E)_{\Psi}&\coloneqq 2\log_2\!\left(\Tr\sqrt{\Psi_{BE}}\,\mathbb{1}_B\otimes\sqrt{\Psi_E}\right)\,,
    \ee
    are finite quantities denoting coherent information, reverse coherent information, and conditional Petz Rényi entropies. In particular, by choosing $\eta=\frac{\sqrt{\varepsilon}}{2}$, we get
    \bb\label{b_lower_bound_CV_Q2_1}
        K^{(\varepsilon,n)}(\NN)&\ge Q_2^{(\varepsilon,n)}(\NN)\ge nI_c(A\,\rangle\, B)_{\Psi}-\sqrt{n}4\log_2({\bar{\eta}_1})\sqrt{\log_2\!\left(\frac{8}{\varepsilon}\right)}-\log_2\!\left(\frac{16}{\varepsilon^2}\right)\,,\\
        K^{(\varepsilon,n)}(\NN)&\ge Q_2^{(\varepsilon,n)}(\NN) \ge nI_c(B\,\rangle\, A)_{\Psi}-\sqrt{n}4\log_2({\bar{\eta}_2})\sqrt{\log_2\!\left(\frac{8}{\varepsilon}\right)}-\log_2\!\left(\frac{16}{\varepsilon^2}\right)\,.
    \ee
In particular, if the input state $\Phi_{AA'}$ satisfies the energy constraint $\Tr[\hat{N}_{A'}\Phi_{AA'}]\le N_s$, where $\hat{N}_{A'}$ is the photon number operator on $A'$, then the right-hand sides of \eqref{lower_bound_CV2} constitute also a lower bound on the energy-constrained $n$-shot two-way quantum capacity $Q_2^{(\varepsilon,n)}(\NN,N_s)$ and the energy-constrained $n$-shot secret-key capacity $K^{(\varepsilon,n)}(\NN,N_s)$.
\end{thm}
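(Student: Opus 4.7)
\medskip
\noindent\textbf{Proof proposal.} The plan is to mirror the strategy already used for Theorem~\ref{thm_lower_bound} (the quantum-capacity case), replacing the finite-dimensional input--output lower bound of Lemma~\ref{Lemma_Sumeet_book} with the finite-dimensional LOCC bound of Lemma~\ref{lemma_lower_2_way}. Concretely, I would first build the same finite-dimensional approximating channels $\NN^{(k)}_{A'_k \to B_k}$ as in the proof of Theorem~\ref{thm_lower_bound}, by truncating input, output, and ancilla to the Fock subspaces $\HH_k^A$, $\HH_k^{A'}$, $\HH_k^B$ and absorbing the cut-off flux into an ancillary vacuum via the isometry $W^{(k)}_{B\to BE'}$ in \eqref{def_w_bbe}. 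Applied to the i.i.d.\ input $(\Phi^{(k)}_{A_k A'_k})^{\otimes n}$, Lemma~\ref{lemma_lower_2_way} yields two parallel bounds in terms of $H_{\max}^{\sqrt{\varepsilon}-\eta}(A_k^n|B_k^n)_{\omega^{\otimes n}}$ and $H_{\max}^{\sqrt{\varepsilon}-\eta}(B_k^n|A_k^n)_{\omega^{\otimes n}}$, where $\omega=\NN^{(k)}(\Phi^{(k)})$.

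Next, by the duality relation between smooth max and smooth min conditional entropies (Lemma~\ref{lemma_duality}) and the definition of smooth conditional min-entropy (Definition~\ref{def_smooth_min_ent}), both entropies can be upper bounded by smooth max relative entropies:
\begin{equation*}
H_{\max}^{\sqrt{\varepsilon}-\eta}(A_k^n|B_k^n)_{(\Psi^{(k)})^{\otimes n}} \le D_{\max}^{\sqrt{\varepsilon}-\eta}\!\left((\Psi^{(k)}_{A_k E E'})^{\otimes n}\,\|\,\mathbb{1}_{A_k^n}\otimes (\Psi^{(k)}_{EE'})^{\otimes n}\right)\,,
\end{equation*}
and likewise with $A\leftrightarrow B$. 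Crucially, the marginal $\Psi^{(k)}_{EE'}$ has trace $\binom{k+m_A}{m_A}$, so it is a normal operator with bounded trace, and the infinite-dimensional asymptotic equipartition property of Lemma~\ref{lemma_fawzi} applies, giving an upper bound of the form $n D(\Psi^{(k)}_{AEE'}\|\mathbb{1}\otimes \Psi^{(k)}_{EE'}) + 4\sqrt{n}\log_2(\bar\eta^{(k)}_1)\sqrt{\log_2(2/(\sqrt{\varepsilon}-\eta)^2)}$, and similarly for the $B|A$ version.

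The main obstacle, as in the quantum-capacity case, is the $k\to\infty$ limit: I must show that $D(\Psi^{(k)}_{AEE'}\|\mathbb{1}_A\otimes \Psi^{(k)}_{EE'}) \to -I_c(A\rangle B)_{\Psi}$ (respectively $-I_c(B\rangle A)_{\Psi}$ for the dual bound), and that the Petz--R\'enyi quantities defining $\bar\eta^{(k)}_1,\bar\eta^{(k)}_2$ converge to $\bar\eta_1,\bar\eta_2$ of \eqref{eq_bar_eta_2way}. For the relative-entropy term I would reduce to conditional entropies of Gaussian marginals, use the trace-norm convergence $\Psi^{(k)}_{AB}\to \Psi_{AB}$ established via the exponential Gaussian tail bound (Theorem~\ref{the:expdecay}) --- exactly as in \eqref{proof_exp_convergence} --- and then invoke the tight uniform continuity of the conditional entropy on energy-constrained states~\cite{tightuniform}. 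For the Petz--R\'enyi terms, I would recycle the H\"older/Powers--St\o rmer argument from \eqref{step_proof_eq_a}: the prefactor $\sqrt{\binom{k+m_A}{m_A}}$ grows only polynomially in $k$, whereas $\|\Psi^{(k)}_{ABEE'}-\Psi_{ABEE'}\|_1$ decays exponentially by Theorem~\ref{the:expdecay}, so the cross-terms vanish and one recovers $H_{1/2}(A|B)_\Psi$, $H_{1/2}(A|E)_\Psi$, $H_{1/2}(B|A)_\Psi$, $H_{1/2}(B|E)_\Psi$ via the duality $H_{3/2}(A|EE')=-H_{1/2}(A|B)$ used in the previous proof.

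The reverse-coherent-information bound requires no new analysis: it arises by applying the second inequality of Lemma~\ref{lemma_lower_2_way}, whose Stinespring/duality handling is symmetric in $A$ and $B$ so that the same limit calculation produces $\bar\eta_2$ instead of $\bar\eta_1$. Finally, the energy-constrained statement is automatic: the finite-dimensional truncation respects the input energy constraint (since $\Phi^{(k)}_{A_kA'_k}$ is a post-selected version of $\Phi_{AA'}$), so every LOCC protocol built above can be interpreted as an energy-constrained protocol, and the same lower bounds carry over to $Q_2^{(\varepsilon,n)}(\NN,N_s)$ and $K^{(\varepsilon,n)}(\NN,N_s)$. Substituting $\eta=\sqrt{\varepsilon}/2$ yields the cleaner form \eqref{b_lower_bound_CV_Q2_1}.
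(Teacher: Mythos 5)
Your proposal is correct and follows exactly the route the paper takes: its proof of Theorem~\ref{thm_lower_bound_2way} consists of a single sentence stating that one applies Lemma~\ref{lemma_lower_2_way} and repeats the argument of Theorem~\ref{thm_lower_bound} verbatim (finite-dimensional truncation, duality to $D_{\max}$, the AEP of Lemma~\ref{lemma_fawzi}, and the exponential-tail-driven $k\to\infty$ limits), which is precisely what you spell out, including the symmetric treatment giving the reverse coherent information bound with $\bar\eta_2$.
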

\begin{proof}
    The proof exploits Lemma~\ref{lemma_lower_2_way} and it is entirely analogous to the proof of Theorem~\ref{thm_lower_bound}.
\end{proof}

\subsection{Conditional Petz-Rényi entropy of Gaussian states}\label{section_petz}
The lower bound on the one-shot capacities proved in Theorem~\ref{thm_lower_bound} and Theorem~\ref{thm_lower_bound_2way} involves the calculation of the conditional Petz-Rényi entropy of Gaussian states. Therefore, in order to explicitly calculate such bounds for specific examples of Gaussian channels, we need to establish a closed formula for the conditional Petz-Rényi entropy of any Gaussian state in terms of its first moment and covariance matrix. We provide such a closed formula in the forthcoming lemma.
\begin{lemma}[(Conditional Petz-Rényi entropy with $\alpha=1/2$ of Gaussian states)]\label{thm_cond_petz_Gaussian}
    Let $A$ be an $m$-mode system and let $B$ be an $n$-mode system. Let $\rho_{AB}$ be a Gaussian state on $AB$. Then, its conditional Petz-Rényi entropy with $\alpha=1/2$ can be calculated as:
\bb 
    H_{1/2}(A|B)_{\rho_{AB}}=\log_2\left(\frac{\sqrt{\det\!\left(V_{\mathrm{sqrt}}(\rho_{AB})\right)\,\det\!\left( V_{\mathrm{sqrt}}(\rho_B)\right)}}{ \det\!\left(   
  \frac{ V_{\mathrm{sqrt}}(\rho_{AB})\big|_B + V_{\mathrm{sqrt}}(\rho_B)}{2}\right)}\right)\,,
\ee
where we denoted 
\bb
    V_{\mathrm{sqrt}}(\sigma)&\coloneqq \left[\mathbb{1}+\sqrt{\mathbb{1}- (iV(\sigma)\Omega)^{-2} }\right]V(\sigma)\qquad\forall\,\sigma\,,
    \ee
and $V_{\mathrm{sqrt}}(\rho_{AB})\big|_B$ is the $2n\times 2n$ bottom-right block of the matrix $V_{\mathrm{sqrt}}(\rho_{AB})$.
\end{lemma}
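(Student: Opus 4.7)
The plan is to reduce the conditional Petz--R\'enyi entropy at $\alpha=1/2$ to a short computation based on two standard Gaussian-state identities. Starting from $H_{1/2}(A|B)_{\rho_{AB}} = -D_{1/2}(\rho_{AB}\|\mathbb{1}_A\otimes\rho_B)$, one directly obtains
\[
H_{1/2}(A|B)_{\rho_{AB}} = \log_2 \Tr\!\left[\sqrt{\rho_{AB}}\,(\mathbb{1}_A\otimes\sqrt{\rho_B})\right]^{2}.
\]
The crucial observation is that for any Gaussian state $\sigma$ the operator $\tilde\sigma \coloneqq \sqrt{\sigma}/\Tr\sqrt{\sigma}$ is again a Gaussian state with the same first moment as $\sigma$. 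This follows from the normal decomposition~\eqref{eq:GaussianState}: up to a displacement and a Gaussian unitary, $\sigma$ is a tensor product of thermal states $\bigotimes_i\tau_{N_i}$, and the normalized square root of a thermal state is itself thermal, namely $\sqrt{\tau_N}/\Tr\sqrt{\tau_N}=\tau_{N'}$ with $2N'+1 = d+\sqrt{d^2-1}$, where $d=2N+1$. I will then identify the covariance matrix of $\tilde\sigma$ with $V_{\mathrm{sqrt}}(\sigma)$ by a symplectic-covariance argument in the Williamson basis: if $V=D=\mathrm{diag}(d_1,d_1,\dots,d_n,d_n)$, a blockwise check gives $(iD\Omega)^2=D^2$, so the operator expression $[\mathbb{1}+\sqrt{\mathbb{1}-(iV\Omega)^{-2}}]V$ reduces to $D+\sqrt{D^2-\mathbb{1}}$; invoking $S\Omega S^\intercal=\Omega$ then extends the identification to arbitrary $V$.

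With this in hand I need two auxiliary identities, which it suffices to prove for zero-first-moment Gaussian states since both sides are invariant under conjugation by a displacement. The first is the normalization
\[
(\Tr\sqrt{\sigma})^2 = \sqrt{\det V_{\mathrm{sqrt}}(\sigma)}\,,
\]
which I will obtain by evaluating $\Tr\sqrt{\tau_N}=\sqrt{N+1}+\sqrt{N}$ via a geometric series, squaring to produce $d+\sqrt{d^2-1}$, and multiplying over the tensor factors of the normal decomposition (using $|\det S|=1$ for symplectic $S$). The second is the Gaussian overlap formula
\[
\Tr[\omega_1\omega_2]=\frac{1}{\sqrt{\det\!\left(\frac{W_1+W_2}{2}\right)}}\,,
\]
valid for two Gaussian states with equal first moments and covariance matrices $W_1,W_2$; this follows from integrating the product of their Gaussian Wigner functions against the standard identity $\Tr[\omega_1\omega_2]=(2\pi)^n\int W_{\omega_1}W_{\omega_2}\,\mathrm{d}^{2n}\mathbf{r}$.

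To conclude, set $\tilde\rho_{AB}\coloneqq \sqrt{\rho_{AB}}/\Tr\sqrt{\rho_{AB}}$ and $\tilde\rho_B\coloneqq \sqrt{\rho_B}/\Tr\sqrt{\rho_B}$. Both are Gaussian, and the reduced state $(\tilde\rho_{AB})_B$ is Gaussian with covariance matrix $V_{\mathrm{sqrt}}(\rho_{AB})\big|_B$ and with first moment equal to the $B$-marginal of $\mathbf{m}(\rho_{AB})$, which coincides with $\mathbf{m}(\tilde\rho_B)$. Factoring out the normalizations and using cyclicity under the partial trace,
\[
\Tr\!\left[\sqrt{\rho_{AB}}(\mathbb{1}_A\otimes\sqrt{\rho_B})\right] = (\Tr\sqrt{\rho_{AB}})(\Tr\sqrt{\rho_B})\,\Tr\!\left[(\tilde\rho_{AB})_B\,\tilde\rho_B\right].
\]
Squaring this expression and inserting the two auxiliary identities yields exactly the stated formula. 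I expect the only genuinely delicate step to be the identification of the covariance matrix of $\tilde\sigma$ with the operator-function definition of $V_{\mathrm{sqrt}}(\sigma)$: since $V$ and $\Omega$ do not commute in general, this identification must be carried out in the Williamson basis and then transported back by symplectic covariance, which is where a careful verification is needed.
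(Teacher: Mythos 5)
Your proposal is correct and follows essentially the same route as the paper's proof: rewrite $H_{1/2}$ as the logarithm of $\Tr[\sqrt{\rho_{AB}}(\mathbb{1}_A\otimes\sqrt{\rho_B})]^2$, use the fact that the normalized square root of a Gaussian state is Gaussian with covariance matrix $V_{\mathrm{sqrt}}$, establish $(\Tr\sqrt{\sigma})^2=\sqrt{\det V_{\mathrm{sqrt}}(\sigma)}$, and finish with the equal-first-moment Gaussian overlap formula. The only (harmless) differences are that you derive the square-root covariance identity and the normalization $\Tr\sqrt{\tau_N}=\sqrt{N+1}+\sqrt{N}$ from scratch via the normal decomposition, whereas the paper cites Holevo's lemma for the former and obtains the latter by applying the overlap formula to $\tilde\sigma$ with itself.
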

Before proving the above lemma, let us first understand how to explicitly calculate \( V_{\mathrm{sqrt}}(\sigma) \) given \( V(\sigma) \).
\begin{remark}[(Calculation of $V_{\mathrm{sqrt}}$)]
    The definition of $V_{\mathrm{sqrt}}(\sigma)$ includes the term $\sqrt{\mathbb{1}- (iV(\sigma)\Omega)^{-2} }$, which involves applying a function to a (non-Hermitian) matrix. To compute this term, it suffices to diagonalise the matrix $iV(\sigma)\Omega$:\bb
        \sqrt{\mathbb{1}- (iV(\sigma)\Omega)^{-2} }=X\sqrt{\mathbb{1}- D^{-2} }X^{-1}\,,
    \ee
    where $iV(\sigma)\Omega=XDX^{-1}$ is the eigendecomposition of $iV(\sigma)\Omega$. The latter can be obtained e.g.~using the Williamson's decomposition of the covariance matrix $V(\sigma)$, where $D$ is the diagonal matrix of symplectic eigenvalues satisfying $D\ge \mathbb{1}$. In particular, the latter condition implies that the diagonal matrix $\mathbb{1}- D^{-2}$ is real and positive. Consequently, the matrix square root $\sqrt{\mathbb{1}- D^{-2} }$ is well-defined.
\end{remark}
\begin{proof}[Proof of Lemma~\ref{thm_cond_petz_Gaussian}]
The conditional Petz-Rényi entropy with $\alpha=1/2$ of $\rho_{AB}$ can be expressed in terms of the overlap between the square roots of $\rho_{AB}$ and $\mathbb{1}_A\otimes\rho_B$ as
    \bb
        H_{1/2}(A|B)_{\rho_{AB}}\coloneqq -D_{1/2}(\rho_{AB}\|\mathbb{1}_A\otimes\rho_B)=2\log_2\!\left(\Tr[\sqrt{\rho_{AB}}\,\sqrt{\mathbb{1}_A\otimes\rho_B}] \right)\,.
    \ee
The overlap between two Gaussian states $\sigma_1$ and $\sigma_2$ is given by~\cite{BUCCO}
        \bb
            \Tr[\sigma_1\sigma_2]=\frac{1}{\sqrt{\det\!\left(\frac{V(\sigma_1)+V(\sigma_2)}{2}\right)}}e^{-\left(\textbf{m}(\sigma_1)-\textbf{m}(\sigma_2)\right)^\intercal \left(V(\sigma_1)+ V(\sigma_2)\right)^{-1}\left(\textbf{m}(\sigma_1)-\textbf{m}(\sigma_2)\right)}\,.
        \ee
        Moreover, the normalised square root $\frac{\sqrt{\sigma}}{\Tr\sqrt{\sigma}}$ of a Gaussian state $\sigma$ is a Gaussian state with the same first moment as $\sigma$ and covariance matrix given by~\cite[Lemma~2]{Holevo1972} (see also~\cite[Section~3]{Paraoanu_2000} and \cite{Banchi_2015}):
        \bb\label{eq_v_sqrt0000}
         V_{\mathrm{sqrt}}(\sigma)\coloneqq V\!\left(\frac{\sqrt{\sigma}}{\Tr\sqrt{\sigma}}\right)=\left[\mathbb{1}+\sqrt{\mathbb{1}- (iV(\sigma)\Omega)^{-2} }\right]V(\sigma)\,.
        \ee
        Consequently, we obtain that
        \bb
             1=\Tr\!\left[\frac{\sqrt{\sigma}}{\Tr\sqrt{\sigma}}\frac{\sqrt{\sigma}}{\Tr\sqrt{\sigma}}\right]\left(\Tr\sqrt{\sigma}\right)^2=\frac{\left(\Tr\sqrt{\sigma}\right)^2}{\sqrt{\det\!\left(V\left(\frac{\sqrt{\sigma}}{\Tr\sqrt{\sigma}}\right)\right)}}=\frac{\left(\Tr\sqrt{\sigma}\right)^2}{\sqrt{\det\!\left( V_{\mathrm{sqrt}}(\sigma) \right)}}\,.
\ee
In particular, this implies that
\bb\label{trace_sqrt_gaussian}
    \Tr\sqrt{\sigma}=\left[\det\!\left(V_{\mathrm{sqrt}}(\sigma)\right)\right]^{1/4}\,.
\ee
Hence, we obtain that
    \bb
    \Tr[\sqrt{\rho_{AB}}\sqrt{\mathbb{1}_A\otimes \rho_B} ]&=\Tr[\sqrt{\rho_{AB}}]\,\Tr[\sqrt{\rho_{B}}] \Tr_B\!\left[  \frac{\Tr_A\!\sqrt{\rho_{AB}}}{\Tr[\sqrt{\rho_{AB}}]}\frac{\sqrt{\rho_B}}{\Tr\sqrt{\rho_B}}\right]\\
    &= \left[\det\!\left(V_{\mathrm{sqrt}}(\rho_{AB})\right)  \,
 \det\!\left(V_{\mathrm{sqrt}}(\rho_{B})\right)\right]^{1/4} \Tr_B\!\left[  \frac{\Tr_A\!\sqrt{\rho_{AB}}}{\Tr[\sqrt{\rho_{AB}}]}\frac{\sqrt{\rho_B}}{\Tr\sqrt{\rho_B}}\right]\\
 &=\frac{\left[\det\!\left(V_{\mathrm{sqrt}}(\rho_{AB})\right)  \,
 \det\!\left(V_{\mathrm{sqrt}}(\rho_{B})\right)\right]^{1/4}}{  \left(  \frac{V_{\mathrm{sqrt}}(\rho_{AB})\big|_B + V_{\mathrm{sqrt}}(\rho_{B})}{2} \right)^{1/2}  }\,,
    \ee
where in the last line we exploited again the formula for the overlap between two Gaussian states and the fact that the two Gaussian states $\frac{\Tr_A\!\sqrt{\rho_{AB}}}{\Tr[\sqrt{\rho_{AB}}]}$ and $\frac{\sqrt{\rho_B}}{\Tr\sqrt{\rho_B}}$ have the same first moments. Consequently, we conclude that
\bb
H_{1/2}(A|B)_{\rho_{AB}}\coloneqq 2 \log_2 \left(\Tr\left[\sqrt{\rho_{AB}}\sqrt{\mathbb{1}_A\otimes \rho_B}\right]\right)=\log_2\left(\frac{\sqrt{\det\!\left(V_{\mathrm{sqrt}}(\rho_{AB})\right)\,\det\!\left( V_{\mathrm{sqrt}}(\rho_B)\right)}}{ \det\!\left(   
  \frac{ V_{\mathrm{sqrt}}(\rho_{AB})\big|_B + V_{\mathrm{sqrt}}(\rho_B)}{2}\right)}\right)\,.
\ee
\end{proof}

\newpage

\section{Non-asymptotic analysis of quantum communication across the pure loss channel}\label{sec_lower_bound_pure_loss_channel}
This section aims to establish easily computable lower bounds on the $n$-shot quantum capacity, two-way quantum capacity, and secret-key capacity of the pure loss channel~\cite{BUCCO} (see Definition~\ref{def_pure_loss} for the definition of pure loss channel). This is the most important Gaussian channel as it constitutes the most realistic model to describe optical links, such as optical fibres and free-space links. For each capacity, we will derive two distinct lower bounds: the first is based on the results we obtained in the previous Section~\ref{section_AEP0} exploiting the asymptotic equipartition property; the second is based on the analysis of the \emph{entropy variance}~\cite{Kaur_2017} presented in the forthcoming Subsection~\ref{section_entropy_cariance}.

\subsection{Asymptotic equipartition property approach}\label{subsecAsymptotic}
The goal of this section is to evaluate the lower bounds on the $n$-shot capacities proved in Theorem~\ref{thm_lower_bound} and Theorem~\ref{thm_lower_bound_2way} for the physically interesting case of the pure loss channel $\mathcal{E}_\lambda$.

Let us apply such theorems by choosing as input state $\Phi_{AA'}$ the two-mode squeezed vacuum state $\ketbra{\Psi_{N_s}}_{AA'}$ with local mean photon number equal to $N_s$, as defined in Eq.~\eqref{eq:2mode_squeezed}. Moreover, let us recall that the pure loss channel $\pazocal{E}_{\lambda}$, defined in Definition~\ref{def_pure_loss}, is a single-mode Gaussian channel that admits the following Stinespring dilation:
\bb
        \pazocal{E}_{\lambda}(\cdot)=\Tr_E\left[U_\lambda^{A'E\to BE} \big((\cdot) \otimes\ketbra{0}_E\big) (U_\lambda^{A'E\to BE})^\dagger\right]\,,
\ee
where $U_\lambda^{A'E\to BE}$ denotes the beam splitter unitary defined in Definition~\ref{def_beam_splitter}. In order to apply Theorem~\ref{thm_lower_bound} and Theorem~\ref{thm_lower_bound_2way}, we need to calculate the covariance matrix $V\!\left(\Psi_{ABE}^{(\lambda,N_s)}\right)$ of the tri-partite state
\bb
    \Psi_{ABE}^{(\lambda,N_s)}\coloneqq\left(\mathbb{1}_A\otimes U_\lambda^{BE}\right)\,\left( \ketbra{\Psi_{N_s}}_{AB}\otimes\ketbra{0}_{E}\right) \left(\mathbb{1}_A\otimes U_\lambda^{BE}\right)^\dagger\,,
\ee
e.g.~with respect the mode-ordering $(A,B,E)$. By exploiting Eq.~\eqref{law_sympl_unitary} and Eq.~\eqref{symplectic_beam_splitter}, it follows that $V\!\left(\Psi_{ABE}^{(\lambda,N_s)}\right)$ can be expressed as
\bb\label{cov_ABEE_comp}
V\!\left(\Psi_{ABE}^{(\lambda,N_s)}\right)=  
\left(\mathbb{1}_2\oplus {S}_\lambda\right) \left(V(\ketbra{\Psi_{N_s}}_{AB})\oplus  V(\ketbra{0})\right) \left(\mathbb{1}_2\oplus {S}_\lambda^{\intercal}\right) \,,
\ee
where:
\bb
S_\lambda &\coloneqq \left(\begin{matrix} \sqrt{\lambda}\,\mathbb{1}_2  & \sqrt{1-\lambda}\,\mathbb{1}_2 \\
-\sqrt{1-\lambda}\,\mathbb{1}_2 & \sqrt{\lambda}\,\mathbb{1}_2 \end{matrix}\right)\,, 
\ee
is the symplectic matrix associated with the beam splitter unitary, $V(\ketbra{\Psi_{N_s}})$ is the covariance matrix of the two-mode squeezed vacuum state given in Eq.~\eqref{cov_two_mode_squeezed}, and $V(\ketbra{0})=\mathbb{1}_2$ is the covariance matrix of the vacuum state. By explicitly performing the matrix multiplications, one obtains that 
\bb
{\fontsize{8.3}{8.3}\selectfont
    V\!\left(\Psi_{ABE}^{(\lambda,N_s)}\right)=
    \begin{bmatrix}
        1 + 2 N_s & 0 &  2 \sqrt{\lambda N_s (1 + N_s)} & 0 & -2 \sqrt{(1 - \lambda) N_s (1 + N_s)} & 0\\
        0 & 1 + 2 N_s & 0 & -2 \sqrt{ \lambda N_s (1 + N_s)} & 0 & 2 \sqrt{(1 - \lambda) N_s (1 + N_s)} \\
        2 \sqrt{ \lambda N_s (1 + N_s)} & 0 & 1 + 2\lambda N_s & 0 & -2 \sqrt{ (1 - \lambda) \lambda} N_s & 0 \\
        0 & -2 \sqrt{ \lambda N_s (1 + N_s)} & 0 & 1 + 2\lambda N_s & 0 & -2 \sqrt{ (1 - \lambda) \lambda} N_s \\    
        -2 \sqrt{(1 - \lambda) N_s (1 + N_s)} & 0 & -2 \sqrt{ (1 - \lambda) \lambda} N_s & 0 & 1 + 2(1-\lambda) N_s & 0 \\
        0 & 2 \sqrt{(1 - \lambda) N_s (1 + N_s)} & 0 & -2 \sqrt{ (1 - \lambda) \lambda} N_s & 0 & 1 + 2(1-\lambda) N_s \\
    \end{bmatrix}\,.}
\ee
Hence, by exploiting Lemma~\ref{thm_cond_petz_Gaussian}, one can explicitly calculate all the conditional Petz-Rényi entropies involved in Theorem~\ref{thm_lower_bound} and Theorem~\ref{thm_lower_bound_2way}. By a simple calculation performed on the Mathematica notebook attached to this manuscript, one obtains:
\bb\label{ingredient1_ec}
H_{1/2}(A|B)_{\Psi_{ABE}^{(\lambda,N_s)}}&=   \log_2\left(\frac{\left[1 + 2 (1 - \lambda) N_s + 2 \sqrt{(1 - \lambda) N_s [1 + (1 - \lambda) N_s]}\right]\left[1 + 2 \lambda N_s + 2 \sqrt{\lambda N_s (1 + \lambda N_s)}\right]}{\left[1 + \sqrt{\lambda N_s (1 + \lambda N_s)} + \lambda \left(2 N_s + \sqrt{\frac{(1 - \lambda) N_s^3}{1 + (1 - \lambda) N_s}}\right)\right]^2 } \right)\,,\\
    H_{1/2}(A|E)_{\Psi_{ABE}^{(\lambda,N_s)}}&=\log_2\left(\frac{\left[1 +2 (1 - \lambda) N_s + 2 \sqrt{(1 - \lambda) N_s (1 + (1 - \lambda) N_s)}\right]\left[1 + 2 \lambda N_s + 2 \sqrt{\lambda N_s (1 + \lambda N_s)}\right]}{\left[1  + \sqrt{(1 - \lambda) N_s (1 + (1 - \lambda) N_s)} + (1-\lambda)\left(2N_s+\sqrt{\frac{\lambda N_s^3}{1 + \lambda N_s}}\right)\right]^2 }\right)\,,\\
    H_{1/2}(B|A)_{\Psi_{ABE}^{(\lambda,N_s)}}&=\log_2\left(\frac{\left[1 + 2 N_s + 2 \sqrt{N_s (1 + N_s)}\right]\left[1 + 2 (1 - \lambda) N_s + 2 \sqrt{(1 - \lambda) N_s (1 + (1 - \lambda) N_s)}\right]}{\left[1 + 2 N_s + \sqrt{N_s (1 + N_s)} + (1 + N_s) \sqrt{1 - \frac{1}{1 + (1 - \lambda) N_s}}\right]^2}\right)\,,\\
    H_{1/2}(B|E)_{\Psi_{ABE}^{(\lambda,N_s)}}&= \log_2\left(\frac{\left(1 + 2 (1 - \lambda) N_s + 2 \sqrt{(1 - \lambda) N_s (1 + (1 - \lambda) N_s)}\right)(1 + 2 N_s + 2 \sqrt{N_s (1 + N_s)})}{\left[1 + (1- \lambda)\left(2 N_s + \sqrt{N_s (1 + N_s)}\right)  + \sqrt{(1 - \lambda) N_s (1 + (1 - \lambda) N_s)}\right]^2}\right)\,.
\ee
Moreover, thanks to Eq.~\eqref{q_ec_pure_loss} and Eq.~\eqref{q2_ec_pure_loss}, we can also obtain the coherent information quantities appearing in Theorem~\ref{thm_lower_bound} and Theorem~\ref{thm_lower_bound_2way}:
\bb\label{ingredient2_ec}
    I_c(A\,\rangle\, B)_{ \Psi_{ABE}^{(\lambda,N_s)}  } &=h(\lambda N_s)-h\!\left((1-\lambda)N_s\right)\,,\\
    I_c(B\,\rangle\, A)_{ \Psi_{ABE}^{(\lambda,N_s)}  } &=h(N_s)-h\!\left((1-\lambda)N_s\right)\,,
\ee
where $h(x)\coloneqq (x+1)\log_2(x+1)-x\log_2x$. By taking the limit of infinite energy $N_s\rightarrow\infty$, one obtains
\bb\label{ingredient1}
    \lim\limits_{N_s\rightarrow\infty}H_{1/2}(A|B)_{\Psi_{ABE}^{(\lambda,N_s)}}&=\log_2\!\left(\frac{1-\lambda}{\lambda}\right)\,,\\
    \lim\limits_{N_s\rightarrow\infty}H_{1/2}(A|E)_{\Psi_{ABE}^{(\lambda,N_s)}}&=-\log_2\!\left(\frac{1-\lambda}{\lambda}\right)\,,\\
    \lim\limits_{N_s\rightarrow\infty}H_{1/2}(B|A)_{\Psi_{ABE}^{(\lambda,N_s)}}&=\log_2\!\left(1-\lambda\right)\,,\\
    \lim\limits_{N_s\rightarrow\infty}H_{1/2}(B|E)_{\Psi_{ABE}^{(\lambda,N_s)}}&=-\log_2\!\left(1-\lambda\right)\,.
\ee
and, thanks to Eq.~\eqref{coh_1} and Eq.~\eqref{coh_2}, one also obtains that
\bb\label{ingredient2}
    \lim\limits_{N_s\rightarrow\infty}I_c(A\,\rangle\, B)_{ \Psi_{ABE}^{(\lambda,N_s)}  } &=\log_2\!\left(\frac{\lambda}{1-\lambda}\right)\,,\\
    \lim\limits_{N_s\rightarrow\infty}I_c(B\,\rangle\, A)_{ \Psi_{ABE}^{(\lambda,N_s)}  } &=\log_2\!\left(\frac{1}{1-\lambda}\right)\,.
\ee
By putting together Eq.~\eqref{ingredient1}, Eq.~\eqref{ingredient2}, and Theorem~\ref{thm_lower_bound}, we can obtain a lower bound on the $n$-shot quantum capacity of the pure loss channel.
\begin{thm}[(Lower bound on the $n$-shot quantum capacity of the pure loss channel)]\label{thm_one_shot_Q_pure}
    Let $\lambda\in(0,1)$, $\varepsilon\in(0,1)$, and $n\in\N$ such that $n\ge2\log_2\!\left(\frac{2}{\varepsilon^2}\right)$. The $n$-shot quantum capacity of the pure loss channel satisfies:
\bb\label{lower_bound_q_pure_loss}
    Q^{(\varepsilon,n)}(\mathcal{E}_\lambda)&\ge nQ(\mathcal{E}_\lambda)-\sqrt{n}4\log_2\!\left( \sqrt{\frac{1-\lambda}{\lambda}}+ \sqrt{\frac{\lambda}{1-\lambda}} +1 \right)\sqrt{\log_2\!\left(\frac{2^9}{\varepsilon^2}\right)}-\log_2\!\left(\frac{2^{18}}{3\varepsilon^4}\right)\,,
\ee
where 
\bb
    Q(\mathcal{E}_{\lambda})&=   
        \begin{cases}
        \log_2\!\left(\frac{\lambda}{1-\lambda}\right) &\text{if $\lambda\in(\frac{1}{2},1]$ ,} \\
        0 &\text{if $\lambda\in[0,\frac{1}{2}]$ ,}
    \end{cases} 
\ee
is the (asymptotic) quantum capacity of the pure loss channel.
\end{thm}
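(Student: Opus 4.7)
The plan is to apply the general lower bound of Theorem~\ref{thm_lower_bound} to the pure loss channel $\mathcal{E}_\lambda$, using as input state the two-mode squeezed vacuum $\ketbra{\Psi_{N_s}}_{AA'}$ defined in \eqref{eq:2mode_squeezed}, and then to take the limit $N_s\to\infty$. Since the lower bound \eqref{lower_bound_CV2} is valid for every pure Gaussian input state $\Phi_{AA'}$ and its left-hand side $Q^{(\varepsilon,n)}(\mathcal{E}_\lambda)$ is independent of $N_s$, the resulting inequality will follow directly from the continuity of the right-hand side in $N_s$ and the explicit computation of the limiting values of the entropic ingredients.

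First, I would fix the Stinespring dilation of $\mathcal{E}_\lambda$ via the beam splitter unitary $U^{A'E\to BE}_\lambda$, so that the relevant tripartite output state $\Psi_{ABE}^{(\lambda,N_s)}$ is the one introduced in Subsection~\ref{subsecAsymptotic}, with covariance matrix obtained from the symplectic action \eqref{law_sympl_unitary} of $\mathbb{1}_2\oplus S_\lambda$, where $S_\lambda$ is the beam splitter symplectic \eqref{symplectic_beam_splitter}. This is exactly \eqref{cov_ABEE_comp}. Then, I would invoke Lemma~\ref{thm_cond_petz_GaussianM} to obtain the closed-form expressions \eqref{ingredient1_ec} for $H_{1/2}(A|B)_{\Psi^{(\lambda,N_s)}_{ABE}}$ and $H_{1/2}(A|E)_{\Psi^{(\lambda,N_s)}_{ABE}}$, and use the standard formula \eqref{q_ec_pure_loss} for the coherent information $I_c(A\,\rangle\, B)_{\Psi^{(\lambda,N_s)}_{ABE}}=h(\lambda N_s)-h((1-\lambda)N_s)$.

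Next, applying Theorem~\ref{thm_lower_bound} (specifically inequality \eqref{lower_bound_CV2}) with this choice of input, and under the assumption $n\geq 2\log_2(2/\varepsilon^2)$, I would obtain, for every $N_s>0$,
\begin{equation}
\begin{aligned}
Q^{(\varepsilon,n)}(\mathcal{E}_\lambda) &\geq n\left[h(\lambda N_s)-h((1-\lambda)N_s)\right]\\
&\quad -\sqrt{n}\,4\log_2\!\left(\sqrt{2^{H_{1/2}(A|B)_{\Psi^{(\lambda,N_s)}_{ABE}}}}+\sqrt{2^{H_{1/2}(A|E)_{\Psi^{(\lambda,N_s)}_{ABE}}}}+1\right)\sqrt{\log_2(2^9/\varepsilon^2)}-\log_2(2^{18}/(3\varepsilon^4)).
\end{aligned}
\end{equation}
Taking the limit $N_s\to\infty$, and using the four limits \eqref{ingredient1} and \eqref{ingredient2}, namely
\begin{equation}
\lim_{N_s\to\infty} H_{1/2}(A|B)_{\Psi^{(\lambda,N_s)}_{ABE}}=\log_2\!\left(\tfrac{1-\lambda}{\lambda}\right),\quad \lim_{N_s\to\infty} H_{1/2}(A|E)_{\Psi^{(\lambda,N_s)}_{ABE}}=\log_2\!\left(\tfrac{\lambda}{1-\lambda}\right),
\end{equation}
together with $\lim_{N_s\to\infty}[h(\lambda N_s)-h((1-\lambda)N_s)]=\log_2(\lambda/(1-\lambda))=Q(\mathcal{E}_\lambda)$ for $\lambda\in(1/2,1)$, yields exactly \eqref{lower_bound_q_pure_loss} in the interesting regime $\lambda\in(1/2,1)$. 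For $\lambda\in(0,1/2]$ the asymptotic quantum capacity vanishes, so $n Q(\mathcal{E}_\lambda)=0$; a separate argument noting that $Q^{(\varepsilon,n)}(\mathcal{E}_\lambda)\geq 0$ trivially (by choosing the trivial code) shows that the inequality holds whenever the right-hand side is non-positive, which it always is for $\lambda\in(0,1/2]$ and $n$ small, while for larger $n$ the same limiting computation can be performed (and the right-hand side is still non-positive in that range, so the bound is vacuous but correct).

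The only delicate point is justifying the interchange of the limit $N_s\to\infty$ with the inequality: this is immediate because the left-hand side does not depend on $N_s$, the right-hand side is a continuous function of the entropic quantities involved, and those quantities admit the explicit closed-form limits recorded in \eqref{ingredient1}--\eqref{ingredient2}. I do not anticipate any serious obstacle here; the bulk of the work has already been absorbed into Theorem~\ref{thm_lower_bound} (which in turn relied on the tail bound of Theorem~\ref{the:expdecay00_main} and the asymptotic equipartition property), and into Lemma~\ref{thm_cond_petz_GaussianM} for the explicit evaluation of the conditional Petz-Rényi entropies of the Gaussian state $\Psi_{ABE}^{(\lambda,N_s)}$.
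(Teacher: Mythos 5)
Your proposal is correct and follows essentially the same route as the paper: Section~\ref{subsecAsymptotic} likewise applies Theorem~\ref{thm_lower_bound} with the two-mode squeezed vacuum input, evaluates the conditional Petz--R\'enyi entropies via Lemma~\ref{thm_cond_petz_Gaussian} and the covariance matrix \eqref{cov_ABEE_comp}, and passes to the limit $N_s\to\infty$ using \eqref{ingredient1}--\eqref{ingredient2}. Your additional remark that the bound is vacuously true for $\lambda\in(0,\tfrac12]$ (since the right-hand side is then negative while $Q^{(\varepsilon,n)}\ge 0$) is a correct and harmless supplement to what the paper leaves implicit.
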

As a consequence of the above theorem, $n$ uses of the pure loss channel $\mathcal{E}_\lambda$ allows one to transmit a number of qubits equal to the right hand of Eq.~\eqref{lower_bound_q_pure_loss} with an error not exceeding $\varepsilon$. Here, the error is measured as in Eq.~\eqref{eq_def_Q_cap} in terms of the channel fidelity. 

By combining Eq.~\eqref{ingredient1}, Eq.~\eqref{ingredient2}, and Theorem~\ref{thm_lower_bound_2way}, we can obtain a lower bound on the $n$-shot two-way quantum capacity and $n$-shot secret-key capacity of the pure loss channel.
\begin{thm}[(Lower bound on the $n$-shot two-way quantum capacity and $n$-shot secret-key capacity of the pure loss channel)]\label{thm_one_shot_Q2_pure}
    Let $\lambda\in(0,1)$, $\varepsilon\in(0,1)$, and $n\in\N$ such that $n\ge2\log_2\!\left(\frac{2}{\varepsilon^2}\right)$. The $n$-shot two-way quantum capacity and the $n$-shot secret-key capacity of the pure loss channel satisfy:
\bb\label{lower_bound_q_pure_loss2}
K^{(\varepsilon,n)}(\mathcal{E}_\lambda)\ge Q_2^{(\varepsilon,n)}(\mathcal{E}_\lambda) &\ge nQ_2(\mathcal{E}_\lambda)-\sqrt{n}4\log_2\!\left(\sqrt{1-\lambda}+\sqrt{\frac{1}{1-\lambda}}+1\right)\sqrt{\log_2\!\left(\frac{8}{\varepsilon}\right)}-\log_2\!\left(\frac{16}{\varepsilon^2}\right)\,,
\ee
where $Q_2(\mathcal{E}_\lambda)\coloneqq \log_2\!\left(\frac{1}{1-\lambda}\right)$ is the (asymptotic) two-way quantum capacity of the pure loss channel.
\end{thm}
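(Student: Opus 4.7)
The plan is to obtain the stated inequality as a direct corollary of the general Gaussian lower bound in Theorem~\ref{thm_lower_bound_2way}, specialised to $\mathcal{N}=\mathcal{E}_{\lambda}$, by choosing the pure Gaussian input $\Phi_{AA'}=\ketbra{\Psi_{N_s}}_{AA'}$ (the two-mode squeezed vacuum of local photon number $N_s$) and then sending $N_s\to\infty$. Concretely, I will apply the second branch of \eqref{b_lower_bound_CV_Q2_1}, the one that features the \emph{reverse} coherent information $I_c(B\,\rangle\,A)_{\Psi}$ and the coefficient $\bar{\eta}_2=\sqrt{2^{H_{1/2}(B|A)_{\Psi}}}+\sqrt{2^{H_{1/2}(B|E)_{\Psi}}}+1$, because it is precisely this branch that, on the pure loss channel, tracks the true two-way quantum capacity $Q_2(\mathcal{E}_\lambda)=\log_2\tfrac{1}{1-\lambda}$.

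First, I will fix the Stinespring dilation of $\mathcal{E}_\lambda$ as the beam-splitter unitary $U_\lambda^{A'E\to BE}$ acting on the environmental vacuum, so that $\Psi_{ABE}=\Psi_{ABE}^{(\lambda,N_s)}$ is exactly the Gaussian state analysed at the start of Subsection~\ref{subsecAsymptotic}. All three entropic ingredients appearing in \eqref{b_lower_bound_CV_Q2_1} for this choice have already been computed in closed form: the reverse coherent information is given by the second line of \eqref{ingredient2_ec}, while the conditional Petz-Rényi entropies $H_{1/2}(B|A)_{\Psi^{(\lambda,N_s)}}$ and $H_{1/2}(B|E)_{\Psi^{(\lambda,N_s)}}$ are given by the third and fourth lines of \eqref{ingredient1_ec} (whose derivation itself relied on Lemma~\ref{thm_cond_petz_Gaussianm}). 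Since the capacity $Q_2^{(\varepsilon,n)}(\mathcal{E}_\lambda)$ is defined without any energy constraint, the bound in \eqref{b_lower_bound_CV_Q2_1} holds with the chosen $N_s$-dependent right-hand side for \emph{every} $N_s>0$, hence also with its $\limsup$ as $N_s\to\infty$.

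The second step is the limit analysis. Using the closed forms in \eqref{ingredient1_ec} and \eqref{ingredient2_ec}, a direct computation (carried out in \eqref{ingredient1}--\eqref{ingredient2}) yields
\[
\lim_{N_s\to\infty} I_c(B\,\rangle\,A)_{\Psi^{(\lambda,N_s)}}=\log_2\!\tfrac{1}{1-\lambda}=Q_2(\mathcal{E}_\lambda),\qquad \lim_{N_s\to\infty} 2^{H_{1/2}(B|A)_{\Psi^{(\lambda,N_s)}}}=1-\lambda,\qquad \lim_{N_s\to\infty} 2^{H_{1/2}(B|E)_{\Psi^{(\lambda,N_s)}}}=\tfrac{1}{1-\lambda}.
\]
Therefore $\bar{\eta}_2\to \sqrt{1-\lambda}+\sqrt{1/(1-\lambda)}+1$, and substituting into \eqref{b_lower_bound_CV_Q2_1} produces exactly \eqref{lower_bound_q_pure_loss2}. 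The inequality $K^{(\varepsilon,n)}(\mathcal{E}_\lambda)\geq Q_2^{(\varepsilon,n)}(\mathcal{E}_\lambda)$ is a general fact recorded in \eqref{ineq_nshot_q_k}, so nothing extra is needed on that side.

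The routine obstacle is pure bookkeeping in the limit: verifying that all three scalar functions of $N_s$ are continuous and admit finite limits at $N_s=\infty$, so that the right-hand side of \eqref{b_lower_bound_CV_Q2_1} converges to the claimed expression rather than diverging or oscillating. This is straightforward from the explicit algebraic forms (each logarithm's argument tends to a positive finite constant). The genuine conceptual work has already been done upstream, namely (i)~proving the Gaussian $n$-shot bound in Theorem~\ref{thm_lower_bound_2way} — which required the infinite-dimensional asymptotic equipartition property together with the exponential tail bound of Theorem~\ref{the:expdecay00} to justify the finite-dimensional truncation; and (ii)~establishing the closed-form conditional Petz-Rényi entropy formula of Lemma~\ref{thm_cond_petz_Gaussianm}, which is what makes the asymptotic coefficient $\bar{\eta}_2$ explicitly computable. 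Once those are in hand, the present theorem is obtained simply by specialising to the two-mode squeezed vacuum input on $\mathcal{E}_\lambda$ and taking $N_s\to\infty$.
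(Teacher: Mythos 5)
Your proposal is correct and follows essentially the same route as the paper: the paper likewise specialises the second branch of Theorem~\ref{thm_lower_bound_2way} to the beam-splitter Stinespring dilation of $\mathcal{E}_\lambda$ with the two-mode squeezed vacuum input, evaluates $I_c(B\,\rangle\,A)$, $H_{1/2}(B|A)$ and $H_{1/2}(B|E)$ in closed form via the Gaussian Petz--R\'enyi formula, and sends $N_s\to\infty$ to obtain $\bar\eta_2\to\sqrt{1-\lambda}+\sqrt{1/(1-\lambda)}+1$ and $I_c(B\,\rangle\,A)\to\log_2\frac{1}{1-\lambda}$. The only cosmetic slip is the reference to a nonexistent label (the Petz--R\'enyi lemma is Lemma~\ref{thm_cond_petz_Gaussian} in the Appendix); otherwise nothing is missing.
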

As a result of the above theorem, $n$ uses of the pure loss channel $\mathcal{E}_\lambda$ and arbitrary LOCCs allow one to generate a number of ebits and a number of secret-key bits equal to the right hand of Eq.~\eqref{lower_bound_q_pure_loss2} with an error not exceeding $\varepsilon$.

By using \eqref{ingredient1_ec}, \eqref{ingredient2_ec},Theorem~\ref{thm_lower_bound}, and Theorem~\ref{thm_lower_bound_2way}, we can also find a lower bound on the energy-constrained $n$-shot capacities of the pure loss channel, as stated in the following theorem.
\begin{thm}[(Lower bound on the energy-constrained $n$-shot capacities of the pure loss channel)]\label{thm_lower_ec_aep}
    Let $N_s>0$, $\lambda\in(0,1)$, $\varepsilon\in(0,1)$, and $n\in\N$ such that $n\ge2\log_2\!\left(\frac{2}{\varepsilon^2}\right)$. The energy-constrained $n$-shot quantum capacity of the pure loss channel satisfies:
    \bb
        Q^{(\varepsilon,n)}(\mathcal{E}_\lambda,N_s)\ge  n \left(h(\lambda N_s)-h\!\left((1-\lambda) N_s\right)\right)-\sqrt{n}4\log_2\left(\sqrt{2^{H_{1/2}(A|B)_{\Psi_{ABE}^{(\lambda,N_s)}}}}+\sqrt{2^{H_{1/2}(A|E)_{\Psi_{ABE}^{(\lambda,N_s)}}}}+1\right)\sqrt{\log_2\!\left(\frac{2^9}{\varepsilon^2}\right)}-\log_2\!\left(\frac{2^{18}}{3\varepsilon^4}\right)\,,
    \ee
    where $H_{1/2}(A|B)_{\Psi_{ABE}^{(\lambda,N_s)}}$ and $H_{1/2}(A|E)_{\Psi_{ABE}^{(\lambda,N_s)}}$ are reported in \eqref{ingredient1_ec}. Moreover, the energy-constrained $n$-shot two-way quantum and secret-key capacity of the pure loss channel satisfy:
    \bb
        K^{(\varepsilon,n)}(\mathcal{E}_\lambda,N_s)&\ge Q_2^{(\varepsilon,n)}(\mathcal{E}_\lambda,N_s)  \\
        &\ge n\left( h(N_s)-h\!\left((1-\lambda)N_s\right) 
 \right)-\sqrt{n}4\log_2\left(  \sqrt{2^{H_{1/2}(B|A)_{\Psi_{ABE}^{(\lambda,N_s)}}}}+\sqrt{2^{H_{1/2}(B|E)_{\Psi_{ABE}^{(\lambda,N_s)}}}}+1  \right)\sqrt{\log_2\!\left(\frac{8}{\varepsilon}\right)}-\log_2\!\left(\frac{16}{\varepsilon^2}\right)\,,
    \ee
    where $H_{1/2}(B|A)_{\Psi_{ABE}^{(\lambda,N_s)}}$ and $H_{1/2}(B|E)_{\Psi_{ABE}^{(\lambda,N_s)}}$ are reported in \eqref{ingredient1_ec}.
\end{thm}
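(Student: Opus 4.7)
The plan is to apply Theorem~\ref{thm_lower_bound} and Theorem~\ref{thm_lower_bound_2way} directly, specialising the arbitrary Gaussian channel $\NN$ to the pure loss channel $\mathcal{E}_\lambda$ and the free pure Gaussian input $\Phi_{AA'}$ to a two-mode squeezed vacuum state $\ketbra{\Psi_{N_s}}_{AA'}$ of local mean photon number $N_s$. Since this choice satisfies the local energy constraint $\Tr[\hat{N}_{A'}\ketbra{\Psi_{N_s}}_{AA'}] = N_s$ by construction, the energy-constrained statement at the end of each of those theorems guarantees that the resulting lower bounds are valid lower bounds on $Q^{(\varepsilon,n)}(\mathcal{E}_\lambda,N_s)$, $Q_2^{(\varepsilon,n)}(\mathcal{E}_\lambda,N_s)$, and $K^{(\varepsilon,n)}(\mathcal{E}_\lambda,N_s)$.

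The second step is to compute the entropic ingredients explicitly for the state $\Psi_{ABE}^{(\lambda,N_s)} = (\mathbb{1}_A \otimes U_\lambda^{BE})(\ketbra{\Psi_{N_s}}_{AB}\otimes \ketbra{0}_E)(\mathbb{1}_A\otimes U_\lambda^{BE})^\dagger$. For the coherent information terms, I would invoke the well-known identities $I_c(A\,\rangle\, B)_{\Psi^{(\lambda,N_s)}} = h(\lambda N_s)-h((1-\lambda)N_s)$ and $I_c(B\,\rangle\, A)_{\Psi^{(\lambda,N_s)}} = h(N_s)-h((1-\lambda)N_s)$, which follow from the normal-mode decomposition of the pure loss channel applied to half of a two-mode squeezed vacuum and were already recalled in \eqref{q_ec_pure_loss} and \eqref{q2_ec_pure_loss}. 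These are exactly the leading-order coefficients appearing in the claimed bounds.

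The third step, which is the only nontrivial calculation, is to evaluate the four conditional Petz--Rényi entropies of order $1/2$ appearing in the bounds. For this I would use Lemma~\ref{thm_cond_petz_GaussianM}, together with the explicit covariance matrix of $\Psi_{ABE}^{(\lambda,N_s)}$ obtained from the symplectic action of the beam splitter $S_\lambda$ (cf.~\eqref{symplectic_beam_splitter}) on the direct sum of $V(\ketbra{\Psi_{N_s}})$ and $V(\ketbra{0}) = \mathbb{1}_2$. Concretely, one forms the three $4\times 4$ reduced covariance matrices on $AB$, $AE$, $BE$ and then computes $V_{\mathrm{sqrt}}(\cdot)$ for each of them and for the one-mode marginals. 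The algebra is straightforward but tedious, and the closed-form expressions that one obtains are precisely those recorded in \eqref{ingredient1_ec}; this step can be offloaded to a symbolic computation (as noted in the paper via the accompanying Mathematica notebook).

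The main obstacle is purely bookkeeping: making sure that the mode-ordering and the symplectic form in Lemma~\ref{thm_cond_petz_GaussianM} are applied consistently across all four conditional entropies, and that the square-root covariance matrices are inverted only on the blocks where they are strictly positive. Once \eqref{ingredient1_ec} and \eqref{ingredient2_ec} are in hand, the two stated inequalities follow by substituting them into \eqref{lower_bound_CV2_main1} and the second line of \eqref{lower_bound_CV2_main2} respectively, invoking the energy-constrained version of Theorems~\ref{thm_lower_bound} and \ref{thm_lower_bound_2way}, and using $h(\lambda N_s)-h((1-\lambda)N_s) = Q(\mathcal{E}_\lambda,N_s)$ for $\lambda \in (\tfrac12,1]$ together with $h(N_s)-h((1-\lambda)N_s) = R(\mathcal{E}_\lambda,N_s)$. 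No additional tail-bound or asymptotic-equipartition analysis is required at this stage, since all the heavy lifting is already contained in Theorem~\ref{thm_lower_bound_arb_gauss_main}.
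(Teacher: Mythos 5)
Your proposal is correct and follows essentially the same route as the paper: the paper proves this theorem by plugging the two-mode squeezed vacuum input $\ketbra{\Psi_{N_s}}_{AA'}$ (which saturates the energy constraint) into the energy-constrained versions of Theorem~\ref{thm_lower_bound} and Theorem~\ref{thm_lower_bound_2way}, computing the covariance matrix of $\Psi_{ABE}^{(\lambda,N_s)}$ via the beam-splitter symplectic action, and evaluating the conditional Petz--Rényi entropies with Lemma~\ref{thm_cond_petz_Gaussian} (offloaded to a Mathematica notebook) together with the known coherent-information formulas \eqref{ingredient2_ec}. No further comment is needed.
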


\subsection{Entropy variance approach}\label{section_entropy_cariance}
In the previous section we found a lower bound on the $n$-shot capacities by bounding the smooth max relative entropy (see Definition~\ref{def_max_smooth}) via the infinite-dimensional asymptotic equipartition property (see Lemma~\ref{lemma_fawzi}). An alternative approach to lower bound the $n$-shot capacities can be obtained from a different bound on the smooth max relative entropy in terms of the so-called \emph{relative entropy variance}.
\begin{Def}[(Relative entropy variance and Conditional entropy variance)]
For finite-dimensional Hilbert spaces, the {relative entropy variance} between a state $\rho$ and a positive semi-definite operator $\sigma$ (in finite dimension) is defined by
\bb
V(\rho\|\sigma)\coloneqq \Tr[\rho(\log_2 \rho-\log_2\sigma)^2]-D(\rho\|\sigma)^2\,.
\ee
For separable Hilbert spaces, this expression has to be meant as a converging series. The {conditional entropy variance} of a bipartite state $\rho$ on the joint system $AB$ is defined as
\bb
V(A|B)_{\rho}\coloneqq V(\rho_{AB}\|\mathbb{1}_{A}\otimes \rho_B)\,.
\ee
\end{Def}
The forthcoming Lemma~\ref{lemma_maxhyp}, presented in~\cite{khatri2021secondorder} (previously stated in the finite-dimensional setting in~\cite{Anshu2019} and later generalised to von Neumann algebras in~\cite{fawzi2023asymptotic}), establishes a connection between the smooth max-relative entropy $D_{\max}^{\varepsilon}(\rho\|\sigma)$ and the \emph{$\varepsilon$-hypothesis testing relative entropy} $D^{\varepsilon}_{h}(\rho\|\sigma)$, defined by
\bb
    D^{\varepsilon}_{h}(\rho\|\sigma)\coloneqq-\log_2\inf_{E}\left\{ \Tr[E\sigma]\,:\, 0\le E\le \mathbb{1},\,\Tr[E\rho]\ge 1-\varepsilon \right\}\,.
\ee 
\begin{lemma}\label{lemma_maxhyp}
Let $\rho$ be a state and $\sigma$ a positive semi-definite operator on a separable Hilbert space. Let $\varepsilon,\delta>0$ such that $\varepsilon+\delta<1$ and $D_{\max}^{\sqrt{\varepsilon}}(\rho\|\sigma)<\infty$. Then, it holds that
\bb
D_{\max}^{\varepsilon}(\rho\|\sigma)&\leq D^{1-\varepsilon}_{h}(\rho\|\sigma)+\log_2\left(\frac{1}{1-\varepsilon}\right)\,,\\
D^{1-\delta-\varepsilon}_{h}(\rho\|\sigma)&\leq D^{\sqrt{\varepsilon}}_{\max}(\rho\|\sigma)+\log_2\left(\frac{4(1-\delta)}{1-\varepsilon}\right)\,.
\ee
\end{lemma}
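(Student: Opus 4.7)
Both inequalities are established relationships between the smoothed max-relative entropy and the hypothesis-testing relative entropy; the paper attributes the finite-dimensional version to \cite{Anshu2019} and the separable-Hilbert-space generalisation to \cite{fawzi2023asymptotic,khatri2021secondorder}. The proof I would reconstruct follows the template of those references: in each direction, convert an optimiser of one side into a feasible point of the optimisation defining the other, with the multiplicative constants extracted by careful parameter tuning. The hypothesis $D_{\max}^{\sqrt{\varepsilon}}(\rho\|\sigma)<\infty$ plays the essential role of guaranteeing that the relevant optimisations reduce to compact subproblems, so that SDP-duality arguments apply and extend to the separable-Hilbert-space setting by a standard truncation-and-limit procedure.

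\textbf{First inequality.} The natural starting point is the primal
\[
2^{-D_h^{1-\varepsilon}(\rho\|\sigma)}\;=\;\min\bigl\{\Tr[\Lambda\sigma]\,:\,0\leq\Lambda\leq\mathbb{1},\;\Tr[\Lambda\rho]\geq\varepsilon\bigr\}\,.
\]
I would invoke SDP strong duality (valid here via Slater-type interior conditions) to extract Lagrange multipliers $\mu\geq 0$ and $X\geq 0$ satisfying the operator inequality $\mu\rho\leq\sigma+X$ with $\varepsilon\mu-\Tr X=2^{-D_h^{1-\varepsilon}(\rho\|\sigma)}$. From this dual data one constructs a candidate smoothing $\tilde{\rho}$ of $\rho$ explicitly, for instance by taking $\tilde{\rho}$ proportional to the positive part of $\mu\rho-X$ (equivalently, a post-selection on the optimal test $\Lambda^{*}$, of the form $\tilde{\rho}\propto\sqrt{\Lambda^{*}}\rho\sqrt{\Lambda^{*}}$). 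Two properties must then be verified: the purified-distance bound $P(\rho,\tilde{\rho})\leq\varepsilon$, which reduces to tracking how the positive-part truncation affects fidelity via an Uhlmann-type calculation using $\sqrt{\Lambda^{*}}\geq\Lambda^{*}$; and the operator inequality $\tilde{\rho}\leq(1-\varepsilon)^{-1}\,2^{D_h^{1-\varepsilon}(\rho\|\sigma)}\sigma$, which is read off from $\mu\rho\leq\sigma+X$ after restriction to the support of $\Lambda^{*}$ (where $X$ vanishes by complementary slackness) and renormalisation by $\Tr[\Lambda^{*}\rho]\geq\varepsilon$. The factor $(1-\varepsilon)^{-1}$ emerges precisely at this renormalisation step.

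\textbf{Second inequality.} Here I would go the opposite direction. Let $\tilde{\rho}$ be a state achieving $D_{\max}^{\sqrt{\varepsilon}}(\rho\|\sigma)$, so $P(\rho,\tilde{\rho})\leq\sqrt{\varepsilon}$ and $\tilde{\rho}\leq M\sigma$ with $M\coloneqq 2^{D_{\max}^{\sqrt{\varepsilon}}(\rho\|\sigma)}$. Fuchs--van de Graaf yields $\tfrac12\|\rho-\tilde{\rho}\|_1\leq\sqrt{\varepsilon}$. For any test $E$ feasible for $D_h^{1-\delta-\varepsilon}(\rho\|\sigma)$ one has $\Tr[E\rho]\geq\delta+\varepsilon$, hence $\Tr[E\tilde{\rho}]\geq\delta+\varepsilon-\sqrt{\varepsilon}$ by the trace-distance closeness, and $\tilde{\rho}\leq M\sigma$ then forces $\Tr[E\sigma]\geq(\delta+\varepsilon-\sqrt{\varepsilon})/M$. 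Taking the infimum over $E$ gives a weak form of the inequality with additive slack $\log_2\bigl(1/(\delta+\varepsilon-\sqrt{\varepsilon})\bigr)$. Sharpening this to the stated $\log_2\bigl(4(1-\delta)/(1-\varepsilon)\bigr)$ requires a Neyman--Pearson-type test tuned to the pair $(\tilde{\rho},\sigma)$ rather than an arbitrary feasible $E$: one sets $\Lambda_t\coloneqq\{\tilde{\rho}\geq tM\sigma\}$ and optimises over $t\in(0,1)$, using the operator bound $\tilde{\rho}\leq M\sigma$ to control the tail $\Tr[(\mathbb{1}-\Lambda_t)\tilde{\rho}]$ against $\Tr[(\mathbb{1}-\Lambda_t)\sigma]$, and transferring back to $\rho$ via the trace-distance bound. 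The constraint $\varepsilon+\delta<1$ is used precisely to keep the optimiser $t^{*}$ strictly positive.

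\textbf{Main obstacle.} The hardest part of the argument is not the overall structure, which follows a well-established pattern, but the precise extraction of the claimed multiplicative constants $1/(1-\varepsilon)$ and $4(1-\delta)/(1-\varepsilon)$, as opposed to the cruder $O(1)$ constants that fall out of the naive constructions. This rests on a delicate algebraic balance between the Lagrange multipliers (in the first inequality) or the threshold parameter (in the second) against the sine-distance budget, and on using $\varepsilon+\delta<1$ to avoid degenerate normalisations. A secondary technicality is extending the finite-dimensional SDP-duality and threshold-test constructions to the separable Hilbert space setting by a truncation-and-limit procedure, which is licensed by the assumption $D_{\max}^{\sqrt{\varepsilon}}(\rho\|\sigma)<\infty$.
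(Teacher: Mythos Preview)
The paper does not prove this lemma at all: it is stated as a quoted result, attributed to \cite{khatri2021secondorder} (with the finite-dimensional origin in \cite{Anshu2019} and the von-Neumann-algebra extension in \cite{fawzi2023asymptotic}), and then immediately used. There is therefore nothing in the paper's own text to compare your argument against.

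Your sketch is broadly in line with how the cited references proceed---SDP duality for the first inequality, and a threshold test transferred via the purified-distance smoothing for the second---and you correctly flag that extracting the exact constants $1/(1-\varepsilon)$ and $4(1-\delta)/(1-\varepsilon)$ is where the work lies. One small caution: in your second-inequality sketch, the ``weak form'' route through an arbitrary feasible $E$ and Fuchs--van de Graaf gives a slack $\delta+\varepsilon-\sqrt{\varepsilon}$ that can be nonpositive for small $\delta$, so that path genuinely does not close; the Neyman--Pearson construction you allude to is not merely a sharpening but is essential. If you intend to supply a self-contained proof rather than cite the literature as the paper does, you would need to carry out that threshold-test argument in full.
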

The following is an immediate consequence of~\cite[Proposition 31]{MMMM}.
\begin{lemma}[(Upper bound for the hypothesis testing relative entropy)]\label{lemma_ChebyDh}
Let $\rho$ be a positive definite state and $\sigma$ be a positive definite operator with bounded trace on a separable Hilbert space. For any $\varepsilon\in(0,1)$ and $n\in\mathbb{N}$ we have that
\begin{align}
D^{\varepsilon}_{h}(\rho^{\otimes n}\|\sigma^{\otimes n})&\leq n D(\rho\|\sigma)+\sqrt{\frac{{nV(\rho\|\sigma)}}{1-\varepsilon}}+\log_2 6+ 2\log_2\frac{1+\varepsilon}{1-\varepsilon}. 
\end{align}
\end{lemma}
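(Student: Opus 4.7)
The plan is to reduce the claim to a single-shot bound on $D_h^\varepsilon$ (Proposition~31 of~\cite{MMMM}) and then apply Chebyshev's inequality to the log-likelihood random variable associated with the i.i.d.\ pair $(\rho^{\otimes n},\sigma^{\otimes n})$.

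First, I would invoke \cite[Proposition~31]{MMMM}, which provides, for any state $\tau$ and any positive operator $\omega$ of bounded trace on a separable Hilbert space, an upper bound of the form
\begin{equation*}
D_h^\varepsilon(\tau\|\omega)\leq R+\log_2 6+2\log_2\!\frac{1+\varepsilon}{1-\varepsilon},
\end{equation*}
valid whenever a suitable tail inequality on the classical log-likelihood random variable $L\coloneqq \log_2 p_\tau-\log_2 q_\omega$ holds at the value $R$, where $(p_\tau,q_\omega)$ are the Nussbaum--Szko\l{}a distributions of the pair $(\tau,\omega)$.

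Next, I would specialise this single-shot bound to the $n$-copy pair $(\rho^{\otimes n},\sigma^{\otimes n})$. Because the Nussbaum--Szko\l{}a construction respects tensor products, the relevant log-likelihood random variable becomes $L_n=X_1+\cdots+X_n$, a sum of $n$ i.i.d.\ copies of the single-letter variable $X$. By the defining property of the Nussbaum--Szko\l{}a distributions one has $\mathbb{E}[X]=D(\rho\|\sigma)$ and $\mathrm{Var}(X)=V(\rho\|\sigma)$, whence $\mathbb{E}[L_n]=nD(\rho\|\sigma)$ and $\mathrm{Var}(L_n)=nV(\rho\|\sigma)$.

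Then I would control the upper tail of $L_n$ via Chebyshev's inequality: for any $t>0$,
\begin{equation*}
\Pr\!\left[L_n>nD(\rho\|\sigma)+t\sqrt{nV(\rho\|\sigma)}\right]\leq\frac{1}{t^2}.
\end{equation*}
Choosing $t=1/\sqrt{1-\varepsilon}$ makes the resulting bound on the tail match the hypothesis of Proposition~31 at error threshold $\varepsilon$, and inserting $R=nD(\rho\|\sigma)+\sqrt{nV(\rho\|\sigma)/(1-\varepsilon)}$ into the single-shot estimate yields exactly the claimed inequality.

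The main obstacle is purely bookkeeping: one must verify that the tail threshold built into \cite[Proposition~31]{MMMM} lines up with Chebyshev's $1/t^2$ under the specific choice $t=1/\sqrt{1-\varepsilon}$, so that the variance term in the final estimate is exactly $\sqrt{nV/(1-\varepsilon)}$ and the additive correction is exactly $\log_2 6+2\log_2\frac{1+\varepsilon}{1-\varepsilon}$. The infinite-dimensional (separable) setting poses no genuine analytical difficulty, since positive definiteness of $\rho$ and $\sigma$ together with the bounded-trace assumption on $\sigma$ ensure that the Nussbaum--Szko\l{}a distributions are well-defined countable probability measures and that the mean and variance of the single-letter log-likelihood $X$ are finite and coincide with $D(\rho\|\sigma)$ and $V(\rho\|\sigma)$ as required.
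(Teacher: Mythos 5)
Your proof is correct in substance and rests on the same key external ingredient as the paper, namely \cite[Proposition~31]{MMMM}; the difference is one of granularity rather than of route. The paper's entire proof consists of citing that proposition as the full $n$-copy inequality and noting two adjustments --- the extension from states $\sigma$ to positive operators with bounded trace, and the conversion from natural to base-two logarithms --- whereas you reconstruct the proposition's internals: the single-shot information-spectrum reduction, the Nussbaum--Szko\l{}a distributions and their tensorization, and the Chebyshev step. Your bookkeeping does close: Chebyshev at $t=1/\sqrt{1-\varepsilon}$ gives tail mass at most $1-\varepsilon$ beyond $nD(\rho\|\sigma)+\sqrt{nV(\rho\|\sigma)/(1-\varepsilon)}$, which is exactly the threshold the single-shot bound requires, and reproduces both the variance term and the additive constant $\log_2 6+2\log_2\frac{1+\varepsilon}{1-\varepsilon}$. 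One minor slip: when $\sigma$ is not normalised the Nussbaum--Szko\l{}a marginal $q$ has total mass $\Tr\sigma\neq 1$, so it is a positive measure rather than a probability measure; this is harmless because the mean and variance of the log-likelihood are taken with respect to $p$, but it is precisely the point the paper flags when it extends the cited proposition beyond normalised $\sigma$, so you should state it rather than assert that both $p$ and $q$ are probability measures.
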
 

Note that \cite[Proposition 31]{MMMM} is originally stated in the case in which $\sigma$ is a state. However, it can be readily extended to positive semi-definite operators by considering the definitions of $D^{\varepsilon}_{h}(\rho\|\sigma)$, $D(\rho\|\sigma)$, and $V(\rho\|\sigma)$. Additionally, note that the original statement employs the definitions of these quantities in terms of natural logarithms, but we have adjusted it to align with our convention of base-two logarithms.

As a consequence of Lemma~\eqref{lemma_maxhyp} and Lemma~\eqref{lemma_ChebyDh}, we have the following.
\begin{lemma} \label{lemma_low_variance_hmin}
Let $\rho$ be a positive definite state and $\sigma$ be a positive definite operator with bounded trace on a separable Hilbert space. For any $\varepsilon\in(0,1)$ and $n\in\mathbb{N}$ we have that
\begin{align}\label{lowerbound_cond_var}
D_{\max}^{\varepsilon}(\rho^{\otimes n}\|\sigma^{\otimes n})\leq n D(\rho\|\sigma)+\sqrt{\frac{{nV(\rho\|\sigma)}}{\varepsilon}}+\log_2\!\left(\frac{6(2-\varepsilon)^2}{\varepsilon^2(1-\varepsilon)}\right)\,.
\end{align}
\end{lemma}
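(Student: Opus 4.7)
The plan is to chain the two preceding lemmas directly, using the tensor power state on both sides. Since both $\rho$ and $\sigma$ are assumed positive definite with bounded trace (so that $\sigma^{\otimes n}$ has bounded trace as well), both lemmas apply to the product operators $\rho^{\otimes n}$ and $\sigma^{\otimes n}$, and the bound we want should drop out after some bookkeeping of the logarithmic terms.

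First, I would apply the first inequality of Lemma~\ref{lemma_maxhyp} to the tensor powers $\rho^{\otimes n}$ and $\sigma^{\otimes n}$, with the same smoothing parameter $\varepsilon$. This yields
\begin{equation*}
D_{\max}^{\varepsilon}(\rho^{\otimes n}\|\sigma^{\otimes n}) \;\leq\; D_h^{\,1-\varepsilon}(\rho^{\otimes n}\|\sigma^{\otimes n}) + \log_2\!\left(\frac{1}{1-\varepsilon}\right)\,,
\end{equation*}
provided the finiteness hypothesis $D_{\max}^{\sqrt{\varepsilon}}(\rho^{\otimes n}\|\sigma^{\otimes n}) < \infty$ is met; this is where positive definiteness of $\rho$ and $\sigma$ is essential, so that the support condition built into the definition of $D_{\max}^{\sqrt{\varepsilon}}$ is automatic.

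Next, I would invoke Lemma~\ref{lemma_ChebyDh} with its error parameter set to $1-\varepsilon$, which gives
\begin{equation*}
D_h^{\,1-\varepsilon}(\rho^{\otimes n}\|\sigma^{\otimes n}) \;\leq\; n\,D(\rho\|\sigma) + \sqrt{\frac{n\,V(\rho\|\sigma)}{\varepsilon}} + \log_2 6 + 2\log_2\!\left(\frac{2-\varepsilon}{\varepsilon}\right)\,.
\end{equation*}
Adding the two chained bounds and collecting the constants $\log_2 6 + 2\log_2\!\left(\tfrac{2-\varepsilon}{\varepsilon}\right) + \log_2\!\left(\tfrac{1}{1-\varepsilon}\right)$ into a single logarithm produces the claimed term $\log_2\!\bigl(6(2-\varepsilon)^2/[\varepsilon^2(1-\varepsilon)]\bigr)$, which concludes the proof.

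There is essentially no obstacle here: the lemma is a one-line corollary of the two preceding ones. The only thing to be mildly careful about is the role of the exponent $1-\varepsilon$ in Lemma~\ref{lemma_ChebyDh} (the factor $\sqrt{n V(\rho\|\sigma)/(1-(1-\varepsilon))} = \sqrt{nV(\rho\|\sigma)/\varepsilon}$ accounts for the $1/\sqrt{\varepsilon}$ scaling of the second-order term in the final bound) and to verify the algebraic simplification of the three logarithmic constants. No further ingredients are required.
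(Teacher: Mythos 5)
Your proof is correct and is exactly the route the paper intends: the paper presents this lemma as an immediate consequence of Lemma~\ref{lemma_maxhyp} (first inequality, applied to the tensor powers) and Lemma~\ref{lemma_ChebyDh} (with error parameter $1-\varepsilon$), and your bookkeeping of the constant $\log_2 6+2\log_2\frac{2-\varepsilon}{\varepsilon}+\log_2\frac{1}{1-\varepsilon}=\log_2\frac{6(2-\varepsilon)^2}{\varepsilon^2(1-\varepsilon)}$ checks out. Your remark on the finiteness hypothesis is also sound, since for positive definite $\sigma$ with bounded trace one can smooth $\rho^{\otimes n}$ onto the spectral subspace of $\sigma^{\otimes n}$ with eigenvalues above a small threshold, making $D_{\max}^{\sqrt{\varepsilon}}(\rho^{\otimes n}\|\sigma^{\otimes n})$ finite.
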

Let us state the following lower bound on the $n$-shot quantum capacity of the pure loss channel.
\begin{lemma}\label{lem_variance-bound-k-Q}
    Let $\lambda\in(0,1)$ and $N_s>0$. For all $\varepsilon\in(0,1)$ and $n\in\mathbb{N}$, the $n$-shot quantum capacity of $\mathcal{E}_\lambda$ can be lower bounded in terms of the quantum relative entropy and the conditional entropy variance as follows:
    \bb\label{eq_statement_lem_variance}
        Q^{(\varepsilon,n)}(\mathcal{E}_{\lambda})\ge- n D(\Psi^{(k)}_{AE}\|\mathbb{1}_A \otimes \Psi^{(k)}_{E})-4\sqrt{\frac{{nV(A|E)_{\Psi^{(k)}}}}{\varepsilon}}-\log_2\!\left(\frac{2^{23}(32-\varepsilon)^2}{(16-\varepsilon)\varepsilon^6 }\right)\,,
    \ee 
    where $\Psi^{(k)}_{AE}$ is the reduced state on systems $A,E$ of the tri-partite state $\Psi_{ABE}^{(k)}\coloneqq \ketbra{\Psi^{(k)}}_{ABE}$ on systems $A,B,E$ defined as 
    \bb\label{def_psi_k_abe}
        \ket{\Psi^{(k)}}_{ABE}\coloneqq \frac{P_k^A \ket{\Psi^{(\lambda,N_s)}}_{ABE}}{\sqrt{\Tr[P_k \tau_{N_s}]}}\,.
    \ee
    Here, $\Psi^{(\lambda,N_s)}_{ABE}\coloneqq \ketbra{\Psi^{(\lambda,N_s)}}_{ABE}$ is a tri-partite state on systems $A,B,E$ defined as 
    \bb
        \ket{\Psi^{(\lambda,N_s)}}_{ABE}\coloneqq \mathbb{1}_A\otimes U_\lambda^{BE}\, \ket{\psi_{N_s}}_{AB}\otimes\ket{0}_{E}\,;
    \ee
    $U_\lambda^{BE}$ is the beam splitter unitary defined in Definition~\ref{def_beam_splitter}; $\ket{\psi_{N_s}}_{AB}$ is the two-mode squeezed vacuum state defined in Eq.~\eqref{eq:2mode_squeezed}; $\tau^A_{N_s}\coloneqq \Tr_{B}\ketbra{\psi_{N_s}}_{AB}$ is the thermal state with mean photon number $N_s$; $\ket{0}_E$ denoted the vacuum state on $E$; and $P_k^A\coloneqq \sum_{n=0}^{k}\ketbra{n}_A$ is the projector on the first $k+1$ Fock states of $A$.

    In addition, the right-hand side of \eqref{eq_statement_lem_variance} constitutes also a lower bound on the energy-constrained $n$-shot quantum capacity $Q^{(\varepsilon,n)}(\mathcal{E}_{\lambda},N_s)$.
\end{lemma}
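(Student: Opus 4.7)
The plan is to closely follow the strategy of the proof of Theorem~\ref{thm_lower_bound}, replacing the asymptotic equipartition property of Lemma~\ref{lemma_fawzi} with the entropy-variance bound on the smooth max-relative entropy provided by Lemma~\ref{lemma_low_variance_hmin}. First, I would reproduce the finite-dimensional embedding from the proof of Theorem~\ref{thm_lower_bound}: the truncated pure state $\ket{\Psi^{(k)}}_{ABE}$ defined in \eqref{def_psi_k_abe} has $\Psi^{(k)}_A$ supported on the $(k+1)$-dimensional subspace $\HH^A_k$, and after an appropriate isometric embedding into a qudit system we realise $\mathcal{E}_\lambda$ (composed with the reference-side truncation) as a finite-dimensional channel $\NN^{(k)}$. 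In contrast to Theorem~\ref{thm_lower_bound}, the present statement is at a fixed $k$, so the subtle $k\to\infty$ limit analysis of entropic quantities is not required here.

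Applying Lemma~\ref{Lemma_Sumeet_book} to $\NN^{(k)}$ and repeating the steps (i)--(iii) that led to \eqref{ineq_proof_thm_lower_bound}, one obtains that for every $\delta\in(0,1)$ and $\eta\in[0,\tfrac{\varepsilon\sqrt{\delta}}{4})$,
\bb
Q^{(\varepsilon,n)}(\mathcal{E}_\lambda) \ge -D_{\max}^{\frac{\varepsilon\sqrt{\delta}}{4}-\eta}\!\left((\Psi^{(k)}_{AE})^{\otimes n}\,\big\|\,\mathbb{1}_A^{\otimes n}\otimes (\Psi^{(k)}_E)^{\otimes n}\right)+\log_2\!\left(\eta^4(1-\delta)\right),
\ee
where $\mathbb{1}_A$ is understood as the identity on the finite-dimensional support of $\Psi^{(k)}_A$. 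Because this support is finite-dimensional, the operator $\mathbb{1}_A\otimes\Psi^{(k)}_E$ has finite trace, so Lemma~\ref{lemma_low_variance_hmin} applies and yields, with $\varepsilon'\coloneqq \tfrac{\varepsilon\sqrt{\delta}}{4}-\eta$,
\bb
Q^{(\varepsilon,n)}(\mathcal{E}_\lambda) \ge -nD(\Psi^{(k)}_{AE}\|\mathbb{1}_A\otimes \Psi^{(k)}_E)-\sqrt{\tfrac{nV(A|E)_{\Psi^{(k)}}}{\varepsilon'}}-\log_2\!\left(\tfrac{6(2-\varepsilon')^2}{\varepsilon'^2(1-\varepsilon')}\right)+\log_2\!\left(\eta^4(1-\delta)\right).
\ee

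Finally I would specialise to the choice $\delta=1/4$ and $\eta=\varepsilon/16$, so that $\varepsilon'=\varepsilon/16$ and the $\sqrt{n}$ coefficient becomes $\sqrt{16/\varepsilon}=4/\sqrt{\varepsilon}$. A direct computation of the two logarithmic constants, $-\log_2(\eta^4(1-\delta))=\log_2(2^{18}/(3\varepsilon^4))$ and $\log_2(6(2-\varepsilon')^2/[\varepsilon'^2(1-\varepsilon')])=\log_2(96(32-\varepsilon)^2/[\varepsilon^2(16-\varepsilon)])$, yields a sum of $\log_2(2^{23}(32-\varepsilon)^2/[\varepsilon^6(16-\varepsilon)])$, exactly matching the claimed constant. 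The energy-constrained version is obtained by noting that the marginal on $A'$ of the truncated two-mode squeezed vacuum is a truncated geometric (thermal) distribution, whose mean photon number is bounded above by $N_s$ since conditioning a geometric distribution on a lower-tail event $\{n\le k\}$ can only decrease its mean. The main technical subtlety I anticipate is justifying that Lemma~\ref{lemma_low_variance_hmin} applies in this infinite-dimensional setting with the natural interpretation of $\mathbb{1}_A$ restricted to $\mathrm{supp}(\Psi^{(k)}_A)$; this is resolved precisely because $P_k^A$ makes the $A$-marginal finite-dimensional, and the rest is bookkeeping already carried out in the proof of Theorem~\ref{thm_lower_bound}.
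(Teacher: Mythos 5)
Your overall strategy is the one the paper uses: recycle the finite-dimensional reduction and inequality \eqref{ineq_proof_thm_lower_bound} from the proof of Theorem~\ref{thm_lower_bound}, substitute the entropy-variance bound of Lemma~\ref{lemma_low_variance_hmin} for the asymptotic equipartition property, and close with $\delta=1/4$, $\eta=\varepsilon/16$; your computation of the constants is correct, and your argument for the energy constraint (truncating a geometric distribution to $\{n\le k\}$ only lowers its mean) is if anything cleaner than the paper's. However, there is one genuine gap. You justify the application of Lemma~\ref{lemma_low_variance_hmin} solely by observing that $\mathbb{1}_A\otimes\Psi^{(k)}_E$ has bounded trace, but the lemma as stated requires \emph{both} $\rho$ and $\sigma$ to be \emph{positive definite}, and this hypothesis genuinely fails here: $\Psi^{(k)}_{AE}$ is the marginal of a pure tripartite state with a nontrivial $B$ system, so it has rank at most $k+1$ inside a subspace of dimension roughly $(k+1)^2$, and $\mathbb{1}_A\otimes\Psi^{(k)}_E$ is likewise supported on a proper subspace. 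The paper does not treat this as bookkeeping; it devotes a paragraph to regularising with an additive-noise channel $\mathcal{E}_{N_{\mathrm{add}}}$ (following the argument of~\cite[Theorem~24]{MMMM}), applying the lemma to the full-rank perturbed operators, and then verifying that $D$ and the conditional entropy variance converge as $N_{\mathrm{add}}\to 0$ by reducing to finitely many diagonal Fock-basis matrix elements. Restricting to supports does not immediately rescue you either, since the support of $\mathbb{1}_A\otimes\Psi^{(k)}_E$ need not coincide with that of $\Psi^{(k)}_{AE}$, so some limiting or support-adjustment argument is unavoidable.

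A secondary, more minor omission: inequality \eqref{ineq_proof_thm_lower_bound} is stated for the state $\Psi^{(k)}_{A_kEE'}$, where $E'$ is the extra environment introduced by the output truncation map $\mathcal{P}^{(k)}_{B\to B}$, whereas your first display silently replaces $AEE'$ by $AE$. This replacement is legitimate, but only because for the truncated two-mode squeezed vacuum input the beam splitter cannot increase the total photon number, so the output already lies in the range of $P_k^B$ and $E'$ decouples in the vacuum, giving $\Psi^{(k)}_{A_kB_kEE'}\cong\Psi^{(k)}_{ABE}\otimes\ketbra{0}_{E'}$; the paper proves this explicitly using Eq.~\eqref{eq_beam_n0}, and your write-up should at least record why the $E'$ system can be dropped.
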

\begin{proof}
Let us employ same notation used in the proof of Theorem~\ref{thm_lower_bound}, where  now the input state is chosen to be the two-mode squeezed vacuum state $\ket{\psi_{N_s}}_{AB}$. By exploiting \eqref{ineq_proof_thm_lower_bound}, we have that
\bb\label{eq_00001}
            Q^{(\varepsilon,n)}(\mathcal{E}_\lambda)&\ge -D_{\max}^{\frac{\varepsilon\sqrt{\delta}}{4}-\eta}\left((\Psi^{(k)}_{A_kEE'})^{\otimes n}\|\mathbb{1}_{A_k^n}\otimes (\Psi^{(k)}_{EE'})^{\otimes n}\right)+\log_2\!\left(\eta^4(1-\delta)\right)\,,
\ee
where, thanks to \eqref{eq_def_psik_proof}, the state $\Psi^{(k)}_{A_kB_kEE'}$ can be written as
\bb\label{eq_proof_001}
        \ket{\Psi_{A_kB_kEE'}^{(k)}}= V_{A\to A_k}\otimes V_{B\to B_k}\otimes \mathbb{1}_{EE'}\frac{P_k^A\otimes W_{B\to BE'}^{(k)}\ket{\Psi^{(\lambda,N_s)}_{ABE}}}{\sqrt{\Tr[P_k \tau_{N_s}]}}\,.
    \ee
We have that
\bb
    P_k^A\otimes \mathbb{1}\ket{\Psi^{(\lambda,N_s)}_{ABE}}=\mathbb{1}_A\otimes U_\lambda^{BE}\, (P_k^A\otimes \mathbb{1}_B\ket{\psi_{N_s}}_{AB})\otimes\ket{0}_{E}\,.
\ee
and that
\bb
    P_k^A\otimes \mathbb{1}_B\ket{\psi_{N_s}}_{AB}=P_k^A\otimes P_k^B\ket{\psi_{N_s}}_{AB}\,,
\ee
where the last equality follows from the definition of two-mode squeezed vacuum state in Eq.~\eqref{eq:2mode_squeezed}.
Now, we apply the fact that the beam splitter unitary can not increase the total number of photons. Specifically, from Eq.~\eqref{eq_beam_n0} it easily follows that 
\bb
    \mathbb{1}_A\otimes U_\lambda^{BE}\, (P_k^A\otimes P_k^B\otimes \mathbb{1}_E)\ket{\psi_{N_s}}_{AB}\otimes\ket{0}_{E}&=(P_k^A\otimes P_k^B\otimes \mathbb{1}_E)\mathbb{1}_A\otimes U_\lambda^{BE}\ket{\psi_{N_s}}_{AB}\otimes\ket{0}_{E}  \,,
\ee
i.e.
\bb
    P_k^A\otimes \mathbb{1}\ket{\Psi^{(\lambda,N_s)}_{ABE}}= (P_k^A\otimes P_k^B\otimes \mathbb{1}_E)\ket{\psi_{N_s}^{(\lambda,N_s)}}_{ABE}\,.
\ee
Consequently, from the definition of $W_{B\to BE'}^{(k)}$ in \eqref{def_w_bbe} it follows that
\bb
    P_k^A\otimes W_{B\to BE'}^{(k)}\ket{\Psi^{(\lambda,N_s)}_{ABE}}= P_k^A\otimes \mathbb{1}_{BEE'} \ket{\Psi^{(\lambda,N_s)}_{ABE}}\otimes\ket{0}_{E'}\,.
\ee
and hence \eqref{eq_proof_001} implies that
\bb
     \ket{\Psi_{A_kB_kEE'}^{(k)}}=V_{A\to A_k}\otimes V_{B\to B_k}\otimes \mathbb{1}_{EE'}\left(\frac{P_k^A\otimes \mathbb{1}_{BE} \ket{\Psi^{(\lambda,N_s)}_{ABE}}}{\sqrt{\Tr[P_k \tau_{N_s}]}}\right) \otimes\ket{0}_{E'}\,.
\ee
We are now ready to apply Lemma~\ref{lemma_low_variance_hmin}. Note that one of the hypotheses of Lemma~\ref{lemma_low_variance_hmin} is satisfied because the operator $\sigma\coloneqq\mathbb{1}_{A_k}\otimes \Psi^{(k)}_{EE'} $ has bounded trace (its trace equals $k+1$). The hypothesis that $\sigma$ is positive definite is not satisfied because here $\sigma$ is just positive semi-definite (and not positive definite). However, one can employ a completely analogous reasoning as in \cite[Theorem~24]{MMMM} to make the the statement of Lemma~\ref{lemma_low_variance_hmin} valid also in this case. Indeed, by following the reasoning of \cite[Theorem~24]{MMMM}, we can use that  $D^{\varepsilon}_{\max}(\rho\|\sigma)\leq D^{\varepsilon+\frac12\|\rho-\rho'\|_1}_{\max}(\rho'\|\sigma)$, which is valid for any state $\rho'$, and choose 
\bb
    \rho'=\rho_{k,N_{\mathrm{add}}}\coloneqq\frac{P^E_kP^{E'}_k\mathcal{E}_{N_{\mathrm{add}},EE'}((\Psi^{(k)}_{A_kEE'})^{\otimes n})P^{E}_k P^{E'}_k}{\Tr[P_k^E P^{E'}_k\mathcal{E}_{N_{\mathrm{add},EE'}}((\Psi^{(k)}_{A_kEE'})^{\otimes n})]}\,,
\ee
where $\mathcal{E}_{N_{\mathrm{add}},EE'}$ is an additive noise channel with parameter $N_{\mathrm{add}}$ acting on system $EE'$ (such that $N_{\mathrm{add}}=0$ corresponds to the identity channel), so that $\lim_{N_{\mathrm{add}}\rightarrow 0}\|\rho-\rho'\|_1=0$. 
In the achievability bounds we are allowed to choose any admissible $\sigma$, see Definition~\ref{def_smooth_min_ent} and how this is applied in~\eqref{ineq_proof_thm_lower_bound}; thus we can take it to be $\sigma_{k,N_{\mathrm{add}}}\coloneqq\mathbb{1}_{A_k}\otimes \Tr_{A_k}[\rho_{k,N_{\mathrm{add}}}]$. 
In this way $\sigma$ is full rank and Lemma~\ref{lemma_low_variance_hmin} can be applied. In the end, we can send $N_{\mathrm{add}}$ to zero, and we are left to check that the limits of the conditional entropy and of the variance of the conditional entropy converge to their value when $N_{\mathrm{add}}=0$. One can be convinced of this fact by noticing that to evaluate these expression on the considered class of states one only needs the diagonal density matrix elements in the Fock basis (since $\sigma_{k,N_{\mathrm{add}}}$ is diagonal in this basis), which are finitely many due to the projections $P_k^E$ and $P_k^{E'}$, and converge to non-zero numbers corresponding to the case $N_{\mathrm{add}}=0$. Hence, we conclude that the statement of Lemma~\ref{lemma_low_variance_hmin} is valid also when $\sigma$ is just positive semi-definite without being positive definite.

Consequently, by exploiting Lemma~\ref{lemma_low_variance_hmin} and Eq.~\eqref{eq_00001} with $\delta=\frac{1}{4}$ and $\eta=\frac{\varepsilon}{16}$, we obtain that
\bb\label{eq012345}
        Q^{(\varepsilon,n)}(\mathcal{E}_{\lambda}) &\geq  - n D(\Psi^{(k)}_{AE}\|\mathbb{1}_A \otimes \Psi^{(k)}_{E})-4\sqrt{\frac{{nV(A|E)_{\Psi^{(k)}}}}{\varepsilon}}-\log_2\!\left(\frac{2^{23}(32-\varepsilon)^2}{(16-\varepsilon)\varepsilon^6 }\right)\,.
    \ee  
Since the mean photon number on $B$ of the two-mode squeezed vacuum state $\ket{\psi_{N_s}}_{AB}$ is equal to $N_s$, it follows that the right-hand side of Eq.~\eqref{eq012345} constitutes also a lower bound on the energy-constrained $n$-shot quantum capacity $Q^{(\varepsilon,n)}(\mathcal{E}_{\lambda},N_s)$.
\end{proof}
Similarly, we obtain a lower bound on the $n$-shot two-way quantum capacity and secret-key capacity of the pure loss channel.
\begin{lemma}\label{lem_variance-bound-k-L}
    Let $\varepsilon\in(0,1)$ and $n\in\mathbb{N}$. By employing the same notation as in Lemma~\ref{lem_variance-bound-k-Q}, the $n$-shot two-way quantum capacity and the $n$-shot secret-key capacity of the pure loss channel $\mathcal{E}_\lambda$ can be lower bounded in terms of the quantum relative entropy and the conditional entropy variance as follows:
    \bb\label{b_lower_bound_CV_Q2_11}
        K^{(\varepsilon,n)}(\mathcal{E}_{\lambda})\ge Q_2^{(\varepsilon,n)}(\mathcal{E}_{\lambda})&\ge - n D(\Psi^{(k)}_{AE}\|\mathbb{1}_A \otimes \Psi^{(k)}_{E})-\sqrt{\frac{{2nV(A|E)_{\Psi^{(k)}}}}{\sqrt{\varepsilon}}}-\log_2\!\left(\frac{2^{6}\,3\,(4-\sqrt{\varepsilon})^2}{(2-\sqrt{\varepsilon})\varepsilon^3 }\right)\,,\\
        K^{(\varepsilon,n)}(\mathcal{E}_{\lambda})\ge Q_2^{(\varepsilon,n)}(\mathcal{E}_{\lambda}) &\ge  - n D(\Psi^{(k)}_{BE}\|\mathbb{1}_A \otimes \Psi^{(k)}_{E})-\sqrt{\frac{{2nV(B|E)_{\Psi^{(k)}}}}{\sqrt{\varepsilon}}}-\log_2\!\left(\frac{2^{6}\,3\,(4-\sqrt{\varepsilon})^2}{(2-\sqrt{\varepsilon})\varepsilon^3 }\right)\,.
    \ee
    In addition, the right-hand sides of \eqref{b_lower_bound_CV_Q2_11} constitute  also a lower bound on the energy-constrained $n$-shot two-way quantum capacity $Q_2^{(\varepsilon,n)}(\mathcal{E}_{\lambda},N_s)$ and energy-constrained $n$-shot two-way secret-key capacity $K^{(\varepsilon,n)}(\mathcal{E}_{\lambda},N_s)$.
\end{lemma}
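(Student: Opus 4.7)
The plan is to mirror the proof of Lemma~\ref{lem_variance-bound-k-Q}, replacing the one-shot quantum-capacity achievability bound of Lemma~\ref{Lemma_Sumeet_book} by the one-shot entanglement- and key-distillation bound of Lemma~\ref{lemma_lower_2_way}. I would reuse the finite-dimensional channel $\NN^{(k)}_{A'_k\to B_k}$ and the pure input state $\Phi^{(k)}_{A_kA'_k}$ built from the two-mode squeezed vacuum $\ket{\psi_{N_s}}_{AA'}$ exactly as in the proof of Theorem~\ref{thm_lower_bound}; since every $(n,|M|,\varepsilon)$ LOCC-assisted protocol for $\NN^{(k)}$ gives a valid protocol for $\mathcal{E}_\lambda$ by pre-composing encoders with the isometric embedding $\mathcal{V}'_{A'_k\to A'}$ and post-composing decoders with $\mathcal{V}_{B\to B_k}\circ \mathcal{P}^{(k)}_{B\to B}$, one has $K^{(\varepsilon,n)}(\mathcal{E}_\lambda)\ge Q_2^{(\varepsilon,n)}(\mathcal{E}_\lambda)\ge Q_2^{(\varepsilon,n)}(\NN^{(k)})$ for every $k$.

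For the first inequality in~\eqref{b_lower_bound_CV_Q2_11}, I would apply Eq.~\eqref{first_lower} of Lemma~\ref{lemma_lower_2_way} to $\NN^{(k)}$ with smoothing parameter $\sqrt\varepsilon-\eta$, use the duality $H_{\max}^{\sqrt\varepsilon-\eta}(A_k^n|B_k^n)=-H_{\min}^{\sqrt\varepsilon-\eta}(A_k^n|E^nE'^n)$ from Lemma~\ref{lemma_duality}, and then the variational definition of the smooth conditional min-entropy, producing the direct analogue of~\eqref{ineq_proof_thm_lower_bound},
\begin{equation*}
Q_2^{(\varepsilon,n)}(\mathcal{E}_\lambda)\ \ge\ -D_{\max}^{\sqrt\varepsilon-\eta}\!\bigl((\Psi^{(k)}_{A_kEE'})^{\otimes n}\,\big\|\,\mathbb{1}_{A_k^n}\otimes(\Psi^{(k)}_{EE'})^{\otimes n}\bigr)+4\log_2\eta\,.
\end{equation*}
The beam-splitter photon-conservation identity used in the proof of Lemma~\ref{lem_variance-bound-k-Q} then lets me replace $\Psi^{(k)}_{A_kEE'}$ by $\Psi^{(k)}_{AE}\otimes\ketbra{0}_{E'}$ (lifted through the co-isometry $V_{A\to A_k}$), reducing the right-hand side to $-D_{\max}^{\sqrt\varepsilon-\eta}\bigl((\Psi^{(k)}_{AE})^{\otimes n}\bigm\|\mathbb{1}_{A}^{\otimes n}\otimes(\Psi^{(k)}_{E})^{\otimes n}\bigr)$ with $\Psi^{(k)}_{AE}$ as in~\eqref{def_psi_k_abe}. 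Invoking Lemma~\ref{lemma_low_variance_hmin} then tensorises this quantity into $nD(\Psi^{(k)}_{AE}\|\mathbb{1}_A\otimes\Psi^{(k)}_E)+\sqrt{nV(A|E)_{\Psi^{(k)}}/(\sqrt\varepsilon-\eta)}$ plus the usual $\log_2$ constant depending on $\sqrt\varepsilon-\eta$. Choosing $\eta=\sqrt\varepsilon/2$ gives $\sqrt\varepsilon-\eta=\sqrt\varepsilon/2$ and therefore the desired coefficient $\sqrt{2nV(A|E)_{\Psi^{(k)}}/\sqrt\varepsilon}$; simplification of the remaining $\log_2\!\bigl(6(2-\sqrt\varepsilon/2)^2/((\sqrt\varepsilon/2)^2(1-\sqrt\varepsilon/2))\bigr)=\log_2\!\bigl(12(4-\sqrt\varepsilon)^2/(\varepsilon(2-\sqrt\varepsilon))\bigr)$ together with the $+4\log_2(\sqrt\varepsilon/2)$ term reproduces exactly the displayed $-\log_2\!\bigl(2^{6}\cdot 3(4-\sqrt\varepsilon)^2/((2-\sqrt\varepsilon)\varepsilon^3)\bigr)$.

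The second inequality in~\eqref{b_lower_bound_CV_Q2_11} follows by the same argument but started from Eq.~\eqref{second_lower} of Lemma~\ref{lemma_lower_2_way}, which swaps $A$ and $B$: duality then produces $D_{\max}^{\sqrt\varepsilon-\eta}(\Psi^{(k)}_{BE}\|\mathbb{1}_B\otimes\Psi^{(k)}_E)$, and Lemma~\ref{lemma_low_variance_hmin} with the same $\eta=\sqrt\varepsilon/2$ gives the analogous bound in terms of $D(\Psi^{(k)}_{BE}\|\mathbb{1}_B\otimes\Psi^{(k)}_E)$ and $V(B|E)_{\Psi^{(k)}}$. The energy-constrained statement is immediate because the two-mode squeezed vacuum used as input has local mean photon number exactly $N_s$ on $A'$. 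The main obstacle, shared with the proof of Lemma~\ref{lem_variance-bound-k-Q}, is that Lemma~\ref{lemma_low_variance_hmin} assumes both operators in $D_{\max}^{\varepsilon}$ to be positive definite, whereas $\mathbb{1}_{A_k^n}\otimes(\Psi^{(k)}_{EE'})^{\otimes n}$ is only positive semi-definite. I would circumvent this exactly as in~\cite[Theorem~24]{MMMM} and the proof of Lemma~\ref{lem_variance-bound-k-Q}: apply an additive-noise channel with parameter $N_{\mathrm{add}}>0$ on $EE'$, invoke Lemma~\ref{lemma_low_variance_hmin} on the regularised operators, and take $N_{\mathrm{add}}\to 0$, using that the relevant relative entropy and its variance depend only on finitely many Fock-diagonal matrix entries (cut off by the projectors $P_k^E$ and $P_k^{E'}$), which converge continuously to their $N_{\mathrm{add}}=0$ values.
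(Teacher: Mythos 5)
Your proposal is correct and follows exactly the route the paper takes: the paper's proof simply states that the argument is entirely analogous to that of Lemma~\ref{lem_variance-bound-k-Q}, exploiting Lemma~\ref{lemma_lower_2_way} with $\eta=\sqrt{\varepsilon}/2$ together with Lemma~\ref{lemma_low_variance_hmin}, which is precisely what you do (including the correct bookkeeping of the constants and the additive-noise regularisation to handle the non-full-rank $\sigma$).
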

\begin{proof}
    The proof is entirely analogous to the proof of Lemma~\ref{lem_variance-bound-k-Q}. In particular, it exploits Lemma~\ref{lemma_lower_2_way} with $\eta=\frac{\sqrt{\varepsilon}}{2}$ and Lemma~\ref{lemma_low_variance_hmin}.
\end{proof}
In the following, we will compute $V(B|E)_{\Psi^{(\lambda,N_s)}}$ and $V(A|E)_{\Psi^{(\lambda,N_s)}}$ and show that $V(B|E)_{\Psi^{(k)}}$ and $V(A|E)_{\Psi^{(k)}}$ converge to them. We have an explicit proof for the states at hand; it is possible that a more general statement can be made for states which have bounded second moments, but we leave it for future work.

Let us start by stating the following lemma.
\begin{lemma}\label{expre_out_stine_pure}
Let $\lambda\in(0,1),N_s>0$, and
    \bb\label{def_Psi_abe}
        \ket{\Psi}_{ABE}\coloneqq \mathbb{1}_A\otimes U_\lambda^{BE}\, \ket{\psi_{N_s}}_{AB}\otimes\ket{0}_{E}\,,
    \ee
where $U_\lambda^{BE}$ is the beam splitter unitary defined in Definition~\ref{def_beam_splitter}; $\ket{\psi_{N_s}}_{AB}$ is the two-mode squeezed vacuum state defined in Eq.~\eqref{eq:2mode_squeezed}; and $\ket{0}_E$ is the vacuum state on $E$. Then, it holds that
\bb\label{eq_main_0}
    \ket{\Psi}_{ABE}=\sum_{n=0}^\infty\sum_{l=0}^n\frac{1}{\sqrt{N_s+1}}\left(\frac{N_s}{N_s+1}\right)^{n/2}(-1)^l\sqrt{\binom{n}{l} \lambda^{n-l}(1-\lambda)^l}\ket{n}_A\otimes\ket{n-l}_B\otimes\ket{l}_E\,.
\ee
In particular, it holds that
\bb\label{eq_main_1}
    \Psi_{A}&=\tau_{N_s}\,,\\
    \Psi_{B}&=\tau_{\lambda N_s}\,,\\
    \Psi_{E}&=\tau_{(1-\lambda) N_s}\,,\\
\ee
\end{lemma}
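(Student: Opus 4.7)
The plan is to obtain the explicit Fock-basis expansion of $\ket{\Psi}_{ABE}$ by simply substituting the Fock expansion of the two-mode squeezed vacuum state into the definition \eqref{def_Psi_abe} and then applying the beam splitter unitary on the $B$-$E$ modes using the known formula \eqref{eq_beam_n0}. Concretely, I would write
\begin{equation*}
\ket{\psi_{N_s}}_{AB}\otimes\ket{0}_E=\frac{1}{\sqrt{N_s+1}}\sum_{n=0}^{\infty}\left(\frac{N_s}{N_s+1}\right)^{n/2}\ket{n}_A\otimes\ket{n}_B\otimes\ket{0}_E\,,
\end{equation*}
and then act with $\mathbb{1}_A\otimes U_{\lambda}^{BE}$, using
$U_\lambda^{BE}\ket{n}_B\otimes\ket{0}_E=\sum_{l=0}^{n}(-1)^l\sqrt{\binom{n}{l}}\lambda^{(n-l)/2}(1-\lambda)^{l/2}\ket{n-l}_B\otimes\ket{l}_E$
from Eq.~\eqref{eq_beam_n0}. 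Rearranging the resulting double sum yields \eqref{eq_main_0} directly.

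For the marginals in \eqref{eq_main_1}, denote $c_{n,l}\coloneqq\tfrac{1}{\sqrt{N_s+1}}\bigl(\tfrac{N_s}{N_s+1}\bigr)^{n/2}(-1)^l\sqrt{\binom{n}{l}\lambda^{n-l}(1-\lambda)^l}$, so that $\ket{\Psi}_{ABE}=\sum_{n,l}c_{n,l}\ket{n}_A\ket{n-l}_B\ket{l}_E$. The marginal on $A$ is immediate because orthogonality of $\{\ket{n-l}_B\ket{l}_E\}_{l=0}^{n}$ together with the binomial theorem $\sum_{l=0}^{n}\binom{n}{l}\lambda^{n-l}(1-\lambda)^l=1$ gives $\Psi_A=\sum_n\tfrac{1}{N_s+1}\bigl(\tfrac{N_s}{N_s+1}\bigr)^{n}\ketbra{n}_A=\tau_{N_s}$. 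The marginals on $B$ and $E$ require a change of summation index ($m=n-l$ for $B$, $k=n-l$ for $E$) followed by the generating-function identity $\sum_{k=0}^{\infty}\binom{k+l}{l}x^k=(1-x)^{-l-1}$ for $|x|<1$.

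Carrying out one of them explicitly: for $\Psi_E=\sum_l\ketbra{l}_E\sum_{n\ge l}|c_{n,l}|^2$, setting $k=n-l$,
\begin{equation*}
\sum_{n\ge l}|c_{n,l}|^2=\frac{1}{N_s+1}\left(\tfrac{(1-\lambda)N_s}{N_s+1}\right)^{l}\sum_{k=0}^{\infty}\binom{k+l}{l}\left(\tfrac{\lambda N_s}{N_s+1}\right)^{k}=\frac{((1-\lambda)N_s)^l}{((1-\lambda)N_s+1)^{l+1}}\,,
\end{equation*}
which is exactly the Fock-basis weight of $\tau_{(1-\lambda)N_s}$. The computation of $\Psi_B=\tau_{\lambda N_s}$ is identical with the roles of $\lambda$ and $1-\lambda$ interchanged.

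There is no real obstacle in this proof; it is a direct computation. The only item requiring a bit of care is keeping track of the double index $(n,l)$ versus the reindexed $(m,l)$ or $(k,l)$ sums when computing the partial traces, and recognising that the geometric-type series that appear are precisely the Fock-basis weights of thermal states with the claimed mean photon numbers. Alternatively, one could bypass the explicit marginal computations entirely by noting that $\Psi_B=\mathcal{E}_\lambda(\tau_{N_s})=\tau_{\lambda N_s}$ and $\Psi_E=\mathcal{E}_{1-\lambda}(\tau_{N_s})=\tau_{(1-\lambda)N_s}$ as standard facts about the pure loss channel acting on thermal states, but doing the Fock-basis sums directly also produces the explicit expression \eqref{eq_main_0} needed for later use.
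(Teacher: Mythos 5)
Your proposal is correct and follows essentially the same route as the paper: substitute the Fock expansion of the two-mode squeezed vacuum into the definition, apply Eq.~\eqref{eq_beam_n0}, and obtain the marginals by partial trace using the normalisation of the negative binomial distribution (your generating-function identity $\sum_{k\ge 0}\binom{k+l}{l}x^k=(1-x)^{-l-1}$ is exactly that fact with $r=l+1$). Your explicit evaluation of $\Psi_E$ and the orthogonality bookkeeping are correct, so there is nothing to add.
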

\begin{proof}
    \eqref{eq_main_0} easily follows by exploiting the definition of two-mode squeezed vacuum state and by employing \eqref{eq_beam_n0}. Moreover, \eqref{eq_main_1} follows by taking the partial traces of \eqref{eq_main_0} and by exploiting the fact $\sum_{n=0}^\infty N\!B(n;r,p)=1$, where 
    \bb
        N\!B(n;r,p)\coloneqq\binom{n+r-1}{n}(1-p)^n p^r\quad\forall\, p\in[0,1], n\in\mathbb{N}, r\in\mathbb{N}\,,
    \ee
    is the \emph{negative binomial distribution}.
\end{proof}
The following lemma allows us to compute $V(B|E)_{\Psi^{(\lambda,N_s)}}$ and $V(A|E)_{\Psi^{(\lambda,N_s)}}$.
\begin{lemma}
Let $\lambda\in(0,1),N_s>0$, and
    \bb
        \ket{\Psi}_{ABE}\coloneqq \mathbb{1}_A\otimes U_\lambda^{BE}\, \ket{\psi_{N_s}}_{AB}\otimes\ket{0}_{E}\,,
    \ee
be the state defined in \eqref{def_Psi_abe}. Then, by denoting $\Psi_{ABE}\coloneqq \ketbra{\Psi}_{ABE}$, it holds that
\bb\label{eq_prove_entropy_variance}
V(A|E)_{\Psi}=V(A|B)_{\Psi}&=\lambda N_s(1+\lambda N_s)\left(\log_2\frac{\lambda N_s}{1+\lambda N_s}\right)^2\\
&\quad+(1-\lambda)N_s(1+(1-\lambda)N_s)\left(\log_2\frac{(1-\lambda)N_s}{1+(1-\lambda)N_s}\right)^2\\
&\quad-2(1-\lambda)\lambda N_s^2\log_2\frac{(1-\lambda)N_s}{1+(1-\lambda)N_s}\log_2\frac{\lambda N_s}{1+\lambda N_s}\,,\\
V(B|E)_{\Psi }=V(B|A)_{\Psi}&=N_s(1+N_s)\left(\log_2\frac{N_s}{1+N_s}\right)^2\\
&\quad+(1-\lambda)N_s(1+(1-\lambda)N_s)\left(\log_2\frac{(1-\lambda)N_s}{1+(1-\lambda)N_s}\right)^2\\
&\quad-2(1-\lambda)N_s(1+N_s)\left(\log_2\frac{(1-\lambda)N_s}{1+(1-\lambda)N_s}\right)\left(\log_2\frac{N_s}{1+N_s}\right),
\ee
Consequently, the following limits hold:
\bb\label{limits_var}
    \lim_{N_s\rightarrow \infty }V(A|E)_{\Psi }&=\lim_{N_s\rightarrow \infty }V(A|B)_{\Psi }=0\,,\\
    \lim_{N_s\rightarrow \infty }V(B|E)_{\Psi }&=\lim_{N_s\rightarrow \infty }V(B|A)_{\Psi }=0\,.
\ee
\end{lemma}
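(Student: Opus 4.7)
The plan is to exploit the fact that the tripartite pure state $\ket{\Psi}_{ABE}$ from Lemma~\ref{expre_out_stine_pure} conserves total photon number between $A$ and $BE$, i.e.~$\hat{n}_A = \hat{n}_B + \hat{n}_E$ on its support, which is a direct consequence of the beam splitter being passive. Using the explicit Fock expansion of $\ket{\Psi}_{ABE}$, I will first diagonalise each of the three bipartite marginals. For instance, grouping the terms of $\Psi_{AE}=\Tr_B\ketbra{\Psi}$ by the value of $m=\hat{n}_A-\hat{n}_E=\hat{n}_B$ produces orthogonal blocks $\ket{\phi_m}_{AE}=\sum_{l\ge 0} c_{l+m,l}\ket{l+m}_A\ket{l}_E$, whose squared norms are easily summed using the negative binomial identity $\sum_l\binom{l+m}{l}x^l=(1-x)^{-m-1}$ and reduce to the thermal eigenvalues $p_m=\frac{1}{1+\lambda N_s}\bigl(\frac{\lambda N_s}{1+\lambda N_s}\bigr)^m$ (matching $\Psi_B=\tau_{\lambda N_s}$ by Schmidt duality). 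The analogous computation for $\Psi_{AB}$ produces blocks indexed by $\hat{n}_A-\hat{n}_B=\hat{n}_E$ with thermal eigenvalues $\tau_{(1-\lambda)N_s}$, and for $\Psi_{BE}$ blocks indexed by $\hat{n}_B+\hat{n}_E=\hat{n}_A$ with thermal eigenvalues $\tau_{N_s}$.

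The crucial consequence of this block structure is that on the support of each marginal, $\log_2\rho$ is an affine function of a single photon number operator:
\begin{equation}
\log_2\Psi_{AE}=\mathrm{const}+(\hat{n}_A-\hat{n}_E)\,a,\quad \log_2\Psi_{AB}=\mathrm{const}+(\hat{n}_A-\hat{n}_B)\,b,\quad \log_2\Psi_{BE}=\mathrm{const}+(\hat{n}_B+\hat{n}_E)\,c,
\end{equation}
with $a=\log_2\tfrac{\lambda N_s}{1+\lambda N_s}$, $b=\log_2\tfrac{(1-\lambda)N_s}{1+(1-\lambda)N_s}$, and $c=\log_2\tfrac{N_s}{1+N_s}$, while $\log_2\Psi_A$, $\log_2\Psi_B$, $\log_2\Psi_E$ are affine in $\hat n_A$, $\hat n_B$, $\hat n_E$ respectively. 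Subtracting and applying $\hat n_A=\hat n_B+\hat n_E$ yields, after an identical simplification, $\log_2\Psi_{AE}-\log_2(\mathbb 1_A\otimes\Psi_E)=\mathrm{const}+a\hat n_B-b\hat n_E$ and the same expression for $\log_2\Psi_{AB}-\log_2(\mathbb 1_A\otimes\Psi_B)$; the equality $V(A|E)_\Psi=V(A|B)_\Psi$ then drops out for free. Likewise $\log_2\Psi_{BE}-\log_2(\mathbb 1_B\otimes\Psi_E)$ and $\log_2\Psi_{AB}-\log_2(\Psi_A\otimes\mathbb 1_B)$ both collapse to $\mathrm{const}+c\hat n_A-b\hat n_E$, giving $V(B|E)_\Psi=V(B|A)_\Psi$.

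Once the entropic quantities are reduced to variances of linear functions of $\hat n_B,\hat n_E$ (respectively $\hat n_A,\hat n_E$), the computation is classical. I need $\mathrm{Var}[\hat n_B]=\lambda N_s(1+\lambda N_s)$, $\mathrm{Var}[\hat n_E]=(1-\lambda)N_s(1+(1-\lambda)N_s)$ (both from thermal statistics), and $\mathrm{Cov}[\hat n_B,\hat n_E]$, which I will extract from the identity $\mathrm{Var}[\hat n_B+\hat n_E]=\mathrm{Var}[\hat n_A]=N_s(1+N_s)$ to obtain $\mathrm{Cov}[\hat n_B,\hat n_E]=\lambda(1-\lambda)N_s^2$. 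Plugging these into $\mathrm{Var}[a\hat n_B-b\hat n_E]$ and $\mathrm{Var}[c\hat n_A-b\hat n_E]$ (using additionally $\mathrm{Cov}[\hat n_A,\hat n_E]=(1-\lambda)N_s(1+N_s)$) yields exactly the four closed-form expressions in \eqref{eq_prove_entropy_variance}.

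Finally, the limits \eqref{limits_var} are verified by Taylor expansion of the logarithms. Since $a\sim -1/(\ln 2\,\lambda N_s)$, $b\sim -1/(\ln 2\,(1-\lambda)N_s)$, $c\sim -1/(\ln 2\,N_s)$, the leading $O(1)$ contributions $a^2\mathrm{Var}[\hat n_B]$, $b^2\mathrm{Var}[\hat n_E]$, etc.~cancel pairwise against the cross-covariance term (reflecting the fact that $\hat n_B$ and $\hat n_E$ become perfectly correlated up to normalisation in the infinite-energy limit), leaving a remainder of order $O(1/N_s)\to 0$. The only delicate part of the whole argument is carrying out these diagonalisations cleanly enough that the affine-in-$\hat n$ structure of $\log_2\rho$ becomes visible; once that structure is in hand, all three equalities and their limits follow by elementary covariance bookkeeping.
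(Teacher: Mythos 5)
Your proof is correct, and it shares the same skeleton as the paper's: both arguments rest on the fact that the logarithms of all relevant marginals are affine in photon-number operators, so that the conditional entropy variances collapse to classical covariances of $(\hat n_A,\hat n_B,\hat n_E)$ under the Fock-basis distribution of $\ket{\Psi}_{ABE}$, combined with the thermal identities $\Psi_A=\tau_{N_s}$, $\Psi_B=\tau_{\lambda N_s}$, $\Psi_E=\tau_{(1-\lambda)N_s}$. You diverge from the paper at two points, both to your advantage. First, the paper obtains the reduction (and the symmetries $V(B|E)_\Psi=V(B|A)_\Psi$, $V(A|E)_\Psi=V(A|B)_\Psi$) from the purity-based expansion \eqref{expansionV} quoted from Kaur--Wilde, which expresses $V(B|E)_\Psi$ entirely through the single-mode marginals $\Psi_A,\Psi_E$; you instead diagonalise the two-body marginals directly into rank-one photon-number blocks $\ket{\phi_m}$, which makes the affine structure of $\log_2\Psi_{AE}$, $\log_2\Psi_{AB}$, $\log_2\Psi_{BE}$ explicit and yields the same symmetries by inspection of the operators $a\hat n_B-b\hat n_E$ and $c\hat n_A-b\hat n_E$ (your expression for $\log_2\Psi_{AB}-\log_2(\Psi_A\otimes\mathbb{1}_B)$ is actually the negative of what you wrote, which is immaterial for a variance). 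Second, and more substantially, where the paper evaluates $\Tr[\Psi_{AE}\hat N_A\otimes\hat N_E]=(1-\lambda)N_s(2N_s+1)$ and $\Tr[\Psi_{BE}\hat N_B\otimes\hat N_E]=2\lambda(1-\lambda)N_s^2$ by explicit double Fock sums, you extract $\mathrm{Cov}[\hat n_B,\hat n_E]=\lambda(1-\lambda)N_s^2$ from the conservation law $\mathrm{Var}[\hat n_A]=\mathrm{Var}[\hat n_B+\hat n_E]$ and then $\mathrm{Cov}[\hat n_A,\hat n_E]=\mathrm{Cov}[\hat n_B,\hat n_E]+\mathrm{Var}[\hat n_E]=(1-\lambda)N_s(1+N_s)$, avoiding the sums entirely; the resulting covariances agree with the paper's. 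The one step deserving an explicit sentence in a write-up is the substitution of the affine expression for $\log_2\Psi_{AE}$ inside $\Tr[\Psi_{AE}(\log_2\Psi_{AE}-\log_2(\mathbb{1}_A\otimes\Psi_E))^2]$, since $\log_2\Psi_{AE}$ is defined only on the support projector $\Pi$; this is legitimate because $\Pi$ commutes with $\hat n_A-\hat n_E$ and is absorbed by $\Psi_{AE}$ under the trace, so all terms reduce to moments of commuting Fock-diagonal observables. With that caveat, the equalities, the closed forms, and the $O(1/N_s)$ cancellation in the $N_s\to\infty$ limit all check out.
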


\begin{proof}
The expression of $V(B|E)_{\Psi_{AB}}$ in \eqref{eq_prove_entropy_variance} has already been proved in \cite{Kaur_2017}. Let us sketch the proof presented in \cite{Kaur_2017}, as it will be useful for the rest of our proof. As proved in \cite{Kaur_2017}, from the definition of conditional entropy variance one can easily see that
\bb\label{expansionV}
V(B|E)_{\Psi_{ABE}}&=\Tr[\Psi_{A}(\log_2 \Psi_{A})^2]-\Tr[\Psi_{A}\log_2 \Psi_{A}]^2\\
&\quad+\Tr[\Psi_{E}(\log_2 \Psi_{E})^2]-\Tr[\Psi_{E}\log_2 \Psi_{E}]^2\\
&\quad-2\Tr[\Psi_{AE}(\log_2 \Psi_{A})(\log_2 \Psi_{E})]+2\Tr[\Psi_{E}\log_2 \Psi_{E}]\Tr[\Psi_{A}\log_2 \Psi_{A}]\,.
\ee
Since this expression is symmetric under exchanging $E$ and $A$, it follows that $V(B|E)_{\Psi_{ABE}}=V(B|A)_{\Psi_{ABE}}$. Additionally, Lemma~\ref{expre_out_stine_pure} establishes that
\bb
    \Psi_{A}&=\tau_{N_s}\,,\\
    \Psi_{E}&=\tau_{(1-\lambda)N_s}\,.
\ee
Hence, thanks to \eqref{entropy_thermal}, we have that
\bb
    -\Tr[\Psi_{A}\log_2 \Psi_{A}]&= h(N_s)\,,\\
    -\Tr[\Psi_{E}\log_2 \Psi_{E}]&= h\!\left((1-\lambda)N_s\right)\,,
\ee
where $h(x)\coloneqq (x+1)\log_2(x+1) - x\log_2 x$. Furthermore, the first two terms in \ref{expansionV} are the entropy variances of $\Psi_{A}$ and $\Psi_{E}$, which are both thermal states. Using~\eqref{ent_var_th_state}, these terms can be expressed as
\bb\label{relvariances0}
\Tr[\Psi_{A}(\log_2 \Psi_{A})^2]-\Tr[\Psi_{A}\log_2 \Psi_{A}]^2&=N_s(N_s+1)\log_2\left(1+\frac{1}{N_s}\right)^2\,,\\
\Tr[\Psi_{E}(\log_2 \Psi_{E})^2]-\Tr[\Psi_{E}\log_2 \Psi_{E}]^2&=(1-\lambda)N_s\left[(1-\lambda)N_s+1\right]\log_2\left(1+\frac{1}{(1-\lambda)N_s}\right)^2\,.
\ee
Hence, we have that
\bb\label{expansionV2}
    V(B|E)_{\Psi_{ABE}}&=N_s(N_s+1)\log_2\left(1+\frac{1}{N_s}\right)^2\\
    &\quad+(1-\lambda)N_s\left[(1-\lambda)N_s+1\right]\log_2\left(1+\frac{1}{(1-\lambda)N_s}\right)^2\\
    &\quad -2\left[\Tr[\Psi_{AE}(\log_2 \Psi_{A})(\log_2 \Psi_{E})]-h(N_s)h((1-\lambda)N_s)\right]
\ee
Hence, to complete the calculation of $V(B|E)_{\Psi_{ABE}}$, we only need to calculate the term $\Tr[\Psi_{AE}(\log_2 \Psi_{A})(\log_2 \Psi_{E})]$. To this end, note that the operator $\log_2\tau_{N_s}$ can be written in terms of the photon number operator $\hat{N}_n$ as
\bb
    \log_2\tau_{N_s}=\log_2\left(\frac{N_s}{N_s+1}\right)\hat{N} -\log_2(1+N_s)\mathbb{1}\,.
\ee
Consequently, by denoting as $\hat{N}_A$ and $\hat{N}_E$ the photon number operators on $A$ and $E$, we have that
\bb
    \log_2\Psi_A&=\log_2\left(\frac{N_s}{N_s+1}\right)\hat{N}_A -\log_2(1+N_s)\mathbb{1}_A\,,\\
    \log_2\Psi_E&=\log_2\left(\frac{(1-\lambda)N_s}{(1-\lambda)N_s+1}\right)\hat{N}_E -\log_2(1+(1-\lambda)N_s)\mathbb{1}_E\,,    
\ee 
and thus
\bb\label{eq_prove_ent_varia_pur}
    \Tr[\Psi_{AE}(\log_2 \Psi_{A})(\log_2 \Psi_{E})]&= \log_2\left(\frac{N_s}{N_s+1}\right) \log_2\left(\frac{(1-\lambda)N_s}{(1-\lambda)N_s+1}\right)\Tr[\Psi_{AE}\hat{N}_A\otimes\hat{N}_E]\\
    &\quad-\log_2\left(\frac{N_s}{N_s+1}\right)\log_2\left(1+(1-\lambda)N_s\right)\Tr[\Psi_{A}\hat{N}_A]\\
    &\quad -\log_2\left(\frac{(1-\lambda)N_s}{(1-\lambda)N_s+1}\right)\log_2\left(1+N_s\right)\Tr[\Psi_{E}\hat{N}_E]\\
    &\quad+\log_2\left(1+N_s\right)\log_2\left(1+(1-\lambda)N_s\right)\\
    &= \log_2\left(\frac{N_s}{N_s+1}\right) \log_2\left(\frac{(1-\lambda)N_s}{(1-\lambda)N_s+1}\right)\Tr[\Psi_{AE}\hat{N}_E\otimes\hat{N}_A]\\
    &\quad-\log_2\left(\frac{N_s}{N_s+1}\right)\log_2\left(1+(1-\lambda)N_s\right)N_s\\
    &\quad -\log_2\left(\frac{(1-\lambda)N_s}{(1-\lambda)N_s+1}\right)\log_2\left(1+N_s\right)(1-\lambda)N_s\\
    &\quad+\log_2\left(1+N_s\right)\log_2\left(1+(1-\lambda)N_s\right)\,.
\ee
Hence, we just need to evaluate the term $\Tr[\Psi_{AE}\hat{N}_A\otimes\hat{N}_E]$. We have that
\bb
    \Tr[\Psi_{AE}\hat{N}_A\otimes\hat{N}_E]&=\sum_{n=0}^\infty \sum_{l=0}^\infty nl \Tr[\Psi_{AE}\ketbra{n}_A\otimes \ketbra{l}_E]\\
    &\eqt{(i)}\frac{1}{N_s+1}\sum_{n=0}^\infty \sum_{l=0}^n nl \left(\frac{N_s}{N_s+1}\right)^n\binom{n}{l} \lambda^{n-l}(1-\lambda)^l\\
    &\eqt{(ii)}\frac{1-\lambda}{N_s+1}\sum_{n=0}^\infty n^2\left(\frac{N_s}{N_s+1}\right)^n\\
    &\eqt{(iii)}(1-\lambda)N_s(2N_s+1)\,,
\ee
where in (i) we employed Lemma~\ref{expre_out_stine_pure}, in (ii) we used that the known expression for the mean value of the binomial distribution, and in (iii)
we exploited that 
\bb
    \sum_{n=0}^\infty n^2\left(\frac{N_s}{N_s+1}\right)^n=N_s(N_s+1)(2N_s+1)\,.
\ee
Hence, by substituting in \eqref{eq_prove_ent_varia_pur}, one can easily show that
\bb
    \Tr[\Psi_{AE}(\log_2 \Psi_{A})(\log_2 \Psi_{E})]-h(N_s)h((1-\lambda)N_s)= (1-\lambda)N_s(1+N_s)\log_2\left(\frac{(1-\lambda)N_s}{1+(1-\lambda)N_s}\right)\log_2\left(\frac{N_s}{1+N_s}\right)\,.
\ee
Hence, by employing \eqref{expansionV2}, one can prove the expression of $V(B|E)_{\Psi_{ABE}}$ in \eqref{eq_prove_entropy_variance}.

Now, let us prove the expression of $V(A|E)_{\Psi_{ABE}}$ in \eqref{eq_prove_entropy_variance}. As seen above in \eqref{expansionV}, it holds that $V(A|E)_{\Psi_{ABE}}=V(A|B)_{\Psi_{ABE}}$ and that
\bb\label{expansionV3}
    V(A|E)_{\Psi_{ABE}}&=\Tr[\Psi_{B}(\log_2 \Psi_{B})^2]-\Tr[\Psi_{B}\log_2 \Psi_{B}]^2\\
    &\quad+\Tr[\Psi_{E}(\log_2 \Psi_{E})^2]-\Tr[\Psi_{E}\log_2 \Psi_{E}]^2\\
    &\quad-2\Tr[\Psi_{BE}(\log_2 \Psi_{B})(\log_2 \Psi_{E})]+2\Tr[\Psi_{E}\log_2 \Psi_{E}]\Tr[\Psi_{B}\log_2 \Psi_{B}]\,.
\ee
Since
\bb
    \Psi_{B}&=\tau_{\lambda N_s}\,,\\
    \Psi_{E}&=\tau_{(1-\lambda)N_s}\,,
\ee
as established by Lemma~\ref{expre_out_stine_pure}, we can \eqref{entropy_thermal} to obtain that
\bb
    -\Tr[\Psi_{B}\log_2 \Psi_{B}]&= h(\lambda N_s)\,,\\
    -\Tr[\Psi_{E}\log_2 \Psi_{E}]&= h\!\left((1-\lambda)N_s\right)\,,
\ee
where $h(x)\coloneqq (x+1)\log_2(x+1) - x\log_2 x$. Furthermore, the first two terms in \ref{expansionV3} are the entropy variances of $\Psi_{B}$ and $\Psi_{E}$, which are both thermal states. Using~\eqref{ent_var_th_state}, these terms can be expressed as
\bb\label{relvariances}
\Tr[\Psi_{B}(\log_2 \Psi_{B})^2]-\Tr[\Psi_{B}\log_2 \Psi_{B}]^2&=\lambda N_s(\lambda N_s+1)\log_2\left(1+\frac{1}{\lambda N_s}\right)^2\,,\\
\Tr[\Psi_{E}(\log_2 \Psi_{E})^2]-\Tr[\Psi_{E}\log_2 \Psi_{E}]^2&=(1-\lambda)N_s\left[(1-\lambda)N_s+1\right]\log_2\left(1+\frac{1}{(1-\lambda)N_s}\right)^2\,.
\ee
Hence, we have that
\bb\label{expansionV4}
    V(A|E)_{\Psi_{ABE}}&=\lambda N_s(\lambda N_s+1)\log_2\left(1+\frac{1}{\lambda N_s}\right)^2\\
    &\quad (1-\lambda)N_s\left[(1-\lambda)N_s+1\right]\log_2\left(1+\frac{1}{(1-\lambda)N_s}\right)^2  \\
    &\quad -2\left[\Tr[\Psi_{BE}(\log_2 \Psi_{B})(\log_2 \Psi_{E})]-h(\lambda N_s)h((1-\lambda)N_s)\right]
\ee
To complete the calculation of all the terms in \eqref{expansionV2}, we only need to calculate the term $\Tr[\Psi_{BE}(\log_2 \Psi_{B})(\log_2 \Psi_{E})]$. As done above, we have that
\bb
    \log_2\Psi_B&=\log_2\left(\frac{\lambda N_s}{\lambda N_s+1}\right)\hat{N}_B -\log_2(1+\lambda N_s)\mathbb{1}_B\,,\\
    \log_2\Psi_E&=\log_2\left(\frac{(1-\lambda) N_s}{(1-\lambda) N_s+1}\right)\hat{N}_E -\log_2(1+(1-\lambda) N_s)\mathbb{1}_E\,,
\ee
and hence
\bb\label{eq_prove_ent_varia_pur2}
    \Tr[\Psi_{BE}(\log_2 \Psi_{B})(\log_2 \Psi_{E})]&= \log_2\left(\frac{\lambda N_s}{\lambda N_s+1}\right) \log_2\left(\frac{(1-\lambda)N_s}{(1-\lambda)N_s+1}\right)\Tr[\Psi_{BE}\hat{N}_B\otimes\hat{N}_E]\\
    &\quad-\log_2\left(\frac{\lambda N_s}{\lambda N_s+1}\right)\log_2\left(1+(1-\lambda)N_s\right)\Tr[\Psi_{B}\hat{N}_B]\\
    &\quad -\log_2\left(\frac{(1-\lambda)N_s}{(1-\lambda)N_s+1}\right)\log_2\left(1+\lambda N_s\right)\Tr[\Psi_{E}\hat{N}_E]\\
    &\quad+\log_2\left(1+\lambda N_s\right)\log_2\left(1+(1-\lambda)N_s\right)\\
&= \log_2\left(\frac{\lambda N_s}{\lambda N_s+1}\right) \log_2\left(\frac{(1-\lambda)N_s}{(1-\lambda)N_s+1}\right)\Tr[\Psi_{BE}\hat{N}_B\otimes\hat{N}_E]\\
    &\quad-\log_2\left(\frac{\lambda N_s}{\lambda N_s+1}\right)\log_2\left(1+(1-\lambda)N_s\right)\lambda N_s\\
    &\quad -\log_2\left(\frac{(1-\lambda)N_s}{(1-\lambda)N_s+1}\right)\log_2\left(1+\lambda N_s\right)(1-\lambda)N_s\\
    &\quad+\log_2\left(1+\lambda N_s\right)\log_2\left(1+(1-\lambda)N_s\right)\,.
\ee
Hence, we just need to evaluate the term $\Tr[\Psi_{BE}\hat{N}_B\otimes\hat{N}_E]$. By employing Lemma~\ref{expre_out_stine_pure}, we have that
\bb\label{eq_enbe}
    \Tr[\Psi_{BE}\hat{N}_B\otimes\hat{N}_E]&= \sum_{l=0}^\infty  l \Tr[\Psi_{AE}\hat{N}_B\otimes \ketbra{l}_E]\\
    &\eqt{(iv)}\frac{1}{N_s+1}\sum_{n=0}^\infty \sum_{l=0}^n (n-l)l \left(\frac{N_s}{N_s+1}\right)^n\binom{n}{l} \lambda^{n-l}(1-\lambda)^l\\
    &\eqt{(v)}\frac{1}{N_s+1}\sum_{n=0}^\infty \left(\frac{N_s}{N_s+1}\right)^n \left[n^2 (1-\lambda)-\left(n\lambda(1-\lambda)+n^2(1-\lambda)^2\right) \right]\\
    &\eqt{(vi)}\frac{\lambda(1-\lambda)}{N_s+1}\sum_{n=0}^\infty \left(\frac{N_s}{N_s+1}\right)^n(n^2-n)\\
    &= 2 \lambda(1-\lambda) N_s^2\,,
\ee
where in (v) and in (vi) we used standard moment evaluations for binomial and geometric distributions. Hence, by substituting in \eqref{eq_prove_ent_varia_pur2}, one can easily show that
\bb
    \Tr[\Psi_{BE}(\log_2 \Psi_{B})(\log_2 \Psi_{E})]-h(\lambda N_s)h((1-\lambda)N_s)=  (1-\lambda)\lambda N_s^2\log_2\!\left(\frac{(1-\lambda)N_s}{1+(1-\lambda)N_s}\right)\log_2\!\left(\frac{\lambda N_s}{1+\lambda N_s}\right)\,.
\ee
Hence, by employing \eqref{expansionV4} one can prove the expression of $V(A|E)_{\Psi_{ABE}}$ in \eqref{eq_prove_entropy_variance}. Both limits in \eqref{limits_var} are then easy to obtain.
\end{proof}
The following lemma establishes that $V(B|E)_{\Psi^{(k)}}$ and $V(A|E)_{\Psi^{(k)}}$ converge to $V(B|E)_{\Psi^{(\lambda,N_s)}}$ and $V(A|E)_{\Psi^{(\lambda,N_s)}}$, respectively.
\begin{lemma}\label{lemma_cont_cond_var}
    Let $\lambda\in(0,1),N_s>0$, and let $\Psi_{ABE}\coloneqq \ketbra{\Psi}_{ABE}$, where
    \bb
        \ket{\Psi}_{ABE}\coloneqq \mathbb{1}_A\otimes U_\lambda^{BE}\, \ket{\psi_{N_s}}_{AB}\otimes\ket{0}_{E}\,,
    \ee
    is the state defined in \eqref{def_Psi_abe}. Moreover, let $\Psi_{ABE}^{(k)}\coloneqq \ketbra{\Psi^{(k)}}_{ABE}$, where
    \bb
        \ket{\Psi^{(k)}}_{ABE}\coloneqq \frac{P_k^A \ket{\Psi^{(\lambda,N_s)}}_{ABE}}{\sqrt{\Tr[P_k \tau_{N_s}]}}\,.
    \ee
    is the state defined in \eqref{def_psi_k_abe}. Then, it holds that
    \bb
        \lim_{k\rightarrow \infty}V(A|E)_{\Psi^{(k)}}&=V(A|E)_{\Psi}\,,\\
        \lim_{k\rightarrow \infty}V(B|E)_{\Psi^{(k)}}&=V(B|E)_{\Psi}\,.
    \ee
\end{lemma}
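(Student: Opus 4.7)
The plan is to reduce the convergence statement to convergence of explicit Fock-basis series via dominated convergence. Since $\ket{\Psi^{(k)}}_{ABE}$ remains a pure state — it is just the normalised projection of $\ket{\Psi}_{ABE}$ onto the subspace where $A$ has at most $k$ photons — the derivation leading to \eqref{expansionV} (respectively \eqref{expansionV3}) applies verbatim to $\Psi^{(k)}_{ABE}$. Consequently, convergence of $V(B|E)_{\Psi^{(k)}}$ to $V(B|E)_{\Psi}$ reduces to convergence of the five scalar quantities
\[
\Tr\bigl[\Psi^{(k)}_{X}(\log\Psi^{(k)}_{X})^{r}\bigr]\quad\text{for }X\in\{A,E\},\;r\in\{1,2\},
\]
together with the cross trace $\Tr[\Psi^{(k)}_{AE}(\log\Psi^{(k)}_A)(\log\Psi^{(k)}_E)]$, while convergence of $V(A|E)_{\Psi^{(k)}}$ reduces to the analogous statements with $A$ replaced by $B$ throughout.

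Starting from the Schmidt-type representation of Lemma~\ref{expre_out_stine_pure} specialised to $\ket{\Psi^{(k)}}_{ABE}$, I would next check by direct Fock-basis computation that each of $\Psi^{(k)}_A,\Psi^{(k)}_B,\Psi^{(k)}_E$ is diagonal in the Fock basis with explicit truncated geometric/binomial eigenvalues, and that the only contributions of $\Psi^{(k)}_{AE}$ and $\Psi^{(k)}_{BE}$ to the cross traces come from their Fock-diagonal parts, since the operators $\log\Psi^{(k)}_A\otimes\log\Psi^{(k)}_E$ and $\log\Psi^{(k)}_B\otimes\log\Psi^{(k)}_E$ are themselves Fock-diagonal. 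Each of the seven quantities is then represented as an explicit non-negative series over non-negative integers, and since $\Tr[P_k\tau_{N_s}]\to 1$ exponentially fast as $k\to\infty$ by Theorem~\ref{the:expdecay00}, every summand converges pointwise to its $k=\infty$ counterpart.

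To promote pointwise convergence to convergence of the series I would invoke dominated convergence. Uniform-in-$k$ upper bounds on the marginal eigenvalues follow from the trivial estimate $p^{(k)}_X\leq (\Tr[P_k\tau_{N_s}])^{-1}\,p^{(\infty)}_X$. Uniform-in-$k$ lower bounds are obtained by keeping a single representative term in each truncated sum, yielding, for $l\leq k$,
\[
p^{(k)}_E[l]\,\geq\,\frac{1}{N_s+1}\left(\frac{(1-\lambda)N_s}{N_s+1}\right)^{l},
\]
and analogously for $\Psi^{(k)}_B$. These imply the uniform logarithmic estimates $|\log p^{(k)}_E[l]|\leq C_1+C_2\,l$ and $|\log q^{(k)}_B[m]|\leq C_1+C_2\,m$, which combined with standard moment identities for the geometric and binomial distributions produce summable majorants for each of the seven series.

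The main technical obstacle will be organising the dominated-convergence step for the two covariance terms, where the joint coefficient (call it $s^{(k)}_{n,l}$, respectively $r^{(k)}_{m,l}$) and two logarithmic factors all vary with $k$ simultaneously. The key summability input is that the inner binomial moment $\sum_{l=0}^{n} l^{2}\binom{n}{l}\lambda^{n-l}(1-\lambda)^{l}$ grows only polynomially in $n$, while the outer geometric weight $(N_s/(N_s+1))^n$ decays exponentially; together they control every product of two logarithmic factors appearing in the sums. Once these majorants are in place, passing to the limit term by term reproduces exactly the closed-form expressions of the previous lemma, and the proof closes.
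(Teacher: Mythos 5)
Your proposal is correct, and it follows the same overall reduction as the paper's proof --- expanding $V(\cdot|E)_{\Psi^{(k)}}$ via the purity of $\Psi^{(k)}_{ABE}$ into marginal moments of the Fock-diagonal reduced states plus one cross term, exactly as in \eqref{expansionV} and \eqref{expansionV3} --- but it closes the convergence step differently. The paper proves each series limit by hand: it splits the sums at $\lfloor\frac{p}{2}(k+2)\rfloor$, derives the quantitative estimate $q_{k,l}/q_{\infty,l}\ge 1-e^{-(k+2)p^2/2}$ from a Chernoff bound on negative-binomial tails for the ``bulk'' indices, and uses the single-term bound $q_{k,l}/q_{\infty,l}\ge p^{l+1}$ only for the far tail. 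You instead observe that the crude bounds alone --- $q_{k,l}\le q_{\infty,l}$, $p_k\to 1$, and the single-retained-term lower bound giving $|\log(q_{k,l}/p_k)|\le C_1+C_2 l$ uniformly in $k$ on the support --- already furnish a $k$-independent summable majorant (geometric weight times a quadratic polynomial), so dominated convergence finishes all seven limits at once, including the cross term where coefficient and both logarithms vary with $k$. This is a genuine simplification: it dispenses entirely with the negative-binomial tail analysis and the sum-splitting bookkeeping, at the cost of losing the explicit exponential rates that the paper's estimates would yield (rates the paper does not actually use). Two small points to make explicit when writing it up: the purity-based identity $(\Tr_{S_1}\ketbra{\psi}\otimes\mathbb{1})\ket{\psi}=(\mathbb{1}\otimes\Tr_{S_2}\ketbra{\psi})\ket{\psi}$ must be invoked for $\Psi^{(k)}$ to justify carrying \eqref{expansionV} over verbatim, and in the cross term you should note that the joint diagonal coefficient vanishes whenever an index exceeds $k$, so the logarithmic majorants are only ever needed where they are finite.
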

\begin{proof}
    From the definition of conditional entropy variance, it follows that
    \bb\label{vae_psi}
        V(A|E)_{\Psi}&=\Tr\left[\Psi_{AE}\left(\log_2\Psi_{AE}-\log_2\Psi_{E}\right)^2\right]-[S(A|E)_{\Psi}]^2\\
        &=\Tr[\Psi_{AE}\log_2^2\Psi_{AE}]+\Tr[\Psi_{E}\log_2^2\Psi_{E}]-2\Tr[\Psi_{AE}\log_2\Psi_{AE}\log_2\Psi_{E}]-[S(A|E)_{\Psi}]^2
    \ee
    and that
    \bb\label{vae_psi_k}
        V(A|E)_{\Psi^{(k)}}=\Tr[\Psi^{(k)}_{AE}\log_2^2\Psi^{(k)}_{AE}]+\Tr[\Psi^{(k)}_{E}\log_2^2\Psi^{(k)}_{E}]-2\Tr[\Psi^{(k)}_{AE}\log_2\Psi^{(k)}_{AE}\log_2\Psi^{(k)}_{E}]-[S(A|E)_{\Psi^{(k)}}]^2\,.
    \ee
    Hence, in order to show that $\lim\limits_{k\to\infty}V(A|E)_{\Psi^{(k)}}=V(A|E)_{\Psi}$, it suffices to show that 
    \bb\label{limits_k_to_prove}
        \lim\limits_{k\rightarrow\infty} S(A|E)_{\Psi^{(k)}}&=S(A|E)_{\Psi}\,, \\
        \lim\limits_{k\rightarrow\infty} \Tr[\Psi^{(k)}_{AE}\log_2^2\Psi^{(k)}_{AE}]&=\Tr[\Psi_{AE}\log_2^2\Psi_{AE}]\,,\\
        \lim\limits_{k\rightarrow\infty} \Tr[\Psi^{(k)}_{E}\log_2^2\Psi^{(k)}_{E}]&=\Tr[\Psi_{E}\log_2^2\Psi_{E}]\,,\\    
        \lim\limits_{k\rightarrow\infty} \Tr[\Psi^{(k)}_{AE}\log_2\Psi^{(k)}_{AE}\log_2\Psi^{(k)}_{E}]&=\Tr[\Psi_{AE}\log_2\Psi_{AE}\log_2\Psi_{E}]\,.
    \ee
    The first limit in \eqref{limits_k_to_prove} follows from the same reasoning used to prove \eqref{lim_sab_k_inf}, which leveraged the convergence of $\Psi^{(k)}$ to $\Psi$ in trace norm and the continuity of conditional entropy with respect to the trace norm over the set of states with bounded mean photon number~\cite{tightuniform}.

    Now, let us prove the second limit in \eqref{limits_k_to_prove}, i.e.~$\lim\limits_{k\rightarrow\infty} \Tr[\Psi^{(k)}_{AE}\log_2^2\Psi^{(k)}_{AE}]=\Tr[\Psi_{AE}\log_2^2\Psi_{AE}]$. This is equivalent to show that
    \bb
        \lim\limits_{k\rightarrow\infty} \Tr[\Psi^{(k)}_{B}\log_2^2\Psi^{(k)}_{B}]=\Tr[\Psi_{B}\log_2^2\Psi_{B}]\,.
    \ee
    Thanks to Lemma~\ref{expre_out_stine_pure}, we have that
    \bb
        \Psi_{B}=\tau_{\lambda N_s}=\sum_{l=0}^\infty q_{\infty, l}\ketbra{l}\,,
    \ee
    where we introduced
    \bb
        q_{\infty, l}\coloneqq \frac{1}{\lambda N_s+1}\left(\frac{\lambda N_s}{\lambda N_s+1}\right)^l\,.
    \ee
    Similarly to the proof of Lemma~\ref{expre_out_stine_pure}, we have that
    \bb\label{eq_psi_k_abe_formula}
        \ket{\Psi^{(k)}}_{ABE}=\frac{1}{\sqrt{\Tr[P_k\tau_{N_s}]}}\sum_{n=0}^k\sum_{l=0}^n\frac{1}{\sqrt{N_s+1}}\left(\frac{N_s}{N_s+1}\right)^{n/2}(-1)^l\sqrt{\binom{n}{l} \lambda^{n-l}(1-\lambda)^l}\ket{n}_A\otimes\ket{n-l}_B\otimes\ket{l}_E\,.
    \ee
    Hence, we have that
    \bb
        \Psi^{(k)}_{B}=\frac{1}{p_{k}}\sum_{l=0}^k q_{k, l}\ketbra{l}\,,
    \ee
    where we introduced
    \bb
        q_{k, l}&\coloneqq \sum_{n=l}^k \frac{1}{N_s+1}\left(\frac{N_s}{N_s+1}\right)^n\binom{n}{l}\lambda^l(1-\lambda)^{n-l}\,,\\
        p_{k}&\coloneqq \Tr[p_k \tau_{N_s}]=\frac{1}{N_s+1}\sum_{n=0}^{k}\left(\frac{N_s}{N_s+1}\right)^n=1-\left(\frac{N_s}{N_s+1}\right)^{k+1}\,.
    \ee    
    Note that
    \bb\label{simplification_without_pk}
        \Tr[\Psi^{(k)}_{B}\log_2^2\Psi^{(k)}_{B}]&=\sum_{l=0}^\infty \frac{q_{k,l}}{p_k}\left(\log_2\frac{q_{k,l}}{p_k}\right)^2\\
        &=\frac{1}{{p_{k}}}\sum_{l=0}^{k}q_{k,l}\left(\log_2^2 q_{k,l}-2\log_2 p_{k}\log_2 q_{k,l}+\log_2^2p_{k}\right)\\
        &=\frac{1}{{p_{k}}}\sum_{l=0}^{k}q_{k,l}\log_2^2 q_{k,l}-2\log_2 p_{k}\sum_{l=0}^{k}\frac{q_{k,l}}{p_{k}}\log_2 \frac{q_{k,l}}{p_{k}}- \log_2^2p_{k}\,,\\
        &=\frac{1}{{p_{k}}}\sum_{l=0}^{k}q_{k,l}\log_2^2 q_{k,l}+2(\log_2 p_{k} )S(\Psi^{(k)}_B)- \log_2^2p_{k}\,.
    \ee
    By exploiting the facts that (i) $\lim\limits_{k\rightarrow\infty}p_k=1$, (ii) the state $\Psi^{(k)}_B$ converges to $\Psi_{B}$ in trace norm (as proved in the proof of Theorem~\ref{thm_lower_bound}), which has finite entropy (specifically, $S( \Psi_{B})=S(\tau_{\lambda N_s})=h(\lambda N_s)$), and (iii) the von Neumann entropy is continuous with respect the trace norm over the set of states having bounded mean photon number~\cite{tightuniform}, it follows that the term $2(\log_2 p_{k} )S(\Psi^{(k)}_B)- \log_2^2p_{k}$ converges to zero as $k\rightarrow\infty$. Hence, to complete the proof of $\lim\limits_{k\rightarrow\infty} \Tr[\Psi^{(k)}_{B}\log_2^2\Psi^{(k)}_{B}]=\Tr[\Psi_{B}\log_2^2\Psi_{B}]$, we just need to prove that
    \bb
        \lim\limits_{k\rightarrow\infty}\sum_{l=0}^{k}q_{k,l}\log_2^2 q_{k,l}=\sum_{l=0}^{\infty}q_{\infty,l}\log_2^2 q_{\infty,l}\,.
    \ee
    Let us fix $l$ and $k$ such that $l\le k$. Note that 
    \bb\label{eq244}
        \frac{q_{k,l}}{q_{\infty,l}}&=\frac{\lambda N_s+1}{ N_s+1}\left(\frac{\lambda N_s}{\lambda N_s+1}\right)^{-l}\sum_{n=l}^k\binom{n}{l}\lambda^l(1-\lambda)^{n-l}\left(\frac{N_s}{N_s+1}\right)^n\\
        &=\frac{\lambda N_s+1}{ N_s+1}\left(\frac{\lambda N_s}{\lambda N_s+1}\right)^{-l}\sum_{n=0}^{k-l}\binom{n+l}{l}\lambda^l(1-\lambda)^{n}\left(\frac{N_s}{N_s+1}\right)^{n+l}\\
        &= \sum_{n=0}^{k-l} \binom{n+l}{l}\left( \frac{(1-\lambda)N_s}{N_s+1}  \right)^{n}\left(\frac{ \lambda N_s+1}{N_s+1}\right)^{l+1}\\
        &= \sum_{n=0}^{k-l} N\!B\left( n; l+1, \frac{ \lambda N_s+1}{N_s+1} \right)\\
        &= 1- \sum_{n=k-l+1}^{\infty} N\!B\left( n; l+1, \frac{ \lambda N_s+1}{N_s+1} \right)\,,
    \ee
    where we introduced the \emph{negative binomial distribution} defined as 
    \bb
        N\!B(n;r,p)\coloneqq\binom{n+r-1}{n}(1-p)^n p^r
    \ee
    for all $p\in[0,1], n\in\mathbb{N}, r\in\mathbb{N}$.
    Its moment generating function is known to be 
    \bb
        \sum_{n=0}^\infty N\!B(n;r,p) e^{tn}   =\left(\frac{p}{1-(1-p)e^t}\right)^{r}\,.
    \ee
    for $t< \ln \frac{1}{1-p}$. Consequently, by applying a Chernoff bound approach, we have that 
    \bb
        \sum_{ n=m }^\infty N\!B(n;r,p) \le e^{-t m} \sum_{ n=m }^\infty e^{tn} N\!B(n;r,p)\le e^{-t m} \sum_{ n=0 }^\infty e^{tn} N\!B(n;r,p)=e^{-t m} \left(\frac{p}{1-(1-p)e^t}\right)^{r}\,
    \ee
    for all $t$ such that $e^t<\frac{1}{1-p}$. Let $m\ge1$ and let us choose $t$ such that $e^t=\frac{m}{m+r}\frac{1}{1-p}$. Hence, it holds that
    \bb
        \sum_{ n=m }^\infty N\!B(n;r,p) &\le \frac{p^r (1-p)^m}{(\frac{r}{m+r})^r (1-\frac{r}{m+r})^m  }\\
        &=e^{  r\left(\ln p - \ln(\frac{r}{m+r}) \right) + m\left(\ln (1-p) - \ln(1-\frac{r}{m+r}) \right)    }\\
        &=e^{  - (m+r)\left[\frac{r}{m+r}\left(\ln(\frac{r}{m+r})-\ln p  \right) + (1-\frac{r}{m+r})\left(\ln(1-\frac{r}{m+r}) -\ln (1-p) \right) \right]   }\\
        &\leqt{(i)} e^{  - 2(m+r)(p-\frac{r}{m+r})^2   } \,,
    \ee
    where in (i) we exploited the elementary inequality
    \bb
        p\ln\left(\frac{p}{q}\right)+(1-p)\ln\left(\frac{1-p}{1-q}\right)\ge 2(p-q)^2\quad \forall\, p,q\in(0,1)\,,
    \ee
    which can be proved e.g.~exploiting Pinsker's inequality. Hence, by employing \eqref{eq244}, we obtain that
    \bb\label{ineq_ratio_q_qinf}
        \frac{q_{k,l}}{q_{\infty,l}}\ge 1 -e^{-2 (k+2)(p-\frac{l+1}{k+2})^2   }\,,
    \ee 
    where we defined $p\coloneqq \frac{\lambda N_s+1}{N_s+1}$.  Additionally, note that it always holds that $\frac{q_{k,l}}{q_{\infty,l}}\le 1$ (this can be proved e.g.~exploiting \eqref{eq244}).

    Now, let us proceed with the proof of $\lim\limits_{k\rightarrow\infty}\sum_{l=0}^{k}q_{k,l}\log_2^2 q_{k,l}=\sum_{l=0}^{\infty}q_{\infty,l}\log_2^2 q_{\infty,l}$. Let us split the sum as
    \bb
        \sum_{l=0}^{k}q_{k,l}\log_2^2 q_{k,l}=\sum_{l=0}^{\lfloor\frac{p}{2}(k+2)\rfloor+1}q_{k,l}\log_2^2 q_{k,l}+\sum_{l=\lfloor\frac{p}{2}(k+2)\rfloor+2}^{k}q_{k,l}\log_2^2 q_{k,l}\,,
    \ee
    In this way, in the first sum we can assume that $(p-\frac{l+1}{k+2})^2\ge p^2/4$ and thus the following lower bound holds
    \bb\label{ineq_useful0}
        \frac{q_{k,l}}{q_{\infty,l}}\ge 1 -e^{-(k+2)p^2/2   }\,.
    \ee 
    Let us prove that 
    \bb\label{eq_00101}
        \lim\limits_{k\rightarrow\infty}\sum_{l=0}^{\lfloor\frac{p}{2}(k+2)\rfloor+1}q_{k,l}\log_2^2 q_{k,l}&=\sum_{l=0}^{\infty}q_{\infty,l}\log_2^2 q_{\infty,l}\,,\\
        \lim\limits_{k\rightarrow\infty}\sum_{l=\lfloor\frac{p}{2}(k+2)\rfloor+2}^{k}q_{k,l}\log_2^2 q_{k,l}&=0\,.
    \ee
    Let us begin with the first limit. Let us write
    \bb
        \sum_{l=0}^{\lfloor\frac{p}{2}(k+2)\rfloor+1}q_{k,l}\log_2^2 q_{k,l}=\sum_{l=0}^{\lfloor\frac{p}{2}(k+2)\rfloor+1}q_{k,l}\log_2^2 \frac{q_{k,l}}{q_{\infty,l}} +2\sum_{l=0}^{\lfloor\frac{p}{2}(k+2)\rfloor+1}q_{k,l}\log_2 \frac{q_{k,l}}{q_{\infty,l}}\log_2 q_{\infty,l}+ \sum_{l=0}^{\lfloor\frac{p}{2}(k+2)\rfloor+1}q_{k,l}\log_2^2 q_{\infty,l}\,,
    \ee
    For the first term, we have that
    \bb
        0\le \sum_{l=0}^{\lfloor\frac{p}{2}(k+2)\rfloor+1}q_{k,l}\log_2^2 \frac{q_{k,l}}{q_{\infty,l}}&\le \sum_{l=0}^{\lfloor\frac{p}{2}(k+2)\rfloor+1}q_{\infty,l}\log_2^2 \frac{q_{k,l}}{q_{\infty,l}}\\
        &\le \log_2^2\left(1 -e^{-(k+2)p^2/2   }\right) \sum_{l=0}^{\lfloor\frac{p}{2}(k+2)\rfloor+1}q_{\infty,l}\\
        &\le \log_2^2\left(1 -e^{-(k+2)p^2/2   }\right)\xrightarrow{k\rightarrow\infty}0\,,
    \ee
    For the second term, we obtain that
    \bb
        0\le\sum_{l=0}^{\lfloor\frac{p}{2}(k+2)\rfloor+1}q_{k,l}\log_2 \frac{q_{k,l}}{q_{\infty,l}}\log_2 q_{\infty,l}&\le \sum_{l=0}^{\lfloor\frac{p}{2}(k+2)\rfloor+1}q_{\infty,l}\log_2 \frac{q_{k,l}}{q_{\infty,l}}\log_2 q_{\infty,l}\\
        &\le \log_2 \left(1 -e^{-(k+2)p^2/2   }\right)\sum_{l=0}^{\lfloor\frac{p}{2}(k+2)\rfloor+1}q_{\infty,l}\log_2 q_{\infty,l}\\
        &\le \log_2 \left(1 -e^{-(k+2)p^2/2   }\right)\sum_{l=0}^{\infty}q_{\infty,l}\log_2 q_{\infty,l}\\
        &= -\log_2 \left(1 -e^{-(k+2)p^2/2   }\right)S(\Psi_B)\xrightarrow{k\rightarrow\infty}0\,,
    \ee
    where in the last line we used that $S(\Psi_B)=h(\lambda N_s)$ is finite. Finally, for the third term, we observe that
    \bb
        \sum_{l=0}^{\lfloor\frac{p}{2}(k+2)\rfloor+1}q_{k,l}\log_2^2 q_{\infty,l}= \sum_{l=0}^{\lfloor\frac{p}{2}(k+2)\rfloor+1}q_{\infty,l}\log_2^2 q_{\infty,l}- \sum_{l=0}^{\lfloor\frac{p}{2}(k+2)\rfloor+1}(q_{\infty,l}-q_{k,l})\log_2^2 q_{\infty,l}\,
    \ee
    and that
    \bb
        \lim\limits_{k\rightarrow\infty}\sum_{l=0}^{\lfloor\frac{p}{2}(k+2)\rfloor+1}q_{\infty,l}\log_2^2 q_{\infty,l}&=\sum_{l=0}^{\infty}q_{\infty,l}\log_2^2 q_{\infty,l}\,.
    \ee
    Additionally, it holds that
    \bb
    0\le \sum_{l=0}^{\lfloor\frac{p}{2}(k+2)\rfloor+1}(q_{\infty,l}-q_{k,l})\log_2^2 q_{\infty,l}&=\sum_{l=0}^{\lfloor\frac{p}{2}(k+2)\rfloor+1}q_{\infty,l}(1-\frac{q_{k,l}}{q_{\infty,l}})\log_2^2 q_{\infty,l}\\
    &\le e^{-(k+2)p^2/2   }\sum_{l=0}^{\lfloor\frac{p}{2}(k+2)\rfloor+1}q_{\infty,l}\log_2^2 q_{\infty,l}\\
    &\le e^{-(k+2)p^2/2   }\sum_{l=0}^{\infty}q_{\infty,l}\log_2^2 q_{\infty,l}\\
    &= e^{-(k+2)p^2/2   }\Tr[\Psi_B\log_2^2\Psi_B]\xrightarrow{k\rightarrow\infty}0\,,   
    \ee
    where in the last line we used that $\Tr[\Psi_B\log_2^2\Psi_B]$ is finite (see e.g.~\ref{relvariances}). Hence, we proved that the third term satisfies
    \bb 
        \lim\limits_{k\rightarrow\infty}\sum_{l=0}^{\lfloor\frac{p}{2}(k+2)\rfloor+1}q_{k,l}\log_2^2 q_{\infty,l}= \sum_{l=0}^{\infty}q_{\infty,l}\log_2^2 q_{\infty,l}\,,
    \ee
    and thus the proof of the first limit in \eqref{eq_00101} is complete. Let us now proceed with the proof of the second one, i.e.~ $\lim\limits_{k\rightarrow\infty}\sum_{l=\lfloor\frac{p}{2}(k+2)\rfloor+2}^{k}q_{k,l}\log_2^2 q_{k,l}=0$. Similarly as above, it holds that
    \bb
        \sum_{l=\lfloor\frac{p}{2}(k+2)\rfloor+2}^{k}q_{k,l}\log_2^2 q_{k,l}=\sum_{l=\lfloor\frac{p}{2}(k+2)\rfloor+2}^{k}q_{k,l}\log_2^2 \frac{q_{k,l}}{q_{\infty,l}} +2\sum_{l=\lfloor\frac{p}{2}(k+2)\rfloor+2}^{k}q_{k,l}\log_2 \frac{q_{k,l}}{q_{\infty,l}}\log_2 q_{\infty,l}+ \sum_{l=\lfloor\frac{p}{2}(k+2)\rfloor+2}^{k}q_{k,l}\log_2^2 q_{\infty,l}\,,
    \ee
    To show that each of these terms goes to zero, we use the fact that 
    \bb 
        \frac{q_{k,l}}{q_{\infty,l}}&= \sum_{n=0}^{k-l} \binom{n+l}{l}\left( \frac{(1-\lambda)N_s}{N_s+1}  \right)^{n}\left(\frac{ \lambda N_s+1}{N_s+1}\right)^{l+1} \ge \left(\frac{ \lambda N_s+1}{N_s+1}\right)^{l+1}=p^{l+1}\,,
    \ee
   where the first inequality follows from \eqref{eq244}.
   For the first term, we have that
   \bb
        0\le\sum_{l=\lfloor\frac{p}{2}(k+2)\rfloor+2}^{k}q_{k,l}\log_2^2 \frac{q_{k,l}}{q_{\infty,l}}&\leq \sum_{l=\lfloor\frac{p}{2}(k+2)\rfloor+2}^{k}q_{\infty,l}\log_2^2 \frac{q_{k,l}}{q_{\infty,l}}\\
        &\le ( \log_2 p)^2 \sum_{l=\lfloor\frac{p}{2}(k+2)\rfloor+2}^{k}q_{\infty,l}(l+1)^2\xrightarrow{k\rightarrow\infty}0\,,
   \ee
   where in the last line we used that $\Tr[\Psi_B (a^\dagger a)^2]$ is finite (recall that $\Psi_B=\tau_{\lambda N_s}$ is a thermal state). Similarly, for the second term, we obtain that
   \bb
        0\le \sum_{l=\lfloor\frac{p}{2}(k+2)\rfloor+2}^{k}q_{k,l}\log_2 \frac{q_{k,l}}{q_{\infty,l}}\log_2 q_{\infty,l}&\le   \sum_{l=\lfloor\frac{p}{2}(k+2)\rfloor+2}^{k}q_{\infty,l}\log_2 \frac{q_{k,l}}{q_{\infty,l}}\log_2 q_{\infty,l}\\
        &\le  ( \log_2 p)\sum_{l=\lfloor\frac{p}{2}(k+2)\rfloor+2}^{k}q_{\infty,l}(l+1)\log_2 q_{\infty,l}\\
        &=  ( \log_2 p)\sum_{l=\lfloor\frac{p}{2}(k+2)\rfloor+2}^{k}q_{\infty,l}(l+1)\log_2 \left[\frac{1}{\lambda N_s+1}\left(\frac{\lambda N_s}{\lambda N_s+1}\right)^l\right]\xrightarrow{k\rightarrow\infty} 0\,,
   \ee
   where in the last line we used that both $\Tr[\Psi_B (a^\dagger a)^2]$ and $\Tr[\Psi_B a^\dagger a]$ are finite, as $\Psi_{B}=\tau_{\lambda N_s}$ is a thermal state. Finally, for the third term, we have that
   \bb
        0\le \sum_{l=\lfloor\frac{p}{2}(k+2)\rfloor+2}^{k}q_{k,l}\log_2^2 q_{\infty,l}&\le \sum_{l=\lfloor\frac{p}{2}(k+2)\rfloor+2}^{k}q_{\infty,l}\log_2^2 q_{\infty,l}\xrightarrow{k\rightarrow\infty} 0\,,
   \ee
   where in the last line we exploited that $\Tr[\Psi_B\log_2^2\Psi_B]$ is finite (see e.g.~\ref{relvariances}). Consequently, the proof of the second limit in \eqref{eq_00101} is complete. To summarise, we have proved that $
        \lim\limits_{k\rightarrow\infty}\sum_{l=0}^{k}q_{k,l}\log_2^2 q_{k,l}=\sum_{l=0}^{\infty}q_{\infty,l}\log_2^2 q_{\infty,l}$, which implies the validity of the second limit in \eqref{limits_k_to_prove}.

    The proof of the third limit in \eqref{limits_k_to_prove}, i.e.~$\lim\limits_{k\rightarrow\infty}\Tr[\Psi^{(k)}_{E}\log_2^2\Psi^{(k)}_{E}]=\Tr[\Psi_{E}\log_2^2\Psi_{E}]$, is completely analogous to the proof of $\lim\limits_{k\rightarrow\infty}\Tr[\Psi^{(k)}_{B}\log_2^2\Psi^{(k)}_{B}]=\Tr[\Psi_{B}\log_2^2\Psi_{B}]$.

    Let us now show that the fourth limit in \eqref{limits_k_to_prove}, i.e.~$\lim\limits_{k\rightarrow\infty} \Tr[\Psi^{(k)}_{AE}\log_2\Psi^{(k)}_{AE}\log_2\Psi^{(k)}_{E}]=\Tr[\Psi_{AE}\log_2\Psi_{AE}\log_2\Psi_{E}]$. First, note that
    \bb 
        \Tr[\Psi_{AE}\log_2\Psi_{AE}\log_2\Psi_{E}]&=\Tr[\Psi_{ABE}\log_2\Psi_{AE}\log_2\Psi_{E}]\\
        &=\Tr[\Psi_{ABE}\log_2\Psi_{B}\log_2\Psi_{E}]\\
        &=\Tr[\Psi_{BE}\log_2\Psi_{B}\log_2\Psi_{E}]\,,
    \ee
    where in the last line we used the general fact that for any bipartite state $\ket{\psi}_{S_1S_2}$ on systems $S_1,S_2$ it holds that
    \bb
        \left(\Tr_{S_1}\ketbra{\psi}_{S_1S_2}\otimes \mathbb{1}_{S_2}\right)\ket{\psi}_{S_1S_2}=  \left(\mathbb{1}_{S_1}\otimes\Tr_{S_2}\ketbra{\psi}_{S_1S_2}\right)\ket{\psi}_{S_1S_2}\,,
    \ee
    which can be shown e.g.~via Schmidt decomposition. Analogously, it holds that 
    \bb
        \Tr[\Psi^{(k)}_{AE}\log_2\Psi^{(k)}_{AE}\log_2\Psi^{(k)}_{E}]=\Tr[\Psi^{(k)}_{BE}\log_2\Psi^{(k)}_{B}\log_2\Psi^{(k)}_{E}]\,.
    \ee
    Hence, we need to prove that 
    \bb
    \lim\limits_{k\rightarrow\infty}\Tr[\Psi^{(k)}_{BE}\log_2\Psi^{(k)}_{B}\log_2\Psi^{(k)}_{E}]=\Tr[\Psi_{BE}\log_2\Psi_{B}\log_2\Psi_{E}]\,.
    \ee
    From \eqref{eq_psi_k_abe_formula} it follows that
    \bb
		\Psi^{(k)}_{B}&=\frac{1}{p_{k}}\sum_{l=0}^k q_{k, l}\ketbra{l}\,,\\
		\Psi^{(k)}_{E}&=\frac{1}{p_{k}}\sum_{l=0}^k q'_{k, l}\ketbra{l}\,,
	\ee
where we introduced
\bb
q_{k, l}&\coloneqq \sum_{n=l}^k \frac{1}{N_s+1}\left(\frac{N_s}{N_s+1}\right)^n\binom{n}{l}\lambda^l(1-\lambda)^{n-l}\,,\\
q'_{k, l}&\coloneqq \sum_{n=l}^k \frac{1}{N_s+1}\left(\frac{N_s}{N_s+1}\right)^n\binom{n}{l}(1-\lambda)^l\lambda^{n-l}\,,\\
p_{k}&\coloneqq \Tr[p_k \tau_{N_s}]=\frac{1}{N_s+1}\sum_{n=0}^{k}\left(\frac{N_s}{N_s+1}\right)^n=1-\left(\frac{N_s}{N_s+1}\right)^{k+1}\,.
\ee    
Moreover, it holds that
\bb
	       \Psi^{(k)}_{BE}=\frac{1}{p_k}\sum_{n=0}^k\sum_{l_1,l_2=0}^n\frac{1}{N_s+1}\left(\frac{N_s}{N_s+1}\right)^{n}(-1)^{l_1+l_2}\sqrt{\binom{n}{l_1}\binom{n}{l_2} \lambda^{2n-l_1-l_2}(1-\lambda)^{l_1+l_2}}\ketbraa{n-l_1}{n-l_2}_B\otimes\ketbraa{l_1}{l_2}_E\,.
\ee
Consequently, we obtain that
\bb
	\Tr[\Psi^{(k)}_{BE}\log_2\Psi^{(k)}_{B}\log_2\Psi^{(k)}_{E}]&=\sum_{l=0}^{\infty} \log_2\frac{q'_{k,l}}{p_k}\Tr[\Psi^{(k)}_{BE}\log_2\Psi^{(k)}_{B}\ketbra{l}_E]\\
	&= \sum_{l=0}^{k}\sum_{n=l}^{k} \frac{\tilde{q}_{l,n}}{p_k} \log_2\frac{q_{k,n-l}}{p_k}\log_2\frac{q'_{k,l}}{p_k}\,,
\ee
where we defined
\bb
		\tilde{q}_{l,n}\coloneqq  \frac{1}{N_s+1}\left(\frac{N_s}{N_s+1}\right)^{n}\binom{n}{l} \lambda^{n-l}(1-\lambda)^{l}\,.
\ee
Moreover, it can be easily shown that (e.g.~by considering the partial traces of $\Psi_{BE}$)
\bb
\sum_{n=l}^k \tilde{q}_{n-l,n}&= q_{k,l}\,,\\
\sum_{n=l}^k \tilde{q}_{l,n}&=q'_{k,l}\,.
\ee
Analogously, it holds that
\bb
		\Tr[\Psi_{BE}\log_2\Psi_{B}\log_2\Psi_{E}]= \sum_{l=0}^{\infty}\sum_{n=l}^{\infty} \tilde{q}_{l,n}  \log_2q_{\infty,n-l}\log_2q'_{\infty,l}\,,
\ee
where
\bb
	q_{\infty,l}&\coloneqq\frac{1}{\lambda N_s+1}\left(\frac{\lambda N_s}{\lambda N_s+1}\right)^l\,,\\
	q'_{\infty,l}&\coloneqq\frac{1}{(1-\lambda) N_s+1}\left(\frac{(1-\lambda )N_s}{(1-\lambda) N_s+1}\right)^l\,.
\ee
Hence, we have to show that
\bb
	\lim\limits_{k\rightarrow\infty} \sum_{l=0}^{k}\sum_{n=l}^{k} \frac{\tilde{q}_{l,n}}{p_k} \log_2\frac{q_{k,n-l}}{p_k}\log_2\frac{q'_{k,l}}{p_k}=\sum_{l=0}^{\infty}\sum_{n=l}^{\infty} \tilde{q}_{l,n}  \log_2q_{\infty,n-l}\log_2q'_{\infty,l}\,.
\ee
Similarly to what we did  in \eqref{simplification_without_pk}, by exploiting the facts that (i) $\lim\limits_{k\rightarrow\infty}p_k=1$, (ii) the state $\Psi^{(k)}_B$ (resp.~$\Psi^{(k)}_E$) converges to $\Psi_{B}$ (resp. $\Psi^{(k)}_E$) in trace norm (as proved in the proof of Theorem~\ref{thm_lower_bound}), which has finite entropy, and (iii) the von Neumann entropy is continuous with respect the trace norm over the set of states having bounded mean photon number~\cite{tightuniform}, it follows that we just need to prove that
\bb
	\lim\limits_{k\rightarrow\infty} \sum_{l=0}^{k}\sum_{n=l}^{k}  \tilde{q}_{l,n}  \log_2q_{k,n-l} \log_2q'_{k,l}=\sum_{l=0}^{\infty}\sum_{n=l}^{\infty} \tilde{q}_{l,n}  \log_2q_{\infty,n-l}\log_2q'_{\infty,l}\,.
\ee
Moreover, we have that
\bb
 	 \sum_{l=0}^{k}\sum_{n=l}^{k}  \tilde{q}_{l,n}  \log_2q_{k,n-l} \log_2q'_{k,l}&=\sum_{l=0}^{k}\sum_{n=l}^{k}  \tilde{q}_{l,n}  \log_2\frac{q_{k,n-l}}{q_{\infty,n-l}} \log_2q'_{k,l} \\&\quad+ \sum_{l=0}^{k}\sum_{n=l}^{k}  \tilde{q}_{l,n}  \log_2q_{\infty,n-l} \log_2\frac{q'_{k,l}}{q'_{\infty,l}}+  \sum_{l=0}^{k}\sum_{n=l}^{k}  \tilde{q}_{l,n}  \log_2q_{\infty,n-l} \log_2q'_{\infty,l}\,.
\ee
Since the last term converges to $\sum_{l=0}^{\infty}\sum_{n=l}^{\infty} \tilde{q}_{l,n}  \log_2q_{\infty,n-l}\log_2q'_{\infty,l}$ as $k\rightarrow\infty$, we just need to show that  
\bb\label{lim_zero_toprove}
	\lim\limits_{k\rightarrow\infty} \sum_{l=0}^{k}\sum_{n=l}^{k}  \tilde{q}_{l,n}  \log_2\frac{q_{k,n-l}}{q_{\infty,n-l}} \log_2q'_{k,l} &=0\,,\\
	\lim\limits_{k\rightarrow\infty}\sum_{l=0}^{k}\sum_{n=l}^{k}  \tilde{q}_{l,n}  \log_2q_{\infty,n-l} \log_2\frac{q'_{k,l}}{q'_{\infty,l}}&=0\,.
\ee
Let us prove the first limit. As we saw above, by defining $p\coloneqq\frac{\lambda N_s+1}{N_s+1}$, we have that  
\bb\label{hand1}
\frac{q_{k,l}}{q_{\infty,l}}&\ge 1 -e^{-(k+2)p^2/2   }\quad  \forall\,l\le\left\lfloor\frac{p}{2}(k+2)\right\rfloor+1\,,\\
\frac{q_{k,l}}{q_{\infty,l}}&\ge p^{l+1}\,,\\
q_{k,l}&\le q_{\infty,l}\,,\\
\sum_{n=l}^k \tilde{q}_{n-l,n}&= q_{k,l}\,.
\ee 
Analogously, by defining $p'\coloneqq\frac{(1-\lambda) N_s+1}{N_s+1}$, we have that  
\bb\label{hand2}
\frac{q'_{k,l}}{q'_{\infty,l}}&\ge 1 -e^{-(k+2)p'^2/2   }\quad  \forall\,l\le\left\lfloor\frac{p'}{2}(k+2)\right\rfloor+1\,,\\
\frac{q'_{k,l}}{q_{\infty,l}}&\ge p'^{l+1}\,,\\
q'_{k,l}&\le q'_{\infty,l}\,,\\
\sum_{n=l}^k \tilde{q}_{l,n}&=q'_{k,l}\,.
\ee 
With the relations in \eqref{hand1} and \eqref{hand2} at hand, we obtain that
\bb
0&\le \sum_{l=0}^{k}\sum_{n=l}^{k}  \tilde{q}_{l,n}  \log_2\frac{q_{k,n-l}}{q_{\infty,n-l}} \log_2q'_{k,l} \\
&=  \sum_{n=0}^{k} \sum_{l=0}^{n} \tilde{q}_{l,n}  \log_2\frac{q_{k,n-l}}{q_{\infty,n-l}} \log_2q'_{k,l} \\
&= \sum_{n=0}^{   \lfloor\frac{p}{2}(k+2)\rfloor+1   } \sum_{l=0}^{n} \tilde{q}_{l,n}  \log_2\frac{q_{k,n-l}}{q_{\infty,n-l}} \log_2q'_{k,l}+   \sum_{n=\lfloor\frac{p}{2}(k+2)\rfloor+2}^{k} \sum_{l=0}^{n} \tilde{q}_{l,n}  \log_2\frac{q_{k,n-l}}{q_{\infty,n-l}} \log_2q'_{k,l}\\
&\le\log_2\left(1 -e^{-(k+2)p^2/2   } \right) \sum_{n=0}^{   \lfloor\frac{p}{2}(k+2)\rfloor+1   } \sum_{l=0}^{n} \tilde{q}_{l,n}   \log_2q'_{k,l}+  \sum_{n=\lfloor\frac{p}{2}(k+2)\rfloor+2}^{k} \sum_{l=0}^{n} \tilde{q}_{l,n}  \log_2\frac{q_{k,n-l}}{q_{\infty,n-l}} \log_2q'_{k,l}\\
&\le\log_2\left(1 -e^{-(k+2)p^2/2   } \right) \sum_{n=0}^{k} \sum_{l=0}^{n} \tilde{q}_{l,n}   \log_2q'_{k,l}+  \sum_{n=\lfloor\frac{p}{2}(k+2)\rfloor+2}^{k} \sum_{l=0}^{n} \tilde{q}_{l,n}  \log_2\frac{q_{k,n-l}}{q_{\infty,n-l}} \log_2q'_{k,l}\\
&=\log_2\left(1 -e^{-(k+2)p^2/2   } \right)  \sum_{l=0}^{k}q'_{k,l}   \log_2q'_{k,l}+  \sum_{n=\lfloor\frac{p}{2}(k+2)\rfloor+2}^{k} \sum_{l=0}^{n} \tilde{q}_{l,n}  \log_2\frac{q_{k,n-l}}{q_{\infty,n-l}} \log_2q'_{k,l}
\ee
The first term goes to zero as a consequence of the facts that $\log_2\left(1 -e^{-(k+2)p^2/2   } \right)  \xrightarrow{k\rightarrow\infty} 0$ and that $\sum_{l=0}^{k}q'_{k,l}   \log_2q'_{k,l}$ converges to $S(\Psi_E)$, which is finite. Let us analyse the second term:
\bb
	0&\le\sum_{n=\lfloor\frac{p}{2}(k+2)\rfloor+2}^{k} \sum_{l=0}^{n} \tilde{q}_{l,n}  \log_2\frac{q_{k,n-l}}{q_{\infty,n-l}} \log_2q'_{k,l}\\
	&\le \log_2p \sum_{n=\lfloor\frac{p}{2}(k+2)\rfloor+2}^{k} \sum_{l=0}^{n} \tilde{q}_{l,n}  (n-l+1) \log_2q'_{k,l}\\
	&= \log_2p \sum_{n=\lfloor\frac{p}{2}(k+2)\rfloor+2}^{k} \sum_{l=0}^{n} \tilde{q}_{l,n}  (n-l+1) \log_2\frac{q'_{k,l}}{q'_{\infty,l}}+ \log_2p \sum_{n=\lfloor\frac{p}{2}(k+2)\rfloor+2}^{k} \sum_{l=0}^{n} \tilde{q}_{l,n}  (n-l+1) \log_2 q'_{\infty,l}\\
	&= \log_2p \sum_{n=\lfloor\frac{p}{2}(k+2)\rfloor+2}^{k} \sum_{l=0}^{n} \tilde{q}_{l,n}  (n-l+1) \log_2\frac{q'_{k,l}}{q'_{\infty,l}}+ \log_2p \sum_{n=\lfloor\frac{p}{2}(k+2)\rfloor+2}^{k} \sum_{l=0}^{n} \tilde{q}_{l,n}  (n-l+1) \log_2\left( \frac{1}{(1-\lambda) N_s+1}\left(\frac{(1-\lambda )N_s}{(1-\lambda) N_s+1}\right)^l \right)\\
	&\le  (\log_2p)^2 \sum_{n=\lfloor\frac{p}{2}(k+2)\rfloor+2}^{k} \sum_{l=0}^{n} \tilde{q}_{l,n}  (n-l+1) (l+1)+ \log_2p \sum_{n=\lfloor\frac{p}{2}(k+2)\rfloor+2}^{k} \sum_{l=0}^{n} \tilde{q}_{l,n}  (n-l+1) \log_2\left( \frac{1}{(1-\lambda) N_s+1}\left(\frac{(1-\lambda )N_s}{(1-\lambda) N_s+1}\right)^l \right)\,.
\ee
These two sums converge to zero because the series 
\bb
	\Tr[\Psi_{BE} \hat{N}_B\otimes \hat{N}_E  ]=\sum_{n=0}^\infty \sum_{l=n}^\infty \tilde{q}_{l,n}(n-l)l
\ee
 converges (specifically, it equals $2 \lambda(1-\lambda) N_s^2$), as a consequence of \eqref{eq_enbe}. Hence, we have proved the first limit in \eqref{lim_zero_toprove}. Now, let us prove the second limit, i.e.~$	\lim\limits_{k\rightarrow\infty}\sum_{l=0}^{k}\sum_{n=l}^{k}  \tilde{q}_{l,n}  \log_2q_{\infty,n-l} \log_2\frac{q'_{k,l}}{q'_{\infty,l}}=0$. Similarly to what we did with the first limit, by exploiting \eqref{hand1} and \eqref{hand2}, we obtain that
 \bb
 		0&\le \sum_{l=0}^{k}\sum_{n=l}^{k}  \tilde{q}_{l,n}  \log_2q_{\infty,n-l} \log_2\frac{q'_{k,l}}{q'_{\infty,l}}\\
 		& =\sum_{n=0}^{k} \sum_{l=0}^{n} \tilde{q}_{l,n}  \log_2q_{\infty,n-l} \log_2\frac{q'_{k,l}}{q'_{\infty,l}}\\
 		 & =\sum_{n=0}^{  \lfloor\frac{p'}{2}(k+2)\rfloor+1  } \sum_{l=0}^{n} \tilde{q}_{l,n}  \log_2q_{\infty,n-l} \log_2\frac{q'_{k,l}}{q'_{\infty,l}}+\sum_{n=  \lfloor\frac{p'}{2}(k+2)\rfloor+2 }^{k} \sum_{l=0}^{n} \tilde{q}_{l,n}  \log_2q_{\infty,n-l} \log_2\frac{q'_{k,l}}{q'_{\infty,l}}\\
 		 &\le  \log_2\left(    1 -e^{-(k+2)p'^2/2}  \right) \sum_{n=0}^{  \lfloor\frac{p'}{2}(k+2)\rfloor+1  } \sum_{l=0}^{n} \tilde{q}_{l,n}  \log_2q_{\infty,n-l} +\log_2 p'\sum_{n=  \lfloor\frac{p'}{2}(k+2)\rfloor+2 }^{k} \sum_{l=0}^{n} \tilde{q}_{l,n}  \log_2q_{\infty,n-l} (l+1)\\
 		 &= \log_2\left(    1 -e^{-(k+2)p'^2/2}  \right) \sum_{n=0}^{  \lfloor\frac{p'}{2}(k+2)\rfloor+1  } \sum_{l=0}^{n} \tilde{q}_{l,n} \log_2\left( \frac{1}{\lambda N_s+1}\left(\frac{\lambda N_s}{\lambda N_s+1}\right)^{n-l}   \right)\\
 		 &\quad+\log_2 p'\sum_{n=  \lfloor\frac{p'}{2}(k+2)\rfloor+2 }^{k} \sum_{l=0}^{n} \tilde{q}_{l,n}\log_2\left( \frac{1}{\lambda N_s+1}\left(\frac{\lambda N_s}{\lambda N_s+1}\right)^{n-l}   \right)(l+1),.
 \ee
Analogously as above, both these sums converge to zero because the state $\Psi_{BE}$ has finite moments. Consequently, we have proved that $    \lim\limits_{k\rightarrow\infty}\Tr[\Psi^{(k)}_{BE}\log_2\Psi^{(k)}_{B}\log_2\Psi^{(k)}_{E}]=\Tr[\Psi_{BE}\log_2\Psi_{B}\log_2\Psi_{E}]$. This concludes the proof of $\lim\limits_{k\to\infty}V(A|E)_{\Psi^{(k)}}=V(A|E)_{\Psi}$.
\end{proof}
We are now ready to bound the $n$-shot capacities of the pure loss channel. Let us start from the energy-constrained case.
\begin{thm}[(Lower bound on the energy-constrained $n$-shot capacities of the pure loss channel via entropy variance approach)]\label{thm_lower_bound_twoway_rel_ec}
Let $N_s>0$, $\lambda\in[0,1]$, $\varepsilon\in(0,1)$, and $n\in\mathbb{N}$. The energy-constrained $n$-shot quantum capacity, the  energy-constrained $n$-shot two-way quantum capacity, and the energy-constrained $n$-shot secret-key capacity of the pure loss channel $\mathcal{E}_\lambda$ can be lower bounded as follows:
\bb\label{EQ_ec_n_shot_ec}
    Q^{(\varepsilon,n)}(\mathcal{E}_{\lambda},N_s)&
        \geq  n \left(  h\!\left(\lambda N_s\right)-h\!\left((1-\lambda)N_s\right)
  \right)-4\sqrt{\frac{nV(A|E)_{\Psi^{(\lambda,N_s)}}}{\varepsilon}}-\log_2\!\left(\frac{2^{23}(32-\varepsilon)^2}{(16-\varepsilon)\varepsilon^6 }\right)\,,\\
        K^{(\varepsilon,n)}(\mathcal{E}_{\lambda},N_s)&\ge Q_2^{(\varepsilon,n)}(\mathcal{E}_{\lambda},N_s) \\
        &\ge n \left(h\!\left( N_s\right)-h\!\left((1-\lambda)N_s\right)\right)-\sqrt{\frac{2nV(B|E)_{\Psi^{(\lambda,N_s)}}}{\sqrt{\varepsilon}}}-\log_2\!\left(\frac{2^{6}\,3\,(4-\sqrt{\varepsilon})^2}{(2-\sqrt{\varepsilon})\varepsilon^3 }\right)\,,
    \ee
    where $h(x)\coloneqq (x+1)\log_2(x+1)-x\log_2x$ and where the conditional entropy variances are given by
\bb\label{eq_variance_thm_ec}
V(A|E)_{\Psi^{(\lambda,N_s)}} &=\lambda N_s(1+\lambda N_s)\left(\log_2\frac{\lambda N_s}{1+\lambda N_s}\right)^2 +(1-\lambda)N_s(1+(1-\lambda)N_s)\left(\log_2\frac{(1-\lambda)N_s}{1+(1-\lambda)N_s}\right)^2\\
&\quad-2(1-\lambda)\lambda N_s^2\log_2\frac{(1-\lambda)N_s}{1+(1-\lambda)N_s}\log_2\frac{\lambda N_s}{1+\lambda N_s}\,,\\
V(B|E)_{ \Psi^{(\lambda,N_s)} } &=N_s(1+N_s)\left(\log_2\frac{N_s}{1+N_s}\right)^2 +(1-\lambda)N_s(1+(1-\lambda)N_s)\left(\log_2\frac{(1-\lambda)N_s}{1+(1-\lambda)N_s}\right)^2\\
&\quad-2(1-\lambda)N_s(1+N_s)\left(\log_2\frac{(1-\lambda)N_s}{1+(1-\lambda)N_s}\right)\left(\log_2\frac{N_s}{1+N_s}\right)\,.
\ee
\end{thm}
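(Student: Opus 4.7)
The plan is to obtain the stated bounds by taking the limit $k\to\infty$ in the lower bounds of Lemma~\ref{lem_variance-bound-k-Q} and Lemma~\ref{lem_variance-bound-k-L}, which already provide the correct structure (linear term, $\sqrt{n}$ variance term, logarithmic correction) but involve the truncated state $\Psi^{(k)}_{ABE}$ rather than the Gaussian state $\Psi^{(\lambda,N_s)}_{ABE}$. The three ingredients I will need are (i) that the relative entropy $D(\Psi^{(k)}_{AE}\|\mathbb{1}_A\otimes\Psi^{(k)}_E)$ converges to its Gaussian counterpart, (ii) that the conditional entropy variance $V(\cdot|E)_{\Psi^{(k)}}$ does likewise, and (iii) that the Gaussian limits reproduce the closed-form expressions appearing in the statement.

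For ingredient (i), I rewrite $-D(\Psi^{(k)}_{AE}\|\mathbb{1}_A\otimes\Psi^{(k)}_E)=S(A|E)_{\Psi^{(k)}}$ and, since $\Psi^{(k)}_{ABE}$ is pure, this equals $-I_c(A\rangle B)_{\Psi^{(k)}}$. As shown inside the proof of Theorem~\ref{thm_lower_bound} (see the argument around Eq.~\eqref{proof_exp_convergence}), the truncated state $\Psi^{(k)}_{ABE}$ converges to $\Psi^{(\lambda,N_s)}_{ABE}$ in trace norm, and both states have uniformly bounded mean photon number. Invoking the continuity of the conditional entropy on energy-bounded states~\cite{tightuniform} (exactly as in Eq.~\eqref{lim_sab_k_inf}), we obtain $S(A|E)_{\Psi^{(k)}}\to S(A|E)_{\Psi^{(\lambda,N_s)}}=-I_c(A\rangle B)_{\Psi^{(\lambda,N_s)}}$, and by the explicit formula~\eqref{q_ec_pure_loss} this limit equals $-(h(\lambda N_s)-h((1-\lambda)N_s))$. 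An identical argument, starting from the second inequality of Lemma~\ref{lem_variance-bound-k-L} and using~\eqref{q2_ec_pure_loss}, yields $-D(\Psi^{(k)}_{BE}\|\mathbb{1}_B\otimes\Psi^{(k)}_E)\to h(N_s)-h((1-\lambda)N_s)$.

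Ingredient (ii) is precisely the content of Lemma~\ref{lemma_cont_cond_var}, which gives $V(A|E)_{\Psi^{(k)}}\to V(A|E)_{\Psi^{(\lambda,N_s)}}$ and $V(B|E)_{\Psi^{(k)}}\to V(B|E)_{\Psi^{(\lambda,N_s)}}$; these limiting variances are the explicit closed-form expressions of Eq.~\eqref{eq_variance_thm_ec}, as established in the lemma immediately preceding the theorem. Combining (i) and (ii), and sending $k\to\infty$ in the bounds of Lemma~\ref{lem_variance-bound-k-Q} and Lemma~\ref{lem_variance-bound-k-L} (which are valid for every $k$, with the left-hand side independent of $k$), one obtains the inequalities~\eqref{EQ_ec_n_shot_ec}.

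Finally, to convert from the unconstrained $n$-shot capacity to its energy-constrained version with parameter $N_s$, I will invoke the fact already noted at the end of the proof of Lemma~\ref{lem_variance-bound-k-Q}: the protocols underlying these achievability bounds transmit through the channel the $B$-half of the two-mode squeezed vacuum $\ket{\psi_{N_s}}_{AB}$, whose reduced state on $B$ is the thermal state $\tau_{N_s}$ with mean photon number $N_s$, so the energy constraint is respected. The main obstacle in the whole argument is really not in this final theorem itself but in the supporting Lemma~\ref{lemma_cont_cond_var}, where one must control the convergence of the variance despite the discontinuity of $\rho\mapsto\Tr[\rho\log^2\rho]$ in trace norm; here I can simply cite that lemma.
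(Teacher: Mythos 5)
Your proposal follows exactly the route of the paper's own proof: apply Lemma~\ref{lem_variance-bound-k-Q} and Lemma~\ref{lem_variance-bound-k-L}, then send $k\to\infty$, handling the relative-entropy term via trace-norm convergence of $\Psi^{(k)}$ and continuity of the conditional entropy on energy-bounded states (the first limit of \eqref{limit_to_prove}), and the variance term via Lemma~\ref{lemma_cont_cond_var}; the energy constraint is handled the same way, by noting the input is $\ket{\psi_{N_s}}$. The one flaw is a sign slip in ingredient (i): for the pure tripartite state one has $S(A|E)_{\Psi}=S(B)_\Psi-S(AB)_\Psi=+I_c(A\,\rangle\, B)_{\Psi}=h(\lambda N_s)-h((1-\lambda)N_s)$, not $-I_c(A\,\rangle\, B)_{\Psi}$; as literally written, your chain of identities would make the leading term $-n\bigl(h(\lambda N_s)-h((1-\lambda)N_s)\bigr)$, which contradicts the claimed bound. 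Correcting that double-negative, the argument coincides with the paper's proof.
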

\begin{proof}
By exploiting Lemma~\ref{lem_variance-bound-k-Q} and Lemma~\ref{lem_variance-bound-k-L}, we have that
\bb
Q^{(\varepsilon,n)}(\mathcal{E}_{\lambda},N_s)&\ge- n D(\Psi^{(k)}_{AE}\|\mathbb{1}_A \otimes \Psi^{(k)}_{E})-4\sqrt{\frac{{nV(A|E)_{\Psi^{(k)}}}}{\varepsilon}}-\log_2\!\left(\frac{2^{23}(32-\varepsilon)^2}{(16-\varepsilon)\varepsilon^6 }\right)\,,\\
K^{(\varepsilon,n)}(\mathcal{E}_{\lambda},N_s)&\ge Q_2^{(\varepsilon,n)}(\mathcal{E}_{\lambda},N_s) \\
&\ge  - n D(\Psi^{(k)}_{BE}\|\mathbb{1}_A \otimes \Psi^{(k)}_{E})-\sqrt{\frac{{2nV(B|E)_{\Psi^{(k)}}}}{\sqrt{\varepsilon}}}-\log_2\!\left(\frac{2^{6}\,3\,(4-\sqrt{\varepsilon})^2}{(2-\sqrt{\varepsilon})\varepsilon^3 }\right)\,.
\ee
The lower bounds in \eqref{EQ_ec_n_shot_ec} follow by taking the limit for $k\rightarrow \infty$ of the above lower bounds. Specifically, we do this by exploiting Lemma~\ref{lemma_cont_cond_var} and the fact, proved in the proof of Theorem~\ref{thm_lower_bound} (see the first equation of \eqref{limit_to_prove}), that
\bb
    \lim_{k\rightarrow\infty}D(\Psi^{(k)}_{AE}\|\mathbb{1}_A \otimes \Psi^{(k)}_{E})&=D(\Psi^{(\lambda,N_s)}_{AE}\|\mathbb{1}_A \otimes \Psi^{(\lambda,N_s)}_{E})=-I_c(A\,\rangle\, B)_{ \Psi^{(\lambda,N_s)} }=-h(\lambda N_s)+h((1-\lambda) N_s)\,,\\
    \lim_{k\rightarrow\infty}D(\Psi^{(k)}_{BE}\|\mathbb{1}_B \otimes \Psi^{(k)}_{E})&=D(\Psi^{(\lambda,N_s)}_{BE}\|\mathbb{1}_B \otimes \Psi^{(\lambda,N_s)}_{E})=-I_c(B\,\rangle\, A)_{ \Psi^{(\lambda,N_s)} }=-h( N_s)+h((1-\lambda) N_s)\,,
\ee
where the expression of the coherent information quantities are reported in \eqref{q_ec_pure_loss} and \eqref{q2_ec_pure_loss}. 
\end{proof}
By taking the limit $N_s\rightarrow\infty$ of \eqref{EQ_ec_n_shot_ec}, we can find a lower bound on the (unconstrained) $n$-shot capacities of the pure loss channel.
\begin{thm}[(Lower bound on the $n$-shot capacities of the pure loss channel via entropy variance approach)]\label{thm_lower_bound_twoway_rel}
Let $\lambda\in[0,1]$, $\varepsilon\in(0,1)$, and $n\in\mathbb{N}$. The $n$-shot quantum capacity, the $n$-shot two-way quantum capacity, and the $n$-shot secret-key capacity of the pure loss channel $\mathcal{E}_\lambda$ can be lower bounded as follows:
\bb\label{EQ_n_shot_nonEC}
    Q^{(\varepsilon,n)}(\mathcal{E}_{\lambda}) 
        &\ge n Q(\mathcal{E}_{\lambda}) - \log_2\left( \frac{2^{23}(32-\varepsilon)^2}{ (16-\varepsilon)\varepsilon^6}\right)      
        \,,\\
        K^{(\varepsilon,n)}(\mathcal{E}_{\lambda})&\ge Q_2^{(\varepsilon,n)}(\mathcal{E}_{\lambda}) \ge n Q_2(\mathcal{E}_{\lambda})- \log_2\!\left(\frac{2^{6}\,3\,(4-\sqrt{\varepsilon})^2}{(2-\sqrt{\varepsilon})\varepsilon^3 }\right) \,,
\ee
where
\bb
    Q(\mathcal{E}_{\lambda})&=   
        \begin{cases}
        \log_2\!\left(\frac{\lambda}{1-\lambda}\right) &\text{if $\lambda\in(\frac{1}{2},1]$ ,} \\
        0 &\text{if $\lambda\in[0,\frac{1}{2}]$ ,}
    \end{cases} 
\ee
and $Q_2(\mathcal{E}_{\lambda})= \log_2\!\left(\frac{1}{1-\lambda}\right)$ are the (asymptotic) quantum capacity and two-way quantum capacity of the pure loss channel, respectively.
\end{thm}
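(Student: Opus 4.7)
The plan is to obtain Theorem~\ref{thm_lower_bound_twoway_rel} by sending $N_s\to\infty$ in the energy-constrained bounds of Theorem~\ref{thm_lower_bound_twoway_rel_ec}. The first step is to observe that the unconstrained $n$-shot capacities always dominate their energy-constrained counterparts, since every energy-constrained protocol is admissible in the unconstrained optimisation. Hence, for all $N_s>0$,
\begin{equation*}
Q^{(\varepsilon,n)}(\mathcal{E}_\lambda)\ge Q^{(\varepsilon,n)}(\mathcal{E}_\lambda,N_s),\qquad Q_2^{(\varepsilon,n)}(\mathcal{E}_\lambda)\ge Q_2^{(\varepsilon,n)}(\mathcal{E}_\lambda,N_s),\qquad K^{(\varepsilon,n)}(\mathcal{E}_\lambda)\ge K^{(\varepsilon,n)}(\mathcal{E}_\lambda,N_s),
\end{equation*}
so it suffices to substitute the right-hand sides of \eqref{EQ_ec_n_shot_ec} and take the supremum over $N_s$.

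I will handle the two-way/secret-key bound first. The coherent information in Eq.~\eqref{EQ_ec_n_shot_ec} satisfies
\begin{equation*}
\lim_{N_s\to\infty}\bigl(h(N_s)-h((1-\lambda)N_s)\bigr)=\log_2\!\left(\tfrac{1}{1-\lambda}\right)=Q_2(\mathcal{E}_\lambda),
\end{equation*}
by the standard identity $h(x)=(x+1)\log_2(x+1)-x\log_2 x$, while the entropy variance satisfies $V(B|E)_{\Psi^{(\lambda,N_s)}}\to 0$ by Eq.~\eqref{limits_var}. Consequently, the $\sqrt{n}$ term in \eqref{EQ_ec_n_shot_ec} vanishes in the limit and the $O(1)$ remainder is preserved, giving precisely the second inequality in \eqref{EQ_n_shot_nonEC}. (The inequality $K^{(\varepsilon,n)}\ge Q_2^{(\varepsilon,n)}$ is already established by Eq.~\eqref{ineq_nshot_q_k}.)

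For the quantum capacity bound I would split into the two regimes dictated by the definition $Q(\mathcal{E}_\lambda)=\max\{0,\log_2(\lambda/(1-\lambda))\}$. For $\lambda\in(\tfrac12,1]$ the same limiting argument applies: $h(\lambda N_s)-h((1-\lambda)N_s)\to\log_2(\lambda/(1-\lambda))=Q(\mathcal{E}_\lambda)$ and $V(A|E)_{\Psi^{(\lambda,N_s)}}\to 0$ by Eq.~\eqref{limits_var}, so the first inequality in \eqref{EQ_n_shot_nonEC} follows directly. For $\lambda\in[0,\tfrac12]$ we have $Q(\mathcal{E}_\lambda)=0$, so the claim reduces to
\begin{equation*}
Q^{(\varepsilon,n)}(\mathcal{E}_\lambda)\ge-\log_2\!\left(\frac{2^{23}(32-\varepsilon)^2}{(16-\varepsilon)\varepsilon^6}\right),
\end{equation*}
whose right-hand side is strictly negative for $\varepsilon\in(0,1)$; the inequality is therefore trivial since $Q^{(\varepsilon,n)}\ge 0$ always (the protocol that distils zero qubits has zero error).

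The main obstacle is really just the careful bookkeeping needed to verify that the $N_s$-dependent terms in \eqref{EQ_ec_n_shot_ec} behave as claimed in the limit; but this is essentially done, since the variance vanishing is Eq.~\eqref{limits_var} and the coherent-information limits follow from the explicit asymptotics of $h$. No continuity statement about the capacity itself is required, because we only need the one-sided inequality $Q^{(\varepsilon,n)}(\mathcal{E}_\lambda)\ge\sup_{N_s}Q^{(\varepsilon,n)}(\mathcal{E}_\lambda,N_s)$ together with the fact that the right-hand side of \eqref{EQ_ec_n_shot_ec} converges to the claimed bound as $N_s\to\infty$.
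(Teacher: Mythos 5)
Your proposal is correct and follows essentially the same route as the paper, which obtains Theorem~\ref{thm_lower_bound_twoway_rel} precisely by taking the limit $N_s\to\infty$ of the energy-constrained bounds in Theorem~\ref{thm_lower_bound_twoway_rel_ec}, using the vanishing of the entropy variances in Eq.~\eqref{limits_var} and the asymptotics of $h$. Your additional remarks --- the monotonicity $Q^{(\varepsilon,n)}(\mathcal{E}_\lambda)\ge Q^{(\varepsilon,n)}(\mathcal{E}_\lambda,N_s)$ and the observation that for $\lambda\in[0,\tfrac12]$ the quantum-capacity bound is trivially satisfied because its right-hand side is negative --- are exactly the details the paper leaves implicit, and they are handled correctly.
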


\subsection{Best lower bounds}\label{sec_best_low}
Above we have found two different lower bounds on the $n$-shot capacities of the pure loss channel, derived by exploiting two different approaches: the asymptotic equipartition property approach (see Section~\ref{subsecAsymptotic}) and the entropy variance approach (see subsection~\ref{section_entropy_cariance}). Let us compare these two lower bounds.

The lower bounds derived exploiting the asymptotic equipartition property approach are stated in Theorem~\ref{thm_one_shot_Q_pure} and in Theorem~\ref{thm_one_shot_Q2_pure}, and they read
\bb
    Q^{(\varepsilon,n)}(\mathcal{E}_\lambda)&\ge nQ(\mathcal{E}_\lambda)-\sqrt{n}4\log_2\!\left( \sqrt{\frac{1-\lambda}{\lambda}}+ \sqrt{\frac{\lambda}{1-\lambda}} +1 \right)\sqrt{\log_2\!\left(\frac{2^9}{\varepsilon^2}\right)}-\log_2\!\left(\frac{2^{18}}{3\varepsilon^4}\right)\,,\\
    K^{(\varepsilon,n)}(\mathcal{E}_\lambda)&\ge Q_2^{(\varepsilon,n)}(\mathcal{E}_\lambda) \ge nQ_2(\mathcal{E}_\lambda)-\sqrt{n}4\log_2\!\left(\sqrt{1-\lambda}+\sqrt{\frac{1}{1-\lambda}}+1\right)\sqrt{\log_2\!\left(\frac{8}{\varepsilon}\right)}-\log_2\!\left(\frac{16}{\varepsilon^2}\right)\,,
\ee
while the lower bounds derived exploiting the relative entropy approach are stated in Theorem~\ref{thm_lower_bound_twoway_rel} and they read
\bb\label{eq_bound_2}
Q^{(\varepsilon,n)}(\mathcal{E}_{\lambda}) 
        &\ge n Q(\mathcal{E}_{\lambda}) - \log_2\left( \frac{2^{23}(32-\varepsilon)^2}{ (16-\varepsilon)\varepsilon^6}\right)      
        \,,\\
        K^{(\varepsilon,n)}(\mathcal{E}_{\lambda})&\ge Q_2^{(\varepsilon,n)}(\mathcal{E}_{\lambda}) \ge n Q_2(\mathcal{E}_{\lambda})- \log_2\!\left(\frac{2^{6}\,3\,(4-\sqrt{\varepsilon})^2}{(2-\sqrt{\varepsilon})\varepsilon^3 }\right) \,,
    \ee
where $Q(\mathcal{E}_{\lambda})=\max\!\left(0,\log_2\!\left(\frac{\lambda}{1-\lambda}\right)\right)$ is the (asymptotic) quantum capacity and $Q_2(\mathcal{E}_{\lambda})= \log_2\!\left(\frac{1}{1-\lambda}\right)$ is the (asymptotic) two-way quantum capacity. By inspection, we can observe that the best lower bounds are the ones based on the relative entropy approach reported in \eqref{eq_bound_2}, due to the absence of the term proportional to $\sqrt{n}$. Let us report the best lower bound in a quotable theorem.
\begin{thm}[(Best lower bounds on the $n$-shot capacities of the pure loss channel)]\label{best_bounds}
    Let $\lambda\in[0,1]$, $\varepsilon\in(0,1)$, and $n\in\mathbb{N}$. The $n$-shot quantum capacity, the $n$-shot two-way quantum capacity, and the $n$-shot secret-key capacity of the pure loss channel $\mathcal{E}_\lambda$ can be lower bounded as follows:
    \bb\label{ineq_best_bounds}
Q^{(\varepsilon,n)}(\mathcal{E}_{\lambda}) 
        &\ge n Q(\mathcal{E}_{\lambda}) - \log_2\left( \frac{2^{23}(32-\varepsilon)^2}{ (16-\varepsilon)\varepsilon^6}\right)      
        \,,\\
        K^{(\varepsilon,n)}(\mathcal{E}_{\lambda})&\ge Q_2^{(\varepsilon,n)}(\mathcal{E}_{\lambda}) \ge n Q_2(\mathcal{E}_{\lambda})- \log_2\!\left(\frac{2^{6}\,3\,(4-\sqrt{\varepsilon})^2}{(2-\sqrt{\varepsilon})\varepsilon^3 }\right) \,,
    \ee
    where $Q(\mathcal{E}_{\lambda})=\max\!\left(0,\log_2\!\left(\frac{\lambda}{1-\lambda}\right)\right)$ and $Q_2(\mathcal{E}_{\lambda})= \log_2\!\left(\frac{1}{1-\lambda}\right)$ are the (asymptotic) quantum capacity and the (asymptotic) two-way quantum capacity, respectively.
\end{thm}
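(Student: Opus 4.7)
The plan is to obtain the theorem directly as the infinite-energy limit of the energy-constrained bounds established in Theorem~\ref{thm_lower_bound_twoway_rel_ec}; concretely, I would prove Theorem~\ref{thm_lower_bound_twoway_rel_ec} first and then send $N_s\to\infty$. For the energy-constrained bounds, the strategy is the entropy-variance approach: replace the asymptotic equipartition property used in Theorem~\ref{thm_lower_bound_arb_gauss_main} with the sharper Lemma~\ref{lemma_low_variance_hmin}, which controls the smooth max relative entropy by the relative entropy variance instead of by the conditional Petz--R\'enyi entropies. This replacement is what ultimately removes the $\sqrt{n}\,\log_2(\cdots)$ contribution, since in the limit $N_s\to\infty$ the conditional entropy variances vanish.

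The derivation would proceed as follows. First, I would introduce the truncated states $\Psi^{(k)}_{ABE}$ obtained by projecting the two-mode squeezed vacuum onto the $k$-photon subspace of $A$ and propagating through a Stinespring dilation of $\mathcal{E}_\lambda$, exactly as in Lemma~\ref{lem_variance-bound-k-Q}. Since the beam splitter preserves the total photon number, this projection commutes with the dilation, so both Alice's reduced state and Bob's reduced state have bounded support. On the truncated (finite-dimensional) level, I would combine Lemma~\ref{Lemma_Sumeet_book} (for the quantum capacity) and Lemma~\ref{lemma_lower_2_way} (for the two-way and secret-key capacities) with the smooth-max-to-variance bound of Lemma~\ref{lemma_low_variance_hmin}, optimising the free parameters $\delta=1/4$, $\eta=\varepsilon/16$ in the first case and $\eta=\sqrt{\varepsilon}/2$ in the second, to obtain the explicit constants appearing on the right-hand sides.

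Second, I would take $k\to\infty$. Here I can reuse essentially verbatim the limit arguments developed in the proof of Theorem~\ref{thm_lower_bound}: the exponential trace-norm convergence $\Psi^{(k)}_{ABE}\to\Psi^{(\lambda,N_s)}_{ABE}$ (shown via the Gaussian tail bound of Theorem~\ref{the:expdecay00}) combined with the continuity of the conditional entropy under the bounded-energy constraint~\cite{tightuniform} gives convergence of the $D(\Psi^{(k)}_{AE}\|\mathbb{1}_A\otimes\Psi^{(k)}_{E})$ and $D(\Psi^{(k)}_{BE}\|\mathbb{1}_B\otimes\Psi^{(k)}_{E})$ terms to $-I_c(A\,\rangle\, B)_{\Psi^{(\lambda,N_s)}}=-h(\lambda N_s)+h((1-\lambda)N_s)$ and $-I_c(B\,\rangle\, A)_{\Psi^{(\lambda,N_s)}}=-h(N_s)+h((1-\lambda)N_s)$ respectively. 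Convergence of the variance terms $V(A|E)_{\Psi^{(k)}}$ and $V(B|E)_{\Psi^{(k)}}$ to their untruncated counterparts is precisely the content of Lemma~\ref{lemma_cont_cond_var}. This yields Theorem~\ref{thm_lower_bound_twoway_rel_ec}.

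Finally, I would send $N_s\to\infty$. Both coherent informations converge to the asymptotic capacities $Q(\mathcal{E}_\lambda)$ and $Q_2(\mathcal{E}_\lambda)$, and, crucially, both $V(A|E)_{\Psi^{(\lambda,N_s)}}$ and $V(B|E)_{\Psi^{(\lambda,N_s)}}$ converge to zero, as made explicit in the closed-form expressions computed in Section~\ref{section_entropy_cariance}. Consequently the $\sqrt{n}$-contributions evaporate in the limit, leaving precisely the two inequalities in \eqref{ineq_best_bounds}. For $\lambda\le 1/2$ the quantum-capacity statement is vacuous since $Q^{(\varepsilon,n)}(\mathcal{E}_\lambda)\ge 0$. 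The main technical obstacle is controlling the two limits $k\to\infty$ and $N_s\to\infty$ cleanly for the variance term; the $k$-limit is handled by Lemma~\ref{lemma_cont_cond_var}, while the $N_s$-limit reduces to a direct computation using the thermal-state expressions for the reductions of $\Psi^{(\lambda,N_s)}$ and the moment generating functions of the negative binomial distribution already employed in Section~\ref{section_entropy_cariance}.
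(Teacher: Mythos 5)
Your proposal is correct and follows essentially the same route as the paper: the paper obtains Theorem~\ref{best_bounds} by restating Theorem~\ref{thm_lower_bound_twoway_rel}, which is derived exactly as you describe --- the entropy-variance bound of Lemma~\ref{lemma_low_variance_hmin} applied to the truncated states of Lemma~\ref{lem_variance-bound-k-Q} and Lemma~\ref{lem_variance-bound-k-L} (with the same parameter choices), the $k\to\infty$ limit via Lemma~\ref{lemma_cont_cond_var} and the convergence arguments from the proof of Theorem~\ref{thm_lower_bound}, and finally the $N_s\to\infty$ limit in which the conditional entropy variances vanish and the $\sqrt{n}$ terms disappear. No gaps to report.
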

The lower bound on the $n$-shot two-way quantum capacity and secret-key capacity in \eqref{ineq_best_bounds} have to be compared with the upper bound established in \cite{MMMM}:
\bb\label{upper_bound_MMMM}
    Q_2^{(\varepsilon,n)}(\mathcal{E}_{\lambda})\le K^{(\varepsilon,n)}(\mathcal{E}_{\lambda})\le n\log_2\!\left(\frac{1}{1-\lambda}\right)+\log_26+2\log_2\!\left(\frac{1+\varepsilon}{1-\varepsilon}\right)\,.
\ee
Consequently, by leveraging \eqref{ineq_best_bounds} and \eqref{upper_bound_MMMM}, we conclude that for constant $\varepsilon$,  the following holds:
\bb
    Q_2^{(\varepsilon,n)}(\mathcal{E}_{\lambda})=K^{(\varepsilon,n)}(\mathcal{E}_{\lambda})= n\log_2\!\left(\frac{1}{1-\lambda}\right) + O(1)\,.
\ee
This result proves the conjecture made in \cite{Kaur_2017}, demonstrating that the upper bound in \eqref{upper_bound_MMMM} is nearly optimal.

\subsection{Channel complexity of the pure loss channel}\label{sec_channel_compl}
Traditionally, the analysis of quantum communication over Gaussian channels has been conducted in the \emph{asymptotic} setting, where the number of channel uses approaches infinity and the error in the quantum communication task vanishes. However, due to technological limitations and experimental imperfections, it has become essential to also examine the \emph{non-asymptotic} setting, where the number of channel uses is a fixed (finite) number $n$ and the goal is to achieve a \emph{non-zero} error that it is required to be below a certain threshold $\varepsilon$. In this non-asymptotic setting, we address the following questions:
\begin{itemize}
    \item How many uses of the pure loss channel $\mathcal{E}_\lambda$ are sufficient in order to transmit $k$ qubits with an error of at most $\varepsilon$?
    \item How many uses of the pure loss channel $\mathcal{E}_\lambda$, assisted by arbitrary LOCCs, are sufficient in order to generate $k$ ebits (and hence $k$ secret-key bits) with an error of at most $\varepsilon$?
\end{itemize}
Answering these questions is crucial since the pure loss channel is the most realistic model for optical links, including optical fibres and free-space links.

In the non-asymptotic setting, it is crucial to determine the minimum number $n$  of channel uses required to perform a given quantum communication task with an error of at most $\varepsilon$. Determining this minimum $n$ is important because it indicates how long one needs to wait in order to complete the quantum communication task. This minimum $n$ can be referred to as \emph{channel complexity}, analogous to the concepts of "sample complexity" or "query complexity" well-studied in the quantum learning theory literature~\cite{anshu2023survey}. The channel complexity for qubit distribution, LOCC-assisted entanglement distribution, and LOCC-assisted secret-key distribution over the pure loss channel will be denoted as $n_{Q}(\lambda,\varepsilon,k)$, $n_{Q_2}(\lambda,\varepsilon,k)$, and $n_{K}(\lambda,\varepsilon,k)$, respectively. Below, we establish bounds on these channel complexities.
\begin{thm}[(Upper bound on the channel complexity of qubit distribution)]\label{thm_activation_complexityQ}
    Let $\lambda\in(\frac{1}{2},1)$, $\varepsilon\in(0,1)$, and $k\in\mathbb{N}$. It holds that
    \bb
        n_{Q}(\lambda,\varepsilon,k)\le \frac{k+\log_2\!\left( \frac{2^{23}(32-\varepsilon)^2}{ (16-\varepsilon)\varepsilon^6}\right)}{\log_2\!\left(\frac{\lambda}{1-\lambda}\right)}\,.
    \ee
    More explicitly, exploiting a number $\frac{k+\log_2\!\left( \frac{2^{23}(32-\varepsilon)^2}{ (16-\varepsilon)\varepsilon^6}\right)}{\log_2\!\left(\frac{\lambda}{1-\lambda}\right)}$ of uses of the pure loss channel $\mathcal{E}_\lambda$ is sufficient in order to transmit $k$ qubits with an error of at most $\varepsilon$. Here, the error of at most $\varepsilon$ is measured in terms of the channel fidelity between the encoded channel and the identity map as defined in Eq.~\eqref{eq_def_Q_cap}.  
\end{thm}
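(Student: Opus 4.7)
The plan is to obtain the bound as a direct algebraic consequence of the lower bound on the $n$-shot quantum capacity proved in Theorem~\ref{best_bounds}, which already encapsulates all the technical machinery (the infinite-dimensional asymptotic equipartition property, the tail bound on Gaussian states of Theorem~\ref{the:expdecay00}, and the conditional entropy estimates of Section~\ref{section_entropy_cariance}). The entire task is to invert inequality \eqref{ineq_best_bounds} with respect to $n$.

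First, I would specialise \eqref{ineq_best_bounds} to the regime $\lambda \in (1/2, 1)$, where \eqref{capacities_pure_loss} gives $Q(\mathcal{E}_\lambda) = \log_2\!\left(\tfrac{\lambda}{1-\lambda}\right) > 0$. Substituting this yields
\begin{equation*}
Q^{(\varepsilon, n)}(\mathcal{E}_\lambda) \;\ge\; n \log_2\!\left(\tfrac{\lambda}{1-\lambda}\right) \;-\; \log_2\!\left(\tfrac{2^{23}(32-\varepsilon)^2}{(16-\varepsilon)\varepsilon^6}\right).
\end{equation*}
By the definition of $Q^{(\varepsilon, n)}$ recalled in Section~\ref{paragraph_quantum_cap}, any integer lower bound on this quantity is the number of qubits transmissible with channel-fidelity error at most $\varepsilon$ (in the sense of \eqref{eq_def_Q_cap}) using $n$ uses of $\mathcal{E}_\lambda$. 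Therefore, to guarantee that $k$ qubits can be transmitted, it suffices to choose $n$ so that the right-hand side above is at least $k$.

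Solving the resulting linear inequality $n \log_2\!\left(\tfrac{\lambda}{1-\lambda}\right) - \log_2\!\left(\tfrac{2^{23}(32-\varepsilon)^2}{(16-\varepsilon)\varepsilon^6}\right) \ge k$ for $n$ gives precisely the claimed bound; the hypothesis $\lambda > 1/2$ is used exactly to ensure that $\log_2(\lambda/(1-\lambda)) > 0$, so that the division preserves the inequality direction and the bound is finite. By the definition of $n_Q(\lambda, \varepsilon, k)$ as the minimum number of channel uses that suffices for this task, this completes the proof.

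I do not anticipate any genuine obstacle in this argument: all the analytic work has been done in Theorem~\ref{best_bounds} (via Theorem~\ref{thm_lower_bound_twoway_rel}, Lemma~\ref{lem_variance-bound-k-Q}, and the trace-distance tail bound of Theorem~\ref{the:expdecay}). The only small point worth double-checking is the consistency between the error notion used in Lemma~\ref{Lemma_Sumeet_book} (which underlies Theorem~\ref{best_bounds}) and the channel-fidelity error in \eqref{eq_def_Q_cap}; this consistency is guaranteed by the definition of $Q^{(\varepsilon, n)}$ in Section~\ref{paragraph_quantum_cap} and is implicit throughout the proofs of Theorem~\ref{thm_lower_bound} and Lemma~\ref{lem_variance-bound-k-Q}.
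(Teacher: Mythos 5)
Your proposal is correct and follows essentially the same route as the paper: the paper's proof likewise invokes Theorem~\ref{best_bounds}, requires $n\log_2\!\left(\frac{\lambda}{1-\lambda}\right)-\log_2\!\left(\frac{2^{23}(32-\varepsilon)^2}{(16-\varepsilon)\varepsilon^6}\right)\ge k$, and solves for $n$, with the hypothesis $\lambda>\tfrac12$ playing exactly the role you identify of making the coefficient of $n$ positive.
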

\begin{proof}
In order to transmit $k$ qubits with error at most $\varepsilon$ across $n$ uses of the pure loss channel $\mathcal{E}_\lambda$, it is sufficient to exploit a number $n$ of uses of the pure loss channel $\mathcal{E}_\lambda$ such that the $n$-shot quantum capacity satisfies $Q^{(\varepsilon,n)}(\mathcal{E}_\lambda)\ge k$. Hence, by exploiting Theorem~\ref{best_bounds}, it thus suffices to require that
\bb
    n\log_2\!\left(\frac{\lambda}{1-\lambda}\right)-\log_2\left( \frac{2^{23}(32-\varepsilon)^2}{ (16-\varepsilon)\varepsilon^6}\right)\ge k\,.
\ee
By solving the above inequality with respect $n$, we conclude the proof.
\end{proof}
Let us now state an analogous result for the task of entanglement distribution and secret key distribution.
\begin{thm}[(Upper bound on the channel complexity of LOCC-assisted entanglement distribution and secret-key distribution)]\label{thm_activation_complexityQ2}
Let $\lambda\in(0,1)$, $\varepsilon\in(0,1)$, and $k\in\mathbb{N}$. It holds that
\bb
    n_{K}(\lambda,\varepsilon,k)\le n_{Q_2}(\lambda,\varepsilon,k)\le \frac{k+\log_2\!\left(\frac{2^{6}\,3\,(4-\sqrt{\varepsilon})^2}{(2-\sqrt{\varepsilon})\varepsilon^3 }\right) }{\log_2\!\left(\frac{1}{1-\lambda}\right)}\,,
\ee
More explicitly, exploiting a number $\frac{k+\log_2\!\left(\frac{2^{6}\,3\,(4-\sqrt{\varepsilon})^2}{(2-\sqrt{\varepsilon})\varepsilon^3 }\right) }{\log_2\!\left(\frac{1}{1-\lambda}\right)}$ of uses of the pure loss channel $\mathcal{E}_\lambda$, together with LOCCs, is sufficient in order to generate $k$ ebits (and hence $k$ secret-key bits) with an error of at most $\varepsilon$. Here, the error of at most $\varepsilon$ is measured in terms of the fidelity between the output state of the  protocol and the maximally entangled state as defined in Eq.~\eqref{def_eq_two_way_q}. 
\end{thm}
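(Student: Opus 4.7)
The plan is to mimic the strategy used in the preceding Theorem~\ref{thm_activation_complexityQ} for qubit distribution, but now applied to the two-way/secret-key setting. The key ingredient is the lower bound on $Q_{2}^{(\varepsilon,n)}(\mathcal{E}_{\lambda})$ established in Theorem~\ref{best_bounds}, namely
\bb
K^{(\varepsilon,n)}(\mathcal{E}_{\lambda})\ge Q_{2}^{(\varepsilon,n)}(\mathcal{E}_{\lambda})\ge n\log_{2}\!\left(\tfrac{1}{1-\lambda}\right)-\log_{2}\!\left(\tfrac{2^{6}\,3\,(4-\sqrt{\varepsilon})^{2}}{(2-\sqrt{\varepsilon})\varepsilon^{3}}\right).
\ee
The definition of the $n$-shot two-way quantum capacity implies that if $Q_{2}^{(\varepsilon,n)}(\mathcal{E}_{\lambda})\ge k$ then there exists an $(n,|M|,\varepsilon)$ LOCC-assisted protocol over $\mathcal{E}_{\lambda}$ with $\log_{2}|M|\ge k$, i.e.\ one can distil $k$ ebits with error at most $\varepsilon$. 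Hence it suffices to choose $n$ large enough that the right-hand side of the displayed inequality is at least $k$.

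First, I would require
\bb
n\log_{2}\!\left(\tfrac{1}{1-\lambda}\right)-\log_{2}\!\left(\tfrac{2^{6}\,3\,(4-\sqrt{\varepsilon})^{2}}{(2-\sqrt{\varepsilon})\varepsilon^{3}}\right)\;\ge\;k,
\ee
and since $\lambda\in(0,1)$ the coefficient $\log_{2}(1/(1-\lambda))$ is strictly positive, so the inequality can be solved for $n$, yielding the stated upper bound on $n_{Q_{2}}(\lambda,\varepsilon,k)$. Second, I would invoke the relation $K^{(\varepsilon,n)}(\mathcal{E}_{\lambda})\ge Q_{2}^{(\varepsilon,n)}(\mathcal{E}_{\lambda})$ (Eq.~\eqref{ineq_nshot_q_k}), which operationally reflects the fact that any $(n,|M|,\varepsilon)$ LOCC-assisted protocol distilling $k$ ebits is in particular an $(n,|M|,\varepsilon)$ secret-key agreement protocol producing $k$ secret-key bits; this immediately gives $n_{K}(\lambda,\varepsilon,k)\le n_{Q_{2}}(\lambda,\varepsilon,k)$.

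There is no real obstacle: the proof is essentially a one-line algebraic inversion of the lower bound of Theorem~\ref{best_bounds}, followed by invoking the operational ordering between the two capacities. The only subtle point is that the right-hand side of the claimed bound may not be an integer, which is harmless because $n_{Q_{2}}$ is defined as a minimum number of channel uses and any $n$ exceeding the bound suffices; formally one may take the ceiling. In summary, the proof will consist of (i) citing Theorem~\ref{best_bounds}, (ii) solving the resulting linear inequality in $n$ to conclude $n_{Q_{2}}\le (k+\log_{2}[\cdots])/\log_{2}(1/(1-\lambda))$, and (iii) applying $K^{(\varepsilon,n)}\ge Q_{2}^{(\varepsilon,n)}$ to conclude $n_{K}\le n_{Q_{2}}$, which together give the statement.
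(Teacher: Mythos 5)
Your proposal is correct and follows essentially the same route as the paper: invert the lower bound of Theorem~\ref{best_bounds} by solving the linear inequality $nQ_2(\mathcal{E}_\lambda)-\log_2(\cdots)\ge k$ for $n$, then use $K^{(\varepsilon,n)}\ge Q_2^{(\varepsilon,n)}$ to order the two channel complexities. Your added remark about taking the ceiling is a harmless refinement the paper leaves implicit.
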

\begin{proof}
In order to transmit $k$ ebits with error at most $\varepsilon$ across $n$ uses of the pure loss channel $\mathcal{E}_\lambda$, it is sufficient to exploit a number $n$ of uses of the pure loss channel $\mathcal{E}_\lambda$ such that the $n$-shot two-way quantum capacity satisfies $Q^{(\varepsilon,n)}(\mathcal{E}_\lambda)\ge k$. Hence, by exploiting Theorem~\ref{best_bounds}, it thus suffices to require that
\bb
    n\log_2\!\left(\frac{1}{1-\lambda}\right)-\log_2\!\left(\frac{2^{6}\,3\,(4-\sqrt{\varepsilon})^2}{(2-\sqrt{\varepsilon})\varepsilon^3 }\right) \ge k\,,
\ee
By solving the above inequality with respect $n$, we conclude the proof.
\end{proof}
By exploiting the upper bound on the $n$-shot capacities reported in \eqref{upper_bound_MMMM} and proved in \cite{MMMM}, one can also find a lower bound on the channel complexity of LOCC-assisted entanglement distribution and secret-key distribution over the pure loss channel.
\begin{thm}[(Lower bound on the channel complexity of LOCC-assisted entanglement distribution and secret-key distribution)] 
Let $n\in\mathbb{N}$, $\lambda\in(0,1)$, $\varepsilon\in(0,1)$, and $k\in\mathbb{N}$. It holds that
\bb
    n_{Q_2}(\lambda,\varepsilon,k)\ge n_{K}(\lambda,\varepsilon,k)\ge 
     \frac{k-\log_2\left(\frac{6(1+\varepsilon)^2}{(1-\varepsilon)^2}\right)}{ \log_2\!\left(\frac{1}{1-\lambda}\right)  }\,.  
\ee
In other words, if $n$ uses of the pure loss channel $\mathcal{E}_\lambda$, together with LOCCs, allows one to generate $k$ ebits (and hence $k$ secret-key bits) with an error of at most $\varepsilon$, then it must hold that
\bb
    n\ge \frac{k-\log_2\!\left(\frac{6(1+\varepsilon)^2}{(1-\varepsilon)^2}\right)}{ \log_2\!\left(\frac{1}{1-\lambda}\right)  }\,.
\ee
\end{thm}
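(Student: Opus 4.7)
The plan is to directly leverage the upper bound on the $n$-shot secret-key capacity of the pure loss channel that is already recorded in the excerpt as Eq.~\eqref{upper_bound_MMMM}. The argument is essentially a contrapositive of the definition of $n$-shot secret-key capacity, combined with an elementary observation relating entanglement distribution to secret-key distribution.

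First I would establish the inequality $n_{Q_2}(\lambda,\varepsilon,k) \ge n_K(\lambda,\varepsilon,k)$. The key observation, already noted in the excerpt (see the discussion around Eq.~\eqref{ineq_nshot_q_k}), is that any $(n,|M|,\varepsilon)$ LOCC-assisted entanglement distribution protocol is automatically an $(n,|M|,\varepsilon)$ protocol for secret-key agreement, because a maximally entangled state of Schmidt rank $|M|$ is a particular private state of dimension $|M|$. Hence, any number $n$ of channel uses that suffices for generating $k$ ebits with error at most $\varepsilon$ also suffices for generating $k$ secret-key bits with the same error, giving the desired ordering of channel complexities.

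Next I would derive the lower bound on $n_K(\lambda,\varepsilon,k)$. By definition, if $n$ uses of the pure loss channel $\mathcal{E}_\lambda$ together with LOCCs allow one to generate $k$ secret-key bits with error at most $\varepsilon$, then the $n$-shot secret-key capacity satisfies $K^{(\varepsilon,n)}(\mathcal{E}_\lambda) \ge k$. Plugging this into the upper bound from Eq.~\eqref{upper_bound_MMMM} gives
\begin{equation*}
k \;\le\; K^{(\varepsilon,n)}(\mathcal{E}_\lambda) \;\le\; n\log_2\!\left(\tfrac{1}{1-\lambda}\right) + \log_2 6 + 2\log_2\!\left(\tfrac{1+\varepsilon}{1-\varepsilon}\right).
\end{equation*}
Solving this linear inequality for $n$, and observing that $\log_2 6 + 2\log_2\!\left(\tfrac{1+\varepsilon}{1-\varepsilon}\right) = \log_2\!\left(\tfrac{6(1+\varepsilon)^2}{(1-\varepsilon)^2}\right)$, yields exactly the stated bound. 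Taking the infimum over all admissible $n$ then shows that $n_K(\lambda,\varepsilon,k)$ itself must satisfy the inequality.

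There is no real obstacle here; the proof is essentially bookkeeping, with all the non-trivial content already encoded in the upper bound of Ref.~\cite{MMMM} reported as Eq.~\eqref{upper_bound_MMMM}. The only minor subtlety to check is that the constraint $\lambda \in (0,1)$ ensures $\log_2(1/(1-\lambda)) > 0$, so that dividing through is legitimate and the lower bound is finite (it becomes trivial, or vacuous, in the regime where the numerator is non-positive, as expected).
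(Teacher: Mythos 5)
Your proposal is correct and follows essentially the same route as the paper: it combines the definition of the $n$-shot capacities (so that achieving $k$ ebits or secret-key bits forces $k\le K^{(\varepsilon,n)}(\mathcal{E}_\lambda)$) with the upper bound of Ref.~\cite{MMMM} reported in Eq.~\eqref{upper_bound_MMMM}, and then solves the resulting linear inequality for $n$. The additional remarks on the ordering $n_{Q_2}\ge n_K$ and on the positivity of $\log_2(1/(1-\lambda))$ are consistent with the paper's treatment around Eq.~\eqref{ineq_nshot_q_k}.
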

\begin{proof}
By definition of $n$-shot capacities, it must hold that $k\le K^{(\varepsilon,n)}(\mathcal{E}_\lambda)$. Moreover, by exploiting the upper bound on the $n$-shot capacities reported in \eqref{upper_bound_MMMM} and proved in \cite{MMMM}, we have that
\bb
    k\le n\log_2\!\left(\frac{1}{1-\lambda}\right)+\log_26+2\log_2\!\left(\frac{1+\varepsilon}{1-\varepsilon}\right)\,.
\ee
By solving the above inequality with respect $n$, we conclude the proof.
\end{proof}
By summarising, the channel complexity of LOCC-assisted entanglement distribution and secret-key distribution over the pure loss channel satisfy
\bb
    \frac{k-\log_2\left(\frac{6(1+\varepsilon)^2}{(1-\varepsilon)^2}\right)}{ \log_2\!\left(\frac{1}{1-\lambda}\right)  }\le n_{K}(\lambda,\varepsilon,k)\le n_{Q_2}(\lambda,\varepsilon,k)\le \frac{k+\log_2\!\left(\frac{2^{6}\,3\,(4-\sqrt{\varepsilon})^2}{(2-\sqrt{\varepsilon})\varepsilon^3 }\right) }{\log_2\!\left(\frac{1}{1-\lambda}\right)}\,.
\ee

\section{Non-asymptotic analysis of quantum communication across the pure amplifier channel}\label{sec_lower_bound_pure_ampl_channel}
This section aims to establish easily computable lower bounds on the $n$-shot quantum capacity, two-way quantum capacity, and secret-key capacity of the pure amplifier channel~\cite{BUCCO} (see Definition~\ref{def_pure_ampl} for the definition of pure amplifier channel). To establish such lower bounds, we exploit the asymptotic equipartition property approach developed in Section~\ref{section_AEP0}. Specifically, we will evaluate the lower bounds on the $n$-shot capacities proved in Theorem~\ref{thm_lower_bound} and Theorem~\ref{thm_lower_bound_2way} for the physically interesting case of the pure amplifier channel $\Phi_g$.

Let us apply such theorems by choosing as input state $\Phi_{AA'}$ the two-mode squeezed vacuum state $\ketbra{\Psi_{N_s}}_{AA'}$ with local mean photon number equal to $N_s$, as defined in Eq.~\eqref{eq:2mode_squeezed}. Moreover, let us recall that the pure amplifier channel $\Phi_{g}$, defined in Definition~\ref{def_pure_ampl}, is a single-mode Gaussian channel that admits the following Stinespring dilation:
\bb
        \Phi_{g}(\cdot)=\Tr_E\left[U_g^{A'E\to BE} \big((\cdot) \otimes\ketbra{0}_E\big) (U_g^{A'E\to BE})^\dagger\right]\,,
\ee
where $U_g^{A'E\to BE}$ denotes the two-mode squeezing unitary defined in Definition~\ref{def_two_sq}. In order to apply Theorem~\ref{thm_lower_bound} and Theorem~\ref{thm_lower_bound_2way}, we need to calculate the covariance matrix $V\!\left(\Psi_{ABE}^{(g,N_s)}\right)$ of the tri-partite state
\bb
    \Psi_{ABE}^{(g,N_s)}\coloneqq\left(\mathbb{1}_A\otimes U_g^{BE}\right)\,\left( \ketbra{\Psi_{N_s}}_{AB}\otimes\ketbra{0}_{E}\right) \left(\mathbb{1}_A\otimes U_g^{BE}\right)^\dagger\,,
\ee
e.g.~with respect the mode-ordering $(A,B,E)$. By exploiting Eq.~\eqref{law_sympl_unitary} and Eq.~\eqref{symplectic_squez}, it follows that $V\!\left(\Psi_{ABE}^{(g,N_s)}\right)$ can be expressed as
\bb\label{cov_ABEE_comp2}
V\!\left(\Psi_{ABE}^{(g,N_s)}\right)=  
\left(\mathbb{1}_2\oplus {S}_g\right) \left(V(\ketbra{\Psi_{N_s}}_{AB})\oplus  V(\ketbra{0})\right) \left(\mathbb{1}_2\oplus {S}_g^{\intercal}\right) \,,
\ee
where:
\bb
S_g\coloneqq	\begin{pmatrix}
		\sqrt{g}\,\mathbb{1}_2\, & \,\sqrt{g-1}\,\sigma_z \\
		\sqrt{g-1}\,\sigma_z\, &\, \sqrt{g}\,\mathbb{1}_2
	\end{pmatrix}\,
\ee
is the symplectic matrix associated with the two-mode squeezing unitary, $V(\ketbra{\Psi_{N_s}})$ is the covariance matrix of the two-mode squeezed vacuum state given in Eq.~\eqref{cov_two_mode_squeezed}, and $V(\ketbra{0})=\mathbb{1}_2$ is the covariance matrix of the vacuum state. By explicitly performing the matrix multiplications, one obtains that 
\bb
{\fontsize{8.2}{8.2}\selectfont
    V\!\left(\Psi_{ABE}^{(g,N_s)}\right)=
\begin{bmatrix}
1 + 2 N_s & 0 & 2 \sqrt{g N_s (1 + N_s)} & 0 & 2 \sqrt{(g - 1) N_s (1 + N_s)} & 0 \\
0 & 1 + 2 N_s & 0 & -2 \sqrt{g N_s (1 + N_s)} & 0 & 2 \sqrt{(g - 1) N_s (1 + N_s)} \\
2 \sqrt{g N_s (1 + N_s)} & 0 & -1 + g (2 N_s + 1) & 0 & 2 \sqrt{g (g - 1)} (N_s + 1) & 0 \\
0 & -2 \sqrt{g N_s (1 + N_s)} & 0 & -1 + g (2 N_s + 1) & 0 & -2 \sqrt{g (g - 1)} (N_s + 1) \\
2 \sqrt{(g - 1) N_s (1 + N_s)} & 0 & 2 \sqrt{g (g - 1)} (N_s + 1) & 0 & 1 + 2 N_s (g - 1) & 0 \\
0 & 2 \sqrt{(g - 1) N_s (1 + N_s)} & 0 & -2 \sqrt{g (g - 1)} (N_s + 1) & 0 & 1 + 2 N_s (g - 1)
\end{bmatrix}\,.}
\ee
Hence, by exploiting Lemma~\ref{thm_cond_petz_Gaussian}, one can explicitly calculate all the conditional Petz-Rényi entropies involved in Theorem~\ref{thm_lower_bound} and Theorem~\ref{thm_lower_bound_2way}. By a simple calculation performed on the Mathematica notebook attached to this manuscript, one obtains:

{\small
    \bb\label{ingredient2_ec_amp}
H_{1/2}(A|B)_{\Psi_{ABE}^{(g,N_s)}}&=   \log_2\left(
\frac{
\left[ 
2 g (1 + N_s) - 1 + 2 \sqrt{g (1 + N_s) \big(-1 + g (1 + N_s)\big)}
\right]
\left[
2 g (1 + N_s) - 1 - 2 N_s + 2 \sqrt{(g - 1) (1 + N_s) \big(g + (g - 1) N_s\big)}
\right]
}{
\left[
g(2N_s+1)+\sqrt{g N_s (g N_s+1)}+g (N_s+1)\sqrt{\frac{(g-1)(N_s+1)}{(g-1)N_s+g}}
\right]^2
}
\right)\,,\\
    H_{1/2}(A|E)_{\Psi_{ABE}^{(g,N_s)}}&=\log_2 \left( 
\frac{
\left[ -1 - 2N_s + 2g(1 + N_s) + 2\sqrt{(g - 1)(1 + N_s)\big(g + (g - 1)N_s\big)} \right] 
\left[ -1 + 2g(1 + N_s) + 2\sqrt{g(1 + N_s)(g - 1 + gN_s)}\right]
}{
\left[
g+(g-1)(2N_s+1) +\sqrt{(g-1) (N_s+1) ((g-1) N_s+g)}+(g-1) (N_s+1) \sqrt{\frac{g N_s+g}{g N_s+g-1}}  
\right]^2}\right)
\,,\\
\ee}
Moreover, thanks to Eq.~\eqref{q_ec_pure_ampl}, we can also obtain the coherent information quantity appearing in Theorem~\ref{thm_lower_bound} and Theorem~\ref{thm_lower_bound_2way}:
\bb\label{ingredient1_ec_amp}
    I_c(A\,\rangle\, B)_{ \Psi_{ABE}^{(g,N_s)}  } &=h\!\left( gN_s + g-1\right)-h\!\left( (g-1)(N_s+1) \right)\,,
\ee
where $h(x)\coloneqq (x+1)\log_2(x+1)-x\log_2x$. By taking the limit of infinite energy $N_s\rightarrow\infty$, one obtains
\bb\label{ingredient1_amp}
    \lim\limits_{N_s\rightarrow\infty}H_{1/2}(A|B)_{\Psi_{ABE}^{(g,N_s)}}&=\log_2\!\left(\frac{g-1}{g}\right)\,,\\
    \lim\limits_{N_s\rightarrow\infty}H_{1/2}(A|E)_{\Psi_{ABE}^{(g,N_s)}}&= \log_2\!\left(\frac{g}{g-1}\right)\,,\\
\ee
and, thanks to Eq.~\eqref{coh_ampl}, one also obtains that
\bb\label{ingredient2_amp}
    \lim\limits_{N_s\rightarrow\infty}I_c(A\,\rangle\, B)_{ \Psi_{ABE}^{(g,N_s)}  } &=\log_2\!\left(\frac{g}{g-1}\right)\,.
\ee
By putting together Eq.~\eqref{ingredient1_amp}, Eq.~\eqref{ingredient2_amp}, Theorem~\ref{thm_lower_bound}, and Theorem~\ref{thm_lower_bound_2way}, we obtain a lower bound on the $n$-shot capacities of the pure amplifier channel.
\begin{thm}[(Lower bound on the $n$-shot capacities of the pure amplifier channel)]\label{thm_one_shot_Q_pure_ampl}
    Let $g>1$, $\varepsilon\in(0,1)$, and $n\in\N$ such that $n\ge2\log_2\!\left(\frac{2}{\varepsilon^2}\right)$. The $n$-shot quantum capacity, the $n$-shot two-way quantum capacity, and the $n$-shot secret key capacity of the pure amplifier channel satisfies:
\bb\label{lower_bound_q_pure_amp}
    Q^{(\varepsilon,n)}(\Phi_g)&\ge nQ(\Phi_g)-\sqrt{n}4\log_2\!\left( \sqrt{\frac{g-1}{g}}+ \sqrt{\frac{g}{g-1}} +1 \right)\sqrt{\log_2\!\left(\frac{2^9}{\varepsilon^2}\right)}+\log_2\!\left(\frac{3\varepsilon^4}{2^{18}}\right)\,,\\
    K^{(\varepsilon,n)}(\Phi_g)&\ge Q_2^{(\varepsilon,n)}(\Phi_g) \ge nQ_2(\Phi_g)-\sqrt{n}4\log_2\!\left(\sqrt{\frac{g-1}{g}}+ \sqrt{\frac{g}{g-1}} +1 \right)\sqrt{\log_2\!\left(\frac{8}{\varepsilon}\right)}-\log_2\!\left(\frac{16}{\varepsilon^2}\right)\,,
\ee
where 
\bb
    Q(\Phi_{g})&=Q_2(\Phi_{g})= \log_2\!\left(\frac{g}{g-1}\right)
\ee
is the quantum capacity of the pure amplifier channel, which is equal to its two-way quantum capacity.
\end{thm}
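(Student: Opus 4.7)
The plan is to derive Theorem~\ref{thm_one_shot_Q_pure_ampl} as a direct specialization of Theorem~\ref{thm_lower_bound} and Theorem~\ref{thm_lower_bound_2way} to the pure amplifier channel $\Phi_g$, using as input the two-mode squeezed vacuum state $\ketbra{\Psi_{N_s}}_{AA'}$ and then letting $N_s\to\infty$. This mirrors the strategy already used for the pure loss channel in Theorem~\ref{thm_one_shot_Q_pure} and Theorem~\ref{thm_one_shot_Q2_pure}, only with the beam splitter symplectic matrix $S_\lambda$ replaced by the two-mode squeezing symplectic matrix $S_g$ of Eq.~\eqref{symplectic_squez}.

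Concretely, the first step is to write down the Stinespring dilation of $\Phi_g$ in terms of $U_g^{BE}$ and the environmental vacuum, and to compute the covariance matrix $V\!\left(\Psi^{(g,N_s)}_{ABE}\right)$ of the associated pure tripartite output state. This is precisely the calculation performed in Eq.~\eqref{cov_ABEE_comp2}, where the input covariance $V(\ketbra{\Psi_{N_s}}_{AB})\oplus\mathbb{1}_2$ is conjugated by $\mathbb{1}_2\oplus S_g$. Next, I would plug this covariance matrix into the closed formula of Lemma~\ref{thm_cond_petz_GaussianM} to obtain the conditional Petz–Rényi entropies $H_{1/2}(A|B)_{\Psi^{(g,N_s)}_{ABE}}$ and $H_{1/2}(A|E)_{\Psi^{(g,N_s)}_{ABE}}$ in closed form; these are the expressions recorded in Eq.~\eqref{ingredient2_ec_amp}. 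The coherent information $I_c(A\,\rangle\, B)_{\Psi^{(g,N_s)}_{ABE}}$ is then given by the standard thermal-state entropy formula, as stated in Eq.~\eqref{ingredient1_ec_amp}.

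The second step is to take the limit $N_s\to\infty$. Since the $n$-shot capacities on the left-hand sides of Theorems~\ref{thm_lower_bound} and~\ref{thm_lower_bound_2way} are independent of the input energy, one can take the supremum over $N_s$ of the resulting right-hand sides; by continuity of the entropic quantities in $N_s$ (verified by inspection of Eqs.~\eqref{ingredient2_ec_amp}–\eqref{ingredient1_ec_amp}), this reduces to evaluating the limits collected in Eqs.~\eqref{ingredient1_amp}–\eqref{ingredient2_amp}:
\begin{equation}
H_{1/2}(A|B)_{\Psi^{(g,N_s)}_{ABE}}\to\log_2\!\left(\tfrac{g-1}{g}\right),\qquad H_{1/2}(A|E)_{\Psi^{(g,N_s)}_{ABE}}\to\log_2\!\left(\tfrac{g}{g-1}\right),\qquad I_c(A\,\rangle\, B)_{\Psi^{(g,N_s)}_{ABE}}\to\log_2\!\left(\tfrac{g}{g-1}\right),
\end{equation}
where the last limit coincides with both $Q(\Phi_g)$ and $Q_2(\Phi_g)$ by Eq.~\eqref{capacities_pure_ampl_main}. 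Substituting these three limits into the definition of $\bar{\eta}$ in Eq.~\eqref{eq_bar_eta} yields $\bar{\eta}\to\sqrt{(g-1)/g}+\sqrt{g/(g-1)}+1$, which after insertion into Eqs.~\eqref{lower_bound_CV2} and~\eqref{b_lower_bound_CV_Q2_1} produces exactly the two inequalities~\eqref{lower_bound_q_pure_amp}.

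There is no substantive obstacle: the argument is a direct algebraic specialization of already-proved general statements. The only mildly delicate point is that, unlike in the pure loss case, taking $N_s\to\infty$ must preserve finiteness of both Petz–Rényi entropies appearing in $\bar{\eta}$; this is confirmed by the explicit limits in Eq.~\eqref{ingredient1_amp}. Note that it is unnecessary to separately exploit the reverse coherent information branch of Theorem~\ref{thm_lower_bound_2way}, since for the pure amplifier channel $Q(\Phi_g)=Q_2(\Phi_g)$, so the bound coming from $I_c(A\,\rangle\, B)$ already achieves the target asymptotic slope in both the quantum and the two-way/secret-key cases.
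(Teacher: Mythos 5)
Your proposal is correct and follows essentially the same route as the paper: the paper proves Theorem~\ref{thm_one_shot_Q_pure_ampl} precisely by specializing Theorems~\ref{thm_lower_bound} and~\ref{thm_lower_bound_2way} to $\Phi_g$ with a two-mode squeezed vacuum input, computing the output covariance matrix via $S_g$, evaluating the conditional Petz--R\'enyi entropies through Lemma~\ref{thm_cond_petz_GaussianM}, and taking $N_s\to\infty$. Your observation that only the $I_c(A\,\rangle\, B)$ branch is needed (since $Q(\Phi_g)=Q_2(\Phi_g)$) also matches the paper's treatment.
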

By using \eqref{ingredient2_ec_amp}, \eqref{ingredient1_ec_amp}, Theorem~\ref{thm_lower_bound}, and Theorem~\ref{thm_lower_bound_2way}, we can also find a lower bound on the energy-constrained $n$-shot capacities of the pure amplifier channel, as stated in the following theorem.
\begin{thm}[(Lower bound on the energy-constrained $n$-shot capacities of the pure amplifier channel)]\label{thm_lower_ec_amp}
    Let $N_s>0$, $g>1$, $\varepsilon\in(0,1)$, and $n\in\N$ such that $n\ge2\log_2\!\left(\frac{2}{\varepsilon^2}\right)$. The energy-constrained $n$-shot quantum, two-way quantum, and secret-key capacity of the pure amplifier channel satisfy
    \bb
        Q^{(\varepsilon,n)}(\Phi_g,N_s)&\ge  n \left(h\!\left( gN_s + g-1\right)-h\!\left( (g-1)(N_s+1) \right)\right)\\
        &\qquad-\sqrt{n}4\log_2\left(\sqrt{2^{H_{1/2}(A|B)_{\Psi_{ABE}^{(g,N_s)}}}}+\sqrt{2^{H_{1/2}(A|E)_{\Psi_{ABE}^{(g,N_s)}}}}+1\right)\sqrt{\log_2\!\left(\frac{2^9}{\varepsilon^2}\right)}+\log_2\!\left(\frac{3\varepsilon^4}{2^{18}}\right)\,,\\
        K^{(\varepsilon,n)}(\Phi_g,N_s)&\ge Q_2^{(\varepsilon,n)}(\Phi_g,N_s)  \\
        &\ge  n \left(h\!\left( gN_s + g-1\right)-h\!\left( (g-1)(N_s+1) \right)\right)\\
        &\qquad\qquad-\sqrt{n}4\log_2\left(\sqrt{2^{H_{1/2}(A|B)_{\Psi_{ABE}^{(g,N_s)}}}}+\sqrt{2^{H_{1/2}(A|E)_{\Psi_{ABE}^{(g,N_s)}}}}+1\right)\sqrt{\log_2\!\left(\frac{8}{\varepsilon}\right)}-\log_2\!\left(\frac{16}{\varepsilon^2}\right)\,\,,
    \ee
    where $H_{1/2}(A|B)_{\Psi_{ABE}^{(g,N_s)}}$ and $H_{1/2}(A|E)_{\Psi_{ABE}^{(g,N_s)}}$ are reported in \eqref{ingredient1_ec_amp}. 
\end{thm}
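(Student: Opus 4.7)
The plan is to obtain the stated bounds by specializing the general energy-constrained lower bounds of Theorem~\ref{thm_lower_bound_arb_gauss_main} (equivalently, the results proved in Theorem~\ref{thm_lower_bound} and Theorem~\ref{thm_lower_bound_2way}) to the pure amplifier channel $\Phi_g$. The natural choice of Gaussian input purification is the two-mode squeezed vacuum state $\ketbra{\Psi_{N_s}}_{AA'}$ with local photon number $N_s$, whose reduction on $A'$ is the thermal state $\tau_{N_s}$. Because $\Tr[\hat{N}_{A'}\tau_{N_s}]=N_s$, this input automatically satisfies the energy constraint, so the energy-constraint clause at the end of Theorem~\ref{thm_lower_bound_arb_gauss_main} yields lower bounds on $Q^{(\varepsilon,n)}(\Phi_g,N_s)$, $Q_2^{(\varepsilon,n)}(\Phi_g,N_s)$, and $K^{(\varepsilon,n)}(\Phi_g,N_s)$.

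The first computational step is to construct the output tripartite pure state $\Psi^{(g,N_s)}_{ABE}\coloneqq(\mathbb{1}_A\otimes U_g^{BE})(\ketbra{\Psi_{N_s}}_{AB}\otimes\ketbra{0}_E)(\mathbb{1}_A\otimes U_g^{BE})^\dagger$ using the Stinespring dilation of $\Phi_g$ built from the two-mode squeezing unitary $U_g^{BE}$ of Definition~\ref{def_two_sq}. The associated symplectic matrix $S_g$ is given in~\eqref{symplectic_squez}, and the transformation rule~\eqref{law_sympl_unitary} combined with the covariance matrix of $\ketbra{\Psi_{N_s}}$ in~\eqref{cov_two_mode_squeezed} and the vacuum covariance $\mathbb{1}_2$ yields the explicit $6\times 6$ covariance matrix $V(\Psi^{(g,N_s)}_{ABE}) = (\mathbb{1}_2\oplus S_g)(V(\ketbra{\Psi_{N_s}}_{AB})\oplus\mathbb{1}_2)(\mathbb{1}_2\oplus S_g^\intercal)$ already exhibited just above the theorem statement.

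The entropic ingredients of the general bound are then extracted from this covariance matrix in two parts. The coherent information $I_c(A\,\rangle\, B)_{\Psi^{(g,N_s)}}$ is known from the calculation of the energy-constrained quantum capacity of the pure amplifier (equation~\eqref{q_ec_pure_ampl}) and equals $h(gN_s+g-1)-h((g-1)(N_s+1))$. The conditional Petz-Rényi entropies $H_{1/2}(A|B)_{\Psi^{(g,N_s)}}$ and $H_{1/2}(A|E)_{\Psi^{(g,N_s)}}$ are obtained by applying the closed formula of Lemma~\ref{thm_cond_petz_GaussianM} to the relevant blocks and reductions of $V(\Psi^{(g,N_s)}_{ABE})$; the resulting explicit expressions are those displayed in~\eqref{ingredient2_ec_amp}. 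Substituting these entropic quantities into~\eqref{lower_bound_CV2_main1} and \eqref{lower_bound_CV2_main2} produces exactly the claimed inequalities.

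The main obstacle is the explicit symbolic evaluation in Lemma~\ref{thm_cond_petz_GaussianM}, which requires computing $V_{\mathrm{sqrt}}(\Psi^{(g,N_s)}_{AB})$, $V_{\mathrm{sqrt}}(\Psi^{(g,N_s)}_{AE})$, $V_{\mathrm{sqrt}}(\Psi^{(g,N_s)}_{B})$, and $V_{\mathrm{sqrt}}(\Psi^{(g,N_s)}_{E})$ through the matrix function $\mathbb{1}+\sqrt{\mathbb{1}-(iV\Omega)^{-2}}$; beyond this bookkeeping, which is conveniently handled with computer algebra and parallels the calculation already performed for the pure loss channel in Section~\ref{sec_lower_bound_pure_loss_channel}, no new ideas are needed and the remainder of the proof is a direct substitution into the general theorem.
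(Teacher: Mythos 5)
Your proposal matches the paper's own proof: the paper likewise takes the two-mode squeezed vacuum $\ketbra{\Psi_{N_s}}_{AA'}$ as input (which satisfies the energy constraint since $\Tr[\hat{N}_{A'}\tau_{N_s}]=N_s$), computes $V(\Psi_{ABE}^{(g,N_s)})=(\mathbb{1}_2\oplus S_g)(V(\ketbra{\Psi_{N_s}})\oplus\mathbb{1}_2)(\mathbb{1}_2\oplus S_g^\intercal)$, evaluates the conditional Petz--R\'enyi entropies via Lemma~\ref{thm_cond_petz_GaussianM} (by computer algebra) and the coherent information via \eqref{q_ec_pure_ampl}, and substitutes into Theorems~\ref{thm_lower_bound} and~\ref{thm_lower_bound_2way}. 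No gaps; the only cosmetic remark is that the displayed $H_{1/2}$ formulas carry the label \eqref{ingredient2_ec_amp} in the appendix even though the theorem statement cites \eqref{ingredient1_ec_amp}, an inconsistency in the paper rather than in your argument.
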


\end{document}